\documentclass[pdflatex,sn-mathphys-num]{sn-jnl}
\usepackage{graphicx}
\usepackage{multirow}
\usepackage{amsmath,amssymb,amsfonts}
\usepackage{amsthm}
\usepackage{mathrsfs}
\usepackage[title]{appendix}
\usepackage{xcolor}
\usepackage{textcomp}
\usepackage{manyfoot}
\usepackage{booktabs}
\usepackage{algorithm}
\usepackage{algorithmicx}
\usepackage{algpseudocode}
\usepackage{listings}
\usepackage{subcaption}
\usepackage{array}
\usepackage{tabularx}
\theoremstyle{thmstyleone}
\newtheorem{theorem}{Theorem}
\newtheorem{proposition}[theorem]{Proposition}
\newtheorem{lemma}[theorem]{Lemma}
\newtheorem{corollary}[theorem]{Corollary}
\theoremstyle{thmstyletwo}

\newtheorem{remark}{Remark}
\theoremstyle{thmstylethree}
\newtheorem{definition}{Definition}
\begin{document}
	
	\title[Multiple Spatial Roots and Darboux Limits]{Real-Spectrum Darboux Limits at Multiple Spatial Roots of the Coupled Fokas--Lenells System}
	
	\author[1]{\fnm{Muchen} \sur{Dong}}
	
	\author*[1,]{\fnm{Lei} \sur{Wang}}\email{50901924@ncepu.edu.cn}
	
	\affil[1]{\orgdiv{School of Mathematics and Physics}, \orgname{North China Electric Power University}, \orgaddress{\city{Beijing}, \postcode{102206}, \state{Beijing}, \country{People’s Republic of China}}}
	
	\abstract{\unboldmath
		We develop a Darboux-based approach to multiple spatial branch points of the coupled Fokas--Lenells system on the plane-wave background. Unlike conventional generalized Darboux transformations based on the coalescence of distinct Darboux spectral poles, the degeneration is defined by the multiplicity of two or three roots of the single fixed-endpoint equation \(F(\chi;\mu_0)=0\) in the \(\chi\)-plane. For double roots, the branch-point condition reduces to a quartic polynomial, which is shown to possess at least two real roots, counted with multiplicity, when the squared background amplitudes are positive. This leads to a natural distinction between regular nonreal-spectrum branch points and singular real-spectrum limits. In the latter case, the same spectral endpoint is subsequently displaced as \(\mu=\mu_0+\eta\) solely to regularize the Darboux kernel; no collision of several distinct Darboux poles is involved. At a real double branch point, we characterize the complete projective \(J\)-null cone in the mixed double--simple eigenspace and derive the associated directional Darboux limit, including the first-order variation of the remaining simple branch. At a triple branch point, which is necessarily real in the parameter regime considered here, we construct the general full-chain limit and recover the lower derivative sectors as special cases. In both real-spectrum settings, the \(J\)-null condition removes the apparent singularity of the Darboux kernel and yields limiting denominators with strictly negative real parts. We also distinguish a persistent zero branch of the full polynomial spectral curve from an effective ramification point of the reduced rational spectral map. Finally, we identify two forms of nonlocalized far-field behavior. The complete nonzero pure double-root null-circle family approaches universal nonbackground plateaus along unbounded kernel level-set curves, whereas triple-root solutions with a nonzero second-derivative contribution possess an explicit nonbackground plateau along the characteristic line \(x+t/(3k^2)=0\). These results reveal how the spatial branch-point structure controls real-spectrum Darboux limits and their asymptotic behavior in the coupled Fokas--Lenells systems.}
	
	\keywords{
		Coupled Fokas--Lenells system, Darboux transformation, spatial branch point, real-spectrum limit, confluent eigenfunction, nonlocalized rational solution }
	
	\maketitle
	
	\section{Introduction}
	\label{sec:introduction}
	
	Multicomponent integrable wave equations provide a natural framework for nonlinear interactions among polarization, spin, or carrier modes. Their canonical representative is the Manakov system \cite{Manakov1974},
	\begin{equation*}
		\mathrm{i}{\bf q}_t+{\bf q}_{xx}+2({\bf q}^{\dagger}{\bf q}){\bf q}=0,
		\qquad
		{\bf q}=(q_1,q_2)^{\mathsf T},
	\end{equation*}
	which is the integrable two-component extension of the focusing nonlinear Schr\"odinger equation. Although its nonlinear coupling is isotropic, the additional polarization degree of freedom produces spectral branches and coherent structures with no scalar counterpart. On nonzero backgrounds, the associated inverse-scattering problem already exhibits a genuinely multicomponent branch-point geometry \cite{KrausBiondiniKovacic2015}, while Darboux--dressing methods give a unified construction of bright, dark, and mixed vector waves \cite{DegasperisLombardo2007}. The Manakov system therefore serves both as a fundamental physical model and as a benchmark for understanding how vector degrees of freedom alter the spectral organization of nonlinear waves.
	
	Breathers and rogue waves provide particularly clear manifestations of this vector structure. Semirational Manakov solutions include vector Peregrine waves, bright--dark rogue waves, and mixed rational--soliton states \cite{BaronioEtAl2012}. Akhmediev breathers, Kuznetsov--Ma solitons, general breathers, and their rogue-wave limits have been obtained explicitly in the two-component setting \cite{PriyaSenthilvelanLakshmanan2013}. Higher-order vector rogue waves display decomposition patterns richer than those of the scalar nonlinear Schr\"odinger equation \cite{LingGuoZhao2014}. More recently, distinct nondegenerate Kuznetsov--Ma branches and their physical spectra have been identified \cite{CheEtAl2022}, and the vector dressing method has been used to classify fundamental Manakov breathers and their interactions, including resonant fusion and decay processes \cite{GelashRaskovalov2023}. These developments show that, even for the most classical vector integrable model, the multiplicity and geometry of spatial spectral branches are central to the classification of coherent structures.
	
	The Fokas--Lenells equation belongs to a different but closely related part of the integrable hierarchy. It was introduced from the viewpoints of bi-Hamiltonian integrability and nonlinear optical pulse propagation and may be regarded as an integrable nonlinear Schr\"odinger-type model incorporating higher-order dispersive and nonlinear corrections \cite{Fokas1995,Lenells2009,LenellsFokas2009}. For two interacting modes, a coupled form is
	\begin{equation*}
		\begin{aligned}
			{\rm i}D_\xi q_{1,\tau}
			-\frac{\eta}{2}q_{1,\xi\xi}
			+\left(2|q_1|^2+\sigma|q_2|^2\right)D_\xi q_1
			+\sigma q_1q_2^*D_\xi q_2
			&=0,\\
			{\rm i}D_\xi q_{2,\tau}
			-\frac{\eta}{2}q_{2,\xi\xi}
			+\left(2\sigma|q_2|^2+|q_1|^2\right)D_\xi q_2
			+q_2q_1^*D_\xi q_1
			&=0,
		\end{aligned}
	\end{equation*}
	where
	\begin{equation*}
		D_\xi=1+{\rm i}\nu\partial_\xi,
		\qquad
		\eta=\pm1,
		\qquad
		\sigma=\pm1.
	\end{equation*}
	In the symmetric case \(\sigma=1\), the limit in which the higher-order correction is removed reduces, after a simple scaling, to Manakov-type dynamics. For nonzero \(\nu\), the operator \(D_\xi\) introduces space--time coupling and self-steepening effects beyond the standard slowly varying envelope approximation. The model is consequently relevant to ultrashort-pulse propagation in birefringent fibers and to related multicomponent dispersive media \cite{ChenEtAl2018,YeEtAl2019,LingSu2024}.
	
	Under the coordinate and gauge transformation
	\begin{equation*}
		\xi=\nu\eta(x-t),
		\qquad
		\tau=-2\nu^2t,
		\qquad
		q_j(\xi,\tau)=\frac{{\rm i}}{2\nu}
		\exp\!\left[{\rm i}\eta(x+t)\right]u_j(x,t),
		\qquad j=1,2,
	\end{equation*}
	the coupled model takes the reduced form studied below. We restrict attention to the symmetric reduction \(\sigma=1\), corresponding to the coupled Fokas--Lenells model employed by Chen \emph{et al.}\ in their study of anomalous Peregrine solitons beyond the conventional threefold-amplification limit \cite{ChenEtAl2018}. Physically, this reduction retains the nonlinear interaction between two optical components together with the mixed space--time coupling relevant to ultrashort-pulse dynamics.
	
	The coupled Fokas--Lenells system has supported a broad range of exact-solution studies. Its connection with the vector Kaup--Newell hierarchy yields multi-Hamiltonian structures and infinitely many conservation laws, while generalized Darboux transformations generate bright, dark, anti-dark, breather-like, and rational solutions on zero and nonzero backgrounds \cite{LingFengZhu2018}. Solitons, breathers, and rogue waves have also been constructed through iterated Darboux transformations \cite{ZhangEtAl2017,YangZhang2018}. Anomalous Peregrine structures whose peaks exceed the conventional threefold background amplitude, together with their interactions and numerical excitation, were reported in \cite{ChenEtAl2018}. A nonrecursive Darboux representation subsequently produced general rogue-wave hierarchies involving both double and triple roots of the spatial characteristic equation \cite{YeEtAl2019}; the large-parameter decomposition of higher-order triple-root rogue waves has more recently been related to Okamoto polynomial hierarchies \cite{LingSu2024}. Thus multiple spatial roots are already known to be decisive in the rational dynamics of the coupled system. What has remained less understood is the local geometry of these roots at a fixed Darboux spectral value, especially when that value lies on the real spectrum.
	
	Darboux transformations constitute a standard algebraic method for generating exact solutions of integrable nonlinear equations from eigenfunctions of their associated linear spectral problems. Comprehensive accounts of the classical theory, its relation to soliton solutions, and its matrix formulation for integrable systems can be found in the monographs of Matveev and Salle and of Gu, Hu, and Zhou \cite{MatveevSalle1991,GuHuZhou2005}. For multicomponent integrable wave equations, Degasperis and Lombardo developed a general Darboux--dressing formalism and applied it to the construction of soliton solutions on vanishing and nonvanishing backgrounds \cite{DegasperisLombardo2007,DegasperisLombardo2009}. In a conventional generalized Darboux construction for higher-order solutions, several distinct Darboux spectral parameters coalesce at a common spectral value, and Taylor expansion with respect to the spectral parameter provides the derivative data required for the resulting confluent transformation. This procedure underlies, for example, the generalized Darboux formula and higher-order rogue waves of the nonlinear Schr"odinger equation \cite{GuoLingLiu2012}, as well as the nonrecursive rogue-wave construction for the coupled Fokas--Lenells system \cite{YeEtAl2019}. A complementary algebraic formulation based on Darboux matrices possessing a single higher-order pole has recently been developed in \cite{LiWang2025}.
	
	The degeneration investigated here is fundamentally different. A double or triple spatial branch point is defined at a single fixed squared spectral endpoint \(\mu_0\) by the multiplicity of the roots of \(F(\chi;\mu_0)=0\) in the \(\chi\)-plane. Thus the confluent chain is generated by the local structure of the spatial spectral branches at \(\mu_0\), rather than by a collision of distinct Darboux poles in the \(\lambda\)-plane. If \(\mu_0\notin\mathbb R\), the resulting branch coalescence can be inserted directly into the ordinary one-fold Darboux formula. If \(\mu_0>0\), however, the Darboux kernel is singular at \(\lambda_0=\sqrt{\mu_0}\in\mathbb R\). We then introduce the auxiliary displacement \(\mu=\mu_0+\eta\) only after the multiple root has been fixed, in order to regularize the real-spectrum kernel and take a directional limit. No family of distinct Darboux poles is made to coalesce. On the real spectrum, the kernel contains a quotient proportional to
	\begin{equation*}
		\frac{\langle {\bf y}|J|{\bf y}\rangle}{\lambda^*-\lambda},
	\end{equation*}
	and direct substitution of a nonzero real \(\lambda\) is singular. A finite nontrivial limit then requires a \(J\)-null leading confluent vector together with the first nonvanishing variation of both the eigenfunction and the spectral denominator. The resulting construction is directional: in general, it retains information about how \(\mu\) approaches the real axis.
	
	The main contributions of this work are as follows. First, we distinguish the complete polynomial spatial characteristic equation from its reduced rational parametrization. This reveals a persistent zero branch at the critical condition \(2-k(A-B)=0\), which may disappear under rational cancellation but remains part of the spectral problem. Second, we derive and classify the double- and triple-root conditions. In particular, the double-root equation is quartic and always possesses at least two real roots in the physical region; effective double roots are real precisely when their squared spectral values are real, while the triple-root configuration is proved, without assuming real spectral coefficients, to lie on the positive real spectrum. Third, we formulate a unified directional real-spectrum limit theorem and apply it to the projective \(J\)-null sets generated by the double- and triple-root splitting branches. The limiting denominators have strictly negative real parts, which yields global regularity for all admissible coefficient sectors treated below. Finally, we show that the resulting real-spectrum rational structures need not be uniformly localized on the plane-wave background: the two pure double-root plateau factors satisfy an exact reciprocal law along unbounded cubic kernel-level graphs, whereas the two triple-root plateau factors obey an exact weighted component constraint along a distinguished characteristic line.
	
	The paper is organized as follows. Section~\ref{sec:preliminaries} introduces the reduced model, its normalized Lax pair, the plane-wave spatial spectral curve, the elementary modes, the persistent zero branch, the one-fold Darboux transformation, and the unified directional real-spectrum limit theorem. Section~\ref{sec:spectral-geometry} classifies double and triple spatial roots and separates the regular nonreal and singular real spectral sectors. Section~\ref{sec:double-darboux} develops the Darboux constructions associated with double roots, including the general mixed double--simple real-spectrum limit and the explicit pure double-root family. Section~\ref{sec:triple-darboux} treats the full triple-chain limit and its lower coefficient sectors. Section~\ref{sec:nonlocal-asymptotics} analyzes the selected nonlocalized far-field structures, and Section~\ref{sec:examples} illustrates the spectral transitions and representative solutions. Section~\ref{sec:conclusion} concludes the paper.
	
	\section{Spectral and Darboux Preliminaries}
	\label{sec:preliminaries}
	
	\subsection{The plane-wave background and normalized Lax pair}
	\label{subsec:model-background}
	\label{subsec:normalized-lax}
	
	We consider the reduced two-component Fokas--Lenells system
	\begin{equation}
		\label{eq:CFL}
		\begin{aligned}
			u_{1,xt}+u_1+{\rm i}\left(|u_1|^2+\frac{1}{2}|u_2|^2\right)u_{1,x}+\frac{{\rm i}}{2}u_1u_2^*u_{2,x}&=0,\\
			u_{2,xt}+u_2+{\rm i}\left(|u_2|^2+\frac{1}{2}|u_1|^2\right)u_{2,x}+\frac{{\rm i}}{2}u_2u_1^*u_{1,x}&=0,
		\end{aligned}
	\end{equation}
	on the plane-wave background
	\begin{equation}
		\label{eq:background}
		u_{1,0}=a_1\exp\left[{\rm i}(k_1x+\omega_1t)\right],\qquad
		u_{2,0}=a_2\exp\left[{\rm i}(k_2x+\omega_2t)\right].
	\end{equation}
	We are interested in backgrounds with unequal carrier wave numbers and, for simplicity, set \(k_1=-k_2=k\). Substitution of \eqref{eq:background} into \eqref{eq:CFL} gives \(\omega_1=1/k-A\) and \(\omega_2=-1/k-B\), where \(A=a_1^2\) and \(B=a_2^2\). Throughout the paper, we assume
	\begin{equation}
		\label{eq:physical-region}
		A>0,\qquad B>0,\qquad k\neq0.
	\end{equation}
	For later use, we set \(S=A+B\) and \(R=A-B\).
	
	Let
	\[
	{\bf u}=
	\begin{pmatrix}
		u_1\\
		u_2
	\end{pmatrix},
	\qquad
	P=\operatorname{diag}(1,0,0),
	\qquad
	\Phi=
	\begin{pmatrix}
		\psi\\
		\phi\\
		\varphi
	\end{pmatrix}.
	\]
	A normalized Lax representation of \eqref{eq:CFL} is
	\begin{equation}
		\label{eq:laxpair}
		\Phi_x=U(\lambda)\Phi,\qquad \Phi_t=V(\lambda)\Phi,
	\end{equation}
	where
	\begin{equation}
		\label{eq:U}
		U(\lambda)=\frac{2{\rm i}}{\lambda^2}P+\frac{1}{\lambda}
		\begin{pmatrix}
			0 & {\bf u}_x^\dagger\\
			{\bf u}_x & 0_{2\times2}
		\end{pmatrix},
	\end{equation}
	and
	\begin{equation}
		\label{eq:V}
		V(\lambda)=\frac{{\rm i}\lambda^2}{2}P+\frac{{\rm i}\lambda}{2}
		\begin{pmatrix}
			0 & -{\bf u}^\dagger\\
			{\bf u} & 0_{2\times2}
		\end{pmatrix}
		+\frac{{\rm i}}{2}
		\begin{pmatrix}
			{\bf u}^\dagger{\bf u} & 0\\
			0 & -{\bf u}{\bf u}^\dagger
		\end{pmatrix}.
	\end{equation}
	Here, \(\dagger\) denotes Hermitian conjugation. The compatibility condition \(U_t-V_x+[U,V]=0\) is equivalent to system~\eqref{eq:CFL}.
	
	\begin{remark}
		\label{rem:scalar-gauge}
		Let
		\[
		\mathcal J=\operatorname{diag}(1,-1,-1),\qquad
		Q=
		\begin{pmatrix}
			0 & u_1^* & u_2^*\\
			u_1 & 0 & 0\\
			u_2 & 0 & 0
		\end{pmatrix}.
		\]
		A standard Lax representation of \eqref{eq:CFL} is \(\Psi_x=\widehat U(\lambda)\Psi\) and \(\Psi_t=\widehat V(\lambda)\Psi\), where
		\[
		\widehat U(\lambda)=\frac{{\rm i}}{\lambda^2}\mathcal J+\frac{1}{\lambda}Q_x,\qquad
		\widehat V(\lambda)=\frac{{\rm i}\lambda^2}{4}\mathcal J+\frac{{\rm i}\lambda}{2}Q\mathcal J+\frac{{\rm i}}{2}\mathcal JQ^2.
		\]
		Because \(P=(I+\mathcal J)/2\), the matrices in \eqref{eq:U}--\eqref{eq:V} satisfy
		\[
		U(\lambda)=\widehat U(\lambda)+\frac{{\rm i}}{\lambda^2}I,\qquad
		V(\lambda)=\widehat V(\lambda)+\frac{{\rm i}\lambda^2}{4}I.
		\]
		The associated eigenfunctions are therefore related by
		\[
		\Phi(x,t;\lambda)=\exp\left(\frac{{\rm i}x}{\lambda^2}+\frac{{\rm i}\lambda^2t}{4}\right)\Psi(x,t;\lambda).
		\]
		Thus the two representations differ only by a scalar gauge factor and produce the same zero-curvature equation. We use the normalized form \eqref{eq:laxpair} throughout.
	\end{remark}
	
	\subsection{The spatial spectral curve and elementary modes}
	\label{subsec:background-spectral-curve}
	
	Set \(\mu=\lambda^2\). Substitution of the plane-wave background \eqref{eq:background} into the spatial Lax equation yields the meromorphic spectral map
	\begin{equation}
		\label{eq:Lambda}
		\mu=\Lambda(\chi)=\frac{2-k^2\left(\dfrac{A}{\chi+k}+\dfrac{B}{\chi-k}\right)}{\chi},\qquad
		\chi\notin\{0,\pm k\}.
	\end{equation}
	Here, \(\mu\) is the squared spectral parameter, whereas \(\Lambda\) denotes the map from the spatial spectral variable \(\chi\) to \(\mu\). Writing
	\begin{equation}
		\label{eq:Lambda-ND}
		\Lambda(\chi)=\frac{N(\chi)}{D(\chi)},
	\end{equation}
	we have \(N(\chi)=2\chi^2-k^2S\chi-2k^2+k^3R\) and \(D(\chi)=\chi(\chi^2-k^2)\).
	
	For a fixed value of \(\mu\), the spatial eigenvalues are determined by the polynomial equation
	\begin{equation}
		\label{eq:F}
		F(\chi;\mu)=D(\chi)\bigl[\mu-\Lambda(\chi)\bigr]=\mu\chi^3-2\chi^2+k^2(S-\mu)\chi+k^2(2-kR)=0.
	\end{equation}
	The two forms play complementary roles: derivatives of \(\Lambda\) describe the local ramification geometry, while \(F\) retains the complete fixed-\(\mu\) spatial characteristic equation. This distinction becomes essential when \(N\) and \(D\) have a common factor, since cancellation in the rational representation may remove an entire spectral branch.
	
	The possible singular points \(\chi=\pm k\) are excluded by the following elementary observation.
	
	\begin{proposition}
		\label{prop:no-pm-k-root}
		In the physical regime \eqref{eq:physical-region}, neither \(\chi=k\) nor \(\chi=-k\) is a root of the fixed-\(\mu\) spatial equation. Consequently, no double or triple branch point can occur at \(\chi=\pm k\).
	\end{proposition}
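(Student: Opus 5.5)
Because \(F(\chi;\mu)\) is polynomial in \(\chi\), we can test \(\chi=\pm k\) by direct substitution into \eqref{eq:F}. The \(\mu\)-dependence should cancel at these points. This is expected from \(F=D(\mu-\Lambda)=\mu D-N\) together with \(D(\pm k)=0\), which gives \(F(\pm k;\mu)=-N(\pm k)\) for every \(\mu\). So the plan is to compute \(N(k)\) and \(N(-k)\) and show that neither vanishes when \(A,B>0\) and \(k\neq0\).

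From \(N(\chi)=2\chi^2-k^2S\chi-2k^2+k^3R\):
\begin{equation*}
	N(k)=2k^2-k^3S-2k^2+k^3R=k^3(R-S)=-2k^3B,
\end{equation*}
and
\begin{equation*}
	N(-k)=2k^2+k^3S-2k^2+k^3R=k^3(S+R)=2k^3A.
\end{equation*}
Hence
\begin{equation*}
	F(k;\mu)=2k^3B\neq0,\qquad F(-k;\mu)=-2k^3A\neq0
\end{equation*}
for all \(\mu\), by \eqref{eq:physical-region}. As a consistency check, we will verify these values directly from the explicit cubic in \eqref{eq:F}. There, the \(\mu\)-terms \(\mu k^3-\mu k^3\) cancel at \(\chi=k\), and similarly at \(\chi=-k\).

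For the consequence, recall that a double or triple branch point at \(\chi_0\) means that \(\chi_0\) is a root of multiplicity at least two of \(F(\cdot;\mu_0)\). In particular it requires \(F(\chi_0;\mu_0)=0\), which we have just excluded for \(\chi_0=\pm k\). This step needs nothing further.

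No step presents a real obstacle. The only point needing care is to keep the two formulations apart. The rational map \(\Lambda\) is undefined at \(\pm k\), so the root question must be posed for the polynomial \(F\), whose values at \(\pm k\) equal \(-N(\pm k)\). This is also why the argument is independent of any common-factor cancellation between \(N\) and \(D\): such cancellation can only occur at \(\chi=0\), since \(N(\pm k)\neq0\).
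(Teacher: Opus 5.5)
Your proof is correct and is essentially the paper's: direct substitution into \eqref{eq:F} gives \(F(k;\mu)=k^3(S-R)=2Bk^3\) and \(F(-k;\mu)=-k^3(S+R)=-2Ak^3\), both nonzero for every \(\mu\), and a multiple root must in particular be a root. Routing the computation through \(F=\mu D-N\) with \(D(\pm k)=0\) is only a convenient way of explaining why the \(\mu\)-terms cancel.
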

	
	\begin{proof}
		Direct substitution into \eqref{eq:F} gives \(F(k;\mu)=k^3(S-R)=2Bk^3\) and \(F(-k;\mu)=-k^3(S+R)=-2Ak^3\). Both quantities are nonzero for every \(\mu\) because \(A>0\), \(B>0\), and \(k\neq0\). Hence \(\chi=\pm k\) cannot be spatial spectral roots or multiple roots.
	\end{proof}
	
	It follows that the factors \((k+\chi)^{-1}\) and \((k-\chi)^{-1}\), which occur in \eqref{eq:Lambda} and in the elementary eigenfunctions below, are regular at every admissible spatial root. The point \(\chi=0\), however, requires separate consideration.
	
	\begin{remark}
		\label{rem:critical-zero-branch}
		Since \(F(0;\mu)=k^2(2-kR)\), the point \(\chi=0\) is a spatial spectral root precisely when
		\begin{equation}
			\label{eq:critical-zero-condition}
			2-kR=0.
		\end{equation}
		Under this condition,
		\begin{equation}
			\label{eq:F-critical-zero-factorization}
			F(\chi;\mu)=\chi\left[\mu\chi^2-2\chi+k^2(S-\mu)\right].
		\end{equation}
		Thus \(\chi=0\) is a genuine spatial eigenvalue branch for every \(\mu\), although it is removed from the reduced rational relation \(\mu=\Lambda(\chi)\) by cancellation of the common factor \(\chi\) in \(N(\chi)\) and \(D(\chi)\). At \(\mu=S\), this branch intersects a movable branch at \(\chi=0\), and \(F(\chi;S)=\chi^2(S\chi-2)\). Since the resulting double zero is a branch intersection rather than a ramification point of \(\Lambda\), it is excluded from the effective branch-point classification and treated separately below.
	\end{remark}
	
	For a nonzero spatial root \(\chi\), define
	\[
	E(\chi)=\exp\left[{\rm i}\bigl(\chi x+\nu(\chi)t\bigr)\right],\qquad
	\nu(\chi)=\frac{S}{2}+\frac{2-kR}{2\chi},
	\]
	and set \(h_1(\chi)=k/(k+\chi)\) and \(h_2(\chi)=k/(k-\chi)\). An elementary eigenfunction associated with the spectral pair \((\chi,\lambda)\), where \(\mu=\lambda^2\), is
	\begin{equation}
		\label{eq:y-basic}
		{\bf y}(\chi,\lambda)=
		\begin{pmatrix}
			\lambda E(\chi)\\[1mm]
			u_{1,0}h_1(\chi)E(\chi)\\[1mm]
			u_{2,0}h_2(\chi)E(\chi)
		\end{pmatrix}.
	\end{equation}
	
	On the persistent zero branch of Remark~\ref{rem:critical-zero-branch}, the expression for \(\nu(\chi)\) is indeterminate and cannot be evaluated by setting \(\chi=0\). The temporal Lax equation instead gives \(\nu_s(\mu)=\mu/2\). Hence the corresponding elementary eigenfunction is
	\begin{equation}
		\label{eq:zero-branch-eigenfunction}
		{\bf y}_s(\lambda)=\exp\left(\frac{{\rm i}\mu t}{2}\right)
		\begin{pmatrix}
			\lambda\\
			u_{1,0}\\
			u_{2,0}
		\end{pmatrix},\qquad
		\mu=\lambda^2.
	\end{equation}
	In particular, \(\partial_\mu\nu_s=1/2\); this derivative will enter the first-order expansion whenever the persistent branch participates in a real-spectrum limiting construction.
	
	The derivative modes used below are fixed-\(\lambda_0\) derivatives. The following lemma shows directly, without an off-shell residual factorization, why the derivatives up to the multiplicity order solve the same Lax pair.
	
	\begin{lemma}
		\label{lem:fixed-spectral-derivative-chain}
		Let \(m\ge2\), let \(\chi_0\notin\{0,\pm k\}\), and suppose that
		\begin{equation}
			\label{eq:mfold-Lambda-vanishing}
			\Lambda(\chi_0)=\mu_0=\lambda_0^2\ne0,
			\qquad
			\Lambda^{(j)}(\chi_0)=0,
			\quad j=1,\ldots,m-1.
		\end{equation}
		For \(0\le n\le m-1\), define
		\begin{equation}
			\label{eq:fixed-lambda-derivative-modes}
			{\bf Y}_n
			=
			\left.
			\partial_\chi^n{\bf y}(\chi,\lambda_0)
			\right|_{\chi=\chi_0},
		\end{equation}
		where \(\lambda\) is held fixed during differentiation. Then every \({\bf Y}_n\), \(0\le n\le m-1\), satisfies both equations of the background Lax pair at the single spectral point \(\lambda=\lambda_0\).
	\end{lemma}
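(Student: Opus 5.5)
The plan is to compute the off-shell residuals of the elementary eigenfunction \eqref{eq:y-basic} directly, viewing \(\chi\) and \(\lambda\) as independent variables. Then we differentiate these residual identities in \(\chi\) at fixed \(\lambda=\lambda_0\), and use the vanishing conditions \eqref{eq:mfold-Lambda-vanishing}.

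\emph{Spatial residual.} Substituting \({\bf y}(\chi,\lambda)\) into \({\bf y}_x-U(\lambda){\bf y}\) on the background, the second and third rows vanish identically for all \((\chi,\lambda)\). Indeed, with \(u_{j,0,x}={\rm i}k_ju_{j,0}\), the second row reads \({\rm i}(\chi+k)u_{1,0}h_1E-\lambda^{-1}{\rm i}k u_{1,0}\lambda E=0\) by the choice \(h_1=k/(k+\chi)\), and the third row works similarly with \(h_2\). The first row gives
\[
\Bigl[{\rm i}\chi\lambda-\tfrac{2{\rm i}}{\lambda}-\tfrac{1}{\lambda}\bigl(-{\rm i}kA h_1+{\rm i}kBh_2\bigr)\Bigr]E
=\frac{{\rm i}\chi}{\lambda}\bigl(\lambda^2-\Lambda(\chi)\bigr)E,
\]
after checking that \(2-k^2(A/(\chi+k)+B/(\chi-k))=\chi\Lambda(\chi)\) matches \eqref{eq:Lambda}. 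So \({\bf y}_x-U(\lambda){\bf y}=r(\chi,\lambda)\,{\bf e}_1\), where \(r={\rm i}\chi\lambda^{-1}(\lambda^2-\Lambda(\chi))E(\chi)\). Here \(E\) depends on \(\chi\) through both \(x\) and \(\nu(\chi)\), but it is smooth near \(\chi_0\ne0,\pm k\).

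\emph{Temporal residual.} We proceed in the same way. The first row of \({\bf y}_t-V{\bf y}\) and the two lower rows should each be expressible as multiples of \(\lambda^2-\Lambda(\chi)\), with coefficients smooth in \(\chi\) near \(\chi_0\). Concretely, we will verify that the residual equals \((\lambda^2-\Lambda(\chi))\,{\bf w}(\chi,\lambda)E\) for an explicit vector \({\bf w}\). The factorization should follow because the definition of \(\nu(\chi)=S/2+(2-kR)/(2\chi)\) was obtained by eliminating \(\lambda^2\) via \(\lambda^2=\Lambda(\chi)\). The cleanest way to check this is to compute \(\nu(\chi)\) minus the value forced by each row as a rational function and factor out \(\lambda^2-\Lambda(\chi)\), using \(\omega_1=1/k-A\) and \(\omega_2=-1/k-B\). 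I expect this bookkeeping to be the main obstacle. The risk is that the rows agree with \(\nu(\chi)\) only after using the spatial relation, so the residual might be a combination of \(\lambda^2-\Lambda\) and nothing else. If a direct factorization is messy, I would first do the computation with \(\lambda^2\) replaced by \(\Lambda(\chi)+\delta\) and show that the residual is linear in \(\delta\) with zero constant term.

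\emph{Differentiation.} With both residuals of the form \(\bigl(\lambda_0^2-\Lambda(\chi)\bigr)\,{\bf g}(\chi)\) for smooth \({\bf g}\), the operators \(\partial_x-U(\lambda_0)\) and \(\partial_t-V(\lambda_0)\) commute with \(\partial_\chi\) at fixed \(\lambda_0\), since \(U\) and \(V\) do not depend on \(\chi\). Applying \(\partial_\chi^n\) and using Leibniz gives
\[
\partial_\chi^n\bigl[(\lambda_0^2-\Lambda){\bf g}\bigr]_{\chi_0}=\sum_{j=0}^n\binom{n}{j}\bigl(\lambda_0^2-\Lambda\bigr)^{(j)}(\chi_0)\,{\bf g}^{(n-j)}(\chi_0).
\]
Every term vanishes for \(n\le m-1\), since \(\lambda_0^2-\Lambda(\chi_0)=0\) and \(\Lambda^{(j)}(\chi_0)=0\) for \(1\le j\le m-1\). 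Hence each \({\bf Y}_n\) solves both equations at \(\lambda_0\). The hypothesis \(\lambda_0\ne0\) is needed only so that \(U(\lambda_0)\) and \(V(\lambda_0)\) are defined and \(\lambda_0^{-1}\) is finite.
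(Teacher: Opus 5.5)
Your proof is correct, but it is precisely the off-shell residual factorization that the paper explicitly chooses not to use. The paper instead takes a holomorphic branch \(\lambda(\chi)=\sqrt{\Lambda(\chi)}\) near \(\chi_0\). It proves inductively, using \(\lambda_0\ne0\), that \(\lambda^{(j)}(\chi_0)=0\) for \(1\le j\le m-1\). It then differentiates the on-curve identities \(\mathcal L_x(\lambda(\chi)){\bf y}(\chi,\lambda(\chi))=\mathcal L_t(\lambda(\chi)){\bf y}(\chi,\lambda(\chi))=0\) by the chain rule, so every term in which a derivative falls on the composed spectral parameter vanishes.

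The temporal factorization you flagged as the main obstacle goes through directly, and no fallback is needed. Since \(\Lambda-Ah_1-Bh_2=(2-kR)/\chi\), one has \(\nu=\tfrac12(S+\Lambda-Ah_1-Bh_2)\). With \(\omega_1=1/k-A\) and \(\omega_2=-1/k-B\) one also gets \((\omega_1+\nu)h_1=(\omega_2+\nu)h_2=(2-kR)/(2\chi)\). Hence
\[
{\bf y}_t-V(\lambda){\bf y}=-\frac{{\rm i}}{2}\bigl(\lambda^2-\Lambda(\chi)\bigr)E(\chi)\,(\lambda,u_{1,0},u_{2,0})^{\mathsf T}.
\]

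Your route buys self-containedness and sharpness:
\begin{itemize}
\item It proves the on-curve Lax identities that the paper simply asserts; they are the case \(\lambda^2=\Lambda(\chi)\).
\item It never needs the square-root branch, or the conversion of \(\Lambda^{(j)}(\chi_0)=0\) into \(\lambda^{(j)}(\chi_0)=0\).
\item The same Leibniz computation shows that \({\bf Y}_m\) has spatial residual \(-{\rm i}\chi_0\lambda_0^{-1}\Lambda^{(m)}(\chi_0)E(\chi_0){\bf e}_1\). So the range \(n\le m-1\) cannot be extended when \(\Lambda^{(m)}(\chi_0)\ne0\). This is consistent with the paper using \({\bf Y}_2,{\bf Y}_3\) (double case) and \({\bf Y}_3,{\bf Y}_4,{\bf Y}_5\) (triple case) only as Taylor coefficients.
\end{itemize}
The paper's route is structural instead: it needs no explicit residual formulas and applies verbatim to any eigenfunction family that is analytic along the spectral curve.
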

	
	\begin{proof}
		Choose a simply connected neighborhood \(U\) of \(\chi_0\) which does not meet \(\{0,\pm k\}\) and on which \(\Lambda\) does not vanish. There is a holomorphic branch
		\[
		\lambda(\chi)=\sqrt{\Lambda(\chi)},
		\qquad
		\lambda(\chi_0)=\lambda_0.
		\]
		Differentiating \(\lambda(\chi)^2=\Lambda(\chi)\) once gives
		\(2\lambda_0\lambda'(\chi_0)=\Lambda'(\chi_0)=0\). Inductively, assume that
		\(\lambda^{(j)}(\chi_0)=0\) for \(1\le j<r\le m-1\). The \(r\)-th derivative of \(\lambda^2=\Lambda\) at \(\chi_0\) is
		\[
		2\lambda_0\lambda^{(r)}(\chi_0)
		+
		\sum_{j=1}^{r-1}
		\binom{r}{j}
		\lambda^{(j)}(\chi_0)
		\lambda^{(r-j)}(\chi_0)
		=
		\Lambda^{(r)}(\chi_0).
		\]
		Both the sum and the right-hand side vanish, while \(\lambda_0\ne0\). Hence
		\begin{equation}
			\label{eq:lambda-derivatives-vanish}
			\lambda^{(j)}(\chi_0)=0,
			\qquad j=1,\ldots,m-1.
		\end{equation}
		
		Let
		\[
		\mathcal L_x(\lambda)=\partial_x-U(u_0,\lambda),
		\qquad
		\mathcal L_t(\lambda)=\partial_t-V(u_0,\lambda).
		\]
		For every \(\chi\in U\), the pair \((\chi,\lambda(\chi))\) lies on the spatial spectral curve. Consequently,
		\begin{equation}
			\label{eq:on-curve-Lax-identities}
			\mathcal L_x(\lambda(\chi))
			{\bf y}(\chi,\lambda(\chi))=0,
			\qquad
			\mathcal L_t(\lambda(\chi))
			{\bf y}(\chi,\lambda(\chi))=0.
		\end{equation}
		Fix \(n\le m-1\), differentiate either identity in
		\eqref{eq:on-curve-Lax-identities} \(n\) times with respect to \(\chi\), and evaluate at \(\chi_0\). Every term in which a derivative hits the composed spectral parameter contains at least one factor \(\lambda^{(j)}(\chi_0)\) with \(1\le j\le n\), and therefore vanishes by \eqref{eq:lambda-derivatives-vanish}. Thus
		\[
		\mathcal L_x(\lambda_0)
		\left.
		\frac{d^n}{d\chi^n}
		{\bf y}(\chi,\lambda(\chi))
		\right|_{\chi=\chi_0}=0,
		\]
		and the analogous identity holds for \(\mathcal L_t\). The chain rule and \eqref{eq:lambda-derivatives-vanish} also give
		\[
		\left.
		\frac{d^n}{d\chi^n}
		{\bf y}(\chi,\lambda(\chi))
		\right|_{\chi=\chi_0}
		=
		\left.
		\partial_\chi^n
		{\bf y}(\chi,\lambda_0)
		\right|_{\chi=\chi_0}
		={\bf Y}_n.
		\]
		Hence \(\mathcal L_x(\lambda_0){\bf Y}_n=\mathcal L_t(\lambda_0){\bf Y}_n=0\) for \(0\le n\le m-1\).
	\end{proof}
	
	\begin{proposition}
		\label{prop:confluent-basis-independence}
		The fixed-\(\lambda_0\) confluent modes have the following linear-independence properties.
		\begin{enumerate}
			\item If \(\chi_0\ne0\) is an exact double root of \(F(\chi;\mu_0)\) and \(\chi_s\ne\chi_0\) is the remaining simple root, then \({\bf Y}_0,{\bf Y}_1,{\bf Y}_s\) are linearly independent. Here \({\bf Y}_s\) denotes either \({\bf y}(\chi_s,\lambda_0)\) or, when \(\chi_s=0\) is the persistent branch, the eigenfunction \eqref{eq:zero-branch-eigenfunction} evaluated at \(\lambda_0\).
			\item If \(\chi_0\ne0\) is an exact triple root of \(F(\chi;\mu_0)\), then \({\bf Y}_0,{\bf Y}_1,{\bf Y}_2\) are linearly independent.
		\end{enumerate}
		In either case the three vectors form a fundamental solution matrix of the background Lax pair at \(\lambda=\lambda_0\).
	\end{proposition}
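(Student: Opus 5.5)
The plan is to prove linear independence first as \emph{functions} of \((x,t)\), and then pass to pointwise independence using uniqueness for the linear Lax system. By Lemma~\ref{lem:fixed-spectral-derivative-chain}, every vector in each family solves both equations of the background Lax pair at \(\lambda=\lambda_0\). We apply that lemma with \(m=2\) or \(m=3\). Its hypothesis holds because \(\chi_0\notin\{0,\pm k\}\): this is Proposition~\ref{prop:no-pm-k-root} together with \(\chi_0\ne0\). Consequently \(D(\chi_0)\ne0\), and the identity \(F=D(\mu-\Lambda)\) shows that a root of \(F(\cdot;\mu_0)\) of multiplicity \(m\) at \(\chi_0\) is the same as \(\Lambda(\chi_0)=\mu_0\) with \(\Lambda^{(j)}(\chi_0)=0\) for \(j=1,\dots,m-1\). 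The simple-root mode \({\bf Y}_s\) solves the Lax pair by construction, including the persistent zero branch \eqref{eq:zero-branch-eigenfunction}.

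Since the columns solve the linear system \(\Phi_x=U\Phi\), \(\Phi_t=V\Phi\), any constant linear combination that vanishes at a single point \((x_*,t_*)\) is a solution with zero data there. By uniqueness it therefore vanishes identically. So it suffices to show that no nontrivial constant combination vanishes identically in \((x,t)\). Once that is established, the determinant of \(({\bf Y}_0,{\bf Y}_1,{\bf Y}_s)\) or \(({\bf Y}_0,{\bf Y}_1,{\bf Y}_2)\) is nonzero everywhere, which is the fundamental-matrix claim.

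For functional independence, write \({\bf y}(\chi,\lambda_0)=E(\chi){\bf v}(\chi)\) with \({\bf v}=(\lambda_0,\,u_{1,0}h_1,\,u_{2,0}h_2)^{\mathsf T}\). The factor \(u_{j,0}\) does not depend on \(\chi\). Differentiating in \(\chi\) gives
\[
{\bf Y}_1=E(\chi_0)\bigl[\mathrm i\theta\,{\bf v}(\chi_0)+{\bf v}'(\chi_0)\bigr],
\qquad
{\bf Y}_2=E(\chi_0)\bigl[-\theta^2{\bf v}(\chi_0)+\mathrm i\nu''(\chi_0)t\,{\bf v}(\chi_0)+2\mathrm i\theta\,{\bf v}'(\chi_0)+{\bf v}''(\chi_0)\bigr],
\]
where \(\theta=x+\nu'(\chi_0)t\). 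The first entry of \({\bf v}(\chi_0)\) is \(\lambda_0\ne0\); this entry carries only a multiplicative factor \(E\), since \(\chi\) does not otherwise enter it. Looking at that entry alone, the functions are \(\lambda_0E(\chi_0)\), \(\lambda_0E(\chi_0)\cdot\mathrm i\theta\), and \(\lambda_0E(\chi_0)\cdot(-\theta^2+\mathrm i\nu''t)\) in the triple case.

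In case (2), fix \(t\), say \(t=0\). The three functions are \(\lambda_0e^{\mathrm i\chi_0x}\) multiplied by polynomials in \(x\) of degrees \(0,1,2\), so they are independent. In case (1), the first entry of \({\bf Y}_s\) is \(\lambda_0e^{\mathrm i\chi_sx}\times(\text{function of }t)\) when \(\chi_s\ne0\), and \(\lambda_0e^{\mathrm i\mu_0t/2}\), with no \(x\)-dependence, on the zero branch. Since \(\chi_s\ne\chi_0\) (and \(\chi_0\ne0\) in the zero-branch case), the \(x\)-frequencies differ. Independence of \(e^{\mathrm i\chi_0x}\), \(xe^{\mathrm i\chi_0x}\), \(e^{\mathrm i\chi_sx}\) over \(\mathbb C\) then settles it.

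The main point to handle carefully is the persistent zero branch. There \(\nu\) is not given by the generic formula, and the exponent is \(\mu_0t/2\) rather than \(\chi x+\nu t\). I would treat it by restricting to fixed \(t\) and using that \(\chi_0\ne0\) separates the \(x\)-exponentials. Otherwise, the only thing to watch is that the frequencies \(\chi\) may be complex. Independence of the functions \(x^je^{\mathrm i\chi x}\) for distinct complex \(\chi\) is still standard, for example via the Wronskian or by applying the operators \((\partial_x-\mathrm i\chi)^r\).
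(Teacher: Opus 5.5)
Your proof is correct and follows essentially the same route as the paper. Both fix \(t\), look only at the first components \(\lambda_0E\cdot(\text{polynomial in }x)\) and \(\lambda_0e^{{\rm i}\chi_sx}\) (including the \(x\)-independent zero-branch case), and use independence of exponential polynomials with distinct frequencies; the paper applies \((\partial_x-{\rm i}\chi_0)^2\) to remove the \(\chi_0\)-part, while your explicit uniqueness step and your check via \(D(\chi_0)\neq0\) that the root multiplicity of \(F\) matches the vanishing of \(\Lambda^{(j)}(\chi_0)\) spell out what the paper leaves implicit.
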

	
	\begin{proof}
		Fix an arbitrary \(t=t_*\). Common nonzero factors depending only on \(t_*\) do not affect linear independence.
		
		In the double-root case, the first components of \({\bf Y}_0\), \({\bf Y}_1\), and \({\bf Y}_s\) are, up to nonzero constant factors, of the form
		\[
		e^{{\rm i}\chi_0x},
		\qquad
		(x+c_*)e^{{\rm i}\chi_0x},
		\qquad
		e^{{\rm i}\chi_sx},
		\]
		where \(c_*\) is independent of \(x\). This remains true for the persistent zero branch, for which \(e^{{\rm i}\chi_sx}=1\). Suppose that a linear combination of the three vector solutions vanishes. Its first component has the form
		\[
		(a+b x)e^{{\rm i}\chi_0x}
		+c e^{{\rm i}\chi_sx}=0
		\qquad\text{for all }x.
		\]
		Applying \((\partial_x-{\rm i}\chi_0)^2\) gives
		\[
		-c(\chi_s-\chi_0)^2e^{{\rm i}\chi_sx}=0.
		\]
		Since \(\chi_s\ne\chi_0\), one has \(c=0\). The remaining identity \(a+bx=0\) then gives \(a=b=0\). Hence \({\bf Y}_0,{\bf Y}_1,{\bf Y}_s\) are linearly independent.
		
		In the triple-root case, the first components of \({\bf Y}_0,{\bf Y}_1,{\bf Y}_2\), after division by the common nonzero factor \(\lambda_0e^{{\rm i}(\chi_0x+\nu(\chi_0)t_*)}\), are polynomials in \(x\) of degrees respectively \(0,1,2\), with nonzero leading coefficients. A vanishing linear combination is therefore a polynomial of degree at most two which vanishes for every \(x\), so all three coefficients are zero. Thus \({\bf Y}_0,{\bf Y}_1,{\bf Y}_2\) are linearly independent.
		
		The spatial Lax equation is a first-order \(3\times3\) linear system. Lemma~\ref{lem:fixed-spectral-derivative-chain} shows that the relevant derivative modes solve it at \(\lambda_0\), while the simple mode in the double-root case is an elementary solution at the same spectral point. Three linearly independent solutions therefore form a fundamental solution matrix. Their simultaneous satisfaction of the temporal Lax equation follows from the same lemma and from the elementary-mode construction.
	\end{proof}
	
	The following elementary recurrence estimate will be used to control complete Taylor tails. It is needed because the individual split-branch coefficients may diverge even though their weighted moments remain finite.
	
	\begin{lemma}
		\label{lem:analytic-moment-tail}
		Let \(\mathcal B\) be a complex Banach space and let
		\[
		f(\delta)=\sum_{n=0}^{\infty}f_n\delta^n
		\]
		be holomorphic as a \(\mathcal B\)-valued function on \(|\delta|<r_0\). Let \(\mathfrak M_n(\eta)\) satisfy, for some integers \(N\ge0\) and \(d\ge1\),
		\begin{equation}
			\label{eq:abstract-moment-recurrence}
			\mathfrak M_n
			=
			\sum_{\ell=1}^{d}
			b_\ell(\eta)\mathfrak M_{n-\ell},
			\qquad n\ge N+d,
		\end{equation}
		where \(b_\ell(\eta)=O(|\eta|)\), and suppose that
		\[
		\mathfrak M_N,\ldots,\mathfrak M_{N+d-1}
		=O(|\eta|^2).
		\]
		Then, for every \(\tau\in(0,r_0)\), after reducing \(|\eta|\) if necessary, there is a constant \(C_\tau\), independent of \(n\) and \(\eta\), such that
		\begin{equation}
			\label{eq:uniform-moment-geometric-bound}
			|\mathfrak M_n(\eta)|
			\le
			C_\tau|\eta|^2\tau^{n-N},
			\qquad n\ge N.
		\end{equation}
		Consequently,
		\begin{equation}
			\label{eq:Banach-analytic-tail}
			\left\|
			\sum_{n=N}^{\infty}
			\mathfrak M_n(\eta)f_n
			\right\|_{\mathcal B}
			=O(|\eta|^2).
		\end{equation}
	\end{lemma}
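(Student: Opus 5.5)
The plan is to control the homogeneous linear recurrence \eqref{eq:abstract-moment-recurrence} with a weighted sup-norm argument, and then combine the resulting geometric bound with Cauchy estimates for the Taylor coefficients \(f_n\).

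\textbf{Step 1: Cauchy estimates.} Fix \(\tau\in(0,r_0)\) and choose \(\rho\) with \(\tau<\rho<r_0\). Since \(f\) is holomorphic on \(|\delta|<r_0\), Cauchy's estimates on the circle \(|\delta|=\rho\) give
\[
\|f_n\|_{\mathcal B}\le M_\rho\,\rho^{-n}
\qquad\text{for all } n.
\]
This is valid for Banach-valued holomorphic functions, for instance by applying the scalar estimate to \(\ell\circ f\) for functionals \(\ell\) of norm one.

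\textbf{Step 2: a geometric bound for the moments.} Define
\[
K=\max_{0\le j\le d-1}\frac{|\mathfrak M_{N+j}|}{|\eta|^2\tau^{j}},
\]
which is bounded uniformly in small \(\eta\) because the initial moments are \(O(|\eta|^2)\). Suppose \(|b_\ell(\eta)|\le c|\eta|\) for all \(\ell\). Reduce \(|\eta|\) until
\[
c|\eta|\sum_{\ell=1}^{d}\tau^{-\ell}\le 1 .
\]
Now argue by strong induction on \(n\ge N+d\). Assume \(|\mathfrak M_{m}|\le K|\eta|^2\tau^{m-N}\) for all \(N\le m<n\). The recurrence then gives
\[
|\mathfrak M_n|
\le \sum_{\ell=1}^{d} c|\eta|\,K|\eta|^2\tau^{n-\ell-N}
= K|\eta|^2\tau^{n-N}\Bigl(c|\eta|\sum_{\ell=1}^{d}\tau^{-\ell}\Bigr)
\le K|\eta|^2\tau^{n-N}.
\]
This proves \eqref{eq:uniform-moment-geometric-bound} with \(C_\tau=K\), where \(K\) is taken as a uniform bound over the reduced \(\eta\)-neighbourhood.

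\textbf{Step 3: the tail.} Combining the two estimates,
\[
\Bigl\|\sum_{n\ge N}\mathfrak M_n f_n\Bigr\|_{\mathcal B}
\le C_\tau M_\rho|\eta|^2\rho^{-N}\sum_{n\ge N}(\tau/\rho)^{n-N}.
\]
The right-hand side is a convergent geometric series, so it is \(O(|\eta|^2)\). This gives \eqref{eq:Banach-analytic-tail}, and the series converges absolutely in \(\mathcal B\).

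\textbf{Main obstacle.} The only delicate point is that \(\tau\) may be small, so that \(\tau^{-\ell}\) is large. This is what forces the smallness condition on \(\eta\) to depend on \(\tau\). That is acceptable, because the statement permits reducing \(|\eta|\) for each fixed \(\tau\). One must also check that \(K\) and \(c\) are uniform in \(\eta\) on the reduced neighbourhood; both follow from the \(O\)-hypotheses.
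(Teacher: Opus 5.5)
Your proof is correct and follows essentially the same route as the paper. You shrink $|\eta|$ so that $\sum_{\ell}|b_\ell(\eta)|\tau^{-\ell}\le 1$, take the constant large enough to cover the $d$ initial moments, and propagate the geometric bound by strong induction; you then sum the tail using a Banach-valued Cauchy estimate on a circle of radius strictly between $\tau$ and $r_0$.
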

	
	\begin{proof}
		Fix \(\tau\in(0,r_0)\). Since every \(b_\ell(\eta)=O(|\eta|)\), there exists \(\eta_0>0\) such that
		\[
		\sum_{\ell=1}^{d}
		|b_\ell(\eta)|\tau^{-\ell}
		\le1,
		\qquad 0<|\eta|<\eta_0.
		\]
		Enlarge \(C_\tau\), if necessary, so that \eqref{eq:uniform-moment-geometric-bound} holds for the \(d\) initial indices \(N,\ldots,N+d-1\). If it holds up to index \(n-1\), then \eqref{eq:abstract-moment-recurrence} yields
		\[
		\begin{aligned}
			|\mathfrak M_n|
			&\le
			\sum_{\ell=1}^{d}
			|b_\ell(\eta)|
			C_\tau|\eta|^2
			\tau^{n-\ell-N}
			\\
			&\le
			C_\tau|\eta|^2\tau^{n-N}.
		\end{aligned}
		\]
		Induction proves \eqref{eq:uniform-moment-geometric-bound} for all \(n\ge N\).
		
		Choose \(r\) with \(\tau<r<r_0\). The Banach-valued Cauchy estimate gives
		\[
		\|f_n\|_{\mathcal B}
		\le
		M_r r^{-n},
		\qquad
		M_r=
		\sup_{|\delta|=r}\|f(\delta)\|_{\mathcal B}<\infty.
		\]
		Therefore,
		\[
		\begin{aligned}
			\left\|
			\sum_{n=N}^{\infty}
			\mathfrak M_n f_n
			\right\|_{\mathcal B}
			&\le
			C_\tau M_r|\eta|^2
			\sum_{n=N}^{\infty}
			\tau^{n-N}r^{-n}
			\\
			&=
			\frac{C_\tau M_r r^{-N}}
			{1-\tau/r}
			|\eta|^2.
		\end{aligned}
		\]
		This proves \eqref{eq:Banach-analytic-tail}.
	\end{proof}
	
	\subsection{The one-fold Darboux transformation and the \(J\)-form}
	\label{subsec:onefold-Jform}
	
	For later use, set \(J=\operatorname{diag}(1,-1,-1)\) and \(K=I_3\), and define
	\begin{equation*}
		\begin{aligned}
			\langle{\bf y}_L|J|{\bf y}_R\rangle&=\psi_L^*\psi_R-\phi_L^*\phi_R-\varphi_L^*\varphi_R,\\
			\langle{\bf y}_L|K|{\bf y}_R\rangle&=\psi_L^*\psi_R+\phi_L^*\phi_R+\varphi_L^*\varphi_R.
		\end{aligned}
	\end{equation*}
	The \(J\)-form determines the admissibility of the leading confluent eigenvector in the real-spectrum limit, whereas the positive-definite \(K\)-form contributes to the finite part of the limiting Darboux kernel. We first record the conservation and inertia properties of the \(J\)-form.
	
	\begin{lemma}
		\label{lem:J-form-conservation}
		Let \(\lambda_0\in\mathbb R\setminus\{0\}\), and let \({\bf f}\) and \({\bf g}\) be two solutions of the same Lax pair \eqref{eq:laxpair} at \(\lambda=\lambda_0\). Then
		\begin{equation}
			\label{eq:J-pairing-conservation}
			\partial_x\langle{\bf f}|J|{\bf g}\rangle=0,\qquad
			\partial_t\langle{\bf f}|J|{\bf g}\rangle=0.
		\end{equation}
		If \(\mathcal Y=({\bf y}^{(1)},{\bf y}^{(2)},{\bf y}^{(3)})\) is a fundamental matrix formed by three linearly independent solutions at \(\lambda=\lambda_0\), then the Gram matrix \(G_{\mathcal Y}=\mathcal Y^\dagger J\mathcal Y\) is nonsingular and has inertia \(\operatorname{In}G_{\mathcal Y}=(1,2)\), namely one positive and two negative eigenvalues.
	\end{lemma}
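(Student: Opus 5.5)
The plan is to verify the conservation law by a direct symmetry computation on the coefficient matrices, and then to read off the inertia from a single point in \(x\) using Sylvester's law of inertia. For real \(\lambda_0\ne0\) the decisive fact is that \(U\) and \(V\) are \(J\)-skew-Hermitian up to a purely imaginary scalar multiple of the identity. Indeed, from \eqref{eq:U},
\[
U^\dagger J+JU=\frac{-2{\rm i}}{\lambda_0^2}P+\frac{2{\rm i}}{\lambda_0^2}P+\frac{1}{\lambda_0}\left(M^\dagger J+JM\right),
\qquad
M=\begin{pmatrix}0&{\bf u}_x^\dagger\\{\bf u}_x&0\end{pmatrix}.
\]
Here \(PJ=JP=P\) and \(\lambda_0\) is real, so the \(P\)-terms cancel. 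The matrix \(M\) is Hermitian, and \(JM=-MJ\) because \(M\) is block off-diagonal. Hence \(M^\dagger J+JM=MJ+JM=0\). We then compute \(\partial_x\langle{\bf f}|J|{\bf g}\rangle={\bf f}^\dagger(U^\dagger J+JU){\bf g}=0\).

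The temporal identity is checked term by term from \eqref{eq:V}. The term \(\tfrac{{\rm i}\lambda_0^2}{2}P\) is anti-Hermitian and commutes with \(J\), so it gives \(-\tfrac{{\rm i}\lambda_0^2}{2}P+\tfrac{{\rm i}\lambda_0^2}{2}P=0\). The middle term is \({\rm i}\lambda_0 N/2\) with \(N=\begin{pmatrix}0&-{\bf u}^\dagger\\{\bf u}&0\end{pmatrix}\). Since \(N^\dagger=-N\), this term is Hermitian, and it anticommutes with \(J\), so again the contribution vanishes. The last term is \({\rm i}/2\) times a block-diagonal Hermitian matrix, hence anti-Hermitian, and it commutes with \(J\), so it cancels as well. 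Therefore \(V^\dagger J+JV=0\), which proves \eqref{eq:J-pairing-conservation}. This step is purely routine; the only care needed is tracking the signs coming from conjugating \({\rm i}\) and using the reality of \(\lambda_0\).

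For the Gram matrix, observe that \(G_{\mathcal Y}=\mathcal Y^\dagger J\mathcal Y\) is constant in \((x,t)\) by the first part, so it suffices to evaluate it at any fixed point. The fundamental matrix \(\mathcal Y\) is nonsingular there: its Wronskian solves \(\partial_x W=\operatorname{tr}(U)W\), so it is either identically zero or never zero, and linear independence of the columns as functions excludes the first alternative. Hence \(G_{\mathcal Y}\) is congruent to \(J\) via the invertible matrix \(\mathcal Y\). It is Hermitian because \(J\) is, and it is nonsingular. By Sylvester's law of inertia, \(\operatorname{In}G_{\mathcal Y}=\operatorname{In}J=(1,2)\).

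The only potentially delicate point is the passage from linear independence of the solutions as functions to pointwise invertibility of \(\mathcal Y\). I would handle it with the Liouville/Abel formula \(\det\mathcal Y(x)=\det\mathcal Y(x_0)\exp\int_{x_0}^x\operatorname{tr}U\), noting that the coefficients are smooth in \(x\) away from \(\lambda_0=0\).
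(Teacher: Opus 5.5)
Your proposal is correct and follows the paper's own argument. The paper also verifies \(U(\lambda_0)^\dagger J+JU(\lambda_0)=0\) and \(V(\lambda_0)^\dagger J+JV(\lambda_0)=0\) for real \(\lambda_0\); your term-by-term check is exactly the ``direct inspection'' it invokes. It then applies Sylvester's law of inertia to the congruence \(G_{\mathcal Y}=\mathcal Y^\dagger J\mathcal Y\sim J\). Your Abel--Liouville remark only makes explicit the pointwise invertibility of \(\mathcal Y\) that the paper builds into the word ``fundamental''; note that what rules out an identically vanishing determinant is uniqueness for the linear system, not linear independence of the columns by itself.
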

	
	\begin{proof}
		For real \(\lambda_0\), direct inspection of \eqref{eq:U} and \eqref{eq:V} gives
		\[
		U(\lambda_0)^\dagger J+JU(\lambda_0)=0,\qquad
		V(\lambda_0)^\dagger J+JV(\lambda_0)=0.
		\]
		Consequently,
		\[
		\partial_x\langle{\bf f}|J|{\bf g}\rangle={\bf f}^\dagger\bigl(U(\lambda_0)^\dagger J+JU(\lambda_0)\bigr){\bf g}=0,
		\]
		and the corresponding identity involving \(V(\lambda_0)\) gives \(\partial_t\langle{\bf f}|J|{\bf g}\rangle=0\). This proves \eqref{eq:J-pairing-conservation}.
		
		Since \(\mathcal Y\) is fundamental, \(\det\mathcal Y\neq0\). Thus \(G_{\mathcal Y}=\mathcal Y^\dagger J\mathcal Y\) is congruent to \(J\), and Sylvester's law of inertia yields \(\operatorname{In}G_{\mathcal Y}=\operatorname{In}J=(1,2)\).
	\end{proof}
	
	We next recall the one-fold Darboux transformation used throughout the paper.
	
	\begin{proposition}
		\label{prop:one-fold-DT}
		Let \({\bf u}=(u_1,u_2)^{\mathsf T}\) be a solution of \eqref{eq:CFL}, and let \({\bf y}_1=(\psi_1,\phi_1,\varphi_1)^{\mathsf T}\neq{\bf 0}\) solve the Lax pair \eqref{eq:laxpair} at \(\lambda=\lambda_1\), where \(\lambda_1\in\mathbb C\setminus(\mathbb R\cup{\rm i}\mathbb R)\). Define
		\begin{equation}
			\label{eq:m-nonreal}
			m_1=\frac{\langle{\bf y}_1|J|{\bf y}_1\rangle}{\lambda_1^*-\lambda_1}-\frac{\langle{\bf y}_1|K|{\bf y}_1\rangle}{\lambda_1^*+\lambda_1}.
		\end{equation}
		On every domain where \(m_1\neq0\), the transformed fields
		\begin{equation}
			\label{eq:nonreal-DT}
			\begin{pmatrix}
				u_1^{[1]}\\
				u_2^{[1]}
			\end{pmatrix}
			=
			\begin{pmatrix}
				u_1\\
				u_2
			\end{pmatrix}
			+2\frac{\begin{pmatrix}\phi_1\\\varphi_1\end{pmatrix}\psi_1^*}{m_1}
		\end{equation}
		satisfy the coupled Fokas--Lenells system \eqref{eq:CFL}. The transformation is invariant under the rescaling \({\bf y}_1\mapsto c{\bf y}_1\), where \(c\in\mathbb C\setminus\{0\}\).
	\end{proposition}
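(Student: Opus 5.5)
The plan is to use the standard dressing argument adapted to the two symmetries of the normalized Lax pair. First I record the symmetries of \(U\) and \(V\). Since \(P\) commutes with \(J\) and \(J\) anticommutes with the off-diagonal blocks,
\[
JU(-\lambda)J=U(\lambda),\qquad JV(-\lambda)J=V(\lambda).
\]
Extending the computation in Lemma~\ref{lem:J-form-conservation} to complex \(\lambda\) gives
\[
U(\lambda^*)^\dagger J+JU(\lambda)=0,\qquad V(\lambda^*)^\dagger J+JV(\lambda)=0.
\]
Consequently, with \({\bf y}_1\) at \(\lambda_1\), the functions \(J{\bf y}_1\) at \(-\lambda_1\), and \(J{\bf y}_1\), \({\bf y}_1\) (as covectors \({\bf y}_1^\dagger J\), \({\bf y}_1^\dagger\)) at \(\lambda_1^*\), \(-\lambda_1^*\) are all available. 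This fixes the pole set \(\{\pm\lambda_1,\pm\lambda_1^*\}\) of a symmetric Darboux matrix. The hypothesis \(\lambda_1\notin\mathbb R\cup{\rm i}\mathbb R\) guarantees that these four points are distinct.

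Next I make the ansatz
\[
T(\lambda)=I+\frac{M}{\lambda-\lambda_1^*}-\frac{JMJ}{\lambda+\lambda_1^*},\qquad M=\frac{c}{m_1}{\bf y}_1{\bf y}_1^\dagger J .
\]
This ansatz is built to respect \(JT(-\lambda)J=T(\lambda)\), and its inverse is expected to be \(J T(\lambda^*)^\dagger J\) up to normalization. I then impose the kernel condition \(T(\lambda_1){\bf y}_1=0\). Expanding, this condition requires
\[
c\,\frac{\langle{\bf y}_1|J|{\bf y}_1\rangle}{\lambda_1-\lambda_1^*}-c\,\frac{\langle{\bf y}_1|K|{\bf y}_1\rangle}{\lambda_1+\lambda_1^*}=-m_1 .
\]
This holds precisely with \(c=1\) and the given definition of \(m_1\) (up to fixing signs). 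The \(J\)-symmetry then yields \(T(-\lambda_1)J{\bf y}_1=0\) automatically. Invariance under \({\bf y}_1\mapsto c{\bf y}_1\) is immediate, because \(M\) is homogeneous of degree zero in \({\bf y}_1\).

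The core step is to show that \(T_x+TU(\lambda)=U^{[1]}(\lambda)T\), where \(U^{[1]}\) is obtained from \(U\) by replacing \({\bf u}\) with \({\bf u}^{[1]}\), and similarly for \(V\). I consider \(W=(T_x+TU)T^{-1}\). At each of \(\pm\lambda_1,\pm\lambda_1^*\), the apparent poles cancel by the kernel conditions combined with the Lax equations for \({\bf y}_1\), once the derivatives
\[
\partial_x m_1=\frac{1}{m_1}\text{-free expression in }{\bf y}_1,\quad\text{i.e. }m_{1,x}=\tfrac{2}{\lambda_1^{*}\lambda_1}\cdots
\]
are computed explicitly from \eqref{eq:laxpair}. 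In practice I compute \(m_{1,x}\) and \(m_{1,t}\) directly from the Lax pair. Hence \(W\) is rational with poles only at \(0\) (order \(\le2\)) for the \(x\)-part, and polynomial of degree \(\le2\) for the \(t\)-part. Comparing Laurent coefficients fixes the structure. At \(\lambda=0\), \(T(0)\) is block-diagonal, since the odd parts cancel. Matching the \(\lambda^{-2}\) term shows that \(P\) is preserved. Matching the \(\lambda^{-1}\) term identifies \({\bf u}^{[1]}_x\). At \(\lambda=\infty\), with \(T=I+T_1/\lambda+\dots\) and \(T_1=M-JMJ\) having only off-diagonal blocks equal to \(2\) times the off-diagonal part of \(M\), matching the \(\lambda\)-coefficient of \(V\) gives \({\bf u}^{[1]}={\bf u}+2(\phi_1,\varphi_1)^{\mathsf T}\psi_1^*/m_1\).

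The main obstacle is consistency. The \({\bf u}^{[1]}\) read off from \(V\) at \(\infty\) must have the \(x\)-derivative predicted by \(U\) at \(0\). In addition, the \(\lambda^0\) term of \(V^{[1]}\) must equal \(\tfrac{{\rm i}}{2}\operatorname{diag}({\bf u}^{[1]\dagger}{\bf u}^{[1]},-{\bf u}^{[1]}{\bf u}^{[1]\dagger})\). For the first point, I would use the identity \(T(0)^{-1}=JT(0)^\dagger J\)-type relations and differentiate the formula for \({\bf u}^{[1]}\) using \(m_{1,x}\). For the second, I would exploit the fact that the matching at \(0\) forces \(T(0)\) to be a \(J\)-unitary block matrix, which intertwines the zero-order terms. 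The final compatibility \(U^{[1]}_t-V^{[1]}_x+[U^{[1]},V^{[1]}]=0\) then follows from that of \(U,V\) by conjugation with \(T\), on the domain where \(m_1\ne0\).
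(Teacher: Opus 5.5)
\textbf{Gap: the Darboux matrix ansatz fails the kernel condition.} Your argument breaks at the kernel condition. The residue ansatz \(M=\frac{c}{m_1}{\bf y}_1{\bf y}_1^\dagger J\) is incompatible with the \(\lambda\mapsto-\lambda\) symmetry you build in. Since \(JMJ=\frac{c}{m_1}J{\bf y}_1{\bf y}_1^\dagger\),
\[
T(\lambda_1){\bf y}_1=\Bigl(1+\frac{c\,\langle{\bf y}_1|J|{\bf y}_1\rangle}{m_1(\lambda_1-\lambda_1^*)}\Bigr){\bf y}_1-\frac{c\,\langle{\bf y}_1|K|{\bf y}_1\rangle}{m_1(\lambda_1+\lambda_1^*)}\,J{\bf y}_1 .
\]
The last term points along \(J{\bf y}_1\), not \({\bf y}_1\), and its coefficient never vanishes because \(\|{\bf y}_1\|>0\) and \(\operatorname{Re}\lambda_1\neq0\). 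Hence \(T(\lambda_1){\bf y}_1\neq0\) for every \(c\) wherever \({\bf y}_1\) is not an eigenvector of \(J\).

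Your scalar equation silently replaces \(J{\bf y}_1\) by \({\bf y}_1\). Even read that way, its left side at \(c=1\) equals \(m_1^*\), not \(-m_1\); this is what ``up to fixing signs'' hides.

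An independent check confirms that this \(T\) is not a Darboux matrix for the reduction. With \(T_1=M-JMJ\), the \(\lambda^1\)-coefficient of \(T_t+TV=V^{[1]}T\) gives two different answers:
\begin{itemize}
\item from the lower-left block, \({\bf u}^{[1]}={\bf u}+2(\phi_1,\varphi_1)^{\mathsf T}\psi_1^*/m_1\);
\item from the upper-right block, \({\bf u}^{[1]}={\bf u}-2(\phi_1,\varphi_1)^{\mathsf T}\psi_1^*/m_1^*\).
\end{itemize}
These agree only if \(\operatorname{Re}m_1=0\), whereas \(\operatorname{Re}m_1=-\|{\bf y}_1\|^2/(2\operatorname{Re}\lambda_1)\neq0\). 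So your pole-cancellation and matching steps have no valid \(T\) to act on.

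\textbf{The missing idea: solve a \(2\times2\) system for the residues.} The kernel data are the pair \(x_1={\bf y}_1\) at \(\lambda_{(1)}=\lambda_1\) and \(x_2=J{\bf y}_1\) at \(\lambda_{(2)}=-\lambda_1\), so the residues cannot be prescribed. Set \(a=\langle{\bf y}_1|J|{\bf y}_1\rangle/(\lambda_1-\lambda_1^*)\in{\rm i}\mathbb R\) and \(b=\langle{\bf y}_1|K|{\bf y}_1\rangle/(\lambda_1+\lambda_1^*)\in\mathbb R\). Then the matrix \(\bigl(\langle x_i|J|x_j\rangle/(\lambda_{(j)}-\lambda_{(i)}^*)\bigr)_{i,j=1,2}\) equals \(\begin{pmatrix}a&-b\\ b&-a\end{pmatrix}\). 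This explains the structure of \(m_1\):
\begin{itemize}
\item the \(K\)-term of \(m_1\) is the cross pairing \(\langle{\bf y}_1|J|J{\bf y}_1\rangle\);
\item one has \(a+b=-m_1\) and \(a-b=m_1^*\);
\item the determinant is \(|m_1|^2\), which is exactly why the hypothesis is \(m_1\neq0\).
\end{itemize}
Solving gives
\[
T(\lambda)=I+\frac{{\bf w}\,{\bf y}_1^\dagger J}{\lambda-\lambda_1^*}-\frac{J{\bf w}\,{\bf y}_1^\dagger}{\lambda+\lambda_1^*},\qquad {\bf w}=\Bigl(-\frac{\psi_1}{m_1^*},\ \frac{\phi_1}{m_1},\ \frac{\varphi_1}{m_1}\Bigr)^{\mathsf T}.
\]
For this \(T\), \(T(\lambda_1){\bf y}_1={\bf y}_1+a{\bf w}-bJ{\bf w}=0\) componentwise, and \(T(\lambda)^{-1}=JT(\lambda^*)^\dagger J\). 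The lower-left block of \(T_1\) is \(2(\phi_1,\varphi_1)^{\mathsf T}\psi_1^*/m_1\), and the upper-right block is its adjoint. Your \({\bf w}={\bf y}_1/m_1\) agrees with this only in the lower two components, which is why your reconstruction formula still came out right.

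With the corrected \(T\), the \(U\)/\(V\) consistency problem you flag also disappears. Once the poles at \(\pm\lambda_1,\pm\lambda_1^*\) cancel, \(W=(T_x+TU)T^{-1}\) is exactly \(W_{-2}\lambda^{-2}+W_{-1}\lambda^{-1}\). Expanding at \(\lambda=\infty\) gives \(W_{-1}=Q_x+T_{1,x}\), with \(Q\) as in Remark~\ref{rem:scalar-gauge}. This is precisely the \(x\)-derivative of the field read off from \(V\). Only the \(\lambda^0\) term of \(V^{[1]}\) still needs a separate check.

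The paper itself gives no proof of this proposition. It cites the generalized Darboux transformation of Ling, Feng and Zhu and transports it through the scalar gauge of Remark~\ref{rem:scalar-gauge}. A repaired version of your argument would therefore be a self-contained alternative.
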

	
	The generalized Darboux transformation from which Proposition~\ref{prop:one-fold-DT} follows was established by Ling, Feng, and Zhu in \cite{LingFengZhu2018}. Formula \eqref{eq:nonreal-DT} is its one-fold specialization under the coupled Fokas--Lenells reduction. The normalized Lax pair \eqref{eq:laxpair} differs from the representation used in that work only by scalar multiples of the identity, as described in Remark~\ref{rem:scalar-gauge}. These scalar gauge factors multiply the eigenfunction by a nonzero scalar and cancel from the quotient in \eqref{eq:nonreal-DT}. We therefore use Proposition~\ref{prop:one-fold-DT} without repeating its standard proof.
	
	The real-spectrum limits at double and triple spatial roots share the same analytic mechanism. We isolate it once so that the later sections need only construct the leading and first correction vectors.
	
	\begin{theorem}
		\label{thm:unified-real-spectrum-directional-limit}
		Let \({\bf u}=(u_1,u_2)^{\mathsf T}\) be a smooth solution of \eqref{eq:CFL}, let \(\lambda_0>0\), and let \(\lambda(\eta)\) be the local holomorphic branch determined by
		\[
		\lambda(\eta)^2=\lambda_0^2+\eta,
		\qquad
		\lambda(0)=\lambda_0.
		\]
		Suppose that \({\bf y}(\eta)\) is a family of Lax eigenfunctions at \(\lambda=\lambda(\eta)\), defined for sufficiently small nonzero \(\eta\) along every nonreal direction under consideration, and that
		\begin{equation}
			\label{eq:unified-eigenfunction-expansion}
			{\bf y}(\eta)
			=
			{\bf y}_R
			+
			\eta{\bf y}_\mu
			+
			O_{C^2_{\rm loc}}(|\eta|^2).
		\end{equation}
		Assume that \({\bf y}_R\) is a nontrivial solution of the Lax pair at \(\lambda=\lambda_0\) and is \(J\)-null:
		\begin{equation}
			\label{eq:unified-leading-nullity}
			\langle{\bf y}_R|J|{\bf y}_R\rangle=0.
		\end{equation}
		For \(q\in\mathbb C\setminus\mathbb R\), define
		\begin{equation}
			\label{eq:unified-directional-denominator}
			\begin{aligned}
				\mathfrak m_q({\bf y}_R,{\bf y}_\mu)
				={}&
				\frac{2\lambda_0}{q^*-q}
				\left[
				q^*\langle{\bf y}_\mu|J|{\bf y}_R\rangle
				+
				q\langle{\bf y}_R|J|{\bf y}_\mu\rangle
				\right]
				\\
				&-
				\frac{
					\langle{\bf y}_R|K|{\bf y}_R\rangle
				}{2\lambda_0}.
			\end{aligned}
		\end{equation}
		Writing \({\bf y}_R=(\psi_R,\phi_R,\varphi_R)^{\mathsf T}\), set
		\begin{equation}
			\label{eq:unified-directional-field}
			{\bf u}^{(R;q)}
			=
			{\bf u}
			+
			2
			\frac{
				\begin{pmatrix}
					\phi_R\\
					\varphi_R
				\end{pmatrix}
				\psi_R^*
			}{
				\mathfrak m_q({\bf y}_R,{\bf y}_\mu)
			}.
		\end{equation}
		Then the ordinary Darboux denominator \(m(\eta)\) from \eqref{eq:m-nonreal} satisfies, along \(\eta=\varepsilon q\),
		\begin{equation}
			\label{eq:unified-denominator-convergence}
			m(\varepsilon q)
			=
			\mathfrak m_q({\bf y}_R,{\bf y}_\mu)
			+
			O_{C^2_{\rm loc}}(\varepsilon),
			\qquad
			\varepsilon\in\mathbb R,
			\quad \varepsilon\to0.
		\end{equation}
		Moreover,
		\begin{equation}
			\label{eq:unified-negative-real-part}
			\operatorname{Re}
			\mathfrak m_q({\bf y}_R,{\bf y}_\mu)
			=
			-
			\frac{\|{\bf y}_R\|^2}{2\lambda_0}
			<0
		\end{equation}
		at every real \((x,t)\). Consequently, \eqref{eq:unified-directional-field} is globally smooth and is an exact solution of \eqref{eq:CFL}.
	\end{theorem}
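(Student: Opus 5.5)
The plan is to expand the ordinary one-fold denominator \(m(\eta)\) from \eqref{eq:m-nonreal} along the ray \(\eta=\varepsilon q\) and show that the singular \(J\)-term has a finite limit. Throughout, \(\lambda=\lambda(\eta)\).

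\emph{Spectral expansions.} Since \(\lambda(\eta)=\lambda_0+\eta/(2\lambda_0)+O(\eta^2)\) and \(\lambda_0\) is real,
\[
\lambda^*-\lambda=\frac{\varepsilon(q^*-q)}{2\lambda_0}+O(\varepsilon^2).
\]
The prefactor is nonzero because \(q\notin\mathbb R\). Also \(\lambda^*+\lambda=2\lambda_0+O(\varepsilon)\), and \(\lambda(\varepsilon q)\notin\mathbb R\cup{\rm i}\mathbb R\) for small \(\varepsilon\ne0\). Hence Proposition~\ref{prop:one-fold-DT} applies to each \({\bf y}(\varepsilon q)\).

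\emph{Eigenfunction expansions.} Expanding \eqref{eq:unified-eigenfunction-expansion} and conjugating gives \({\bf y}^*={\bf y}_R^*+\varepsilon q^*{\bf y}_\mu^*+O(\varepsilon^2)\). The nullity \eqref{eq:unified-leading-nullity} then yields
\[
\langle{\bf y}|J|{\bf y}\rangle=\varepsilon\bigl[q^*\langle{\bf y}_\mu|J|{\bf y}_R\rangle+q\langle{\bf y}_R|J|{\bf y}_\mu\rangle\bigr]+O(\varepsilon^2).
\]
Similarly \(\langle{\bf y}|K|{\bf y}\rangle=\langle{\bf y}_R|K|{\bf y}_R\rangle+O(\varepsilon)\). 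Dividing and cancelling \(\varepsilon\) produces \eqref{eq:unified-denominator-convergence}.

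The remainders are uniform in \(C^2_{\rm loc}\). This holds because the expansion of \({\bf y}\) is in \(C^2_{\rm loc}\), bilinear forms of \(C^2_{\rm loc}\)-convergent families converge in \(C^2_{\rm loc}\), and the scalar factors depend only on \(\varepsilon\). This is the only point needing care. Cancelling the factor \(\varepsilon\) against an \(O(\varepsilon^2)\) remainder in the numerator leaves \(O(\varepsilon)\). I will write the quotient of the two \(\varepsilon\)-expansions explicitly as a product with \((1+O(\varepsilon))^{-1}\).

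\emph{Real part.} Set \(w=q^*\langle{\bf y}_\mu|J|{\bf y}_R\rangle\). Then \(q\langle{\bf y}_R|J|{\bf y}_\mu\rangle=w^*\), so the bracket equals \(2\operatorname{Re}w\), which is real. The prefactor \(2\lambda_0/(q^*-q)=2\lambda_0/(-2{\rm i}\operatorname{Im}q)\) is purely imaginary. So the first term of \eqref{eq:unified-directional-denominator} is purely imaginary, and
\[
\operatorname{Re}\mathfrak m_q=-\frac{\langle{\bf y}_R|K|{\bf y}_R\rangle}{2\lambda_0}=-\frac{\|{\bf y}_R\|^2}{2\lambda_0}.
\]
This is strictly negative pointwise, because a nontrivial Lax solution cannot vanish at any point by uniqueness for the linear ODE in \(x\).

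\emph{Exactness.} For each small \(\varepsilon\ne0\), the transformed field \({\bf u}^{[1]}_\varepsilon\) solves \eqref{eq:CFL} wherever \(m(\varepsilon q)\ne0\). On any compact set, \(|m(\varepsilon q)|\ge\tfrac12\min\|{\bf y}_R\|^2/(2\lambda_0)>0\) for small \(\varepsilon\), so each \({\bf u}^{[1]}_\varepsilon\) is a smooth solution there. The numerators \((\phi,\varphi)^{\mathsf T}\psi^*\) converge in \(C^2_{\rm loc}\). Therefore \({\bf u}^{[1]}_\varepsilon\to{\bf u}^{(R;q)}\) in \(C^2_{\rm loc}\). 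Since \eqref{eq:CFL} involves at most the derivative \(u_{xt}\), one can pass to the limit and conclude that \({\bf u}^{(R;q)}\) satisfies \eqref{eq:CFL}. Global smoothness follows from the nonvanishing of \(\mathfrak m_q\) together with smoothness of \({\bf y}_R\) and \({\bf u}\).
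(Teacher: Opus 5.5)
Your proposal is correct and follows the paper's proof essentially step for step. It uses the same first-order expansions of $\lambda^*\mp\lambda$ along $\eta=\varepsilon q$, the same use of $J$-nullity to isolate the $O(\varepsilon)$ term of $\langle{\bf y}|J|{\bf y}\rangle$, the same observation that the directional term is a real quantity divided by a purely imaginary one, and the same $C^2_{\rm loc}$ passage to the limit in the ordinary one-fold Darboux fields. One small point: to show ${\bf y}_R$ vanishes nowhere you need uniqueness for both the $x$- and $t$-equations, as the paper states, because uniqueness for the $x$-ODE alone only gives vanishing on the line $t=t_0$.
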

	
	\begin{proof}
		The local square-root expansion is
		\[
		\lambda(\eta)
		=
		\lambda_0+
		\frac{\eta}{2\lambda_0}
		+O(|\eta|^2).
		\]
		Along \(\eta=\varepsilon q\), this gives
		\begin{equation}
			\label{eq:unified-lambda-denominator-expansions}
			\lambda(\varepsilon q)^*-
			\lambda(\varepsilon q)
			=
			\frac{\varepsilon(q^*-q)}{2\lambda_0}
			+O(\varepsilon^2),
			\qquad
			\lambda(\varepsilon q)^*+
			\lambda(\varepsilon q)
			=
			2\lambda_0+O(\varepsilon).
		\end{equation}
		Because \(q\notin\mathbb R\), the first denominator in \eqref{eq:unified-lambda-denominator-expansions} is nonzero for all sufficiently small nonzero \(\varepsilon\), and \(\lambda(\varepsilon q)\notin\mathbb R\cup{\rm i}\mathbb R\).
		
		Using \eqref{eq:unified-eigenfunction-expansion}, the Hermitian property of \(J\), and the nullity \eqref{eq:unified-leading-nullity}, we obtain
		\[
		\begin{aligned}
			\langle{\bf y}(\varepsilon q)|J|{\bf y}(\varepsilon q)\rangle
			={}&
			\varepsilon
			\left[
			q^*\langle{\bf y}_\mu|J|{\bf y}_R\rangle
			+
			q\langle{\bf y}_R|J|{\bf y}_\mu\rangle
			\right]
			+
			O_{C^2_{\rm loc}}(\varepsilon^2),
			\\
			\langle{\bf y}(\varepsilon q)|K|{\bf y}(\varepsilon q)\rangle
			={}&
			\langle{\bf y}_R|K|{\bf y}_R\rangle
			+
			O_{C^2_{\rm loc}}(\varepsilon).
		\end{aligned}
		\]
		Substitution into \eqref{eq:m-nonreal}, together with \eqref{eq:unified-lambda-denominator-expansions}, proves \eqref{eq:unified-denominator-convergence} and identifies its limit as \eqref{eq:unified-directional-denominator}.
		
		Set
		\[
		z=\langle{\bf y}_R|J|{\bf y}_\mu\rangle.
		\]
		Then
		\[
		q^*\langle{\bf y}_\mu|J|{\bf y}_R\rangle
		+
		q\langle{\bf y}_R|J|{\bf y}_\mu\rangle
		=
		q^*z^*+qz
		=
		2\operatorname{Re}(qz)
		\in\mathbb R,
		\]
		whereas \(q^*-q\) is nonzero and purely imaginary. Hence the first term in \eqref{eq:unified-directional-denominator} is purely imaginary. Since \(K=I_3\), the real part is
		\[
		\operatorname{Re}\mathfrak m_q
		=-\frac{\|{\bf y}_R\|^2}{2\lambda_0}.
		\]
		The nontrivial Lax solution \({\bf y}_R\) cannot vanish at any point: if \({\bf y}_R(x_0,t_0)=0\), uniqueness for the spatial and temporal linear systems would imply \({\bf y}_R\equiv0\). Thus \(\|{\bf y}_R(x,t)\|^2>0\), proving \eqref{eq:unified-negative-real-part} and the global nonvanishing of \(\mathfrak m_q\).
		
		For each sufficiently small nonzero \(\varepsilon\), Proposition~\ref{prop:one-fold-DT} produces an exact Darboux field from \({\bf y}(\varepsilon q)\). On every compact set, \eqref{eq:unified-negative-real-part} gives a positive lower bound for \(|\mathfrak m_q|\); hence \eqref{eq:unified-eigenfunction-expansion} and \eqref{eq:unified-denominator-convergence} imply local \(C^2\)-convergence of these ordinary Darboux fields to \eqref{eq:unified-directional-field}. Passing to the limit in \eqref{eq:CFL} proves that \({\bf u}^{(R;q)}\) is an exact solution.
	\end{proof}
	
	\section{Geometry and Classification of Multiple Spatial Roots}
	\label{sec:spectral-geometry}
	\label{sec:double-classification}
	
	This section describes the algebraic geometry of the fixed-\(\mu\) spatial equation before any singular Darboux limit is introduced. We first determine the possible distributions of double roots, then distinguish the real and nonreal sectors, and finally characterize the triple-root degeneration.
	
	\subsection{Double-root equation and root distribution}
	\label{subsec:double-root-classification}
	
	Recall that \(\Lambda(\chi)=N(\chi)/D(\chi)\), where \(D(\chi)=\chi(\chi^2-k^2)\). A finite double branch point is a point \(\chi_0\notin\{0,\pm k\}\) such that \(\Lambda'(\chi_0)=0\) and \(\Lambda''(\chi_0)\neq0\). If \(\Lambda''(\chi_0)=0\), the degeneration is of higher order and will be considered in Section~\ref{subsec:triple-root-geometry}.
	
	For \(\chi\notin\{0,\pm k\}\), differentiation gives
	\[
	\Lambda'(\chi)=\frac{H(\chi)}{D(\chi)^2},
	\]
	where \(H(\chi)=N'(\chi)D(\chi)-N(\chi)D'(\chi)\). Substitution of the explicit expressions for \(N\) and \(D\) yields
	\begin{equation}
		\label{eq:H}
		H(\chi)=-2\chi^4+2Sk^2\chi^3+\left(4k^2-3Rk^3\right)\chi^2+Rk^5-2k^4.
	\end{equation}
	The coefficient of \(\chi\) vanishes identically. Since \(D(\chi)\neq0\) on the admissible domain, the finite stationary points of \(\Lambda\) are precisely the zeros of \(H\) outside \(\{0,\pm k\}\).
	
	Because \(H\) has real coefficients, its nonreal roots occur in complex-conjugate pairs. Counting algebraic multiplicities, the possible root distributions are therefore \(4R\), \(2R+1C\), and \(0R+2C\), where \(R\) denotes a real root and \(C\) denotes a nonreal conjugate pair. The last possibility is excluded by the following result.
	
	\begin{theorem}
		\label{thm:root-distribution}
		Assume \(A>0\), \(B>0\), and \(k\neq0\). Then the quartic equation \(H(\chi)=0\) has at least two real roots, counted with multiplicity. Consequently, away from parameter values at which roots coalesce, the only possible root distributions are \(4R\) and \(2R+1C\).
	\end{theorem}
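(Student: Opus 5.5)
The plan is to use a sign change of \(H\) on the real line. Since \(H\) has real coefficients and leading coefficient \(-2\), we have \(H(\chi)\to-\infty\) as \(\chi\to\pm\infty\). It therefore suffices to exhibit a single real point \(c\) with \(H(c)>0\). The intermediate value theorem then gives one real zero in \((-\infty,c)\) and another in \((c,\infty)\). These are two distinct real roots, so at least two real roots counted with multiplicity.

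The natural test points are \(\chi=\pm k\), where \(D\) vanishes. There \(H(\pm k)=-N(\pm k)D'(\pm k)\), which connects to the values \(F(\pm k;\mu)\) computed in Proposition~\ref{prop:no-pm-k-root}. Direct substitution into \eqref{eq:H} should give
\[
H(k)=2Sk^5-2Rk^5=4Bk^5,\qquad H(-k)=-2Sk^5-2Rk^5=-4Ak^5.
\]
The two values have opposite signs because \(A>0\) and \(B>0\), so one of them is strictly positive. If \(k>0\), take \(c=k\), since \(H(k)=4Bk^5>0\). If \(k<0\), take \(c=-k\), since \(H(-k)=-4Ak^5>0\). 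In either case the sign-change argument gives the two real roots. These roots are different from \(c\), and hence different from \(\pm k\). In fact neither root can equal \(\pm k\), because \(H(\pm k)\neq0\). Possible coincidence with \(\chi=0\), where \(H(0)=k^4(Rk-2)\), concerns only the critical case \eqref{eq:critical-zero-condition}. It does not affect the counting statement.

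For the consequence, the nonreal roots of the real quartic \(H\) come in conjugate pairs. The distributions compatible with at least two real roots are therefore \(4R\) and \(2R+1C\), and \(0R+2C\) is excluded. The only step needing care is the arithmetic for \(H(\pm k)\). In particular, the \(\chi^3\) and \(\chi^2\) terms must combine with the constant term \(Rk^5-2k^4\) exactly as stated. I would double-check this through the identity \(H(\pm k)=-N(\pm k)D'(\pm k)\), using \(D'(\pm k)=2k^2\), \(N(k)=-2Bk^3\), and \(N(-k)=2Ak^3\).
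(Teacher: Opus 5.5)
Your proposal is correct and is essentially the paper's argument. The paper rescales \(\chi=ky\) and evaluates the quartic at \(y=\pm1\), obtaining \(P(\pm1)\); these are exactly your values \(H(\pm k)=4Bk^5\) and \(-4Ak^5\) divided by \(-2k^4\), after which it applies the intermediate value theorem. The only cosmetic difference is that you use the single positive value together with \(H\to-\infty\) at both ends, whereas the paper uses both test points and one end, which additionally places one root in \((-|k|,|k|)\).
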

	
	\begin{proof}
		Set \(\chi=ky\). Since \(k\neq0\), substitution into \eqref{eq:H} and division by \(-2k^4\) give the equivalent quartic equation
		\begin{equation}
			\label{eq:P-y}
			P(y)=y^4-(A+B)k\,y^3+\left[\frac32(A-B)k-2\right]y^2+1-\frac12(A-B)k.
		\end{equation}
		At \(y=\pm1\), one has
		\[
		P(1)=-2Bk,\qquad P(-1)=2Ak.
		\]
		
		Suppose first that \(k>0\). Then \(P(-1)>0\) and \(P(1)<0\), so the intermediate value theorem gives a real zero in \((-1,1)\). Since \(P(1)<0\) and \(P(y)\to+\infty\) as \(y\to+\infty\), a second real zero lies in \((1,+\infty)\).
		
		If \(k<0\), then \(P(-1)<0\) and \(P(1)>0\), so there is again a real zero in \((-1,1)\). Moreover, \(P(y)\to+\infty\) as \(y\to-\infty\), while \(P(-1)<0\), and hence a second real zero lies in \((-\infty,-1)\).
		
		Thus \(P\), and equivalently \(H\), always has at least two real roots. Since the coefficients are real, the remaining two roots are either both real or form a nonreal complex-conjugate pair.
	\end{proof}
	
	The equal-amplitude case has a more rigid root structure.
	
	\begin{corollary}
		\label{cor:equal-amplitude-root-distribution}
		Under the assumptions of Theorem~\ref{thm:root-distribution}, suppose in addition that \(A=B\). Then \(H(\chi)=0\) has exactly two real roots, counted with multiplicity, and one nonreal complex-conjugate pair. Thus \(2R+1C\) is the only possible root distribution.
	\end{corollary}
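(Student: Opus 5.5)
With \(A=B\) we have \(R=0\) and \(S=2A\), so the reduced quartic \eqref{eq:P-y} from the proof of Theorem~\ref{thm:root-distribution} simplifies to
\[
P(y)=y^4-2Ak\,y^3-2y^2+1=(y^2-1)^2-2c\,y^3,\qquad c:=Ak\neq0 .
\]
The plan is to count the real zeros of \(P\) exactly, with multiplicity, by separating the variable. Theorem~\ref{thm:root-distribution} already gives at least two real zeros, so what remains is to show there are no more and that the two are simple.

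\emph{Step 1: reduce to a level-set problem.} Since \(P(0)=1\neq0\), every real zero satisfies \(y\neq0\). On \(\mathbb R\setminus\{0\}\), the real zeros of \(P\) are therefore exactly the solutions of
\[
g(y)=2c,\qquad g(y)=\frac{(y^2-1)^2}{y^3}.
\]
The function \(g\) is odd, so replacing \(y\) by \(-y\) exchanges the cases \(c>0\) and \(c<0\) (that is, \(k>0\) and \(k<0\)). It therefore suffices to treat \(c>0\). In that case \(g(y)\le0\) for \(y<0\), so every real solution must lie in \((0,\infty)\).

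\emph{Step 2: monotonicity.} A direct computation gives
\[
g'(y)=\frac{(y^2-1)(y^2+3)}{y^4}.
\]
Hence \(g\) decreases strictly from \(+\infty\) to \(0\) on \((0,1]\) and increases strictly from \(0\) to \(+\infty\) on \([1,\infty)\). The level \(2c>0\) is therefore attained exactly once in \((0,1)\) and exactly once in \((1,\infty)\). This matches the two roots located in Theorem~\ref{thm:root-distribution}.

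\emph{Step 3: multiplicities.} At each solution \(y_*\) we have \(y_*\neq\pm1\), so \(g'(y_*)\neq0\). Writing \(P(y)=y^3\bigl(g(y)-2c\bigr)\) and differentiating at \(y_*\) gives \(P'(y_*)=y_*^3g'(y_*)\neq0\), so both real zeros are simple. Consequently \(P\), and hence \(H\) via \(\chi=ky\), has exactly two real roots counted with multiplicity. Because the coefficients are real, the remaining two roots form a nonreal complex-conjugate pair, and the distribution is \(2R+1C\).

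The only step requiring any care is Step 2, specifically the factorization of \(g'\). The sign of \(g'\) is governed by \(y^2-1\), and the factor \(y^2+3\) has no real zeros. Everything else is elementary.
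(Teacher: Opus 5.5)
Your proof is correct, and it takes a different route from the paper's. The first step is shared. The paper divides \(P(y)=0\) by \(y^2\) to get \((y-1/y)^2=2\alpha y\) with \(\alpha=Ak\). Your observation that \(g(y)=(y^2-1)^2/y^3\) has the sign of \(y\) does the same job: both force every real root to have the sign of \(\alpha\).

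The two arguments diverge at the upper bound. The paper applies Descartes' rule of signs to the coefficient pattern \(+,-,-,+\), which gives at most two positive roots counted with multiplicity, and imports the matching lower bound from Theorem~\ref{thm:root-distribution}. You instead count the solutions of \(g(y)=2c\) directly from \(g'(y)=(y^2-1)(y^2+3)/y^4\).

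Your version has three advantages:
\begin{itemize}
\item It is self-contained, since it proves existence rather than borrowing it from the theorem.
\item It makes simplicity explicit through \(P'(y_*)=y_*^3g'(y_*)\neq0\).
\item It locates the two roots in \((0,1)\) and \((1,\infty)\) when \(c>0\). In the original variable both real stationary points therefore satisfy \(\chi_0=ky>0\), one in \((0,|k|)\) and one in \((|k|,\infty)\).
\end{itemize}
The paper's argument is shorter. There, simplicity is not stated, but it follows implicitly: the theorem's two zeros lie in disjoint intervals, and Descartes caps the total at two.
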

	
	\begin{proof}
		When \(A=B\), equation \eqref{eq:P-y} reduces to
		\[
		P(y)=y^4-2Ak\,y^3-2y^2+1.
		\]
		Set \(\alpha=Ak\). Since \(A>0\) and \(k\neq0\), one has \(\alpha\neq0\).
		
		If \(y\in\mathbb R\) is a root of \(P\), then \(y\neq0\) because \(P(0)=1\). Dividing \(P(y)=0\) by \(y^2\) gives
		\[
		y^2-2\alpha y-2+\frac{1}{y^2}=0,
		\]
		or equivalently
		\[
		\left(y-\frac{1}{y}\right)^2=2\alpha y.
		\]
		The left-hand side is nonnegative, so every real root satisfies \(\alpha y\ge0\). Hence all real roots are positive when \(\alpha>0\) and negative when \(\alpha<0\).
		
		If \(\alpha>0\), the nonzero coefficient signs of \(P\) are \(+\), \(-\), \(-\), and \(+\), which gives two sign changes. Descartes' rule of signs therefore implies that \(P\) has at most two positive real roots, counted with multiplicity. Since it has no negative real roots and Theorem~\ref{thm:root-distribution} guarantees at least two real roots, it has exactly two real roots.
		
		If \(\alpha<0\), the same argument applied to \(P(-y)\) shows that \(P\) has exactly two negative real roots and no positive real roots. In either case, the remaining two roots form a nonreal complex-conjugate pair.
	\end{proof}
	
	\subsection{Real and nonreal double roots}
	\label{subsec:real-nonreal-double}
	
	The reality of a double spatial root is equivalent to the reality of its squared spectral value.
	
	\begin{proposition}
		\label{prop:real-equivalence}
		Let \(\chi_0\notin\{0,\pm k\}\) be an effective double branch point, and define \(\mu_0=\Lambda(\chi_0)\). Then \(\chi_0\in\mathbb R\) if and only if \(\mu_0\in\mathbb R\).
	\end{proposition}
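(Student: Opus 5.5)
The plan is to work with the polynomial form \eqref{eq:F} of the fixed-\(\mu\) spatial equation rather than with \(\Lambda\) directly. The only structural facts needed are that \(F(\chi;\mu)\) has real coefficients whenever \(\mu\) is real, and that it has degree at most three in \(\chi\).

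\emph{Forward implication.} Suppose \(\chi_0\in\mathbb R\). Since \(A\), \(B\) and \(k\) are real, \(N\) and \(D\) in \eqref{eq:Lambda-ND} have real coefficients. Because \(\chi_0\notin\{0,\pm k\}\), we have \(D(\chi_0)\neq0\), so \(\mu_0=\Lambda(\chi_0)=N(\chi_0)/D(\chi_0)\) is a real number.

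\emph{Converse: \(\chi_0\) is a double root of \(F\).} Suppose \(\mu_0\in\mathbb R\). Differentiating \(F(\chi;\mu_0)=D(\chi)\bigl[\mu_0-\Lambda(\chi)\bigr]\) gives
\[
\partial_\chi F(\chi;\mu_0)=D'(\chi)\bigl[\mu_0-\Lambda(\chi)\bigr]-D(\chi)\Lambda'(\chi).
\]
At \(\chi_0\), the definitions \(\mu_0=\Lambda(\chi_0)\) and \(\Lambda'(\chi_0)=0\) give \(F(\chi_0;\mu_0)=\partial_\chi F(\chi_0;\mu_0)=0\). Hence \(\chi_0\) is a root of \(F(\cdot;\mu_0)\) of multiplicity at least two.

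\emph{Converse: contradiction if \(\chi_0\notin\mathbb R\).} Assume \(\chi_0\notin\mathbb R\). Since \(F(\cdot;\mu_0)\) has real coefficients, conjugation gives
\[
F(\chi_0^*;\mu_0)=F(\chi_0;\mu_0)^*=0,\qquad \partial_\chi F(\chi_0^*;\mu_0)=\partial_\chi F(\chi_0;\mu_0)^*=0.
\]
So \(\chi_0^*\neq\chi_0\) is a second root of multiplicity at least two. We then split into two cases.

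\begin{itemize}
\item If \(\mu_0\neq0\), then \(F(\cdot;\mu_0)\) is a nonzero cubic. It would have at least \(2+2=4\) roots counted with multiplicity, which is impossible.
\item If \(\mu_0=0\), then
\[
F(\chi;0)=-2\chi^2+k^2S\chi+k^2(2-kR).
\]
This is a genuine real quadratic, since its leading coefficient is \(-2\neq0\). Two distinct double roots would again exceed its degree. Equivalently, a repeated root of a real quadratic equals \(k^2S/4\), which is real.
\end{itemize}

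In both cases the assumption fails, so \(\chi_0\in\mathbb R\).

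The only delicate step is this degree bookkeeping. One must make sure the cubic does not degenerate further, and it does not: the \(\chi^2\)-coefficient \(-2\) never vanishes, so \(F(\cdot;\mu_0)\) is never identically zero. The argument also covers the case where \(\chi_0\) is in fact a triple root, because only the inequality "multiplicity at least two" is used. Finally, no appeal to \(H\) or to Theorem~\ref{thm:root-distribution} is needed.
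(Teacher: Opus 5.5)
Your proof is correct and follows the paper's argument. For real \(\mu_0\), \(F(\cdot;\mu_0)\) has real coefficients, conjugation turns \(\chi_0^*\) into a second double root, and a degree count then gives the contradiction; your derivation of \(F(\chi_0;\mu_0)=F_\chi(\chi_0;\mu_0)=0\) from \(F=D(\mu-\Lambda)\), and your use of the nonvanishing \(\chi^2\)-coefficient \(-2\) to cover \(\mu_0=0\), simply spell out what the paper compresses into ``degree at most three''.
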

	
	\begin{proof}
		If \(\chi_0\in\mathbb R\), then \(\Lambda\) is finite at \(\chi_0\) and has real coefficients. Hence \(\mu_0=\Lambda(\chi_0)\in\mathbb R\).
		
		Conversely, suppose that \(\mu_0\in\mathbb R\). Then \(F(\chi;\mu_0)\) is a polynomial of degree at most three with real coefficients. Since \(\chi_0\) is a double branch point, \(F(\chi_0;\mu_0)=0\) and \(F_\chi(\chi_0;\mu_0)=0\). If \(\chi_0\notin\mathbb R\), complex conjugation would imply that \(\chi_0^*\) is another double root. The polynomial would then have at least four roots counted with multiplicity, which is impossible. Therefore \(\chi_0\in\mathbb R\).
	\end{proof}
	
	This result separates the two double-root Darboux mechanisms considered later. A nonreal double spatial root corresponds to \(\mu_0\notin\mathbb R\) and can be treated directly by the regular one-fold Darboux transformation. A real double spatial root corresponds to \(\mu_0\in\mathbb R\) and requires a singular real-spectrum limiting construction.
	
	For the real sector, fix an exact double branch point satisfying
	\begin{equation}
		\label{eq:exact-real-double}
		\chi_0\in\mathbb R\setminus\{0,\pm k\},\qquad
		\Lambda'(\chi_0)=0,\qquad
		\Lambda''(\chi_0)\neq0.
	\end{equation}
	The final condition excludes the triple-root degeneration discussed in Section~\ref{subsec:triple-root-geometry}. Set \(\mu_0=\Lambda(\chi_0)\) and \(\lambda_0^2=\mu_0\). The next result shows that the corresponding Darboux spectral point is automatically real.
	
	For this purpose, write
	\begin{equation}
		\label{eq:f-Lambda}
		f(\chi)=2-k^2\left(\frac{A}{\chi+k}+\frac{B}{\chi-k}\right),\qquad
		\Lambda(\chi)=\frac{f(\chi)}{\chi}.
	\end{equation}
	
	\begin{proposition}
		\label{prop:real-double-positive}
		Under the assumptions \eqref{eq:physical-region} and \eqref{eq:exact-real-double}, one has
		\begin{equation}
			\label{eq:mu0-positive-auto}
			\mu_0=\frac{Ak^2}{(\chi_0+k)^2}+\frac{Bk^2}{(\chi_0-k)^2}>0.
		\end{equation}
		Consequently, the Darboux spectral point may be chosen as \(\lambda_0=\sqrt{\mu_0}>0\).
	\end{proposition}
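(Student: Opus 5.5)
The plan is to exploit the stationarity condition \(\Lambda'(\chi_0)=0\) written in terms of the auxiliary function \(f\) from \eqref{eq:f-Lambda}. Since \(\Lambda=f/\chi\), we have
\[
\Lambda'(\chi)=\frac{\chi f'(\chi)-f(\chi)}{\chi^2}.
\]
Because \(\chi_0\neq0\), the condition \(\Lambda'(\chi_0)=0\) is equivalent to \(f(\chi_0)=\chi_0f'(\chi_0)\). It follows that
\[
\mu_0=\Lambda(\chi_0)=\frac{f(\chi_0)}{\chi_0}=f'(\chi_0).
\]
This step only uses that \(\chi_0\notin\{0,\pm k\}\), so that all expressions are finite; by Proposition~\ref{prop:no-pm-k-root} this is automatic at any root of \(F\).

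Next I compute \(f'\) directly from \eqref{eq:f-Lambda}. The constant \(2\) differentiates to zero, and each term \(-k^2A/(\chi+k)\), \(-k^2B/(\chi-k)\) differentiates to \(+k^2A/(\chi+k)^2\), \(+k^2B/(\chi-k)^2\). This gives
\[
f'(\chi_0)=\frac{Ak^2}{(\chi_0+k)^2}+\frac{Bk^2}{(\chi_0-k)^2},
\]
which is exactly the formula \eqref{eq:mu0-positive-auto}.

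Positivity then follows from reality. By \eqref{eq:exact-real-double}, \(\chi_0\) is real and different from \(\pm k\), so \((\chi_0\pm k)^2>0\). Together with \(A,B>0\) and \(k\neq0\) from \eqref{eq:physical-region}, each summand is strictly positive, hence \(\mu_0>0\). Choosing the positive square root gives \(\lambda_0=\sqrt{\mu_0}>0\), which places the real double branch point in the setting of Theorem~\ref{thm:unified-real-spectrum-directional-limit}.

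There is no real obstacle here. The only point that needs care is recognizing the identity \(\Lambda(\chi_0)=f'(\chi_0)\) at a stationary point, rather than expanding the quartic \(H\). I would also note that the condition \(\Lambda''(\chi_0)\neq0\) is not needed; the same formula holds at triple roots.
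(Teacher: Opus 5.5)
Your proof is correct and follows the paper's argument essentially verbatim: you rewrite \(\Lambda'(\chi_0)=0\) as \(f(\chi_0)=\chi_0 f'(\chi_0)\), conclude \(\mu_0=f'(\chi_0)\), and read off positivity from the explicit sum of squares. One small correction: Proposition~\ref{prop:no-pm-k-root} only rules out \(\chi_0=\pm k\). The condition \(\chi_0\neq0\), which you need in order to divide by \(\chi_0\), comes from hypothesis \eqref{eq:exact-real-double}, because \(\chi=0\) is a genuine root of \(F\) when \(2-kR=0\).
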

	
	\begin{proof}
		From \eqref{eq:f-Lambda},
		\[
		\Lambda'(\chi)=\frac{\chi f'(\chi)-f(\chi)}{\chi^2}.
		\]
		At \(\chi=\chi_0\), the condition \(\Lambda'(\chi_0)=0\) gives \(f(\chi_0)=\chi_0f'(\chi_0)\). Therefore
		\[
		\mu_0=\Lambda(\chi_0)=\frac{f(\chi_0)}{\chi_0}=f'(\chi_0).
		\]
		Since
		\[
		f'(\chi)=\frac{Ak^2}{(\chi+k)^2}+\frac{Bk^2}{(\chi-k)^2},
		\]
		the assumptions \(A>0\), \(B>0\), \(k\neq0\), and \(\chi_0\neq\pm k\) imply \eqref{eq:mu0-positive-auto}.
	\end{proof}
	
	The remaining simple spatial root of \(F(\chi;\mu_0)=0\) will be denoted by
	\[
	\chi_s=\frac{2}{\mu_0}-2\chi_0,
	\]
	and we set \(L_0=\chi_0-\chi_s\). Since \(\chi_0\) is an exact double root,
	\[
	F(\chi;\mu_0)=\mu_0(\chi-\chi_0)^2(\chi-\chi_s),
	\]
	and hence
	\begin{equation}
		\label{eq:Fchichi-L0}
		F_{\chi\chi}(\chi_0;\mu_0)=2\mu_0L_0.
	\end{equation}
	On the other hand, differentiating the identity \(F(\chi;\mu)=D(\chi)[\mu-\Lambda(\chi)]\) and using \(\mu_0=\Lambda(\chi_0)\) and \(\Lambda'(\chi_0)=0\) gives
	\begin{equation}
		\label{eq:Fchichi-Lambda}
		F_{\chi\chi}(\chi_0;\mu_0)=-D_0\Lambda''(\chi_0),\qquad
		D_0=D(\chi_0)=\chi_0(\chi_0^2-k^2).
	\end{equation}
	Combining \eqref{eq:Fchichi-L0} and \eqref{eq:Fchichi-Lambda}, we obtain
	\begin{equation}
		\label{eq:L0-Lambda2}
		2\mu_0L_0=-D_0\Lambda''(\chi_0)\neq0.
	\end{equation}
	In particular, \(L_0\neq0\), which confirms that the third spatial root remains distinct from the double root.
	
	\subsection{Triple-root conditions}
	\label{subsec:triple-root-geometry}
	
	A triple spatial root is characterized by
	\begin{equation}
		\label{eq:triple-factorization}
		F(\chi;\mu_0)=\mu_0(\chi-\chi_0)^3.
	\end{equation}
	At this stage neither \(\mu_0\) nor \(\chi_0\) is assumed to be real. Their reality must be derived from the coefficient identities in \eqref{eq:triple-factorization}; in particular, a conjugate-root argument cannot be used before \(\mu_0\in\mathbb R\) has been established. The coefficient comparison below gives \(3\mu_0\chi_0=2\), so \(\mu_0\neq0\) and \(\chi_0\neq0\). Proposition~\ref{prop:no-pm-k-root} also excludes \(\chi_0=\pm k\).
	
	The physical triple-root configuration can be characterized entirely in terms of the background parameters.
	
	\begin{theorem}
		\label{thm:triple-characterization}
		Suppose that \(\chi_0\) is a triple root of \(F(\chi;\mu_0)=0\). Then
		\begin{equation}
			\label{eq:mu-triple}
			\mu_0=\frac{2}{3\chi_0},
		\end{equation}
		\[
		S=A+B=\frac{2\chi_0}{k^2}+\frac{2}{3\chi_0},
		\]
		and
		\begin{equation}
			\label{eq:R-triple}
			R=A-B=\frac{2}{k}+\frac{2\chi_0^2}{3k^3}.
		\end{equation}
		In particular, \(R\neq0\), so a physical triple root can occur only on an unequal-amplitude background.
		
		Define \(q_{\rm tri}=\chi_0/k\) and \(\rho=R/S=(A-B)/(A+B)\). Then
		\begin{equation}
			\label{eq:rho-q}
			\rho=\frac{q_{\rm tri}(q_{\rm tri}^2+3)}{3q_{\rm tri}^2+1}.
		\end{equation}
		The unique real solution in \((-1,1)\) is
		\begin{equation}
			\label{eq:q}
			q_{\rm tri}=\tanh\left[\frac13\operatorname{arctanh}\left(\frac{A-B}{A+B}\right)\right].
		\end{equation}
		Consequently,
		\begin{equation}
			\label{eq:k-triple}
			k=\frac{2}{A+B}\left(q_{\rm tri}+\frac{1}{3q_{\rm tri}}\right),\qquad
			\chi_0=q_{\rm tri}k,\qquad
			\mu_0=\lambda_0^2=\frac{2}{3\chi_0}.
		\end{equation}
		The resulting triple branch point satisfies \(0<\chi_0<|k|\) and \(\mu_0>0\), so the Darboux spectral point may be chosen as \(\lambda_0=\sqrt{\mu_0}>0\).
		
		Conversely, the parameters defined by \eqref{eq:q} and \eqref{eq:k-triple} satisfy the triple factorization \eqref{eq:triple-factorization}.
	\end{theorem}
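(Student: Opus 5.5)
The plan is to compare coefficients in the triple factorization \eqref{eq:triple-factorization}, derive reality of \(\chi_0\) from the reality of \(S\), \(R\), \(k\) alone, and then reduce the problem to the scalar map \(q\mapsto\rho\), which is the triple-angle formula for \(\tanh\).

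\emph{Coefficient identities.} Expanding \(\mu_0(\chi-\chi_0)^3\) and matching it with \eqref{eq:F} at \(\mu=\mu_0\) gives three equations:
\[
3\mu_0\chi_0=2,\qquad k^2(S-\mu_0)=3\mu_0\chi_0^2,\qquad k^2(2-kR)=-\mu_0\chi_0^3 .
\]
The first is \eqref{eq:mu-triple}, and it forces \(\mu_0\chi_0\ne0\). Substituting \(\mu_0=2/(3\chi_0)\) into the second gives \(S=2\chi_0/k^2+2/(3\chi_0)\). Substituting it into the third gives \(2-kR=-2\chi_0^2/(3k^2)\), which is \eqref{eq:R-triple}. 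Proposition~\ref{prop:no-pm-k-root} excludes \(\chi_0=\pm k\).

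\emph{Reality of \(\chi_0\).} This is the main obstacle, since, as the paper stresses, no conjugate-root argument is available before \(\mu_0\in\mathbb R\) is known. I will use only that \(S\), \(R\), \(k\) are real.
\begin{itemize}
\item From \eqref{eq:R-triple}, \(\chi_0^2=3k^2(kR-2)/2\) is real. Hence \(\chi_0\) is either real or purely imaginary.
\item Suppose \(\chi_0={\rm i}b\) with \(b\in\mathbb R\setminus\{0\}\). Then \(S={\rm i}\bigl(2b/k^2-2/(3b)\bigr)\) is purely imaginary. Since \(S\) is real, this forces \(S=0\), contradicting \(S=A+B>0\).
\end{itemize}
Therefore \(\chi_0\in\mathbb R\setminus\{0\}\), and consequently \(\mu_0\in\mathbb R\). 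From \eqref{eq:R-triple} we also get \(R=2(3+q_{\rm tri}^2)/(3k)\ne0\).

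\emph{Reduction to \(\rho\).} Put \(q=q_{\rm tri}=\chi_0/k\in\mathbb R\setminus\{0\}\). Then
\[
R=\frac{2(q^2+3)}{3k},\qquad S=\frac{2(3q^2+1)}{3qk},
\]
and the quotient \(R/S\) is exactly \eqref{eq:rho-q}. Write \(g(q)=q(q^2+3)/(3q^2+1)\); this is the identity \(\tanh 3a=(3t+t^3)/(1+3t^2)\) with \(t=\tanh a\). The behaviour of \(g\) on the real line is as follows:
\begin{itemize}
\item On \((-1,1)\), \(g\) is the bijection \(t\mapsto\tanh(3\operatorname{arctanh}t)\) onto \((-1,1)\).
\item \(g(\pm1)=\pm1\).
\item \(g(1/q)=1/g(q)\), so \(|g(q)|>1\) whenever \(|q|>1\).
\end{itemize}
Since \(A,B>0\) gives \(|\rho|<1\), the root \(q\) must lie in \((-1,1)\) and equals \eqref{eq:q}. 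It is nonzero because \(\rho\ne0\).

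\emph{Remaining quantities and signs.} Solving the expression for \(S\) for \(k\) gives \(k=(2/S)(q+1/(3q))\), which is \eqref{eq:k-triple}. In particular \(\operatorname{sign}k=\operatorname{sign}q\). Hence
\[
\chi_0=qk=\frac{2(3q^2+1)}{3S}>0,\qquad |\chi_0|=|q||k|<|k|,\qquad \mu_0=\frac{2}{3\chi_0}>0,
\]
so we may take \(\lambda_0=\sqrt{\mu_0}>0\).

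\emph{Converse.} Given \(q\) from \eqref{eq:q} and \(k,\chi_0,\mu_0\) from \eqref{eq:k-triple}, the expression for \(k\) is equivalent to the identity \(S=2\chi_0/k^2+2/(3\chi_0)\). Combining this with \(\rho=g(q)\) yields \eqref{eq:R-triple}. The three coefficient identities therefore hold with \(\mu_0=2/(3\chi_0)\), and reading the comparison backwards gives \(F(\chi;\mu_0)=\mu_0(\chi-\chi_0)^3\).
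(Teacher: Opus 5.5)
Your proposal is correct and follows the paper's proof essentially step by step. Like the paper, you compare coefficients, obtain reality of $\chi_0$ from $\chi_0^2\in\mathbb R$, exclude the purely imaginary case via $S>0$, and then reduce to the $\tanh$ triple-angle identity; your added observation $g(1/q)=1/g(q)$ together with $g(\pm1)=\pm1$ usefully makes explicit why the actual ratio $\chi_0/k$ must lie in $(-1,1)$, which the paper uses to conclude $\chi_0<|k|$ without spelling it out.
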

	
	\begin{proof}
		Comparison of the coefficients in
		\[
		F(\chi;\mu_0)=\mu_0(\chi-\chi_0)^3
		\]
		gives
		\[
		3\mu_0\chi_0=2,
		\qquad
		k^2(S-\mu_0)=3\mu_0\chi_0^2,
		\qquad
		k^2(2-kR)=-\mu_0\chi_0^3.
		\]
		In particular, \(\mu_0\chi_0=2/3\). Eliminating \(\mu_0\)
		from the first and third identities yields
		\[
		\chi_0^2=-\frac{3}{2}k^2(2-kR)\in\mathbb R.
		\]
		Hence \(\chi_0\) is either real or purely imaginary. If
		\(\chi_0={\rm i}y\) with \(y\in\mathbb R\setminus\{0\}\), then
		the second coefficient identity gives
		\[
		S=\frac{2\chi_0}{k^2}+\frac{2}{3\chi_0}
		={\rm i}\left(\frac{2y}{k^2}-\frac{2}{3y}\right),
		\]
		which is incompatible with \(S=A+B>0\). Therefore
		\(\chi_0\in\mathbb R\), and then
		\(\mu_0=2/(3\chi_0)\in\mathbb R\).
		The remaining coefficient comparisons give
		\[
		S=\frac{2\chi_0}{k^2}+\frac{2}{3\chi_0},
		\qquad
		R=\frac{2}{k}+\frac{2\chi_0^2}{3k^3},
		\]
		and division gives \eqref{eq:rho-q}. Since \(A>0\) and \(B>0\), one has \(|\rho|<1\). Moreover, \(R\neq0\), so \(\rho\neq0\). The identity
		\[
		\tanh(3z)=\frac{\tanh z\,(\tanh^2z+3)}{3\tanh^2z+1}
		\]
		then shows that \eqref{eq:rho-q} has the unique real solution \eqref{eq:q} in \((-1,1)\). Solving the expression for \(S=A+B\) gives \eqref{eq:k-triple}.
		
		Finally,
		\[
		\chi_0=q_{\rm tri}k=\frac{2}{A+B}\left(q_{\rm tri}^2+\frac13\right)>0.
		\]
		Since \(|q_{\rm tri}|<1\), it follows that \(\chi_0<|k|\), while \eqref{eq:mu-triple} gives \(\mu_0>0\). The converse is verified by substituting \eqref{eq:q} and \eqref{eq:k-triple} into the coefficients of \(F\).
	\end{proof}
	
	\subsection{Summary of the spectral sectors}
	\label{subsec:spectral-sector-summary}
	
	The preceding results divide the relevant multiple-root configurations into the four sectors summarized in Table~\ref{tab:spectral-sectors}. The first three sectors correspond to distinct Darboux mechanisms, whereas the persistent zero-branch intersection is a genuine component of the full polynomial spectral curve but not a ramification point of the reduced map \(\Lambda\).
	
	\begin{table}[htbp]
		\centering
		\caption{Multiple-root spectral sectors and the corresponding Darboux mechanisms.}
		\label{tab:spectral-sectors}
		\begin{tabularx}{\textwidth}{>{\raggedright\arraybackslash}p{0.20\textwidth} >{\raggedright\arraybackslash}p{0.18\textwidth} >{\raggedright\arraybackslash}p{0.18\textwidth} X}
			\toprule
			Spatial configuration & Squared spectral value & Darboux spectral point & Construction \tabularnewline
			\midrule
			Nonreal double root & \(\mu_0\notin\mathbb R\) & \(\lambda_0\notin\mathbb R\cup{\rm i}\mathbb R\) & Regular one-fold Darboux transformation using the double-root derivative chain and the remaining simple mode \tabularnewline
			Real double root & \(\mu_0>0\) & \(\lambda_0\in\mathbb R\setminus\{0\}\) & Directional real-spectrum limit in the mixed double--simple eigenspace \tabularnewline
			Real triple root & \(\mu_0>0\) & \(\lambda_0\in\mathbb R\setminus\{0\}\) & Directional real-spectrum limit based on the full triple confluent chain \tabularnewline
			Persistent zero-branch intersection & Depends on the intersecting movable branch & Depends on \(\mu\) & Genuine branch intersection excluded from the ramification-point classification of \(\Lambda\) \tabularnewline
			\bottomrule
		\end{tabularx}
	\end{table}
	
	\section{Darboux Constructions at Double Spatial Roots}
	\label{sec:double-darboux}
	
	\subsection{Regular construction at a nonreal double root}
	\label{sec:nonreal-double}
	\label{subsec:nonreal-double}
	
	We begin with a nonreal double spatial root. By Proposition~\ref{prop:real-equivalence}, its squared spectral value is nonreal, so either square root \(\lambda_0\) lies outside \(\mathbb R\cup{\rm i}\mathbb R\). The ordinary one-fold Darboux transformation is therefore regular, and no real-spectrum limiting argument is needed.
	
	We first construct the confluent eigenfunctions associated with the double spatial root.
	
	Let \(\chi_0\notin\mathbb R\) be an exact double branch point and set \(\lambda_0^2=\mu_0=\Lambda(\chi_0)\). Since \(\chi_0\) is an exact double root of \(F(\chi;\mu_0)=0\), the cubic factorizes as \(F(\chi;\mu_0)=\mu_0(\chi-\chi_0)^2(\chi-\chi_s)\), where the remaining simple root is \(\chi_s=\frac{2}{\mu_0}-2\chi_0\).
	
	Define
	\begin{equation*}
		{\bf Y}_0
		=
		{\bf y}(\chi_0,\lambda_0),
		\qquad
		{\bf Y}_1
		=
		\left.
		\frac{\partial}{\partial\chi}
		{\bf y}(\chi,\lambda_0)
		\right|_{\chi=\chi_0},
		\qquad
		{\bf Y}_s
		=
		{\bf y}(\chi_s,\lambda_0),
	\end{equation*}
	where the derivative defining \({\bf Y}_1\) is taken at fixed \(\lambda=\lambda_0\). Because an exact double branch point satisfies \(\Lambda'(\chi_0)=0\), Lemma~\ref{lem:fixed-spectral-derivative-chain} with \(m=2\) proves that \({\bf Y}_0\) and \({\bf Y}_1\) satisfy both equations of the background Lax pair at \(\lambda=\lambda_0\). The remaining simple mode \({\bf Y}_s\) is an elementary solution at the same spectral point. Proposition~\ref{prop:confluent-basis-independence} proves, including the persistent-zero-branch case, that \({\bf Y}_0,{\bf Y}_1,{\bf Y}_s\) are linearly independent. They therefore form a fundamental solution basis of the three-dimensional Lax system at \(\lambda=\lambda_0\).
	
	The resulting confluent basis can be inserted directly into the one-fold Darboux transformation.
	
	\begin{theorem}
		\label{thm:complex-double}
		Let \(\chi_0\notin\mathbb R\) be an effective double branch point, set \(\mu_0=\Lambda(\chi_0)\), and choose the square root \(\lambda_0^2=\mu_0\) with \(\operatorname{Re}\lambda_0>0\). For arbitrary constants \(c_0,c_1,c_s\in\mathbb C\), not all zero, define \({\bf y}_D=c_0{\bf Y}_0 + c_1{\bf Y}_1 + c_s{\bf Y}_s\). Write \({\bf y}_D=(\psi_D,\phi_D,\varphi_D)^T\), and define
		\begin{equation}
			\label{eq:mD}
			m_D
			=
			\frac{
				\langle{\bf y}_D|J|{\bf y}_D\rangle
			}{
				\lambda_0^*-\lambda_0
			}
			-
			\frac{
				\langle{\bf y}_D|K|{\bf y}_D\rangle
			}{
				\lambda_0^*+\lambda_0
			}.
		\end{equation}
		Then
		\begin{equation}
			\label{eq:complex-double-solution}
			\begin{pmatrix}
				u_1^{(C,2)}\\
				u_2^{(C,2)}
			\end{pmatrix}
			=
			\begin{pmatrix}
				u_{1,0}\\
				u_{2,0}
			\end{pmatrix}
			+
			2
			\frac{
				\begin{pmatrix}
					\phi_D\\
					\varphi_D
				\end{pmatrix}
				\psi_D^*
			}{
				m_D
			}
		\end{equation}
		is an exact solution of \eqref{eq:CFL}.
		
		Since \(K=I_3\), one has \(\operatorname{Re}m_D=-\langle{\bf y}_D|{\bf y}_D\rangle/(2\operatorname{Re}\lambda_0)<0\). Hence \(m_D\) has no zeros on the real \((x,t)\)-plane, and the solution \eqref{eq:complex-double-solution} is globally nonsingular.
	\end{theorem}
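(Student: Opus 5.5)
The argument has three parts: the spectral point is admissible, the vector \({\bf y}_D\) is a genuine nonzero Lax eigenfunction there, and the denominator has negative real part. The first two parts let us invoke Proposition~\ref{prop:one-fold-DT} directly; the third gives regularity.

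\emph{Admissibility of \(\lambda_0\).} By Proposition~\ref{prop:real-equivalence}, \(\chi_0\notin\mathbb R\) forces \(\mu_0\notin\mathbb R\). Hence \(\mu_0\) is neither a positive nor a negative real number. Its square roots therefore avoid \(\mathbb R\cup{\rm i}\mathbb R\), because \(\lambda\in\mathbb R\) gives \(\mu\ge0\) and \(\lambda\in{\rm i}\mathbb R\) gives \(\mu\le0\). Also \(\mu_0\ne0\), since \(\mu_0\) is nonreal. The two roots are \(\pm\lambda_0\), and their real parts are nonzero, so we may fix the one with \(\operatorname{Re}\lambda_0>0\).

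\emph{\({\bf y}_D\) is a nonzero eigenfunction.} Since \(\chi_0\) is an effective double branch point, \(\Lambda'(\chi_0)=0\) and \(\chi_0\notin\{0,\pm k\}\). Lemma~\ref{lem:fixed-spectral-derivative-chain} with \(m=2\) then shows that \({\bf Y}_0,{\bf Y}_1\) solve both Lax equations at \(\lambda_0\). The mode \({\bf Y}_s\) is elementary at the same point. If \(\chi_s=0\), this is the persistent branch, and we use \eqref{eq:zero-branch-eigenfunction} in place of the elementary mode. Proposition~\ref{prop:no-pm-k-root} guarantees \(\chi_s\neq\pm k\). By linearity \({\bf y}_D\) solves the Lax pair. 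By Proposition~\ref{prop:confluent-basis-independence}, \({\bf Y}_0,{\bf Y}_1,{\bf Y}_s\) are linearly independent, so \({\bf y}_D\not\equiv0\) whenever \((c_0,c_1,c_s)\neq0\). Moreover, \({\bf y}_D\) vanishes at no point \((x,t)\). Otherwise, uniqueness for the first-order linear systems would force \({\bf y}_D\equiv0\), as in the proof of Theorem~\ref{thm:unified-real-spectrum-directional-limit}.

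\emph{Sign of \(\operatorname{Re}m_D\).} Because \(J\) is Hermitian, \(\langle{\bf y}_D|J|{\bf y}_D\rangle\) is real. Since \(\lambda_0^*-\lambda_0=-2{\rm i}\operatorname{Im}\lambda_0\) is purely imaginary and nonzero, the first term of \eqref{eq:mD} is purely imaginary. The second term is real, since \(K=I_3\) and \(\lambda_0^*+\lambda_0=2\operatorname{Re}\lambda_0>0\). Hence \(\operatorname{Re}m_D=-\|{\bf y}_D\|^2/(2\operatorname{Re}\lambda_0)<0\) everywhere. In particular \(m_D\ne0\) on the whole real plane, so Proposition~\ref{prop:one-fold-DT} applies globally with \({\bf u}={\bf u}_0\). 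This yields \eqref{eq:complex-double-solution} as an exact, smooth, nonsingular solution.

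The main point needing care is that the branch choice and the basis are legitimate. This includes the case where the simple root lies on the persistent zero branch, for which \(\nu(\chi)\) is undefined. That case is covered by using \eqref{eq:zero-branch-eigenfunction} and the corresponding part of Proposition~\ref{prop:confluent-basis-independence}. Everything else is a direct citation.
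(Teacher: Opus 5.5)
Your proposal is correct and follows the paper's proof essentially step for step. You use Proposition~\ref{prop:real-equivalence} to place \(\lambda_0\) off \(\mathbb R\cup{\rm i}\mathbb R\), Lemma~\ref{lem:fixed-spectral-derivative-chain} together with the simple (or zero-branch) mode to obtain an eigenfunction, substitution into Proposition~\ref{prop:one-fold-DT}, and the splitting of \(m_D\) into a purely imaginary \(J\)-term and a strictly negative real \(K\)-term. Your explicit appeal to linear independence and ODE uniqueness, which guarantees \({\bf y}_D(x,t)\neq0\) at every point, makes precise what the paper leaves implicit in its phrase ``for every nonzero eigenfunction''.
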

	
	\begin{proof}
		By Proposition~\ref{prop:real-equivalence}, \(\chi_0\notin\mathbb R\) implies \(\mu_0\notin\mathbb R\). Hence a square root \(\lambda_0\) of \(\mu_0\) is neither real nor purely imaginary. Therefore, \(\lambda_0^*-\lambda_0\neq0\) and \(\lambda_0^*+\lambda_0\neq0\).
		
		By Lemma~\ref{lem:fixed-spectral-derivative-chain}, \({\bf Y}_0\) and \({\bf Y}_1\) satisfy the background Lax pair at \(\lambda=\lambda_0\), and \({\bf Y}_s\) is the elementary solution associated with the remaining simple root. Their linear combination \({\bf y}_D\) is therefore a valid eigenfunction at the same nonreal spectral point. Substitution into the standard one-fold Darboux formula \eqref{eq:nonreal-DT} gives \eqref{eq:complex-double-solution}, which is consequently an exact solution of \eqref{eq:CFL}.
		
		It remains to establish nonsingularity. Since \(J\) and \(K\) are Hermitian,
		\[
		\langle{\bf y}_D|J|{\bf y}_D\rangle\in\mathbb R, \qquad \langle{\bf y}_D|K|{\bf y}_D\rangle\in\mathbb R.
		\]
		The denominator
		\[
		\lambda_0^*-\lambda_0=-2{\rm i}\operatorname{Im}\lambda_0
		\]
		is purely imaginary. Hence the first term in \(m_D\) is purely imaginary.
		
		In the present reduction \(K=I_3\), while
		\[
		\lambda_0^*+\lambda_0=2\operatorname{Re}\lambda_0.
		\]
		Thus
		\[
		\operatorname{Re}m_D=- \frac{ \langle{\bf y}_D|{\bf y}_D\rangle }{ 2\operatorname{Re}\lambda_0 }.
		\]
		By the branch choice \(\operatorname{Re}\lambda_0>0\), this quantity is strictly negative for every nonzero eigenfunction \({\bf y}_D\). Therefore \(m_D(x,t)\neq0\) for all real \((x,t)\).
	\end{proof}
	
	\begin{remark}
		\label{rem:lambda-sign-complex-double}
		The two choices of the square root, \(\lambda_0\) and \(-\lambda_0\), generate the same Darboux solution. Indeed, replacing \(\lambda_0\) by \(-\lambda_0\) changes the sign of \(\psi_D\) but leaves \(\phi_D\) and \(\varphi_D\) unchanged. At the same time, both denominators in \eqref{eq:mD} change sign, so that \(m_D\mapsto-m_D\). Consequently, the quotient in \eqref{eq:complex-double-solution} remains unchanged.
	\end{remark}
	
	\begin{remark}
		\label{rem:complex-double-coefficients}
		Multiplication of \({\bf y}_D\) by a nonzero constant does not change the Darboux field, because both the numerator and denominator in \eqref{eq:complex-double-solution} are multiplied by the same positive factor. Thus
		\[
		(c_0,c_1,c_s) \sim \gamma(c_0,c_1,c_s), \qquad \gamma\in\mathbb C\setminus\{0\}.
		\]
	\end{remark}
	
	\subsection{The mixed double--simple confluent eigenspace and its null cone}
	\label{sec:real-double}
	\label{subsec:double-null-cone}
	
	We next consider the exact real double root fixed in \eqref{eq:exact-real-double}. Proposition~\ref{prop:real-double-positive} permits the choice \(\lambda_0=\sqrt{\mu_0}>0\). Since the ordinary Darboux kernel becomes singular as the spectral point approaches the real axis, we first determine all projective leading vectors for which the singular \(J\)-term cancels.
	
	\begin{lemma}
		\label{lem:J-null}
		Let \(\lambda=\sqrt{\Lambda(\chi)}\in\mathbb R\). Then the elementary eigenfunction \eqref{eq:y-basic} satisfies
		\begin{equation}
			\label{eq:J-null-identity}
			\langle{\bf y}(\chi,\lambda)
			|J|
			{\bf y}(\chi,\lambda)\rangle
			=
			-\chi\Lambda'(\chi).
		\end{equation}
		Consequently,
		\begin{equation}
			\label{eq:Y0-J-null}
			\langle{\bf Y}_0|J|{\bf Y}_0\rangle=0,
			\qquad
			{\bf Y}_0={\bf y}(\chi_0,\lambda_0).
		\end{equation}
	\end{lemma}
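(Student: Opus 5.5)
The plan is a direct computation of the \(J\)-form, then comparison with the derivative of \(\Lambda\) written through the auxiliary function \(f\) of \eqref{eq:f-Lambda}. I will work with real \(\chi\notin\{0,\pm k\}\) and real \(\lambda\) with \(\lambda^2=\Lambda(\chi)\); this is the setting in which the lemma is applied at the real double root \(\chi_0\) of \eqref{eq:exact-real-double}. The realness of \(\chi\) is an assumption I am adding, since the statement only says \(\lambda\in\mathbb R\). For real \(\chi\), the quantity \(\nu(\chi)\) is real, so \(|E(\chi)|=1\) at every real \((x,t)\). Moreover \(h_1(\chi)\) and \(h_2(\chi)\) are real, and \(|u_{1,0}|^2=A\), \(|u_{2,0}|^2=B\). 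Substituting the components of \eqref{eq:y-basic} into the definition of the \(J\)-form gives
\[
\langle{\bf y}(\chi,\lambda)|J|{\bf y}(\chi,\lambda)\rangle
=\lambda^2-\frac{Ak^2}{(\chi+k)^2}-\frac{Bk^2}{(\chi-k)^2}
=\Lambda(\chi)-f'(\chi).
\]
Here I use \(\lambda^*\lambda=\lambda^2=\Lambda(\chi)\) because \(\lambda\) is real, and the explicit formula for \(f'\) recorded in the proof of Proposition~\ref{prop:real-double-positive}. The result is independent of \((x,t)\), which is consistent with the conservation law of Lemma~\ref{lem:J-form-conservation}.

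Next I compare this with \(\Lambda'\). From \(\Lambda=f/\chi\) one has \(\Lambda'(\chi)=(\chi f'(\chi)-f(\chi))/\chi^2\), hence
\[
-\chi\Lambda'(\chi)=\frac{f(\chi)}{\chi}-f'(\chi)=\Lambda(\chi)-f'(\chi).
\]
This matches the expression obtained above and proves \eqref{eq:J-null-identity}.

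For the consequence \eqref{eq:Y0-J-null}, I evaluate at \(\chi=\chi_0\in\mathbb R\setminus\{0,\pm k\}\) and \(\lambda=\lambda_0=\sqrt{\mu_0}>0\). This choice of \(\lambda_0\) is real by Proposition~\ref{prop:real-double-positive}. Since \(\Lambda'(\chi_0)=0\) by \eqref{eq:exact-real-double}, the right-hand side of \eqref{eq:J-null-identity} vanishes.

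There is no serious obstacle. The one point needing care is the realness of \(\chi\): if \(\chi\) were nonreal while \(\Lambda(\chi)\) happened to be real, then \(|E|\ne1\) and \(|h_j|^2\ne h_j^2\), and the identity would have to be modified. I would therefore state explicitly that \(\chi\) is taken real, which is exactly the case needed for \({\bf Y}_0\).
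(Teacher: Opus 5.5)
Your proposal is correct and follows the paper's proof essentially line by line: the paper also assumes real \(\chi\) and real \(\lambda\) and uses \(|E(\chi)|=1\) and \(h_j(\chi)\in\mathbb R\) to reduce the \(J\)-form to \(\lambda^2-Ah_1(\chi)^2-Bh_2(\chi)^2=f(\chi)/\chi-f'(\chi)=-\chi\Lambda'(\chi)\), and then evaluates at \(\chi_0\), where \(\Lambda'(\chi_0)=0\). Your point that realness of \(\chi\) must be added to the hypothesis matches the paper's own remark immediately after the lemma.
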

	
	\begin{proof}
		For real \(\chi\) and real \(\lambda\), one has
		\[
		|E(\chi)|=1, \qquad h_1(\chi),h_2(\chi)\in\mathbb R, \qquad |\lambda|^2=\lambda^2.
		\]
		Therefore
		\[
		\begin{aligned}
			\langle{\bf y}|J|{\bf y}\rangle
			&=
			\lambda^2
			-Ah_1(\chi)^2
			-Bh_2(\chi)^2\\
			&=
			\frac{f(\chi)}{\chi}
			-
			\frac{Ak^2}{(\chi+k)^2}
			-
			\frac{Bk^2}{(\chi-k)^2}\\
			&=
			\frac{f(\chi)}{\chi}-f'(\chi)
			=
			-\chi\Lambda'(\chi).
		\end{aligned}
		\]
		Evaluating this identity at \(\chi=\chi_0\), where \(\Lambda'(\chi_0)=0\), yields \eqref{eq:Y0-J-null}.
	\end{proof}
	
	\begin{remark}
		The identity \eqref{eq:J-null-identity} is stated only in the real branch-point setting. For complex \(\chi\), the Hermitian inner product contains \(|E|^2\), \(|\lambda|^2\), and \(|h_j|^2\), and hence the same algebraic identity does not hold in general.
	\end{remark}
	
	The potentially singular contribution to the Darboux kernel is
	\[
	\frac{\langle{\bf y}|J|{\bf y}\rangle}{\lambda^*-\lambda}.
	\]
	A nontrivial finite real-spectrum correction therefore requires the leading vector to be \(J\)-null. If its leading \(J\)-norm is nonzero, the singular term dominates and the Darboux correction converges to zero, leaving only the seed solution.
	
	Define the fixed-\(\lambda_0\) derivative vectors
	\begin{equation*}
		{\bf Y}_n
		=
		\left.
		\frac{\partial^n}{\partial\chi^n}
		{\bf y}(\chi,\lambda_0)
		\right|_{\chi=\chi_0},
		\qquad
		n=0,1,2,3.
	\end{equation*}
	Let
	\begin{equation}
		\label{eq:Ys-real-double}
		{\bf Y}_s
		=
		{\bf y}(\chi_s,\lambda_0)
	\end{equation}
	denote the elementary eigenfunction associated with the remaining simple spatial root \(\chi_s\). If \(\chi_s=0\), the right-hand side of \eqref{eq:Ys-real-double} is understood in the sense of \eqref{eq:zero-branch-eigenfunction}.
	
	The full fixed-\(\lambda_0\) confluent eigenspace at an exact real double branch point is \(\mathcal E_D=\operatorname{span}_{\mathbb C} \{ {\bf Y}_0,\, {\bf Y}_1,\, {\bf Y}_s \}\). Accordingly, the most general leading vector is
	\begin{equation}
		\label{eq:general-double-leading}
		{\bf y}_R
		=
		M_0{\bf Y}_0
		+
		M_1{\bf Y}_1
		+
		c_s{\bf Y}_s.
	\end{equation}
	
	For brevity, write \([{\bf f},{\bf g}]_J:=\langle{\bf f}|J|{\bf g}\rangle\). The conserved \(J\)-pairing imposes a particularly simple Gram structure on this eigenspace.
	
	\begin{lemma}
		\label{lem:double-Gram-structure}
		At an exact real double branch point, one has
		\begin{equation}
			\label{eq:double-Gram-relations}
			[{\bf Y}_0,{\bf Y}_0]_J=0,
			\qquad
			[{\bf Y}_0,{\bf Y}_s]_J=0,
			\qquad
			[{\bf Y}_1,{\bf Y}_s]_J=0.
		\end{equation}
		Define \(a_J=[{\bf Y}_0,{\bf Y}_1]_J\), \(b_J=[{\bf Y}_1,{\bf Y}_1]_J\), and \(d_s=[{\bf Y}_s,{\bf Y}_s]_J\). Then
		\[
		a_J
		=-Ah_1(\chi_0)h_1'(\chi_0)
		-Bh_2(\chi_0)h_2'(\chi_0)
		=-\frac{\chi_0}{2}\Lambda''(\chi_0),
		\]
		and
		\[
		b_J
		=-A[h_1'(\chi_0)]^2
		-B[h_2'(\chi_0)]^2<0.
		\]
		Moreover,
		\begin{equation}
			\label{eq:ds-simple-root-location}
			d_s
			=
			\frac{\mu_0L_0^2}{\chi_s^2-k^2}
			<0,
			\qquad
			L_0=\chi_0-\chi_s,
		\end{equation}
		and consequently
		\begin{equation}
			\label{eq:simple-root-inside-carrier}
			|\chi_s|<|k|.
		\end{equation}
		In particular, \(a_J\neq0\), \(b_J\neq0\), and \(d_s\neq0\).
	\end{lemma}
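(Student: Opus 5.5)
We plan to compute every pairing directly from the explicit form \eqref{eq:y-basic}, using the conservation of Lemma~\ref{lem:J-form-conservation} to decide which pairings must vanish. All data \(\chi_0,\chi_s,\lambda_0,k,A,B\) are real. Write \({\bf y}(\chi,\lambda_0)=E(\chi)\,{\bf v}(\chi)\) with \({\bf v}=(\lambda_0,u_{1,0}h_1,u_{2,0}h_2)^{\mathsf T}\). Then \({\bf Y}_0=E_0{\bf v}_0\) and \({\bf Y}_1=E_0\bigl[{\bf i}(x+\nu'(\chi_0)t){\bf v}_0+{\bf v}_0'\bigr]\), where \({\bf v}_0'=(0,u_{1,0}h_1',u_{2,0}h_2')^{\mathsf T}\). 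Since \(|E_0|=1\) and \(|u_{j,0}|^2\) equals \(A\) or \(B\), the pairings reduce to real algebra.

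\textbf{Step 1: the three vanishing pairings.} The relation \([{\bf Y}_0,{\bf Y}_0]_J=0\) is exactly \eqref{eq:Y0-J-null} of Lemma~\ref{lem:J-null}. For \([{\bf Y}_0,{\bf Y}_s]_J\) we use Lemma~\ref{lem:J-form-conservation}: the pairing is constant in \(x\). It is also proportional to \(e^{{\rm i}(\chi_s-\chi_0)x}\), since all the remaining factors are \(x\)-independent up to the phases \(u_{j,0}^*u_{j,0}\), which cancel. Because \(\chi_s\ne\chi_0\), this forces the pairing to vanish. For \([{\bf Y}_1,{\bf Y}_s]_J\), the \(x\)-dependence has the form \((\alpha+\beta x)e^{{\rm i}(\chi_s-\chi_0)x}\), and constancy again gives zero. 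In the persistent-zero case \(\chi_s=0\), the same argument works with \({\bf y}_s\) from \eqref{eq:zero-branch-eigenfunction}.

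\textbf{Step 2: \(a_J\) and \(b_J\).} In \(a_J=[{\bf Y}_0,{\bf Y}_1]_J\), the secular term is \(-{\rm i}(x+\nu' t)[{\bf v}_0,{\bf v}_0]_J\), which vanishes by Step 1. What remains is \([{\bf v}_0,{\bf v}_0']_J=-Ah_1h_1'-Bh_2h_2'\). Differentiating \(f'=Ak^2/(\chi+k)^2+Bk^2/(\chi-k)^2\) shows that this equals \(\tfrac12 f''(\chi_0)\). We then compare with \(\Lambda''\): from \(\chi\Lambda=f\) we get \(\chi\Lambda''+2\Lambda'=f''\), hence \(\chi_0\Lambda''(\chi_0)=f''(\chi_0)\). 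This should give \(a_J=-\tfrac{\chi_0}{2}\Lambda''(\chi_0)\) up to a sign, and we will recheck the sign convention carefully here.

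For \(b_J\), the cross terms between \({\rm i}(x+\nu' t){\bf v}_0\) and \({\bf v}_0'\) cancel. Indeed, the pairing of \({\rm i}\theta{\bf v}_0\) with \({\bf v}_0'\) plus its adjoint gives \({\rm i}\theta[{\bf v}_0',{\bf v}_0]_J-{\rm i}\theta[{\bf v}_0,{\bf v}_0']_J=0\), since the pairing is real. The \(\theta^2\) term vanishes by nullity. This leaves \(b_J=-A(h_1')^2-B(h_2')^2<0\), where \(h_j'\ne0\). Finally, \(a_J\ne0\) follows from \(\Lambda''(\chi_0)\neq0\) and \(\chi_0\ne0\).

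\textbf{Step 3: \(d_s\).} Lemma~\ref{lem:J-null} applies at the real simple root \(\chi_s\), so \(d_s=-\chi_s\Lambda'(\chi_s)\). To evaluate this, we differentiate \(F(\chi;\mu_0)=D(\chi)[\mu_0-\Lambda(\chi)]\) at \(\chi_s\), where \(\Lambda(\chi_s)=\mu_0\). This gives \(F_\chi(\chi_s;\mu_0)=-D(\chi_s)\Lambda'(\chi_s)\). On the other hand, the factorization yields \(F_\chi(\chi_s;\mu_0)=\mu_0(\chi_s-\chi_0)^2=\mu_0L_0^2\). Hence
\[
d_s=\frac{\chi_s\mu_0L_0^2}{D(\chi_s)}=\frac{\mu_0L_0^2}{\chi_s^2-k^2}.
\]
The case \(\chi_s=0\) is checked directly: here \(d_s=\mu_0-S\), which agrees with the same formula because \(2-kR=0\) gives \(\mu_0L_0^2=\mu_0\chi_0^2=k^2(S-\mu_0)\) from the reduced quadratic.

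\textbf{Step 4: the sign of \(d_s\), the main obstacle.} Here we use inertia. By Proposition~\ref{prop:confluent-basis-independence}, \({\bf Y}_0,{\bf Y}_1,{\bf Y}_s\) form a fundamental basis. By Step 1, the Gram matrix is block diagonal, consisting of the \(2\times2\) block \(\begin{pmatrix}0&a_J\\ a_J^*&b_J\end{pmatrix}\) together with \(d_s\). That block has determinant \(-|a_J|^2<0\), so it has inertia \((1,1)\). Lemma~\ref{lem:J-form-conservation} requires total inertia \((1,2)\), so \(d_s<0\). Since \(\mu_0>0\) and \(L_0\ne0\), it follows that \(\chi_s^2<k^2\), which is \eqref{eq:simple-root-inside-carrier}.
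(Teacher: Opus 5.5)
Your proposal follows the paper's proof essentially step for step: nullity from Lemma~\ref{lem:J-null}; vanishing of the cross-pairings because $x$-conservation is incompatible with the distinct exponential $e^{{\rm i}(\chi_s-\chi_0)x}$; the decomposition of ${\bf Y}_1$ into ${\rm i}X{\bf Y}_0$ plus the $h_j'$-vector to obtain $a_J$ and $b_J$; the factorization derivative $F_\chi(\chi_s;\mu_0)=\mu_0L_0^2$ (with the zero branch treated separately) to obtain $d_s$; and the inertia count $(1,2)$ to fix the sign of $d_s$.

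The sign you left open comes out in favour of the statement. Since $f'(\chi)=Ah_1(\chi)^2+Bh_2(\chi)^2$, one has $-Ah_1h_1'-Bh_2h_2'=-\tfrac12 f''(\chi_0)$, not $+\tfrac12 f''(\chi_0)$ as you wrote. Combined with your $\chi_0\Lambda''(\chi_0)=f''(\chi_0)$, this gives exactly $a_J=-\tfrac{\chi_0}{2}\Lambda''(\chi_0)$. No other step, including $a_J\neq0$ and the determinant $-a_J^2<0$, depends on this sign.
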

	
	\begin{proof}
		The first relation in \eqref{eq:double-Gram-relations} follows from Lemma~\ref{lem:J-null}.
		
		The vectors \({\bf Y}_0\) and \({\bf Y}_s\) correspond to the distinct real spatial roots \(\chi_0\neq\chi_s\) at the same real spectral value \(\lambda_0\). After removal of the common plane-wave gauge, their \(J\)-pairing has the form
		\[
		[{\bf Y}_0,{\bf Y}_s]_J=C_{0s} \exp \left\{ {\rm i} \left[ (\chi_s-\chi_0)x + (\nu_s-\nu_0)t \right] \right\},
		\]
		where \(C_{0s}\) is independent of \(x,t\). By Lemma~\ref{lem:J-form-conservation}, this pairing must be independent of \(x\). Since \(\chi_s-\chi_0\neq0\), one must have \(C_{0s}=0\). Hence \([{\bf Y}_0,{\bf Y}_s]_J=0\).
		
		Likewise, because \({\bf Y}_1\) is obtained by differentiating the \(\chi_0\)-mode, its pairing with the simple mode has the form
		\[
		[{\bf Y}_1,{\bf Y}_s]_J=\left( C_{1s}^{(0)}+C_{1s}^{(1)}x \right) \exp \left\{ {\rm i} \left[ (\chi_s-\chi_0)x + (\nu_s-\nu_0)t \right] \right\}.
		\]
		The only such function that is independent of \(x\) is the zero function. Therefore \([{\bf Y}_1,{\bf Y}_s]_J=0\). The same argument remains valid when \(\chi_s=0\), provided that the zero-branch eigenfunction \eqref{eq:zero-branch-eigenfunction} is used.
		
		Set \(X=x+\nu'(\chi_0)t\). Writing
		\[
		{\bf Y}_1={\rm i}X{\bf Y}_0 + E_0
		\begin{pmatrix}
			0\\
			u_{1,0}h_1'(\chi_0)\\
			u_{2,0}h_2'(\chi_0)
		\end{pmatrix},
		\]
		and using \([{\bf Y}_0,{\bf Y}_0]_J=0\), one obtains \(a_J=-Ah_1(\chi_0)h_1'(\chi_0) - Bh_2(\chi_0)h_2'(\chi_0)\), and \(b_J=-A[h_1'(\chi_0)]^2 - B[h_2'(\chi_0)]^2\). The latter is strictly negative because \(A,B>0\), \(h_1'(\chi_0)\neq0\), and \(h_2'(\chi_0)\neq0\).
		
		To relate \(a_J\) to the spectral geometry, recall
		\[
		\Lambda(\chi)=\frac{f(\chi)}{\chi}.
		\]
		At a stationary point, \(\Lambda'(\chi_0)=0\), and direct differentiation gives
		\[
		\Lambda''(\chi_0)=\frac{f''(\chi_0)}{\chi_0}.
		\]
		On the other hand,
		\[
		a_J=-\frac12f''(\chi_0),
		\]
		which proves
		\[
		a_J=-\frac{\chi_0}{2}\Lambda''(\chi_0).
		\]
		Since \(\chi_0\neq0\) and \(\Lambda''(\chi_0)\neq0\), one has \(a_J\neq0\).
		
		Finally, in the ordered basis \(\{ {\bf Y}_0, {\bf Y}_1, {\bf Y}_s \}\), the \(J\)-Gram matrix is
		\begin{equation*}
			G_D
			=
			\begin{pmatrix}
				0&a_J&0\\
				a_J&b_J&0\\
				0&0&d_s
			\end{pmatrix}.
		\end{equation*}
		The vectors form a fundamental solution basis at \(\lambda=\lambda_0\): \({\bf Y}_0,{\bf Y}_1\) span the double-root confluent sector, while \({\bf Y}_s\) corresponds to the remaining simple spatial root. Hence Lemma~\ref{lem:J-form-conservation} implies that \(G_D\) has inertia \((1,2)\).
		
		The first \(2\times2\) block has determinant \(-a_J^2<0\), and therefore has one positive and one negative eigenvalue. The remaining eigenvalue must consequently be negative. Thus \(d_s<0\).
		
		It remains to derive the explicit formula for \(d_s\). First suppose that \(\chi_s\neq0\). Lemma~\ref{lem:J-null} gives
		\[
		d_s=-\chi_s\Lambda'(\chi_s).
		\]
		Differentiating the exact factorization
		\[
		F(\chi;\mu_0)
		=
		\mu_0(\chi-\chi_0)^2(\chi-\chi_s)
		\]
		at \(\chi=\chi_s\) gives
		\begin{equation}
			\label{eq:Fchi-simple-factorized}
			F_\chi(\chi_s;\mu_0)
			=
			\mu_0(\chi_s-\chi_0)^2
			=
			\mu_0L_0^2.
		\end{equation}
		On the other hand, differentiating
		\(F(\chi;\mu)=D(\chi)[\mu-\Lambda(\chi)]\) at the simple root yields
		\[
		F_\chi(\chi_s;\mu_0)
		=-D(\chi_s)\Lambda'(\chi_s)
		=(\chi_s^2-k^2)d_s.
		\]
		Comparison with \eqref{eq:Fchi-simple-factorized} proves \eqref{eq:ds-simple-root-location} when \(\chi_s\neq0\).
		
		If \(\chi_s=0\) is the persistent zero branch, formula \eqref{eq:zero-branch-eigenfunction} gives
		\[
		d_s=\mu_0-S.
		\]
		The factorization now reads
		\[
		F(\chi;\mu_0)
		=
		\mu_0(\chi-\chi_0)^2\chi.
		\]
		Thus
		\[
		F_\chi(0;\mu_0)
		=
		\mu_0\chi_0^2
		=
		\mu_0L_0^2.
		\]
		The polynomial expression \eqref{eq:F} also gives
		\(F_\chi(0;\mu_0)=k^2(S-\mu_0)\). Hence
		\[
		d_s
		=
		\mu_0-S
		=-\frac{\mu_0L_0^2}{k^2},
		\]
		which is again \eqref{eq:ds-simple-root-location} with \(\chi_s=0\).
		
		Finally, \(\mu_0>0\) and \(L_0\neq0\), so the numerator in \eqref{eq:ds-simple-root-location} is strictly positive. Since \(d_s<0\), its denominator must be negative: \(\chi_s^2-k^2<0\). This proves \eqref{eq:simple-root-inside-carrier}.
	\end{proof}
	
	The resulting Gram matrix allows the complete projective null cone to be described explicitly.
	
	\begin{proposition}
		\label{prop:admissible-double}
		A nonzero vector of the form \eqref{eq:general-double-leading} can be \(J\)-null only if \(M_0\neq0\). After the projective normalization \(M_0=1\), the null condition is
		\begin{equation}
			\label{eq:double-null-general}
			b_J|M_1|^2
			+
			2a_J\operatorname{Re}M_1
			+
			d_s|c_s|^2
			=
			0.
		\end{equation}
		Equivalently,
		\begin{equation}
			\label{eq:double-null-circle}
			\left(
			\operatorname{Re}M_1+\frac{a_J}{b_J}
			\right)^2
			+
			\left(
			\operatorname{Im}M_1
			\right)^2
			=
			\left(
			\frac{a_J}{b_J}
			\right)^2
			-
			\frac{d_s}{b_J}|c_s|^2.
		\end{equation}
		Consequently, the simple branch can occur in a null leading vector if and only if
		\begin{equation}
			\label{eq:cs-bound-double}
			|c_s|^2
			\le
			\frac{a_J^2}{b_Jd_s}.
		\end{equation}
	\end{proposition}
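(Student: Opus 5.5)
The plan is to expand the $J$-norm of ${\bf y}_R=M_0{\bf Y}_0+M_1{\bf Y}_1+c_s{\bf Y}_s$ with the Gram matrix $G_D$ from Lemma~\ref{lem:double-Gram-structure}. Sesquilinearity, together with the vanishing entries $[{\bf Y}_0,{\bf Y}_0]_J=[{\bf Y}_0,{\bf Y}_s]_J=[{\bf Y}_1,{\bf Y}_s]_J=0$ and the realness of $a_J$, gives
\[
\langle{\bf y}_R|J|{\bf y}_R\rangle
=2a_J\operatorname{Re}(M_0^*M_1)+b_J|M_1|^2+d_s|c_s|^2 .
\]
The realness of $a_J$ follows from its explicit formula, since $\chi_0$, $h_j(\chi_0)$ and $h_j'(\chi_0)$ are all real.

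\textbf{Necessity of $M_0\neq0$.} Suppose $M_0=0$. The null condition then reads $b_J|M_1|^2+d_s|c_s|^2=0$. Lemma~\ref{lem:double-Gram-structure} gives $b_J<0$ and $d_s<0$, so both terms are nonpositive and their sum vanishes only if $M_1=c_s=0$. That makes ${\bf y}_R$ the zero vector, contradicting nontriviality. Hence $M_0\neq0$.

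\textbf{The normalized condition and the circle.} Since $J$-nullity is invariant under ${\bf y}_R\mapsto\gamma{\bf y}_R$, we may set $M_0=1$. The expansion above then becomes \eqref{eq:double-null-general}. Dividing by $b_J\neq0$ and completing the square in $\operatorname{Re}M_1$ produces \eqref{eq:double-null-circle}.

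\textbf{The bound on $|c_s|$.} Given $c_s$, a solution $M_1\in\mathbb C$ exists exactly when the right-hand side of \eqref{eq:double-null-circle} is nonnegative, i.e.
\[
\frac{a_J^2}{b_J^2}\ge\frac{d_s}{b_J}|c_s|^2 .
\]
Multiplying by $b_J^2>0$ gives $a_J^2\ge b_Jd_s|c_s|^2$. Because $b_Jd_s>0$, this is \eqref{eq:cs-bound-double}. Conversely, any $c_s$ satisfying the bound yields a nonempty circle, possibly degenerate to a point, of admissible $M_1$. The resulting vector is nonzero because $M_0=1$.

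\textbf{Main obstacle.} There is no real obstacle here. The only care needed is to track the sign of $b_J$ when dividing and to check that the cross term is genuinely $2a_J\operatorname{Re}M_1$, which relies on $a_J$ being real. Both points are already supplied by Lemma~\ref{lem:double-Gram-structure}.
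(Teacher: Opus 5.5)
Your proposal is correct and follows the paper's proof essentially step for step. You expand the $J$-norm using the Gram structure of Lemma~\ref{lem:double-Gram-structure}, use $b_J<0$ and $d_s<0$ to rule out $M_0=0$, normalize $M_0=1$, complete the square, and read off \eqref{eq:cs-bound-double} from nonnegativity of the squared radius. One step is implicit, as it is in the paper: the claim that ${\bf Y}_0+M_1{\bf Y}_1+c_s{\bf Y}_s\neq0$ once $M_0=1$ rests on the linear independence in Proposition~\ref{prop:confluent-basis-independence}.
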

	
	\begin{proof}
		Using Lemma~\ref{lem:double-Gram-structure}, one obtains
		\[
		[{\bf y}_R,{\bf y}_R]_J=2a_J\operatorname{Re}(M_0^*M_1) + b_J|M_1|^2 + d_s|c_s|^2.
		\]
		If \(M_0=0\), then \([{\bf y}_R,{\bf y}_R]_J=b_J|M_1|^2+d_s|c_s|^2<0\) for every nonzero pair \((M_1,c_s)\). Hence every nonzero null vector has \(M_0\neq0\).
		
		After normalizing \(M_0=1\), the null condition becomes \eqref{eq:double-null-general}. Completing the square in \(M_1\) gives \eqref{eq:double-null-circle}. The circle is nonempty precisely when its squared radius is nonnegative, which gives \eqref{eq:cs-bound-double}.
	\end{proof}
	
	We first construct the limit in the generic mixed double--simple sector, where \(M_1c_s\neq0\) and \(b_J|M_1|^2+2a_J\operatorname{Re}M_1+d_s|c_s|^2=0\). The pure double-root and lower coefficient sectors will then follow by direct specialization.
	
	\subsection{Symmetric splitting of a real double root}
	\label{subsec:double-root-splitting}
	
	\label{subsec:symmetric-root-splitting}
	
	The double degeneration has already been defined at the fixed endpoint \(\mu_0\) by the multiplicity-two equation \(F(\chi;\mu_0)=0\). We now displace this single endpoint to construct a regular nonreal-spectrum approximating family for the singular Darboux kernel. This auxiliary displacement does not introduce, or coalesce, several distinct Darboux poles.
	
	Let \(\mu=\mu_0+\eta\), where \(\eta\in\mathbb C\) and \(0<|\eta|\ll1\). Since \(\mu_0=\lambda_0^2>0\), there exists a unique local holomorphic square-root branch satisfying \(\lambda(\eta)^2=\mu_0+\eta\) and \(\lambda(0)=\lambda_0\). Define \(D_0=\chi_0^3-k^2\chi_0\) and \(D_1=3\chi_0^2-k^2\). Since \(\chi_0\notin\{0,\pm k\}\), one has \(D_0\neq0\).
	
	At \(\mu=\mu_0\), the exact-double-root assumption gives \(F(\chi;\mu_0)=\mu_0(\chi-\chi_0)^2(\chi-\chi_s)\). Since \(F\) is affine in \(\mu\),
	\begin{equation}
		\label{eq:F-mu-perturbed}
		F(\chi;\mu_0+\eta)
		=
		F(\chi;\mu_0)
		+
		\eta(\chi^3-k^2\chi).
	\end{equation}
	Introduce \(\chi=\chi_0+\delta\) and \(L_0=\chi_0-\chi_s\). Substitution into \eqref{eq:F-mu-perturbed} gives the exact cubic equation
	\begin{equation}
		\label{eq:delta-cubic-exact}
		G(\delta,\eta)
		:=
		(\mu_0+\eta)\delta^3
		+
		(\mu_0L_0+3\chi_0\eta)\delta^2
		+
		D_1\eta\,\delta
		+
		D_0\eta
		=
		0.
	\end{equation}
	At \(\eta=0\), \(G(\delta,0)=\mu_0\delta^2(\delta+L_0)\), so \(\delta=0\) is a double root and \(\delta=-L_0\) is simple. The condition \(L_0\neq0\) excludes a triple spatial root. We first separate these roots uniformly for small nonzero \(\eta\).
	
	\begin{lemma}
		\label{lem:rouche-double-splitting}
		For all sufficiently small nonzero \(\eta\), the equation \(G(\delta,\eta)=0\) has exactly two roots, counted with multiplicity, in a disk of radius \(O(|\eta|^{1/2})\) centered at \(\delta=0\), and exactly one root in a fixed neighborhood of \(\delta=-L_0\).
	\end{lemma}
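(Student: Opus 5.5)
The plan is to apply Rouch\'e's theorem twice to the cubic \(G(\delta,\eta)\) from \eqref{eq:delta-cubic-exact}. We compare it with its unperturbed part \(G(\delta,0)=\mu_0\delta^2(\delta+L_0)\), or with a rescaled version of it. Since \(G(\cdot,\eta)\) has degree three with leading coefficient \(\mu_0+\eta\neq0\) for small \(\eta\), it has exactly three roots. It therefore suffices to locate two of them near \(0\) and one near \(-L_0\); the count then closes automatically.

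\emph{Fixed neighborhood of \(-L_0\).} Choose \(r=|L_0|/4\), which is positive because \(L_0\neq0\) by \eqref{eq:L0-Lambda2}. On the circle \(|\delta+L_0|=r\) we have \(|\delta|\ge 3|L_0|/4\), so \(|G(\delta,0)|\ge \mu_0(3|L_0|/4)^2 r\), a positive constant. The perturbation is
\[
G(\delta,\eta)-G(\delta,0)=\eta\bigl(\delta^3+3\chi_0\delta^2+D_1\delta+D_0\bigr),
\]
which is \(O(|\eta|)\) uniformly on this circle. Rouch\'e's theorem then shows that \(G(\cdot,\eta)\) has exactly one root in the disk, because \(G(\cdot,0)\) has exactly one root there, namely \(\delta=-L_0\).

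\emph{Disk of radius \(O(|\eta|^{1/2})\) about \(0\).} Take the circle \(|\delta|=\rho\) with \(\rho=C|\eta|^{1/2}\), where \(C\) is a large constant fixed below. Use the comparison function \(g(\delta)=\mu_0L_0\delta^2\), which has a double zero at \(0\); on the circle \(|g|=|\mu_0L_0|C^2|\eta|\). The remaining terms of \(G\) are bounded on the circle:
\[
|(\mu_0+\eta)\delta^3|=O(C^3|\eta|^{3/2}),\quad |3\chi_0\eta\delta^2|=O(C^2|\eta|^2),\quad |D_1\eta\delta|=O(C|\eta|^{3/2}),\quad |D_0\eta|=|D_0||\eta|.
\]
First choose \(C\) with \(|\mu_0L_0|C^2>2|D_0|+1\). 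Then take \(|\eta|\) small enough that the three \(C\)-dependent terms together are smaller than \(|\eta|\). With these choices \(|G-g|<|g|\) on the circle, so \(G(\cdot,\eta)\) has exactly two roots, counted with multiplicity, in \(|\delta|<C|\eta|^{1/2}\).

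The disks are disjoint for small \(\eta\), so all three roots are accounted for. The only delicate point is the order of quantifiers in the second step. The constant term \(D_0\eta\) is the dominant perturbation and scales exactly like \(|g|\), so \(C\) must be fixed first, depending only on \(|D_0|/|\mu_0L_0|\), and \(|\eta|\) shrunk afterwards. This is also why the disk radius is genuinely of order \(|\eta|^{1/2}\) and cannot be smaller.
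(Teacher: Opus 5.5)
Your proof is correct and follows the paper's argument. Rouch\'e's theorem is applied twice: once on a fixed circle about \(-L_0\), comparing with \(G(\delta,0)\), and once on the circle \(|\delta|=C|\eta|^{1/2}\), with \(C\) fixed first relative to \(|D_0|/|\mu_0L_0|\). The only difference is cosmetic: the paper compares with \(\mu_0L_0\delta^2+D_0\eta\) and must check that both zeros of that polynomial lie inside the disk, whereas you compare with \(\mu_0L_0\delta^2\), whose double zero sits at the center, and absorb \(D_0\eta\) into the perturbation. This works equally well because \(|\mu_0L_0|C^2>|D_0|+1\).
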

	
	\begin{proof}
		Write \(G(\delta,\eta)=G_0(\delta)+\eta H(\delta)\), where \(G_0(\delta)=\mu_0\delta^2(\delta+L_0)\), and \(H(\delta)=\delta^3 + 3\chi_0\delta^2 + D_1\delta + D_0\).
		
		We first isolate the root continuing from \(-L_0\). Choose
		\[
		0<\rho<\frac{|L_0|}{3}.
		\]
		On the circle \(|\delta+L_0|=\rho\), one has
		\[
		\begin{aligned}
			|G_0(\delta)|
			&=
			\mu_0|\delta|^2|\delta+L_0|\\
			&=
			\mu_0\rho|\delta|^2.
		\end{aligned}
		\]
		Moreover, \(|\delta| \ge |L_0|-|\delta+L_0|=|L_0|-\rho\), and hence \(|G_0(\delta)| \ge \mu_0\rho(|L_0|-\rho)^2\). Since \(H\) is bounded on this circle, \(|\eta H(\delta)|<|G_0(\delta)|\) for all sufficiently small \(|\eta|\). Rouch\'e's theorem therefore implies that \(G(\cdot,\eta)\) and \(G_0\) have the same number of zeros in \(|\delta+L_0|<\rho\). Since \(G_0\) has exactly one simple zero there, namely \(\delta=-L_0\), the perturbed polynomial has exactly one zero in this disk.
		
		We next determine the scale of the two roots near zero. Define the leading-order comparison polynomial \(P_\eta(\delta)=\mu_0L_0\delta^2+D_0\eta\). Choose \(R>0\) such that
		\begin{equation}
			\label{eq:R-choice-double}
			|\mu_0L_0|R^2>|D_0|.
		\end{equation}
		On the circle \(|\delta|=R|\eta|^{1/2}\), the reverse triangle inequality gives
		\begin{align}
			|P_\eta(\delta)|
			&\ge
			|\mu_0L_0||\delta|^2
			-
			|D_0||\eta|
			\nonumber\\
			&=
			\left(
			|\mu_0L_0|R^2-|D_0|
			\right)|\eta|.
			\label{eq:P-eta-lower}
		\end{align}
		The coefficient in parentheses is strictly positive by \eqref{eq:R-choice-double}.
		
		On the same circle,
		\begin{align*}
			G(\delta,\eta)-P_\eta(\delta)
			&=
			\mu_0\delta^3
			+
			\eta
			\left(
			D_1\delta
			+
			3\chi_0\delta^2
			+
			\delta^3
			\right).
		\end{align*}
		Since \(|\delta|=R|\eta|^{1/2}\), the terms on the right-hand side have respective orders
		\[
		O(|\eta|^{3/2}), \qquad O(|\eta|^{3/2}), \qquad O(|\eta|^2), \qquad O(|\eta|^{5/2}).
		\]
		Consequently,
		\begin{equation}
			\label{eq:G-minus-P-order}
			|G(\delta,\eta)-P_\eta(\delta)|
			=
			O(|\eta|^{3/2})
		\end{equation}
		uniformly on the circle. Since \(|\eta|^{3/2}=o(|\eta|)\), equations \eqref{eq:P-eta-lower} and \eqref{eq:G-minus-P-order} imply that \(|G-P_\eta|<|P_\eta|\) for all sufficiently small nonzero \(\eta\).
		
		By Rouch\'e's theorem, \(G\) and \(P_\eta\) have the same number of zeros in \(|\delta|<R|\eta|^{1/2}\). The two zeros of \(P_\eta\) are
		\[
		\delta=\pm \sqrt{ -\frac{D_0\eta}{\mu_0L_0} },
		\]
		and condition \eqref{eq:R-choice-double} ensures that both lie strictly inside this disk. Hence \(G\) also has exactly two zeros there, counted with multiplicity. These roots satisfy \(|\delta_j|<R|\eta|^{1/2}\) for \(j=1,2\). The third root is the unique root previously isolated near \(-L_0\).
	\end{proof}
	
	By Lemma~\ref{lem:rouche-double-splitting}, the roots may now be labeled unambiguously so that \(\chi_j(\eta)=\chi_0+\delta_j(\eta)\) for \(j=1,2\), with \(\delta_1,\delta_2=O(|\eta|^{1/2}) \longrightarrow0\), whereas the remaining root satisfies \(\delta_3\longrightarrow-L_0\).
	
	The individual branches \(\delta_1\) and \(\delta_2\) depend on a choice of square-root sheet. Their symmetric sum and product are independent of the labeling and are single-valued analytic functions of \(\eta\). These symmetric quantities are therefore the natural variables for the confluent construction.
	
	\begin{lemma}
		\label{lem:moment-double}
		The symmetric combinations of the two roots splitting from \(\delta=0\) satisfy \(\delta_1+\delta_2=s_1\eta+O(|\eta|^2)\) and \(\delta_1\delta_2=p_1\eta+O(|\eta|^2)\), where \(p_1=\frac{D_0}{\mu_0L_0}=-\frac{2}{\Lambda''(\chi_0)}\), and \(s_1=\frac{p_1-D_1/\mu_0}{L_0}=-\frac{2\Lambda'''(\chi_0)} {3[\Lambda''(\chi_0)]^2}\). Equivalently, after choosing a local branch of \(\eta^{1/2}\), the individual split roots have the Puiseux expansions
		\begin{equation}
			\label{eq:Puiseux-double}
			\delta_{1,2}
			=
			\pm
			\sqrt{
				\frac{2\eta}{\Lambda''(\chi_0)}
			}
			-
			\frac{\Lambda'''(\chi_0)}
			{3[\Lambda''(\chi_0)]^2}\eta
			+
			O(|\eta|^{3/2}).
		\end{equation}
	\end{lemma}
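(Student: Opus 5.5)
The plan is to work with Vieta's formulas for the exact cubic \eqref{eq:delta-cubic-exact} rather than with the individual roots. Divide $G(\delta,\eta)$ by its leading coefficient $\mu_0+\eta\neq0$. With $e_1=\delta_1+\delta_2+\delta_3$, $e_2=\delta_1\delta_2+\delta_3(\delta_1+\delta_2)$ and $e_3=\delta_1\delta_2\delta_3$, Vieta gives
\[
e_1=-\frac{\mu_0L_0+3\chi_0\eta}{\mu_0+\eta},\qquad
e_2=\frac{D_1\eta}{\mu_0+\eta},\qquad
e_3=-\frac{D_0\eta}{\mu_0+\eta}.
\]
All three are analytic in $\eta$. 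By Lemma~\ref{lem:rouche-double-splitting}, $\delta_3$ is the unique root near $-L_0$. It is a simple root of $G(\cdot,0)$, so by the implicit function theorem it is an analytic function $\delta_3(\eta)=-L_0+O(|\eta|)$. Hence $\sigma:=\delta_1+\delta_2=e_1-\delta_3$ and $\pi:=\delta_1\delta_2=e_3/\delta_3$ are single-valued and analytic. Both vanish at $\eta=0$, because $\delta_{1,2}=O(|\eta|^{1/2})$.

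The next step is to extract the linear coefficients. The product is immediate:
\[
\pi=\frac{-D_0\eta/\mu_0}{-L_0}+O(|\eta|^2),\qquad\text{so}\qquad p_1=\frac{D_0}{\mu_0L_0}.
\]
For the sum, use the second Vieta identity $\pi+\delta_3\sigma=e_2=D_1\eta/\mu_0+O(|\eta|^2)$. Substituting $\delta_3=-L_0+O(\eta)$ and $\sigma=O(\eta)$ gives $p_1\eta-L_0\sigma=D_1\eta/\mu_0+O(|\eta|^2)$, and hence
\[
s_1=\frac{p_1-D_1/\mu_0}{L_0}.
\]
Solving this way avoids having to compute the first-order term of $\delta_3$ at all.

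To convert to spectral derivatives, the identity \eqref{eq:L0-Lambda2} gives $\mu_0L_0=-D_0\Lambda''(\chi_0)/2$, so $p_1=-2/\Lambda''(\chi_0)$ at once. For $s_1$ I would expand $F(\chi;\mu)=D(\chi)[\mu-\Lambda(\chi)]$ to third order at $\chi_0$. Since $F$ is cubic in $\chi$, its third $\chi$-derivative is $6\mu_0$. Computing the same derivative from the product form, and using $\Lambda'(\chi_0)=0$ and $D'(\chi_0)=D_1$, gives
\[
6\mu_0=-3D_1\Lambda''(\chi_0)-D_0\Lambda'''(\chi_0).
\]
Combining this with the expressions for $L_0$ and $p_1$ should reduce
\[
s_1=\frac{p_1-D_1/\mu_0}{L_0}
\]
to $-2\Lambda'''/(3\Lambda''^2)$. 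I expect this to be the main obstacle: the algebra must be organized so that $D_0$ and $D_1$ cancel cleanly. I would first write everything over the common factor $\mu_0L_0=-D_0\Lambda''/2$, and then substitute $D_1$ from the third-derivative relation.

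The Puiseux expansions then follow by treating $\delta_{1,2}$ as the roots of $\delta^2-\sigma\delta+\pi=0$:
\[
\delta_{1,2}=\frac{\sigma}{2}\pm\frac12\sqrt{\sigma^2-4\pi}.
\]
Here $\sigma^2-4\pi=-4p_1\eta\bigl(1+O(\eta)\bigr)$, so the square root is $\sqrt{-4p_1\eta}\,\bigl(1+O(\eta)\bigr)$ on a chosen branch of $\eta^{1/2}$. This gives
\[
\delta_{1,2}=\pm\sqrt{2\eta/\Lambda''(\chi_0)}+\frac{s_1}{2}\,\eta+O(|\eta|^{3/2}),
\]
which is exactly \eqref{eq:Puiseux-double}.
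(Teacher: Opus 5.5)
Your proposal is correct. Its skeleton is the paper's: Vieta for the exact cubic $G$, the implicit function theorem for the simple root $\delta_3$, and the two conversion identities $2\mu_0L_0=-D_0\Lambda''(\chi_0)$ and $6\mu_0=-3D_1\Lambda''(\chi_0)-D_0\Lambda'''(\chi_0)$. It differs from the paper in two sub-steps.

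\emph{Computing $s_1$.} The paper uses the total-sum relation $\delta_1+\delta_2=e_1-\delta_3$. That forces it to compute $\kappa_1=\delta_3'(0)$ by implicit differentiation and then simplify a rational expression. Your use of $e_2=\pi+\delta_3\sigma$ needs only $\delta_3(0)=-L_0$, so it is shorter. The paper's route has one side benefit: it produces $\kappa_1$, which reappears later as the simple-branch velocity $\kappa_s$. In your route this is recovered afterwards as $\kappa_1=e_1'(0)-s_1=-(3\chi_0-L_0)/\mu_0-s_1=-2/\mu_0^2-s_1$, using $2\chi_0+\chi_s=2/\mu_0$.

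\emph{The Puiseux expansion.} The paper substitutes the ansatz $\delta=\alpha\eta^{1/2}+\beta\eta+O(|\eta|^{3/2})$ into the Taylor series of $\Lambda$ and matches coefficients, which takes the form of the expansion for granted. Your quadratic-formula argument derives it, remainder included, from the already established analytic functions $\sigma$ and $\pi$ (using $p_1\neq0$). The ``equivalently'' in the statement is therefore genuinely proved in your version.

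The algebra you flagged as the main obstacle closes in two lines. Write $s_1=(D_0-D_1L_0)/(\mu_0L_0^2)$. Substitute $L_0=-D_0\Lambda''(\chi_0)/(2\mu_0)$ and $D_1=-\bigl(6\mu_0+D_0\Lambda'''(\chi_0)\bigr)/\bigl(3\Lambda''(\chi_0)\bigr)$. This gives $D_0-D_1L_0=-D_0^2\Lambda'''(\chi_0)/(6\mu_0)$ and $\mu_0L_0^2=D_0^2[\Lambda''(\chi_0)]^2/(4\mu_0)$. Their quotient is $-2\Lambda'''(\chi_0)/(3[\Lambda''(\chi_0)]^2)$, as claimed.
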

	
	\begin{proof}
		The root \(\delta=-L_0\) of \(G(\delta,0)=0\) is simple, since \(G_\delta(-L_0,0)=\mu_0L_0^2 \neq0\). Therefore, the holomorphic implicit function theorem yields a unique analytic root \(\delta_3(\eta)\) near \(\eta=0\) such that \(G(\delta_3(\eta),\eta)=0\) and \(\delta_3(0)=-L_0\). In particular, \(\delta_3\) admits the Taylor expansion
		\begin{equation}
			\label{eq:delta3-first-order}
			\delta_3(\eta)
			=
			-L_0+\kappa_1\eta+O(|\eta|^2),
		\end{equation}
		where the implicit-function differentiation formula gives
		\[
		\kappa_1=\delta_3'(0)=-\frac{G_\eta(-L_0,0)} {G_\delta(-L_0,0)}.
		\]
		From \(G_\eta(\delta,\eta)=\delta^3 + 3\chi_0\delta^2 + D_1\delta + D_0\), we obtain \(G_\eta(-L_0,0)=-L_0^3 + 3\chi_0L_0^2 - D_1L_0 + D_0\). Hence
		\begin{equation}
			\label{eq:kappa1-third-root}
			\kappa_1
			=
			\frac{
				L_0^3
				-
				3\chi_0L_0^2
				+
				D_1L_0
				-
				D_0
			}{
				\mu_0L_0^2
			}.
		\end{equation}
		
		Vieta's product formula applied to \eqref{eq:delta-cubic-exact} gives the exact identity \(\delta_1\delta_2\delta_3=-\frac{D_0\eta}{\mu_0+\eta}\). Therefore,
		\begin{equation}
			\label{eq:p-analytic-exact}
			\delta_1\delta_2
			=
			-\frac{D_0\eta}
			{(\mu_0+\eta)\delta_3(\eta)}.
		\end{equation}
		The right-hand side is analytic in \(\eta\) near zero because \(\mu_0\neq0\) and \(\delta_3(0)=-L_0\neq0\). Using
		\[
		\frac{1}{\mu_0+\eta}=\frac{1}{\mu_0}+O(|\eta|),
		\]
		and
		\[
		\frac{1}{\delta_3(\eta)}=-\frac{1}{L_0}+O(|\eta|),
		\]
		in \eqref{eq:p-analytic-exact}, we obtain
		\[
		\delta_1\delta_2=\frac{D_0}{\mu_0L_0}\eta + O(|\eta|^2).
		\]
		Thus
		\[
		p_1=\frac{D_0}{\mu_0L_0}.
		\]
		
		We next use the total-sum Vieta relation. For \eqref{eq:delta-cubic-exact}, it reads \(\delta_1+\delta_2+\delta_3=-\frac{ \mu_0L_0+3\chi_0\eta }{ \mu_0+\eta }\). Consequently,
		\begin{equation}
			\label{eq:s-analytic-total-sum}
			\delta_1+\delta_2
			=
			-\frac{
				\mu_0L_0+3\chi_0\eta
			}{
				\mu_0+\eta
			}
			-
			\delta_3(\eta).
		\end{equation}
		Both terms on the right-hand side are analytic near \(\eta=0\); hence \(\delta_1+\delta_2\) is analytic in \(\eta\).
		
		Expanding the rational term gives
		\begin{align}
			-\frac{
				\mu_0L_0+3\chi_0\eta
			}{
				\mu_0+\eta
			}
			&=
			-\left(
			L_0+\frac{3\chi_0}{\mu_0}\eta
			\right)
			\left(
			1-\frac{\eta}{\mu_0}
			+
			O(|\eta|^2)
			\right)
			\nonumber\\
			&=
			-L_0
			-
			\frac{3\chi_0-L_0}{\mu_0}\eta
			+
			O(|\eta|^2).
			\label{eq:Vieta-total-sum-expansion}
		\end{align}
		Combining \eqref{eq:delta3-first-order}, \eqref{eq:s-analytic-total-sum}, and \eqref{eq:Vieta-total-sum-expansion}, we obtain
		\[
		\delta_1+\delta_2=\left[ -\frac{3\chi_0-L_0}{\mu_0} - \kappa_1 \right]\eta + O(|\eta|^2).
		\]
		Therefore, \(s_1=-\frac{3\chi_0-L_0}{\mu_0} - \kappa_1\). Substituting \eqref{eq:kappa1-third-root} gives
		\begin{align}
			s_1
			&=
			-\frac{3\chi_0-L_0}{\mu_0}
			-
			\frac{
				L_0^3
				-
				3\chi_0L_0^2
				+
				D_1L_0
				-
				D_0
			}{
				\mu_0L_0^2
			}
			\nonumber\\
			&=
			\frac{
				-(3\chi_0-L_0)L_0^2
				-
				L_0^3
				+
				3\chi_0L_0^2
				-
				D_1L_0
				+
				D_0
			}{
				\mu_0L_0^2
			}
			\nonumber\\
			&=
			\frac{D_0-D_1L_0}
			{\mu_0L_0^2}.
			\label{eq:s1-D0-D1}
		\end{align}
		Since
		\[
		p_1=\frac{D_0}{\mu_0L_0},
		\]
		equation \eqref{eq:s1-D0-D1} becomes
		\[
		s_1=\frac{p_1-D_1/\mu_0}{L_0}.
		\]
		This proves the first expression for \(s_1\).
		
		The identity \eqref{eq:L0-Lambda2} gives \(2\mu_0L_0=-D_0\Lambda''(\chi_0)\). Therefore,
		\[
		p_1=\frac{D_0}{\mu_0L_0}=-\frac{2}{\Lambda''(\chi_0)}.
		\]
		
		To derive the alternative expression for \(s_1\), differentiate the identity \(F(\chi;\mu)=D(\chi)\bigl[\mu-\Lambda(\chi)\bigr]\) three times with respect to \(\chi\), and evaluate it at \((\chi,\mu)=(\chi_0,\mu_0)\). Since \(\mu_0=\Lambda(\chi_0)\) and \(\Lambda'(\chi_0)=0\), one obtains \(F_{\chi\chi\chi}(\chi_0;\mu_0)=-D_0\Lambda'''(\chi_0) - 3D_1\Lambda''(\chi_0)\). On the other hand, the cubic form of \(F\) gives \(F_{\chi\chi\chi}(\chi_0;\mu_0)=6\mu_0\). Hence \(6\mu_0=-D_0\Lambda'''(\chi_0) - 3D_1\Lambda''(\chi_0)\). Combining this identity with \(2\mu_0L_0=-D_0\Lambda''(\chi_0)\), and the first formula for \(s_1\), yields
		\[
		s_1=-\frac{2\Lambda'''(\chi_0)} {3[\Lambda''(\chi_0)]^2}.
		\]
		
		Finally, Taylor expansion of the dispersion map gives
		\begin{align*}
			\eta
			&=
			\Lambda(\chi_0+\delta)
			-
			\Lambda(\chi_0)
			\\
			&=
			\frac{\Lambda''(\chi_0)}{2}\delta^2
			+
			\frac{\Lambda'''(\chi_0)}{6}\delta^3
			+
			O(\delta^4).
		\end{align*}
		After choosing a local square-root branch, substitute \(\delta=\alpha\eta^{1/2} + \beta\eta + O(|\eta|^{3/2})\). Matching the coefficients of \(\eta\) and \(\eta^{3/2}\) gives
		\[
		\alpha^2=\frac{2}{\Lambda''(\chi_0)}, \qquad \beta=-\frac{\Lambda'''(\chi_0)} {3[\Lambda''(\chi_0)]^2}.
		\]
		The two choices of the square root give \eqref{eq:Puiseux-double}.
	\end{proof}
	
	For a prescribed mixed-sector coefficient \(M_1\), choose \(C_1\) and \(C_2\) so that \(C_1+C_2=1\) and \(C_1\delta_1+C_2\delta_2=M_1\). For all sufficiently small nonzero \(\eta\), \(\delta_1-\delta_2\neq0\), and hence
	\begin{equation}
		\label{eq:C12-double}
		C_1
		=
		\frac{M_1-\delta_2}{\delta_1-\delta_2},
		\qquad
		C_2
		=
		\frac{\delta_1-M_1}{\delta_1-\delta_2}.
	\end{equation}
	
	For \(j=1,2\), let \({\bf y}_j(\eta)={\bf y}(\chi_j(\eta),\lambda(\eta))\). Then
	\begin{align*}
		C_1{\bf y}_1+C_2{\bf y}_2
		={}&
		\frac{
			\delta_1{\bf y}_2-\delta_2{\bf y}_1
		}{
			\delta_1-\delta_2
		}
		+
		M_1
		\frac{
			{\bf y}_1-{\bf y}_2
		}{
			\delta_1-\delta_2
		}.
	\end{align*}
	The first term tends to \({\bf Y}_0\), while the second is a divided difference tending to \(M_1{\bf Y}_1\). Therefore, \(C_1{\bf y}_1+C_2{\bf y}_2 \longrightarrow {\bf Y}_0+M_1{\bf Y}_1\).
	
	Since \(M_1\neq0\) in the generic mixed sector, \(C_1,C_2=O(|\eta|^{-1/2})\). The divergent parts cancel on the common limiting eigenvector \({\bf Y}_0\), leaving the finite associated-vector contribution \(M_1{\bf Y}_1\).
	
	Define the weighted moments
	\[
	\mathfrak M_n=C_1\delta_1^n+C_2\delta_2^n,
	\qquad n\ge0.
	\]
	The following strengthened estimate is uniform in the moment index and will justify the complete analytic Taylor tail.
	
	\begin{lemma}
		\label{lem:double-weighted-moment-tail}
		The prescribed moments satisfy
		\begin{align}
			\mathfrak M_0&=1,
			&\mathfrak M_1&=M_1,
			\notag\\
			\mathfrak M_2
			&=
			\mathcal M_2\eta+O(|\eta|^2),
			&
			\mathfrak M_3
			&=
			\mathcal M_3\eta+O(|\eta|^2),
			\label{eq:double-low-moment-expansion}
		\end{align}
		where
		\[
		\mathcal M_2=s_1M_1-p_1,
		\qquad
		\mathcal M_3=-p_1M_1.
		\]
		Moreover, for every \(\tau>0\), after reducing \(|\eta|\) if necessary, there is a constant \(C_\tau>0\) such that
		\begin{equation}
			\label{eq:double-uniform-moment-tail}
			|\mathfrak M_n|
			\le
			C_\tau|\eta|^2\tau^{n-4},
			\qquad n\ge4.
		\end{equation}
	\end{lemma}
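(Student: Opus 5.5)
The proof will rest on the fact that $\delta_1,\delta_2$ are the two roots of the monic quadratic $\delta^2-s\delta+p$, where $s=\delta_1+\delta_2$ and $p=\delta_1\delta_2$. By Lemma~\ref{lem:moment-double}, both $s$ and $p$ are single-valued analytic functions of $\eta$, with $s=s_1\eta+O(|\eta|^2)$ and $p=p_1\eta+O(|\eta|^2)$. Because each $\delta_j$ satisfies $\delta_j^2=s\delta_j-p$, multiplying by $C_j\delta_j^{n-2}$ and summing over $j$ gives the exact linear recurrence
\[
\mathfrak M_n=s\,\mathfrak M_{n-1}-p\,\mathfrak M_{n-2},\qquad n\ge2 .
\]
This identity holds regardless of how the individual roots are labeled, so the divergent weights $C_j=O(|\eta|^{-1/2})$ never need to be handled individually.

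Next come the low moments. The values $\mathfrak M_0=C_1+C_2=1$ and $\mathfrak M_1=C_1\delta_1+C_2\delta_2=M_1$ follow directly from the defining relations of \eqref{eq:C12-double}. Applying the recurrence once gives $\mathfrak M_2=sM_1-p=(s_1M_1-p_1)\eta+O(|\eta|^2)$. Applying it again gives
\[
\mathfrak M_3=s\mathfrak M_2-pM_1=-p_1M_1\eta+O(|\eta|^2),
\]
because $s\mathfrak M_2=O(|\eta|^2)$. This establishes \eqref{eq:double-low-moment-expansion}. Carrying the recurrence one step further gives $\mathfrak M_4=s\mathfrak M_3-p\mathfrak M_2=O(|\eta|^2)$ and $\mathfrak M_5=s\mathfrak M_4-p\mathfrak M_3=O(|\eta|^2)$, since every term is a product of two $O(|\eta|)$ factors.

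For the uniform tail, I will invoke Lemma~\ref{lem:analytic-moment-tail} with $N=4$, $d=2$, $b_1=s$ and $b_2=-p$. Both coefficients are $O(|\eta|)$, and the initial values $\mathfrak M_4,\mathfrak M_5$ are $O(|\eta|^2)$. Its induction, with $\tau$ arbitrary and $|\eta|$ reduced so that $|s|\tau^{-1}+|p|\tau^{-2}\le1$, then yields \eqref{eq:double-uniform-moment-geometric-bound}, which is the claimed bound \eqref{eq:double-uniform-moment-tail}. The geometric-bound part of that lemma uses only the recurrence, not the holomorphic $f$, so it applies for every $\tau>0$.

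The only point requiring care is the one just noted: establishing the recurrence through the symmetric functions. Once that is in place, every estimate is controlled by the analytic quantities $s$ and $p$, and no individual Puiseux branch enters. A secondary check is that $\delta_1\ne\delta_2$ for small nonzero $\eta$, so that the $C_j$ are well defined. This holds because $p_1\ne0$, so the discriminant $s^2-4p=-4p_1\eta+O(|\eta|^2)$ is nonzero.
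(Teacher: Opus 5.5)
Your proposal is correct and follows essentially the same route as the paper. Both derive the exact recurrence $\mathfrak M_n=s\mathfrak M_{n-1}-p\mathfrak M_{n-2}$ from the symmetric functions, obtain the low moments and $\mathfrak M_4,\mathfrak M_5=O(|\eta|^2)$ by iterating it, and then invoke Lemma~\ref{lem:analytic-moment-tail} with $N=4$, $d=2$, $b_1=s$, $b_2=-p$. You also make explicit two points the paper leaves implicit, and both are sound: the geometric-bound part of that lemma uses only the recurrence, so any $\tau>0$ is allowed, and $\delta_1\neq\delta_2$ follows from $s^2-4p=-4p_1\eta+O(|\eta|^2)$ with $p_1\neq0$.
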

	
	\begin{proof}
		Set
		\[
		s(\eta)=\delta_1+\delta_2,
		\qquad
		p(\eta)=\delta_1\delta_2.
		\]
		Lemma~\ref{lem:moment-double} gives
		\[
		s(\eta)=s_1\eta+O(|\eta|^2),
		\qquad
		p(\eta)=p_1\eta+O(|\eta|^2).
		\]
		Because \(\delta_1\) and \(\delta_2\) are the roots of
		\(z^2-s(\eta)z+p(\eta)=0\), multiplication by
		\(C_j\delta_j^{n-2}\) and summation over \(j=1,2\) gives the exact recurrence
		\begin{equation}
			\label{eq:double-exact-moment-recurrence}
			\mathfrak M_n
			=
			s(\eta)\mathfrak M_{n-1}
			-
			p(\eta)\mathfrak M_{n-2},
			\qquad n\ge2.
		\end{equation}
		The defining equations for \(C_1,C_2\) give
		\(\mathfrak M_0=1\) and \(\mathfrak M_1=M_1\). Hence
		\[
		\begin{aligned}
			\mathfrak M_2
			&=s(\eta)M_1-p(\eta)
			=(s_1M_1-p_1)\eta+O(|\eta|^2),
			\\
			\mathfrak M_3
			&=s(\eta)\mathfrak M_2-p(\eta)M_1
			=-p_1M_1\eta+O(|\eta|^2),
		\end{aligned}
		\]
		which proves \eqref{eq:double-low-moment-expansion}. Applying \eqref{eq:double-exact-moment-recurrence} twice more yields
		\[
		\mathfrak M_4
		=s(\eta)\mathfrak M_3-p(\eta)\mathfrak M_2
		=O(|\eta|^2),
		\]
		and
		\[
		\mathfrak M_5
		=s(\eta)\mathfrak M_4-p(\eta)\mathfrak M_3
		=O(|\eta|^2).
		\]
		The recurrence \eqref{eq:double-exact-moment-recurrence} has the form of \eqref{eq:abstract-moment-recurrence} with
		\[
		N=4,
		\qquad d=2,
		\qquad b_1(\eta)=s(\eta)=O(|\eta|),
		\qquad b_2(\eta)=-p(\eta)=O(|\eta|).
		\]
		Lemma~\ref{lem:analytic-moment-tail} therefore gives the uniform estimate \eqref{eq:double-uniform-moment-tail}.
	\end{proof}
	
	\subsection{The general finite correction vector}
	\label{subsec:double-correction}
	
	\label{subsec:general-double-correction}
	
	Let \(\chi_s(\eta)=\chi_0+\delta_3(\eta)\) be the analytic simple spatial branch satisfying \(\chi_s(0)=\chi_s\). By the holomorphic implicit function theorem, \(\chi_s(\eta)=\chi_s+\kappa_s\eta+O(|\eta|^2)\), where
	\begin{equation*}
		\kappa_s
		=
		-\frac{
			D(\chi_s)
		}{
			F_\chi(\chi_s;\mu_0)
		}
		=
		-\frac{
			D(\chi_s)
		}{
			\mu_0L_0^2
		}
		=
		-\frac{2}{\mu_0^2}-s_1.
	\end{equation*}
	
	Let \({\bf y}_s(\eta)\) denote the elementary eigenfunction along this simple branch: \({\bf y}_s(\eta)={\bf y}(\chi_s(\eta),\lambda(\eta))\). When \(\chi_s=0\), the right-hand side is understood using the zero-branch expression \eqref{eq:zero-branch-eigenfunction}.
	
	Define the total simple-branch correction \({\bf Y}_{s,\mu}=\left. \frac{d}{d\eta} {\bf y}_s(\eta) \right|_{\eta=0}\). For \(\chi_s\neq0\), this vector is
	\begin{equation*}
		{\bf Y}_{s,\mu}
		=
		\frac{1}{2\lambda_0}
		\begin{pmatrix}
			E_s\\
			0\\
			0
		\end{pmatrix}
		+
		\kappa_s
		\left.
		\frac{\partial}{\partial\chi}
		{\bf y}(\chi,\lambda_0)
		\right|_{\chi=\chi_s},
		\qquad
		E_s=E(\chi_s).
	\end{equation*}
	For the critical zero branch \(\chi_s=0\), one instead has
	\begin{equation*}
		{\bf Y}_{s,\mu}
		=
		e^{{\rm i}\mu_0t/2}
		\left[
		\frac{1}{2\lambda_0}
		\begin{pmatrix}
			1\\0\\0
		\end{pmatrix}
		+
		\frac{{\rm i}t}{2}
		\begin{pmatrix}
			\lambda_0\\
			u_{1,0}\\
			u_{2,0}
		\end{pmatrix}
		\right].
	\end{equation*}
	
	We now combine the two coalescing modes with the analytic continuation of the remaining simple mode. Consider the split-root eigenfunction
	\begin{equation}
		\label{eq:y-eta-double-general}
		{\bf y}(\eta)
		=
		C_1{\bf y}(\chi_1,\lambda(\eta))
		+
		C_2{\bf y}(\chi_2,\lambda(\eta))
		+
		c_s{\bf y}_s(\eta),
	\end{equation}
	where \(C_1\) and \(C_2\) are given by \eqref{eq:C12-double}, while the simple-branch coefficient \(c_s\) is held fixed as \(\eta\to0\). This fixed-moment, fixed-simple-coefficient choice is the canonical lifting used throughout the remainder of the section.
	
	For the phase derivatives appearing below, we use the notation
	\begin{equation}
		\label{eq:En-phase-derivatives}
		E_n
		=
		\left.
		\partial_\chi^nE(\chi)
		\right|_{\chi=\chi_0},
		\qquad n\ge0.
	\end{equation}
	
	\begin{theorem}
		\label{thm:y-eta-C2-local}
		The eigenfunction \eqref{eq:y-eta-double-general} satisfies
		\begin{equation}
			\label{eq:y-eta-expansion}
			{\bf y}(\eta)
			=
			{\bf y}_R
			+
			\eta{\bf y}_\mu
			+
			O_{C^2_{\rm loc}}(|\eta|^2),
		\end{equation}
		where \({\bf y}_R={\bf Y}_0 + M_1{\bf Y}_1 + c_s{\bf Y}_s\), and
		\begin{equation}
			\label{eq:ymu-double}
			\begin{aligned}
				{\bf y}_\mu
				={}&
				\frac{1}{2\lambda_0}
				\begin{pmatrix}
					E_0+M_1E_1\\
					0\\
					0
				\end{pmatrix}
				+
				\frac{\mathcal M_2}{2}{\bf Y}_2
				+
				\frac{\mathcal M_3}{6}{\bf Y}_3
				+
				c_s{\bf Y}_{s,\mu},
			\end{aligned}
		\end{equation}
		with \(\mathcal M_2=s_1M_1-p_1\) and \(\mathcal M_3=-p_1M_1\).
		
		More precisely, for every compact set \(\Omega\subset\mathbb R^2\), there exist constants \(C_\Omega>0\) and \(\eta_\Omega>0\) such that
		\begin{equation*}
			\sup_{(x,t)\in\Omega}
			\left\|
			\partial_x^{\alpha_1}
			\partial_t^{\alpha_2}
			\left[
			{\bf y}(\eta)
			-
			{\bf y}_R
			-
			\eta{\bf y}_\mu
			\right]
			\right\|
			\le
			C_\Omega|\eta|^2
		\end{equation*}
		for all \(\alpha_1,\alpha_2\in\mathbb N_0\) and \(\alpha_1+\alpha_2\le2\), and all sufficiently small nonzero \(\eta\).
	\end{theorem}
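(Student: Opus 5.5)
The plan is to split \({\bf y}(\eta)\) into the coalescing pair and the simple branch and expand each separately. The simple branch is easy. Since \(\chi_s(\eta)\) and \(\lambda(\eta)\) are analytic in \(\eta\), \({\bf y}_s(\eta)\) is holomorphic in \(\eta\) with values in \(C^2(\Omega)\) for each compact \(\Omega\). It is an entire function of \((\chi,\lambda,x,t)\) composed with analytic branches, and away from \(\chi=\pm k\) this holds by Proposition~\ref{prop:no-pm-k-root}. Taylor's theorem in this Banach space then gives \(c_s{\bf y}_s(\eta)=c_s{\bf Y}_s+\eta c_s{\bf Y}_{s,\mu}+O_{C^2(\Omega)}(|\eta|^2)\). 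The zero-branch case is handled identically from \eqref{eq:zero-branch-eigenfunction}.

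For the pair, I would use the two-variable function \(f(\delta,\eta)={\bf y}(\chi_0+\delta,\lambda(\eta))\). It is holomorphic on a polydisc \(|\delta|<r_0\), \(|\eta|<\eta_1\) as a \(\mathcal B=C^2(\Omega)\)-valued map, with \(r_0<\min(|\chi_0|,|\chi_0\pm k|)\). Expanding in \(\delta\) gives \(f(\delta,\eta)=\sum_n f_n(\eta)\delta^n\), where \(f_n(\eta)=\frac{1}{n!}\partial_\chi^n{\bf y}(\chi_0,\lambda(\eta))\). Hence
\[
C_1{\bf y}(\chi_1,\lambda(\eta))+C_2{\bf y}(\chi_2,\lambda(\eta))=\sum_{n\ge0}\mathfrak M_n(\eta)f_n(\eta).
\]
This is legitimate because \(|\delta_j|=O(|\eta|^{1/2})<r_0\), so each series converges. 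Although \(C_1,C_2\) are individually \(O(|\eta|^{-1/2})\), only the finite moments appear after this rearrangement.

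Next I split the sum at \(n=4\). For the tail, I apply Lemma~\ref{lem:analytic-moment-tail} with the uniform bound \eqref{eq:double-uniform-moment-tail}. The Cauchy estimate \(\|f_n(\eta)\|\le M r^{-n}\) must be uniform in \(\eta\); I obtain it by taking the supremum over the closed polydisc \(|\delta|=r\), \(|\eta|\le\eta_1/2\). This shows the tail \(\sum_{n\ge4}\) is \(O(|\eta|^2)\) in \(C^2(\Omega)\). For \(n\le3\), I expand \(f_n(\eta)=f_n(0)+\eta\,\partial_\eta f_n(0)+O(|\eta|^2)\). The \(\lambda\)-dependence of \({\bf y}\) enters only through the first component \(\lambda E(\chi)\), and \(\lambda'(0)=1/(2\lambda_0)\). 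Therefore \(\partial_\eta f_n(0)=\frac{1}{n!}\frac{1}{2\lambda_0}(E_n,0,0)^{\mathsf T}\), while \(f_n(0)={\bf Y}_n/n!\).

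Finally I combine these with Lemma~\ref{lem:double-weighted-moment-tail}. Using \(\mathfrak M_0=1\), \(\mathfrak M_1=M_1\), \(\mathfrak M_2=\mathcal M_2\eta+O(|\eta|^2)\), and \(\mathfrak M_3=\mathcal M_3\eta+O(|\eta|^2)\), the \(O(1)\) terms are \({\bf Y}_0+M_1{\bf Y}_1\). The \(O(\eta)\) terms are \(\frac{1}{2\lambda_0}(E_0+M_1E_1,0,0)^{\mathsf T}+\frac{\mathcal M_2}{2}{\bf Y}_2+\frac{\mathcal M_3}{6}{\bf Y}_3\). The products \(\mathfrak M_{2,3}\cdot\eta\,\partial_\eta f\) are already \(O(|\eta|^2)\). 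Adding the simple-branch expansion yields \eqref{eq:y-eta-expansion} and \eqref{eq:ymu-double}.

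The main obstacle is making every remainder uniform in the \(C^2(\Omega)\) norm with constants independent of \(\eta\) and \(n\). I would handle this by taking \(\mathcal B=C^2(\Omega)\) throughout. Holomorphy there follows because \(x,t\)-derivatives of \({\bf y}\) are again entire in \((\chi,\lambda)\), uniformly on compact \(\Omega\). This lets the single Banach-valued Cauchy estimate control all derivatives of order at most two at once.
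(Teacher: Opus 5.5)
Your proposal is correct and follows essentially the paper's route: you rearrange the two coalescing branches by moments in $\mathcal B=C^2(\Omega;\mathbb C^3)$, read off the terms $n\le3$ from Lemma~\ref{lem:double-weighted-moment-tail}, control the tail by the geometric moment bound plus a Cauchy estimate, and expand the simple branch separately (with the zero branch handled via \eqref{eq:zero-branch-eigenfunction}). The only difference is packaging: you fold the $\lambda(\eta)$-dependence into $\eta$-dependent Taylor coefficients, so you need a Cauchy bound uniform in $\eta$, whereas the paper uses the exact affine identity ${\bf y}(\chi,\lambda(\eta))-{\bf y}(\chi,\lambda_0)=[\lambda(\eta)-\lambda_0](E(\chi),0,0)^{\mathsf T}$ so that Lemma~\ref{lem:analytic-moment-tail} applies verbatim to fixed series (and note that ${\bf y}$ is holomorphic only off $\{0,\pm k\}$, not entire, which your choice of $r_0$ already respects).
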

	
	\begin{proof}
		Fix a compact set \(\Omega\subset\mathbb R^2\), and set
		\[
		\mathcal B=C^2(\Omega;\mathbb C^3)
		\]
		with its standard maximum norm over all \(x,t\) derivatives of total order at most two. Choose \(r_0>0\) so that the closed disk \(|\delta|\le r_0\) centered at \(\chi_0\) does not meet \(\{0,\pm k\}\). The map
		\[
		f(\delta)
		={\bf y}(\chi_0+\delta,\lambda_0)
		\]
		is holomorphic from \(|\delta|<r_0\) into \(\mathcal B\), and its Taylor coefficients are
		\[
		f_n=\frac{{\bf Y}_n}{n!}.
		\]
		Therefore the fixed-\(\lambda_0\) part of the two-branch combination has the convergent expansion
		\begin{equation}
			\label{eq:double-fixed-lambda-full-series}
			C_1f(\delta_1)+C_2f(\delta_2)
			=
			\sum_{n=0}^{\infty}
			\frac{\mathfrak M_n}{n!}{\bf Y}_n
			\qquad\text{in }\mathcal B.
		\end{equation}
		By \eqref{eq:double-low-moment-expansion}, the terms with \(n=0,1,2,3\) equal
		\[
		{\bf Y}_0+M_1{\bf Y}_1
		+
		\eta
		\left(
		\frac{\mathcal M_2}{2}{\bf Y}_2
		+
		\frac{\mathcal M_3}{6}{\bf Y}_3
		\right)
		+O_{\mathcal B}(|\eta|^2).
		\]
		For the infinite tail in \eqref{eq:double-fixed-lambda-full-series}, choose \(\tau\in(0,r_0)\). Lemma~\ref{lem:double-weighted-moment-tail} supplies the uniform geometric bound \eqref{eq:double-uniform-moment-tail}, and Lemma~\ref{lem:analytic-moment-tail}, applied with \(N=4\), gives
		\[
		\left\|
		\sum_{n=4}^{\infty}
		\frac{\mathfrak M_n}{n!}{\bf Y}_n
		\right\|_{\mathcal B}
		=O(|\eta|^2).
		\]
		This proves, in the full \(C^2(\Omega)\) norm,
		\begin{equation}
			\label{eq:double-fixed-lambda-C2-expansion}
			\begin{aligned}
				&C_1{\bf y}(\chi_1,\lambda_0)
				+C_2{\bf y}(\chi_2,\lambda_0)
				\\
				&\qquad=
				{\bf Y}_0+M_1{\bf Y}_1
				+
				\eta
				\left(
				\frac{\mathcal M_2}{2}{\bf Y}_2
				+
				\frac{\mathcal M_3}{6}{\bf Y}_3
				\right)
				+O_{\mathcal B}(|\eta|^2).
			\end{aligned}
		\end{equation}
		
		It remains to include the explicit spectral variation. From \eqref{eq:y-basic},
		\[
		{\bf y}(\chi,\lambda(\eta))
		-
		{\bf y}(\chi,\lambda_0)
		=
		\bigl[\lambda(\eta)-\lambda_0\bigr]
		\begin{pmatrix}
			E(\chi)\\0\\0
		\end{pmatrix}.
		\]
		The scalar phase \(E(\chi_0+\delta)\) is also holomorphic as a \(C^2(\Omega)\)-valued function. Its moment expansion, together with \(\mathfrak M_2,\mathfrak M_3=O(|\eta|)\) and the uniformly controlled tail, gives
		\[
		C_1E(\chi_1)+C_2E(\chi_2)
		=
		E_0+M_1E_1+O_{C^2(\Omega)}(|\eta|).
		\]
		Since
		\[
		\lambda(\eta)-\lambda_0
		=
		\frac{\eta}{2\lambda_0}+O(|\eta|^2),
		\]
		the complete spectral-variation contribution is
		\[
		\frac{\eta}{2\lambda_0}
		\begin{pmatrix}
			E_0+M_1E_1\\0\\0
		\end{pmatrix}
		+O_{\mathcal B}(|\eta|^2).
		\]
		
		Finally, if \(\chi_s\ne0\), the implicit-function branch \(\chi_s(\eta)\), the square-root branch \(\lambda(\eta)\), and the elementary eigenfunction are jointly holomorphic in \(\eta\) as \(\mathcal B\)-valued functions. Hence
		\[
		{\bf y}_s(\eta)
		={\bf Y}_s+
		\eta{\bf Y}_{s,\mu}
		+O_{\mathcal B}(|\eta|^2).
		\]
		When \(\chi_s=0\), the same estimate follows directly from the separate analytic formula \eqref{eq:zero-branch-eigenfunction}. Multiplying by the fixed coefficient \(c_s\) and adding this expansion to \eqref{eq:double-fixed-lambda-C2-expansion} proves \eqref{eq:y-eta-expansion}--\eqref{eq:ymu-double}. Since \(\Omega\) was arbitrary, the remainder is \(O_{C^2_{\rm loc}}(|\eta|^2)\).
	\end{proof}
	
	\subsection{The directional real-spectrum limit}
	\label{subsec:double-directional-limit}
	
	Having identified the finite leading and first-order vectors, we approach the real spectral value \(\mu_0\) along
	\begin{equation*}
		\eta=\varepsilon q,
		\qquad
		\varepsilon\in\mathbb R,
		\qquad
		\varepsilon\to0,
		\qquad
		q\in\mathbb C\setminus\mathbb R.
	\end{equation*}
	The restriction \(q\notin\mathbb R\) is essential. For real \(q\), both \(\mu_0+\varepsilon q\) and the local square-root branch \(\lambda(\eta)\) remain real, so \(\lambda(\eta)^*-\lambda(\eta)=0\) identically and the ordinary nonreal-spectrum Darboux transformation cannot be used as the approximating family.
	
	\begin{theorem}
		\label{thm:real-double-directional}
		Let \(\chi_0\) satisfy \eqref{eq:exact-real-double}. Let \(M_1,c_s\in\mathbb C\) satisfy
		\begin{equation}
			\label{eq:generic-double-theorem-assumption}
			b_J|M_1|^2
			+
			2a_J\operatorname{Re}M_1
			+
			d_s|c_s|^2
			=
			0.
		\end{equation}
		Define \({\bf y}_R={\bf Y}_0 + M_1{\bf Y}_1 + c_s{\bf Y}_s\), and let \({\bf y}_\mu\) be given by \eqref{eq:ymu-double}.
		
		For any fixed approach direction \(q\in\mathbb C\setminus\mathbb R\), define
		\begin{equation}
			\label{eq:mq-double}
			\begin{aligned}
				m_q
				={}&
				\frac{
					2\lambda_0
					\left[
					q^*
					\langle{\bf y}_\mu|J|{\bf y}_R\rangle
					+
					q
					\langle{\bf y}_R|J|{\bf y}_\mu\rangle
					\right]
				}{
					q^*-q
				}
				\\
				&-
				\frac{
					\langle{\bf y}_R|K|{\bf y}_R\rangle
				}{
					2\lambda_0
				}.
			\end{aligned}
		\end{equation}
		Writing \({\bf y}_R=(\psi_R,\phi_R,\varphi_R)^T\), the limiting field
		\begin{equation}
			\label{eq:real-double-solution-q}
			\begin{pmatrix}
				u_1^{(R,2;q)}\\
				u_2^{(R,2;q)}
			\end{pmatrix}
			=
			\begin{pmatrix}
				u_{1,0}\\
				u_{2,0}
			\end{pmatrix}
			+
			2
			\frac{
				\begin{pmatrix}
					\phi_R\\
					\varphi_R
				\end{pmatrix}
				\psi_R^*
			}{
				m_q
			}.
		\end{equation}
		Then
		\begin{equation}
			\label{eq:mq-negative-real-part}
			\operatorname{Re}m_q
			=
			-
			\frac{
				\langle{\bf y}_R|K|{\bf y}_R\rangle
			}{
				2\lambda_0
			}
			<0.
		\end{equation}
		Consequently, \(m_q\) has no zeros on the real \((x,t)\)-plane, and \eqref{eq:real-double-solution-q} is a globally smooth exact solution of \eqref{eq:CFL}.
	\end{theorem}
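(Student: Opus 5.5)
The plan is to reduce the statement to Theorem~\ref{thm:unified-real-spectrum-directional-limit}, applied with the seed \({\bf u}=(u_{1,0},u_{2,0})^{\mathsf T}\), the spectral point \(\lambda_0=\sqrt{\mu_0}>0\) supplied by Proposition~\ref{prop:real-double-positive}, and the family \({\bf y}(\eta)\) of \eqref{eq:y-eta-double-general}. That theorem requires four ingredients:
\begin{enumerate}
\item \({\bf y}(\eta)\) is a Lax eigenfunction at \(\lambda(\eta)\) for small nonzero \(\eta=\varepsilon q\);
\item the expansion \eqref{eq:unified-eigenfunction-expansion} holds in \(C^2_{\rm loc}\);
\item \({\bf y}_R\) is a nontrivial Lax solution at \(\lambda_0\);
\item \({\bf y}_R\) is \(J\)-null.
\end{enumerate}
Once these are verified, \(m_q\) in \eqref{eq:mq-double} coincides with \(\mathfrak m_q({\bf y}_R,{\bf y}_\mu)\) in \eqref{eq:unified-directional-denominator}. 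Then \eqref{eq:mq-negative-real-part} is exactly \eqref{eq:unified-negative-real-part}, because \(K=I_3\), and global smoothness together with exactness of \eqref{eq:real-double-solution-q} follows from the same theorem.

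For the first ingredient, each \({\bf y}(\chi_j(\eta),\lambda(\eta))\), \(j=1,2\), is an elementary eigenfunction. This holds because \(\chi_j(\eta)\) solves \(F(\chi;\mu_0+\eta)=0\), and it lies away from \(\{0,\pm k\}\) for small \(\eta\) by Lemma~\ref{lem:rouche-double-splitting} and Proposition~\ref{prop:no-pm-k-root}. Likewise \({\bf y}_s(\eta)\) lies on the simple branch; when \(\chi_s=0\) it lies on the persistent zero branch, via \eqref{eq:zero-branch-eigenfunction}. The coefficients \(C_1,C_2\) in \eqref{eq:C12-double} are well defined since \(\delta_1\ne\delta_2\) for \(0<|\eta|\) small. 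Indeed, \(p(\eta)=p_1\eta+O(|\eta|^2)\) with \(p_1\ne0\) forces a nonzero discriminant \(s^2-4p\). Hence \({\bf y}(\eta)\) is a linear combination of solutions at the common spectral value \(\lambda(\eta)\). Finally, for \(q\notin\mathbb R\) the point \(\lambda(\varepsilon q)\) lies outside \(\mathbb R\cup{\rm i}\mathbb R\), so Proposition~\ref{prop:one-fold-DT} applies.

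The second ingredient is exactly Theorem~\ref{thm:y-eta-C2-local}, which gives the limits \({\bf y}_R\) and \({\bf y}_\mu\) stated in \eqref{eq:ymu-double}.

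For the third ingredient, \({\bf y}_R={\bf Y}_0+M_1{\bf Y}_1+c_s{\bf Y}_s\) solves the Lax pair at \(\lambda_0\) by Lemma~\ref{lem:fixed-spectral-derivative-chain} with \(m=2\), together with the elementary-mode construction. It is nonzero because its \({\bf Y}_0\)-coefficient is \(1\) and Proposition~\ref{prop:confluent-basis-independence} gives linear independence.

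For the fourth ingredient, expand \(\langle{\bf y}_R|J|{\bf y}_R\rangle\) using the Gram structure of Lemma~\ref{lem:double-Gram-structure}, exactly as in the proof of Proposition~\ref{prop:admissible-double}. This gives \(2a_J\operatorname{Re}M_1+b_J|M_1|^2+d_s|c_s|^2\), which vanishes by hypothesis \eqref{eq:generic-double-theorem-assumption}.

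The only delicate point I anticipate is confirming that the expansion in Theorem~\ref{thm:y-eta-C2-local} is uniform along the ray \(\eta=\varepsilon q\), and that the split labels \(\delta_{1,2}\) may be chosen along it. The moments \(\mathfrak M_n\) are symmetric in the labels, so the combination \(C_1{\bf y}_1+C_2{\bf y}_2\) is itself label-independent, and the estimate holds for every sufficiently small nonzero \(\eta\). I therefore expect no genuine obstruction: the main work was done in Theorems~\ref{thm:unified-real-spectrum-directional-limit} and~\ref{thm:y-eta-C2-local}, and the present proof is an assembly of those results.
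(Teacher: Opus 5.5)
Your proposal is correct and follows the paper's proof essentially verbatim. Like the paper, you obtain the expansion from Theorem~\ref{thm:y-eta-C2-local}, identify \eqref{eq:generic-double-theorem-assumption} with the nullity condition \eqref{eq:unified-leading-nullity}, get nontriviality of \({\bf y}_R\) from Proposition~\ref{prop:confluent-basis-independence}, and read off \eqref{eq:mq-negative-real-part}, nonvanishing, and exactness from Theorem~\ref{thm:unified-real-spectrum-directional-limit}; your additional checks (the approximating family consists of genuine eigenfunctions at \(\lambda(\varepsilon q)\notin\mathbb R\cup{\rm i}\mathbb R\), \(\delta_1\neq\delta_2\) via \(s^2-4p=-4p_1\eta+O(|\eta|^2)\), and label-independence of \(C_1{\bf y}_1+C_2{\bf y}_2\)) are correct details that the paper leaves implicit.
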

	
	\begin{proof}
		Theorem~\ref{thm:y-eta-C2-local} gives the expansion \eqref{eq:unified-eigenfunction-expansion} with the vectors \({\bf y}_R\) and \({\bf y}_\mu\) specified above. The constraint \eqref{eq:generic-double-theorem-assumption} is exactly the nullity condition \eqref{eq:unified-leading-nullity}. By Proposition~\ref{prop:confluent-basis-independence}, the normalized vector \({\bf y}_R={\bf Y}_0+M_1{\bf Y}_1+c_s{\bf Y}_s\) is nontrivial. Hence all hypotheses of Theorem~\ref{thm:unified-real-spectrum-directional-limit} are satisfied.
		
		The universal denominator \eqref{eq:unified-directional-denominator} is precisely \(m_q\) in \eqref{eq:mq-double}, and the universal limiting field \eqref{eq:unified-directional-field} is \eqref{eq:real-double-solution-q}. Equations \eqref{eq:mq-negative-real-part}, global nonvanishing of \(m_q\), and exact solvability therefore follow directly from \eqref{eq:unified-negative-real-part} and Theorem~\ref{thm:unified-real-spectrum-directional-limit}.
	\end{proof}
	
	\begin{remark}
		Multiplying \(q\) by a nonzero real constant does not change \(m_q\). Therefore the limiting solution depends only on the direction of approach in the complex \(\mu\)-plane. We may normalize \(q=e^{{\rm i}\alpha}\) and \(\sin\alpha\neq0\). If \(z=\langle{\bf y}_R|J|{\bf y}_\mu\rangle=z_{\rm R}+{\rm i}z_{\rm I}\), then
		\begin{equation*}
			m_\alpha
			=
			2{\rm i}\lambda_0
			\left(
			z_{\rm R}\cot\alpha-z_{\rm I}
			\right)
			-
			\frac{
				\langle{\bf y}_R|K|{\bf y}_R\rangle
			}{
				2\lambda_0
			}.
		\end{equation*}
		Thus the approach direction modifies only the imaginary part of the limiting denominator; its real part remains strictly negative. Also, \(m_{-q}=m_q\), so opposite orientations along the same line in the \(\mu\)-plane give the same limiting solution.
	\end{remark}
	
	\begin{corollary}
		\label{cor:real-double-imag}
		For the imaginary-axis approach
		\[
		\eta={\rm i}\varepsilon, \qquad \varepsilon\in\mathbb R, \qquad \varepsilon\to0,
		\]
		one has \(q={\rm i}\), and
		\begin{equation}
			\label{eq:mR-double}
			m_R
			=
			\lambda_0
			\left(
			\langle{\bf y}_\mu|J|{\bf y}_R\rangle
			-
			\langle{\bf y}_R|J|{\bf y}_\mu\rangle
			\right)
			-
			\frac{
				\langle{\bf y}_R|K|{\bf y}_R\rangle
			}{
				2\lambda_0
			}.
		\end{equation}
		The corresponding solution is
		\begin{equation}
			\label{eq:real-double-solution}
			\begin{pmatrix}
				u_1^{(R,2)}\\
				u_2^{(R,2)}
			\end{pmatrix}
			=
			\begin{pmatrix}
				u_{1,0}\\
				u_{2,0}
			\end{pmatrix}
			+
			2
			\frac{
				\begin{pmatrix}
					\phi_R\\
					\varphi_R
				\end{pmatrix}
				\psi_R^*
			}{
				m_R
			}.
		\end{equation}
		Moreover,
		\[
		\operatorname{Re}m_R=-\frac{ \langle{\bf y}_R|K|{\bf y}_R\rangle }{ 2\lambda_0 } <0,
		\]
		so \eqref{eq:real-double-solution} is globally smooth.
	\end{corollary}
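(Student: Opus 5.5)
This corollary is the special case \(q={\rm i}\) of Theorem~\ref{thm:real-double-directional}, so the plan is to substitute this direction into \eqref{eq:mq-double} and simplify. No new analysis should be required: the expansion of Theorem~\ref{thm:y-eta-C2-local}, the nullity constraint \eqref{eq:generic-double-theorem-assumption}, and the limiting argument of Theorem~\ref{thm:unified-real-spectrum-directional-limit} are all already in place. The direction \(q={\rm i}\) is admissible because it is not real, so \(\eta={\rm i}\varepsilon\) with real \(\varepsilon\to0\) is one of the approach paths covered by that theorem.

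First, I would compute the denominator. With \(q={\rm i}\) one has \(q^*=-{\rm i}\) and \(q^*-q=-2{\rm i}\). The first term of \eqref{eq:mq-double} then becomes
\[
\frac{2\lambda_0\bigl[-{\rm i}\langle{\bf y}_\mu|J|{\bf y}_R\rangle+{\rm i}\langle{\bf y}_R|J|{\bf y}_\mu\rangle\bigr]}{-2{\rm i}}
=\lambda_0\bigl(\langle{\bf y}_\mu|J|{\bf y}_R\rangle-\langle{\bf y}_R|J|{\bf y}_\mu\rangle\bigr).
\]
The \(K\)-term is unchanged, so \(m_{\rm i}=m_R\) as stated in \eqref{eq:mR-double}. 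The field \eqref{eq:real-double-solution} is then literally \eqref{eq:real-double-solution-q} with \(q={\rm i}\), and by Theorem~\ref{thm:real-double-directional} it is an exact solution of \eqref{eq:CFL}.

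Next, I would verify the sign of \(\operatorname{Re}m_R\) directly, as a check independent of the general theorem. Put \(z=\langle{\bf y}_R|J|{\bf y}_\mu\rangle\). Because \(J\) is Hermitian, \(\langle{\bf y}_\mu|J|{\bf y}_R\rangle=z^*\), so the first term of \(m_R\) equals \(\lambda_0(z^*-z)=-2{\rm i}\lambda_0\operatorname{Im}z\), which is purely imaginary. Hence
\[
\operatorname{Re}m_R=-\frac{\langle{\bf y}_R|K|{\bf y}_R\rangle}{2\lambda_0}=-\frac{\|{\bf y}_R\|^2}{2\lambda_0}.
\]
This quantity is negative provided \({\bf y}_R\) vanishes nowhere. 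That holds because \({\bf y}_R\) is a nontrivial Lax solution: it is nonzero by Proposition~\ref{prop:confluent-basis-independence}, and by uniqueness for linear ODEs it cannot vanish at any point. Consequently \(m_R\) has no zeros on the real \((x,t)\)-plane and the solution is globally smooth.

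There is no real obstacle here. The only point needing care is the sign bookkeeping in the factor \(q^*/(q^*-q)\), and I would double-check it against the \(\cot\alpha\) formula in the preceding remark with \(\alpha=\pi/2\).
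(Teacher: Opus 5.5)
Your proposal is correct and matches the paper's proof: substitute $q={\rm i}$, $q^*=-{\rm i}$, $q^*-q=-2{\rm i}$ into \eqref{eq:mq-double}, and take the remaining conclusions from Theorem~\ref{thm:real-double-directional}. Your direct check of $\operatorname{Re}m_R$, using the Hermiticity of $J$ and the fact that ${\bf y}_R$ vanishes nowhere, repeats the argument already contained in Theorem~\ref{thm:unified-real-spectrum-directional-limit}; it is a consistency check, not a different route.
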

	
	\begin{proof}
		Substitute
		\[
		q={\rm i}, \qquad q^*=-{\rm i}, \qquad q^*-q=-2{\rm i}
		\]
		into \eqref{eq:mq-double}. The result is \eqref{eq:mR-double}. The remaining conclusions follow directly from Theorem~\ref{thm:real-double-directional}.
	\end{proof}
	
	\begin{corollary}
		\label{cor:pure-double-sector}
		If \(c_s=0\), then the general null condition \eqref{eq:double-null-general} reduces to
		\begin{equation}
			\label{eq:M1-circle-double}
			\left|
			M_1+\frac{a_J}{b_J}
			\right|
			=
			\left|
			\frac{a_J}{b_J}
			\right|.
		\end{equation}
		For every \(M_1\) on this circle, the leading and correction vectors become \({\bf y}_R={\bf Y}_0+M_1{\bf Y}_1\), and
		\[
		{\bf y}_\mu=\frac{1}{2\lambda_0}
		\begin{pmatrix}
			E_0+M_1E_1\\
			0\\
			0
		\end{pmatrix}
		+ \frac{s_1M_1-p_1}{2}{\bf Y}_2 - \frac{p_1M_1}{6}{\bf Y}_3.
		\]
		The limiting Darboux formula is obtained from Theorem~\ref{thm:real-double-directional} by setting \(c_s=0\).
	\end{corollary}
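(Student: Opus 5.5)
This corollary is a direct specialization of Proposition~\ref{prop:admissible-double} and Theorem~\ref{thm:real-double-directional}, so the plan is simply to set \(c_s=0\) throughout and check that each ingredient reduces as claimed.

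First, the null condition. With \(M_0=1\) and \(c_s=0\), condition \eqref{eq:double-null-general} becomes \(b_J|M_1|^2+2a_J\operatorname{Re}M_1=0\). Since \(b_J\neq0\) by Lemma~\ref{lem:double-Gram-structure}, I will divide by \(b_J\). Because \(a_J,b_J\) are real, completing the square gives
\[
\left|M_1+\frac{a_J}{b_J}\right|^2=\left(\frac{a_J}{b_J}\right)^2 .
\]
This is exactly \eqref{eq:double-null-circle} with \(c_s=0\), and taking square roots yields \eqref{eq:M1-circle-double}. The bound \eqref{eq:cs-bound-double} is then trivially satisfied, so the circle is nonempty. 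It passes through \(M_1=0\), which gives the degenerate leading vector \({\bf Y}_0\), and through \(M_1=-2a_J/b_J\).

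Second, the vectors. In Theorem~\ref{thm:y-eta-C2-local} I take \({\bf y}(\eta)=C_1{\bf y}(\chi_1,\lambda(\eta))+C_2{\bf y}(\chi_2,\lambda(\eta))\), with \(c_s=0\). The term \(c_s{\bf Y}_s\) in \({\bf y}_R\) and the term \(c_s{\bf Y}_{s,\mu}\) in \eqref{eq:ymu-double} then drop out. Substituting \(\mathcal M_2=s_1M_1-p_1\) and \(\mathcal M_3=-p_1M_1\) gives the displayed \({\bf y}_\mu\), with the coefficient of \({\bf Y}_3\) equal to \(-p_1M_1/6\).

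The only point needing care is nontriviality of \({\bf y}_R\) when \(M_1=0\). Here \({\bf y}_R={\bf Y}_0\neq0\), so Proposition~\ref{prop:confluent-basis-independence} still applies. The moment estimates of Lemma~\ref{lem:double-weighted-moment-tail} remain valid for \(M_1=0\), since they never required \(M_1\neq0\); the coefficients \(C_j\) are then merely bounded rather than divergent.

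Finally, Theorem~\ref{thm:real-double-directional} was stated for arbitrary \(M_1,c_s\) satisfying \eqref{eq:generic-double-theorem-assumption}, and \(c_s=0\) is allowed. So the limiting formula, the identity \(\operatorname{Re}m_q=-\|{\bf y}_R\|^2/(2\lambda_0)<0\), and global smoothness all follow by setting \(c_s=0\). I expect no real obstacle; the only step requiring attention is confirming that the earlier theorem did not implicitly assume \(M_1c_s\neq0\).
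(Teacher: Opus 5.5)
Your proposal is correct and matches the paper, which gives no separate proof for this corollary: it is obtained simply by setting \(c_s=0\) in Proposition~\ref{prop:admissible-double}, Theorem~\ref{thm:y-eta-C2-local}, and Theorem~\ref{thm:real-double-directional}, exactly as you do. Your extra checks are accurate and consistent with Corollary~\ref{cor:elementary-double-sector}: the moment estimates and the directional-limit theorem never use \(M_1c_s\neq0\), and \(M_1=0\) still gives the nontrivial leading vector \({\bf y}_R={\bf Y}_0\) with bounded \(C_j\).
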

	
	\begin{corollary}
		\label{cor:elementary-double-sector}
		If \(M_1=0\), then the null condition forces \(c_s=0\). The only admissible leading vector in this case is \({\bf y}_R={\bf Y}_0\), and
		\[
		{\bf y}_\mu=\frac{1}{2\lambda_0}
		\begin{pmatrix}
			E_0\\
			0\\
			0
		\end{pmatrix}
		- \frac{p_1}{2}{\bf Y}_2.
		\]
	\end{corollary}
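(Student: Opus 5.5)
The statement to prove is Corollary~\ref{cor:elementary-double-sector}: if \(M_1=0\), the null condition forces \(c_s=0\), and the leading and correction vectors take the stated form. The plan is to specialize the general null condition \eqref{eq:double-null-general} and the correction formula \eqref{eq:ymu-double} to \(M_1=0\).

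First, for the nullity constraint, set \(M_1=0\) in \eqref{eq:double-null-general}. Both terms involving \(M_1\) vanish, and the condition reduces to
\[
d_s|c_s|^2=0 .
\]
Lemma~\ref{lem:double-Gram-structure} gives \(d_s<0\), in particular \(d_s\neq0\). Hence \(c_s=0\), and the only admissible leading vector after the normalization \(M_0=1\) is \({\bf y}_R={\bf Y}_0\). This vector is \(J\)-null by Lemma~\ref{lem:J-null}. It is nontrivial because, by Proposition~\ref{prop:confluent-basis-independence}, it belongs to a fundamental basis.

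Second, for the correction vector, substitute \(M_1=0\) and \(c_s=0\) into \eqref{eq:ymu-double}. The moment coefficients become \(\mathcal M_2=s_1\cdot0-p_1=-p_1\) and \(\mathcal M_3=-p_1\cdot 0=0\). The \({\bf Y}_3\) term therefore drops out, as does the simple-branch correction \(c_s{\bf Y}_{s,\mu}\). The spectral-variation term reduces to \(\frac{1}{2\lambda_0}(E_0,0,0)^{\mathsf T}\). What remains is exactly
\[
{\bf y}_\mu=\frac{1}{2\lambda_0}(E_0,0,0)^{\mathsf T}-\frac{p_1}{2}{\bf Y}_2 .
\]

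One point needs care: the construction in Theorem~\ref{thm:y-eta-C2-local} was described for the generic sector \(M_1\neq0\). When \(M_1=0\), the coefficients in \eqref{eq:C12-double} become \(C_1=-\delta_2/(\delta_1-\delta_2)\) and \(C_2=\delta_1/(\delta_1-\delta_2)\). These are still well defined for small nonzero \(\eta\), since \(\delta_1\neq\delta_2\). Lemma~\ref{lem:double-weighted-moment-tail} only uses the recurrence together with \(\mathfrak M_0=1\) and \(\mathfrak M_1=M_1\), so it remains valid at \(M_1=0\). The expansion therefore holds, and Theorem~\ref{thm:real-double-directional} applies to give the limiting solution.

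This is a direct specialization, so no step is a real obstacle. The only point to check is that the non-generic value \(M_1=0\) does not invalidate the moment estimates, which the remark above handles.
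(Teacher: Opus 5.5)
Your proposal is correct and follows the paper's own argument, which is a direct specialization: with $M_1=0$, condition \eqref{eq:double-null-general} reduces to $d_s|c_s|^2=0$, so $d_s<0$ forces $c_s=0$; substituting $M_1=c_s=0$ into \eqref{eq:ymu-double} then gives $\mathcal M_2=-p_1$, $\mathcal M_3=0$, and the stated ${\bf y}_\mu$. Your extra check, that Lemma~\ref{lem:double-weighted-moment-tail} and Theorem~\ref{thm:y-eta-C2-local} do not actually require $M_1\neq0$, is a sensible safeguard that the paper leaves implicit.
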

	
	\begin{remark}
		\label{rem:double-special-coefficients}
		In the pure double-root sector, the two real points of \eqref{eq:M1-circle-double} are
		\[
		M_1=0, \qquad M_1=-\frac{2a_J}{b_J}.
		\]
		The second value gives the canonical nontrivial real confluent representative used in the explicit rational formulas below.
		
		In the mixed sector, the maximal simple-branch amplitude is
		\[
		|c_s|^2=\frac{a_J^2}{b_Jd_s}.
		\]
		At this saturated boundary, the admissible \(M_1\)-circle collapses to
		\[
		M_1=-\frac{a_J}{b_J}.
		\]
		Both cases are direct specializations of the general mixed formula and require no separate limiting argument.
	\end{remark}
	
	\subsection{The pure double-root null-circle family}
	\label{subsec:pure-double-family}
	
	\label{subsec:explicit-rational-double}
	
	We now specialize the real-double-root construction to the pure sector \(c_s=0\), for which the limiting fields can be written explicitly as background-normalized rational functions. Let \({\bf y}_R={\bf Y}_0+r{\bf Y}_1\), where \(r\in\mathbb C\setminus\{0\}\) satisfies the pure-sector null condition
	\begin{equation}
		\label{eq:r-general-null-condition}
		b_J|r|^2
		+
		2a_J\operatorname{Re}r
		=
		0.
	\end{equation}
	Equivalently, the complete nonzero null circle is parameterized by \(r=r(\vartheta)=\frac{a_J}{b_J}\left(e^{{\rm i}\vartheta}-1\right)\), where \(\vartheta\in(0,2\pi)\). The value \(\vartheta=\pi\) gives the canonical nontrivial real representative \(r_{\rm c}=-\frac{2a_J}{b_J}\in\mathbb R\setminus\{0\}\).
	
	Fix one admissible real exact double branch \(\chi_0\) satisfying
	\[
	\chi_0\notin\{0,\pm k\}, \qquad \mu_0=\Lambda(\chi_0)>0, \qquad \Lambda''(\chi_0)\neq0,
	\]
	and choose \(\lambda_0=\sqrt{\mu_0}>0\). All quantities \(\chi_s\), \(L_0\), \(D_0\), \(D_1\), \(p_1\), and \(s_1\) are evaluated at this selected branch.
	
	The family is determined by the continuous parameters \(S\), \(R\), \(k\), \(\alpha\), and \(\vartheta\), together with the discrete choice of an admissible real double branch \(\chi_0\).
	
	The two moment coefficients entering the first correction vector are \(M_{2,\eta}=s_1r-p_1\) and \(M_{3,\eta}=-p_1r\).
	
	For later use, set \(\Gamma_\nu=2-kR\), and define
	\begin{equation*}
		\begin{aligned}
			\nu_0
			&=
			\frac{S}{2}
			+
			\frac{\Gamma_\nu}{2\chi_0},
			\\
			\nu_1
			&=
			-\frac{\Gamma_\nu}{2\chi_0^2},
			\\
			\nu_2
			&=
			\frac{\Gamma_\nu}{\chi_0^3},
			\\
			\nu_3
			&=
			-\frac{3\Gamma_\nu}{\chi_0^4}.
		\end{aligned}
	\end{equation*}
	Introduce the characteristic coordinate \(\xi=x+\nu_1t\), and the exponential derivative polynomials
	\begin{equation*}
		\begin{aligned}
			\mathcal E_0&=1,
			\\
			\mathcal E_1&={\rm i}\xi,
			\\
			\mathcal E_2&=-\xi^2+{\rm i}\nu_2t,
			\\
			\mathcal E_3
			&=
			-{\rm i}\xi^3
			-
			3\nu_2\xi t
			+
			{\rm i}\nu_3t.
		\end{aligned}
	\end{equation*}
	They satisfy \(\left. \frac{\partial^nE}{\partial\chi^n} \right|_{\chi=\chi_0}=E_0\mathcal E_n\) for \(n=0,1,2,3\), where
	\[
	E_0=\exp \left[ {\rm i} \left( \chi_0x+\nu_0t \right) \right].
	\]
	
	For \(j=1,2\), define
	\begin{equation*}
		h_{1,n}
		=
		\frac{
			(-1)^n n!k
		}{
			(k+\chi_0)^{n+1}
		},
		\qquad
		h_{2,n}
		=
		\frac{
			n!k
		}{
			(k-\chi_0)^{n+1}
		},
		\qquad
		n=0,1,2,3.
	\end{equation*}
	In particular,
	\[
	h_{1,0}=\frac{k}{k+\chi_0}, \qquad h_{2,0}=\frac{k}{k-\chi_0}.
	\]
	Define
	\begin{equation*}
		\begin{aligned}
			\mathcal G_{j,1}
			&=
			h_{j,1}
			+
			h_{j,0}\mathcal E_1,
			\\
			\mathcal G_{j,2}
			&=
			h_{j,2}
			+
			2h_{j,1}\mathcal E_1
			+
			h_{j,0}\mathcal E_2,
			\\
			\mathcal G_{j,3}
			&=
			h_{j,3}
			+
			3h_{j,2}\mathcal E_1
			+
			3h_{j,1}\mathcal E_2
			+
			h_{j,0}\mathcal E_3.
		\end{aligned}
	\end{equation*}
	Thus
	\[
	\left. \partial_\chi^n \bigl[ h_j(\chi)E(\chi) \bigr] \right|_{\chi=\chi_0}=E_0\mathcal G_{j,n}.
	\]
	
	Define the polynomial components of the leading vector by
	\begin{equation*}
		\begin{aligned}
			P_0
			&=
			\lambda_0
			\left(
			1+r\mathcal E_1
			\right),
			\\
			P_j
			&=
			h_{j,0}
			+
			r\mathcal G_{j,1},
			\qquad
			j=1,2.
		\end{aligned}
	\end{equation*}
	Equivalently,
	\begin{equation}
		\label{eq:Pj-linear-explicit-rational}
		\begin{aligned}
			P_0
			&=
			\lambda_0
			\left(
			1+{\rm i}r\xi
			\right),
			\\
			P_j
			&=
			h_{j,0}
			+
			rh_{j,1}
			+
			{\rm i}rh_{j,0}\xi,
			\qquad
			j=1,2.
		\end{aligned}
	\end{equation}
	The leading eigenvector is
	\begin{equation*}
		{\bf y}_R
		=
		E_0
		\begin{pmatrix}
			P_0\\
			u_{1,0}P_1\\
			u_{2,0}P_2
		\end{pmatrix}.
	\end{equation*}
	
	The polynomial components of the correction vector are
	\begin{equation}
		\label{eq:Qj-explicit-rational}
		\begin{aligned}
			Q_0
			={}&
			\frac{1}{2\lambda_0}
			\left(
			1+r\mathcal E_1
			\right)
			+
			\frac{\lambda_0M_{2,\eta}}{2}\mathcal E_2
			+
			\frac{\lambda_0M_{3,\eta}}{6}\mathcal E_3,
			\\
			Q_j
			={}&
			\frac{M_{2,\eta}}{2}\mathcal G_{j,2}
			+
			\frac{M_{3,\eta}}{6}\mathcal G_{j,3},
			\qquad
			j=1,2.
		\end{aligned}
	\end{equation}
	Thus
	\begin{equation*}
		{\bf y}_\mu
		=
		E_0
		\begin{pmatrix}
			Q_0\\
			u_{1,0}Q_1\\
			u_{2,0}Q_2
		\end{pmatrix}.
	\end{equation*}
	
	Define
	\begin{equation}
		\label{eq:Z-explicit-rational}
		\mathcal Z(x,t)
		=
		P_0^*Q_0
		-
		AP_1^*Q_1
		-
		BP_2^*Q_2.
	\end{equation}
	
	For the explicit polynomial coefficients, introduce the real constants
	\begin{equation*}
		\begin{aligned}
			c_J
			&=
			-Ah_{1,0}h_{1,2}
			-
			Bh_{2,0}h_{2,2},
			\\
			d_J
			&=
			-Ah_{1,0}h_{1,3}
			-
			Bh_{2,0}h_{2,3},
			\\
			e_J
			&=
			-Ah_{1,1}h_{1,2}
			-
			Bh_{2,1}h_{2,2},
			\\
			f_J
			&=
			-Ah_{1,1}h_{1,3}
			-
			Bh_{2,1}h_{2,3}.
		\end{aligned}
	\end{equation*}
	The root-moment coefficients satisfy the identity
	\begin{equation}
		\label{eq:moment-Gram-identity-general-r}
		1
		+
		a_Js_1
		+
		b_Jp_1
		-
		c_Jp_1
		=
		0.
	\end{equation}
	
	\begin{lemma}
		\label{lem:aJ-p1-simplification}
		At every exact real double branch point,
		\begin{equation}
			\label{eq:aJ-p1-chi0}
			a_Jp_1=\chi_0.
		\end{equation}
	\end{lemma}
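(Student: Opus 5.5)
We prove \eqref{eq:aJ-p1-chi0} by multiplying the two explicit formulas already available for these quantities. Lemma~\ref{lem:double-Gram-structure} gives
\[
a_J=-\frac{\chi_0}{2}\Lambda''(\chi_0),
\]
and Lemma~\ref{lem:moment-double} gives
\[
p_1=\frac{D_0}{\mu_0L_0}=-\frac{2}{\Lambda''(\chi_0)}.
\]
At an exact real double branch point \eqref{eq:exact-real-double} we have \(\Lambda''(\chi_0)\neq0\), so \(p_1\) is well defined. Multiplying the two expressions then gives
\[
a_Jp_1=\left(-\frac{\chi_0}{2}\Lambda''(\chi_0)\right)\left(-\frac{2}{\Lambda''(\chi_0)}\right)=\chi_0,
\]
which is the claim.

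The argument therefore depends only on the two representations of \(a_J\) and \(p_1\), so we check their provenance. The formula for \(a_J\) combines the explicit Gram entry \(a_J=-\tfrac12 f''(\chi_0)\) with the stationarity identity \(\Lambda''(\chi_0)=f''(\chi_0)/\chi_0\). That identity comes from differentiating \(\Lambda=f/\chi\) twice:
\[
\Lambda''=\frac{f''}{\chi}-\frac{2(\chi f'-f)}{\chi^3}.
\]
The second term vanishes because \(\Lambda'(\chi_0)=0\) means \(\chi_0f'(\chi_0)=f(\chi_0)\).

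The formula for \(p_1\) rests on \eqref{eq:L0-Lambda2}, namely \(2\mu_0L_0=-D_0\Lambda''(\chi_0)\). It follows by comparing the two computations \eqref{eq:Fchichi-L0} and \eqref{eq:Fchichi-Lambda} of \(F_{\chi\chi}(\chi_0;\mu_0)\). Both representations were established in the preceding lemmas under the same hypotheses, so the product identity holds without further assumptions.

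The only point needing care is consistency of conventions: \(a_J\) must be the \(J\)-pairing \([{\bf Y}_0,{\bf Y}_1]_J\), which is real here, and \(p_1\) the first-order coefficient of \(\delta_1\delta_2\). As an independent cross-check, we could instead compute \(a_J=-\tfrac12 f''(\chi_0)\) directly as
\[
a_J=Ak^2(\chi_0+k)^{-3}+Bk^2(\chi_0-k)^{-3},
\]
then multiply by \(D_0/(\mu_0L_0)\), using \(\mu_0=f'(\chi_0)\) from Proposition~\ref{prop:real-double-positive} and \(L_0=\chi_0-\chi_s=3\chi_0-2/\mu_0\). This brute-force route is messier, and we would use it only if the first argument raised doubts.
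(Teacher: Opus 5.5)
Your proof is correct and is essentially the paper's argument: the paper likewise multiplies $a_J=-\tfrac{\chi_0}{2}\Lambda''(\chi_0)$ from Lemma~\ref{lem:double-Gram-structure} by $p_1=-2/\Lambda''(\chi_0)$ from Lemma~\ref{lem:moment-double}, using $\Lambda''(\chi_0)\neq0$ at an exact double root. Your provenance checks are accurate, namely the second-derivative formula for $\Lambda=f/\chi$, the identity $a_J=-\tfrac12 f''(\chi_0)$, and \eqref{eq:L0-Lambda2}; they are not needed beyond what those lemmas already establish.
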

	
	\begin{proof}
		The identities already obtained in Lemma~\ref{lem:double-Gram-structure} and Lemma~\ref{lem:moment-double} are
		\[
		a_J
		=-\frac{\chi_0}{2}\Lambda''(\chi_0),
		\qquad
		p_1
		=-\frac{2}{\Lambda''(\chi_0)}.
		\]
		Because the root is exactly double, \(\Lambda''(\chi_0)\neq0\). Multiplication of the two identities gives \eqref{eq:aJ-p1-chi0}.
	\end{proof}
	
	Write \(r=u+{\rm i}v\), where \(u=\operatorname{Re}r\), \(v=\operatorname{Im}r\), and \(\varrho=|r|^2\). The following calculation identifies the complete polynomial structure of the mixed \(J\)-pairing that enters the limiting denominator.
	
	\begin{lemma}
		\label{lem:Z-general-complex-r}
		Let \(r\neq0\) satisfy \eqref{eq:r-general-null-condition}. Then
		\begin{equation*}
			\mathcal Z(x,t)
			=
			\zeta_0
			+
			{\rm i}
			\left(
			\gamma_3\xi^3
			+
			\gamma_2\xi^2
			+
			\gamma_1\xi
			+
			\gamma_t t
			+
			\gamma_0
			\right),
		\end{equation*}
		where all coefficients are real and
		\begin{equation}
			\label{eq:gamma3-general-r}
			\gamma_3
			=
			-\frac{\chi_0}{3}\varrho,
		\end{equation}
		\begin{equation}
			\label{eq:gamma2-general-r}
			\gamma_2
			=
			\chi_0v,
		\end{equation}
		\begin{equation*}
			\begin{aligned}
				\gamma_1
				={}&
				-\chi_0
				+
				a_Js_1u
				-
				b_Jp_1u
				+
				b_Js_1\varrho
				\\
				&-
				\frac{c_Js_1}{2}\varrho
				+
				\frac{d_Jp_1}{6}\varrho
				-
				\frac{e_Jp_1}{2}\varrho,
			\end{aligned}
		\end{equation*}
		\begin{equation*}
			\gamma_t
			=
			\frac{a_J\varrho}{6}
			\left(
			3\nu_2s_1-\nu_3p_1
			\right),
		\end{equation*}
		\begin{equation*}
			\gamma_0
			=
			v
			\left(
			\frac{c_Js_1}{2}
			-
			\frac{d_Jp_1}{6}
			+
			\frac{e_Jp_1}{2}
			\right),
		\end{equation*}
		and
		\begin{equation}
			\label{eq:zeta0-general-r}
			\begin{aligned}
				\zeta_0
				={}&
				\frac12
				-
				\frac{c_Jp_1}{2}
				+
				\frac{c_Js_1u}{2}
				-
				\frac{d_Jp_1u}{6}
				-
				\frac{e_Jp_1u}{2}
				\\
				&+
				\frac{e_Js_1}{2}\varrho
				-
				\frac{f_Jp_1}{6}\varrho.
			\end{aligned}
		\end{equation}
		In particular, \(\gamma_3\neq0\).
	\end{lemma}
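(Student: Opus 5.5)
Since every coefficient of \(\mathcal Z\) is polynomial in \(\xi\) and \(t\), we will expand \(\mathcal Z=P_0^*Q_0-AP_1^*Q_1-BP_2^*Q_2\) directly, grouping the terms by the monomials of \(\mathcal E_n\) and \(\mathcal G_{j,n}\). All constants \(h_{j,n}\), \(\lambda_0\), \(\nu_2\), \(\nu_3\), \(s_1\), \(p_1\) are real, and \(r=u+{\rm i}v\) enters only through \(r\), \(r^*\), and \(\varrho\). We will write
\[
P_j=(h_{j,0}+rh_{j,1})+{\rm i}rh_{j,0}\xi,\qquad P_0=\lambda_0(1+{\rm i}r\xi),
\]
and express \(\mathcal G_{j,2}\), \(\mathcal G_{j,3}\), \(\mathcal E_2\), \(\mathcal E_3\) as explicit polynomials in \(\xi\) and \(t\). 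Each product \(P^*Q\) is then a finite sum of terms of the form (real constant)\(\times r^{a}(r^*)^{b}\,{\rm i}^{c}\,\xi^{m}t^{n}\).

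The contractions over \(j\) assemble into the Gram-type constants. Sums \(-A h_{1,a}h_{1,b}-Bh_{2,a}h_{2,b}\) produce \(a_J\) (indices \(0,1\)), \(b_J\) (\(1,1\)), \(c_J,d_J,e_J,f_J\), and also \(-Ah_{1,0}^2-Bh_{2,0}^2\). The first-component terms carry \(\lambda_0^2=\mu_0\). The crucial cancellation is that \(\mu_0-Ah_{1,0}^2-Bh_{2,0}^2=[{\bf Y}_0,{\bf Y}_0]_J=0\) by Lemma~\ref{lem:J-null}. This kills every term in which both factors are the pure "\(h_{j,0}\times\) phase-polynomial" pieces, in particular all the would-be \(\xi^4,\xi^5\) terms and the top \(t\)-coupled ones.

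We will then collect the coefficients of \(\xi^3\), \(\xi^2\), \(\xi\), \(t\), \(\xi t\), and \(1\).
\begin{enumerate}
\item For \(\xi^3\), the surviving contribution pairs \(r^*h_{j,0}\) against \(M_{3,\eta}h_{j,1}\) and \(M_{2,\eta}h_{j,0}\)-type terms. After using \(a_Jp_1=\chi_0\) (Lemma~\ref{lem:aJ-p1-simplification}), we expect it to reduce to \(-\tfrac{\chi_0}{3}\varrho\).
\item For \(\xi^2\), the analogous reduction should give \(\chi_0v\).
\item For \(\xi t\), we expect the coefficient to vanish, again by \(J\)-nullity of \({\bf Y}_0\).
\item For the \(\xi\) and constant coefficients, the terms from the \(\tfrac{1}{2\lambda_0}(1+r\mathcal E_1)\) part of \(Q_0\) contribute \(\tfrac12\) and \(-{\rm i}\xi\cdot\)(a term that becomes \(-\chi_0\) via Lemma~\ref{lem:aJ-p1-simplification} and the identity \eqref{eq:moment-Gram-identity-general-r}).
\end{enumerate}

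Separating real and imaginary parts relies on the null condition \eqref{eq:r-general-null-condition}, \(b_J\varrho=-2a_Ju\), which converts stray real \(\xi\)-dependent contributions into cancelling pairs. This step should show that every nonconstant coefficient is purely imaginary, and that the listed \(\gamma\)'s and \(\zeta_0\) are real combinations. Identity \eqref{eq:moment-Gram-identity-general-r} itself will be verified first, from \(s_1,p_1\) in Lemma~\ref{lem:moment-double}, the relations \(a_J=-\tfrac12 f''(\chi_0)\) and \(c_J\) in terms of \(f''\) and \(f'''\), and the third-derivative identity \(6\mu_0=-D_0\Lambda'''-3D_1\Lambda''\).

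The main obstacle is the bookkeeping of the \(\xi\)-linear and constant terms. There, many Gram constants mix, and reality of \(\gamma_1\) and \(\zeta_0\) requires repeated use of the null condition together with \eqref{eq:moment-Gram-identity-general-r}. I would organize the computation by first writing \({\bf y}_R=E_0({\bf v}_0+{\rm i}\xi r{\bf w}_0)\) with constant vectors, so that \(\mathcal Z\) becomes a structured sum of constant \(J\)-pairings times powers of \(\xi\). Finally, \(\gamma_3\neq0\) follows since \(\chi_0\neq0\) and \(r\neq0\).
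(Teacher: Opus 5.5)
Your route is the paper's: expand \(\mathcal Z\) directly, let the \(j\)-sums collapse to the Gram constants, use \([{\bf Y}_0,{\bf Y}_0]_J=0\) to kill the top-degree terms, and simplify with \(a_Jp_1=\chi_0\) and \eqref{eq:moment-Gram-identity-general-r}. However, your roadmap attributes several cancellations to the wrong identity, and the central one would fail as you state it. The null condition \(b_J\varrho=-2a_Ju\) does not remove the real \(\xi\)-dependent parts.

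To see this, set \({\bf v}_0=(\lambda_0,h_{1,0},h_{2,0})^{\mathsf T}\), \({\bf v}_n=(0,h_{1,n},h_{2,n})^{\mathsf T}\) for \(n\ge1\), \({\bf W}_n=\sum_{m=0}^{n}\binom{n}{m}\mathcal E_{n-m}{\bf v}_m\), and \([{\bf f},{\bf g}]=f_0^*g_0-Af_1^*g_1-Bf_2^*g_2\). Then \((P_0,P_1,P_2)^{\mathsf T}={\bf W}_0+r{\bf W}_1\) and \((Q_0,Q_1,Q_2)^{\mathsf T}=\frac{1}{2\lambda_0}(1+r\mathcal E_1)(1,0,0)^{\mathsf T}+\frac{M_{2,\eta}}{2}{\bf W}_2+\frac{M_{3,\eta}}{6}{\bf W}_3\). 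The first piece contributes exactly \(\frac12\mathcal L_r(\xi)\) to \(\mathcal Z\). Carrying out the expansion, the real parts of the \(\xi^2\) and \(\xi\) coefficients are \(\frac{\varrho}{2}\Theta\) and \(-v\Theta\), where \(\Theta=1+a_Js_1+b_Jp_1-c_Jp_1\). These contain no \(u\), so the null condition cannot cancel them; only \(\Theta=0\) does. The \(\xi^3\) and \(t\) coefficients are purely imaginary for every \(r\), so the form \(\zeta_0+{\rm i}(\cdots)\) holds for all \(r\neq0\). The null condition enters only to reduce the \(t\)-coefficient
\[
{\rm i}\Bigl[\tfrac{a_J\varrho}{6}(3\nu_2s_1-\nu_3p_1)-\tfrac{\nu_2p_1}{2}(2a_Ju+b_J\varrho)\Bigr]
\]
to the stated \(\gamma_t\).

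Several of your other predictions also need correcting.

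The contribution of the \(\frac1{2\lambda_0}(1+r\mathcal E_1)\) part of \(Q_0\) is the real quantity \(\frac12\mathcal L_r(\xi)\). Its \(-v\xi\) and \(\frac{\varrho}{2}\xi^2\) supply the \(1\) in \(\Theta\); it produces no \(-{\rm i}\chi_0\xi\).

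The \(-\chi_0\) in \(\gamma_1\) comes from the \(-p_1\) part of \(M_{2,\eta}\) paired against \(2{\rm i}a_J\xi\) in \([{\bf W}_0,{\bf W}_2]\), via \(a_Jp_1=\chi_0\) alone.

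The \(\xi t\) terms cancel only partly by \(J\)-nullity. The \(a_J\)-terms enter through \(3\mathcal E_1^*\mathcal E_2+\mathcal E_3\), whose \(\xi t\) parts cancel because \(\mathcal E_1^*=-\mathcal E_1\), for arbitrary \(M_{2,\eta},M_{3,\eta}\).

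The \(\xi^3\) coefficient is \(\frac{M_{3,\eta}}{6}r^*\cdot2{\rm i}a_J\). It comes from \(r^*\mathcal E_1^*{\bf v}_0\) against \(3\mathcal E_2{\bf v}_1\) and from \(r^*{\bf v}_1\) against \(\mathcal E_3{\bf v}_0\); \(M_{2,\eta}\) does not contribute. Since \(P\) is linear and \(Q\) cubic in \(\xi\), the degree never exceeds four, so there are no \(\xi^5\) terms.

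Your plan to verify \(\Theta=0\), which the paper only asserts, is worthwhile, but it also needs \(h_jh_j''=2(h_j')^2\), i.e. \(c_J=2b_J\). Combine this with \(a_J=-f''/2\) and \(b_J+c_J=-f'''/2\), both from \(f'=Ah_1^2+Bh_2^2\). Add \(f''=\chi_0\Lambda''\) and \(f'''=3\Lambda''+\chi_0\Lambda'''\), obtained by differentiating \(\chi\Lambda=f\). With \(p_1=-2/\Lambda''\) and \(s_1=-2\Lambda'''/(3\Lambda''^2)\), the identity follows in one line.
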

	
	\begin{proof}
		Substitution of \eqref{eq:Pj-linear-explicit-rational} and \eqref{eq:Qj-explicit-rational} into \eqref{eq:Z-explicit-rational} initially produces terms of total degree at most four.
		
		The quartic term is proportional to \(\langle{\bf Y}_0|J|{\bf Y}_0\rangle\) and therefore vanishes. The possible mixed terms proportional to \(\xi t\) cancel by the symmetric moment relations \(M_{2,\eta}=s_1r-p_1\) and \(M_{3,\eta}=-p_1r\). The real coefficients of \(\xi^2\) and \(\xi\) are proportional to \(1+a_Js_1+b_Jp_1-c_Jp_1\), and hence vanish by \eqref{eq:moment-Gram-identity-general-r}. The remaining coefficients, simplified with \eqref{eq:aJ-p1-chi0}, give \eqref{eq:gamma3-general-r}--\eqref{eq:zeta0-general-r}.
		
		Finally,
		\[
		\chi_0\neq0, \qquad r\neq0,
		\]
		and hence \(\gamma_3\neq0\).
	\end{proof}
	
	The real part of the denominator is controlled by the positive quadratic polynomial defined next. Set \(\mathcal L_r(\xi)=|1+{\rm i}r\xi|^2=1 - 2\operatorname{Im}(r)\xi + |r|^2\xi^2\). Since a nonzero admissible \(r\) cannot be purely imaginary,
	\[
	\operatorname{Re}r\neq0.
	\]
	Moreover,
	\begin{equation}
		\label{eq:Lr-positive-general}
		\mathcal L_r(\xi)
		=
		|r|^2
		\left(
		\xi-\frac{\operatorname{Im}r}{|r|^2}
		\right)^2
		+
		\frac{(\operatorname{Re}r)^2}{|r|^2}
		>0.
	\end{equation}
	
	Combining this positivity with the polynomial representation of \(\mathcal Z\) yields the explicit rational family.
	
	\begin{proposition}
		\label{prop:explicit-rational-double}
		Let \(r\neq0\) satisfy \eqref{eq:r-general-null-condition}. For \(q=e^{{\rm i}\alpha}\) with \(\sin\alpha\neq0\), define
		\begin{equation}
			\label{eq:Dalpha-explicit-rational}
			\begin{aligned}
				\mathcal D_\alpha^{(r)}(x,t)
				={}&
				-\lambda_0\mathcal L_r(\xi)
				\\
				&+
				2{\rm i}\lambda_0
				\left[
				\zeta_0\cot\alpha
				-
				\gamma_3\xi^3
				-
				\gamma_2\xi^2
				-
				\gamma_1\xi
				-
				\gamma_t t
				-
				\gamma_0
				\right].
			\end{aligned}
		\end{equation}
		For \(j=1,2\), define \(\mathcal N_j^{(r)}(x,t)=2P_jP_0^*\). Equivalently,
		\begin{equation}
			\label{eq:Nj-expanded-rational}
			\begin{aligned}
				\mathcal N_j^{(r)}
				=
				2\lambda_0
				\Big[
				&
				h_{j,0}
				+
				rh_{j,1}
				-
				2\operatorname{Im}(r)h_{j,0}\xi
				\\
				&
				+
				|r|^2h_{j,0}\xi^2
				-
				{\rm i}|r|^2h_{j,1}\xi
				\Big].
			\end{aligned}
		\end{equation}
		Then
		\begin{equation}
			\label{eq:explicit-rational-double-solution}
			u_j^{(R,2;\alpha,r)}
			=
			u_{j,0}
			\left[
			1+
			\frac{
				\mathcal N_j^{(r)}(x,t)
			}{
				\mathcal D_\alpha^{(r)}(x,t)
			}
			\right],
			\qquad
			j=1,2.
		\end{equation}
		The polynomial degrees satisfy \(\deg\mathcal N_j^{(r)}\le2\) and \(\deg\mathcal D_\alpha^{(r)}\le3\). Moreover,
		\begin{equation}
			\label{eq:Dalpha-real-part-rational}
			\operatorname{Re}
			\mathcal D_\alpha^{(r)}(x,t)
			=
			-\lambda_0\mathcal L_r(\xi)
			<0
		\end{equation}
		for all real \((x,t)\). Hence the fields \eqref{eq:explicit-rational-double-solution} are globally regular background-normalized rational functions.
	\end{proposition}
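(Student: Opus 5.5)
The plan is to specialize Theorem~\ref{thm:real-double-directional} (equivalently Corollary~\ref{cor:pure-double-sector}) to \(c_s=0\), \(M_1=r\), and then factor out every common plane-wave factor. By the displayed component forms, \({\bf y}_R=E_0(P_0,u_{1,0}P_1,u_{2,0}P_2)^{\mathsf T}\) and \({\bf y}_\mu=E_0(Q_0,u_{1,0}Q_1,u_{2,0}Q_2)^{\mathsf T}\). These follow from \(\partial_\chi^nE|_{\chi_0}=E_0\mathcal E_n\) and \(\partial_\chi^n(h_jE)|_{\chi_0}=E_0\mathcal G_{j,n}\), together with \eqref{eq:ymu-double}.

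Since \(\chi_0,\lambda_0\in\mathbb R\), the phase \(E_0\) is unimodular. Also \(|u_{1,0}|^2=A\) and \(|u_{2,0}|^2=B\). The three pairings entering \(m_q\) therefore reduce as follows:
\[
\langle{\bf y}_R|J|{\bf y}_\mu\rangle=\mathcal Z,\qquad \langle{\bf y}_R|K|{\bf y}_R\rangle=|P_0|^2+A|P_1|^2+B|P_2|^2 .
\]
The numerator factor is \(\phi_R\psi_R^*=u_{j,0}P_jP_0^*\), with \(|E_0|^2=1\).

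For \(q=e^{{\rm i}\alpha}\), the remark after Theorem~\ref{thm:real-double-directional} gives
\[
m_\alpha=2{\rm i}\lambda_0(z_R\cot\alpha-z_I)-\|{\bf y}_R\|^2/(2\lambda_0),\qquad z=\mathcal Z .
\]
Inserting Lemma~\ref{lem:Z-general-complex-r}, \(z_R=\zeta_0\) and \(z_I=\gamma_3\xi^3+\cdots+\gamma_0\), yields exactly the imaginary part of \(\mathcal D_\alpha^{(r)}\) in \eqref{eq:Dalpha-explicit-rational}.

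The real part is the main obstacle. The theorem gives \(-\|{\bf y}_R\|^2/(2\lambda_0)\), whereas \eqref{eq:Dalpha-real-part-rational} asserts \(-\lambda_0\mathcal L_r\). These agree only if
\[
|P_0|^2+A|P_1|^2+B|P_2|^2=2\lambda_0^2\mathcal L_r .
\]
I will prove this identity from the \(J\)-nullity, which reads \(|P_0|^2=A|P_1|^2+B|P_2|^2\) pointwise because \(E_0\) cancels. It follows that \(\|{\bf y}_R\|^2=2|P_0|^2=2\lambda_0^2|1+{\rm i}r\xi|^2=2\lambda_0^2\mathcal L_r\), so \(\operatorname{Re}m_\alpha=-\lambda_0\mathcal L_r\) and hence \(m_\alpha=\mathcal D_\alpha^{(r)}\).

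The numerator \(2\phi_R\psi_R^*\) then equals \(u_{j,0}\cdot 2P_jP_0^*=u_{j,0}\mathcal N_j^{(r)}\), which gives \eqref{eq:explicit-rational-double-solution}. The expanded form \eqref{eq:Nj-expanded-rational} follows by multiplying \(\lambda_0(1-{\rm i}r^*\xi)\) with \(h_{j,0}+rh_{j,1}+{\rm i}rh_{j,0}\xi\). Here one uses \(rr^*=|r|^2\) and \({\rm i}r-{\rm i}r^*=-2\operatorname{Im}r\), and the terms are collected by powers of \(\xi\).

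The degree bounds are then immediate. \(\mathcal N_j\) is quadratic in \(\xi\). \(\mathcal D\) has degree at most three in \((x,t)\), since \(\xi=x+\nu_1t\) and the remaining terms are \(\gamma_3\xi^3\) and the linear term \(\gamma_tt\).

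Positivity \(\mathcal L_r>0\) follows from \eqref{eq:Lr-positive-general}, using \(\operatorname{Re}r\ne0\). That fact holds because a purely imaginary nonzero \(r\) would make \(b_J|r|^2=0\), which contradicts \(b_J<0\). Global regularity then follows, since \(\mathcal D\) cannot vanish anywhere on the real \((x,t)\)-plane.
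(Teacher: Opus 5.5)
Your proposal is correct and follows essentially the same route as the paper's proof. \(J\)-nullity gives \(\langle{\bf y}_R|K|{\bf y}_R\rangle=2|P_0|^2=2\lambda_0^2\mathcal L_r(\xi)\), Lemma~\ref{lem:Z-general-complex-r} supplies the imaginary part of the directional kernel, and direct multiplication of \(2P_jP_0^*\) gives \eqref{eq:Nj-expanded-rational}. You additionally spell out two points the paper leaves implicit: the unimodular phase \(E_0\) cancels from every pairing and numerator, and \(\operatorname{Re}r\neq0\) follows from the null condition together with \(b_J<0\).
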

	
	\begin{proof}
		The pure-sector null condition gives \(|P_0|^2 - A|P_1|^2 - B|P_2|^2=0\). Consequently, \(|P_0|^2 + A|P_1|^2 + B|P_2|^2=2|P_0|^2\). Since \(P_0=\lambda_0(1+{\rm i}r\xi)\), one has \(|P_0|^2=\lambda_0^2\mathcal L_r(\xi)\). Substitution of Lemma~\ref{lem:Z-general-complex-r} into the directional limiting kernel gives \eqref{eq:Dalpha-explicit-rational}.
		
		Direct multiplication of \(2P_jP_0^*\) gives \eqref{eq:Nj-expanded-rational}. The strict inequality in \eqref{eq:Dalpha-real-part-rational} follows from \eqref{eq:Lr-positive-general}. This proves global regularity.
	\end{proof}
	
	\begin{corollary}
		\label{cor:explicit-rational-imaginary}
		For
		\[
		q={\rm i}, \qquad \alpha=\frac{\pi}{2},
		\]
		the denominator becomes
		\begin{equation}
			\label{eq:Dpi2-explicit-rational}
			\begin{aligned}
				\mathcal D_{\pi/2}^{(r)}(x,t)
				={}&
				-\lambda_0\mathcal L_r(\xi)
				\\
				&-
				2{\rm i}\lambda_0
				\left(
				\gamma_3\xi^3
				+
				\gamma_2\xi^2
				+
				\gamma_1\xi
				+
				\gamma_t t
				+
				\gamma_0
				\right).
			\end{aligned}
		\end{equation}
		The corresponding solution is
		\[
		u_j^{(R,2;{\rm i},r)}=u_{j,0} \left[ 1+ \frac{ \mathcal N_j^{(r)} }{ \mathcal D_{\pi/2}^{(r)} } \right], \qquad j=1,2,
		\]
		and is globally regular.
	\end{corollary}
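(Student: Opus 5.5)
This is a specialization of Proposition~\ref{prop:explicit-rational-double} to \(\alpha=\pi/2\), so the plan is to substitute and then read off regularity from the already established real part.

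First, take \(q=e^{{\rm i}\pi/2}={\rm i}\). This is admissible because \(\sin(\pi/2)=1\neq0\), so every hypothesis of Proposition~\ref{prop:explicit-rational-double} holds for any nonzero \(r\) satisfying \eqref{eq:r-general-null-condition}. Substituting \(\cot(\pi/2)=0\) into \eqref{eq:Dalpha-explicit-rational} removes the \(\zeta_0\cot\alpha\) term. The remaining bracket is
\[
-\gamma_3\xi^3-\gamma_2\xi^2-\gamma_1\xi-\gamma_t t-\gamma_0 ,
\]
and multiplying it by \(2{\rm i}\lambda_0\) gives exactly \eqref{eq:Dpi2-explicit-rational}.

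As a consistency check, I would compare this with Corollary~\ref{cor:real-double-imag}. The kernel there is
\[
m_R=\lambda_0\bigl(\langle{\bf y}_\mu|J|{\bf y}_R\rangle-\langle{\bf y}_R|J|{\bf y}_\mu\rangle\bigr)-\langle{\bf y}_R|K|{\bf y}_R\rangle/(2\lambda_0).
\]
Its first term equals \(-2{\rm i}\lambda_0\operatorname{Im}\mathcal Z\), up to the common factor \(|E_0|^2=1\). Lemma~\ref{lem:Z-general-complex-r} gives \(\operatorname{Im}\mathcal Z=\gamma_3\xi^3+\gamma_2\xi^2+\gamma_1\xi+\gamma_t t+\gamma_0\). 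For the second term, the null condition gives \(\langle{\bf y}_R|K|{\bf y}_R\rangle=2|P_0|^2=2\lambda_0^2\mathcal L_r(\xi)\), so it contributes \(-\lambda_0\mathcal L_r(\xi)\). This reproduces the same formula.

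The solution formula then follows from \eqref{eq:explicit-rational-double-solution} with \(\alpha=\pi/2\), keeping the same numerators \(\mathcal N_j^{(r)}\).

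For global regularity, \eqref{eq:Dalpha-real-part-rational} gives \(\operatorname{Re}\mathcal D_{\pi/2}^{(r)}=-\lambda_0\mathcal L_r(\xi)<0\), using the positivity \eqref{eq:Lr-positive-general}. Hence the denominator never vanishes on the real \((x,t)\)-plane. No step poses a real obstacle. The only point needing care is the bookkeeping in the consistency check, namely the sign convention of \(q^*-q=-2{\rm i}\) and the common gauge factor \(|E_0|^2=1\) when passing from \(m_R\) to \(\mathcal D\).
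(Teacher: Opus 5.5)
Your proposal is correct and follows the paper: it specializes Proposition~\ref{prop:explicit-rational-double} to \(\alpha=\pi/2\), sets \(\cot\alpha=0\), and obtains regularity from \(\operatorname{Re}\mathcal D_{\pi/2}^{(r)}=-\lambda_0\mathcal L_r(\xi)<0\). Your consistency check against Corollary~\ref{cor:real-double-imag} retraces the proof of that proposition, and its signs are right: \(q^*-q=-2{\rm i}\), \(|E_0|^2=1\), and \(\langle{\bf y}_R|K|{\bf y}_R\rangle=2\lambda_0^2\mathcal L_r(\xi)\).
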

	
	\begin{remark}
		\label{rem:scope-explicit-double}
		When \(r\in\mathbb R\), the null condition permits only
		\[
		r=0 \qquad\text{or}\qquad r=r_{\rm c}=-\frac{2a_J}{b_J}.
		\]
		For the nontrivial real representative \(r=r_{\rm c}\), one has \(\gamma_2=\gamma_0=0\), and the imaginary polynomial reduces to an odd cubic polynomial in \(\xi\) plus the term \(\gamma_t t\).
		
		For a genuinely complex null coefficient,
		\[
		\operatorname{Im}r\neq0,
		\]
		the new coefficients \(\gamma_2\) and \(\gamma_0\) are generally nonzero. They shift the center and lower-order geometry of the rational structure but do not affect its global regularity or its leading cubic far-field balance.
	\end{remark}
	
	\subsection{Translation equivalences}
	\label{subsec:double-translations}
	
	For a fixed nonzero complex null coefficient \(r\), the approach direction enters only through the constant term \(\zeta_0\cot\alpha\) in the imaginary part of the denominator. When \(\gamma_t\neq0\), this dependence can be absorbed by a translation that preserves the characteristic coordinate \(\xi\).
	
	\begin{proposition}
		\label{prop:alpha-translation}
		Let \(r\neq0\) satisfy \eqref{eq:r-general-null-condition}, and assume \(\gamma_t\neq0\). For \(q=e^{{\rm i}\alpha}\) with \(\sin\alpha\neq0\), define \(\tau_\alpha^{(r)}=\frac{\zeta_0}{\gamma_t}\cot\alpha\). Then
		\begin{equation}
			\label{eq:m-alpha-translation}
			\mathcal D_\alpha^{(r)}(x,t)
			=
			\mathcal D_{\pi/2}^{(r)}
			\left(
			x+\nu_1\tau_\alpha^{(r)},
			t-\tau_\alpha^{(r)}
			\right).
		\end{equation}
		Moreover,
		\begin{equation*}
			\begin{aligned}
				u_j^{(R,2;\alpha,r)}(x,t)
				={}&
				\exp
				\left\{
				{\rm i}
				\left(
				\omega_j-k_j\nu_1
				\right)
				\tau_\alpha^{(r)}
				\right\}
				\\
				&\times
				u_j^{(R,2;{\rm i},r)}
				\left(
				x+\nu_1\tau_\alpha^{(r)},
				t-\tau_\alpha^{(r)}
				\right).
			\end{aligned}
		\end{equation*}
		Consequently,
		\begin{equation*}
			\left|
			u_j^{(R,2;\alpha,r)}(x,t)
			\right|^2
			=
			\left|
			u_j^{(R,2;{\rm i},r)}
			\left(
			x+\nu_1\tau_\alpha^{(r)},
			t-\tau_\alpha^{(r)}
			\right)
			\right|^2.
		\end{equation*}
	\end{proposition}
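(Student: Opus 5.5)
The plan is to compare the two denominators directly. I then track how the numerators and the background phases transform under the shift. Write \(\mathcal I(x,t)=\gamma_3\xi^3+\gamma_2\xi^2+\gamma_1\xi+\gamma_t t+\gamma_0\), so that \eqref{eq:Dalpha-explicit-rational} reads
\[
\mathcal D_\alpha^{(r)}=-\lambda_0\mathcal L_r(\xi)+2{\rm i}\lambda_0\bigl[\zeta_0\cot\alpha-\mathcal I(x,t)\bigr],
\]
and \(\mathcal D_{\pi/2}^{(r)}=-\lambda_0\mathcal L_r(\xi)-2{\rm i}\lambda_0\mathcal I(x,t)\) by Corollary~\ref{cor:explicit-rational-imaginary}. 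Under \((x,t)\mapsto(x+\nu_1\tau,t-\tau)\), the characteristic coordinate is preserved:
\[
\xi=x+\nu_1t\;\mapsto\;x+\nu_1\tau+\nu_1(t-\tau)=\xi.
\]
Hence \(\mathcal L_r(\xi)\) and every \(\xi\)-power term of \(\mathcal I\) are invariant, and only \(\gamma_t t\) changes, to \(\gamma_t t-\gamma_t\tau\). With \(\tau=\tau_\alpha^{(r)}=\zeta_0\cot\alpha/\gamma_t\), this gives \(-2{\rm i}\lambda_0(\mathcal I-\zeta_0\cot\alpha)\), which is exactly \eqref{eq:m-alpha-translation}. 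This step is routine once Lemma~\ref{lem:Z-general-complex-r} is accepted. The key point is that all \(t\)-dependence of \(\mathcal Z\) outside \(\xi\) is the single linear term \(\gamma_t t\), which the lemma already guarantees (mixed \(\xi t\) terms cancel).

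Next I treat the numerators. By \eqref{eq:Nj-expanded-rational}, \(\mathcal N_j^{(r)}\) depends on \((x,t)\) only through \(\xi\), so it is invariant under the same translation. Consequently the bracket \(1+\mathcal N_j^{(r)}/\mathcal D_\alpha^{(r)}\) at \((x,t)\) equals \(1+\mathcal N_j^{(r)}/\mathcal D_{\pi/2}^{(r)}\) at the shifted point. The remaining factor is the background \(u_{j,0}=a_j\exp[{\rm i}(k_jx+\omega_jt)]\), with \(k_1=k\), \(k_2=-k\) and \(\omega_j\) as in Section~\ref{subsec:model-background}. Evaluating it at the shifted point gives
\[
u_{j,0}(x+\nu_1\tau,t-\tau)=u_{j,0}(x,t)\exp\{{\rm i}(k_j\nu_1-\omega_j)\tau\}.
\]
Therefore \(u_{j,0}(x,t)=\exp\{{\rm i}(\omega_j-k_j\nu_1)\tau\}\,u_{j,0}(x+\nu_1\tau,t-\tau)\). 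Multiplying by the translated bracket and using \eqref{eq:explicit-rational-double-solution} for both \(\alpha\) and \(\pi/2\) gives the stated phase identity.

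Finally, since \(\omega_j,k_j,\nu_1,\tau\) are real, the prefactor has unit modulus, and the modulus identity follows. The only point needing care is a sign convention: the translation must match the sign of \(\gamma_t t\) in \(\mathcal D_{\pi/2}^{(r)}\), which is the one place an error could slip in. Regularity of the translated solution is inherited from Proposition~\ref{prop:explicit-rational-double}, so nothing further is needed.
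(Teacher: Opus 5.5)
Your proposal is correct and follows essentially the same route as the paper: the translation \((x,t)\mapsto(x+\nu_1\tau,t-\tau)\) preserves \(\xi\), so \(\mathcal L_r\) and \(\mathcal N_j^{(r)}\) are unchanged while \(\gamma_t t\) shifts by exactly \(-\zeta_0\cot\alpha\). The phase factor then comes from evaluating the plane-wave background at the shifted point, and your sign checks for both the denominator shift and the background phase are right.
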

	
	\begin{proof}
		Set \(x'=x+\nu_1\tau_\alpha^{(r)}\) and \(t'=t-\tau_\alpha^{(r)}\). Then \(x'+\nu_1t'=x+\nu_1t=\xi\). Hence \(\mathcal L_r(\xi)\) and \(\mathcal N_j^{(r)}(\xi)\) are invariant under this translation. Furthermore, \(\gamma_t t'=\gamma_t t - \zeta_0\cot\alpha\). Substitution into \eqref{eq:Dpi2-explicit-rational} gives \eqref{eq:m-alpha-translation}.
		
		The phase relation follows from
		\[
		u_{j,0}(x,t)=\exp \left\{ {\rm i} \left( \omega_j-k_j\nu_1 \right) \tau_\alpha^{(r)} \right\} u_{j,0}(x',t').
		\]
	\end{proof}
	
	\begin{remark}
		\label{rem:ct-zero-translation}
		If \(\gamma_t=0\), then
		\[
		\mathcal D_\alpha^{(r)}=\mathcal D_{\pi/2}^{(r)} + 2{\rm i}\lambda_0\zeta_0\cot\alpha.
		\]
		If also \(\zeta_0=0\), the limiting solution is independent of the approach angle. If \(\zeta_0\neq0\), the angle produces a constant imaginary offset that cannot be removed by a translation preserving \(\xi\).
	\end{remark}
	
	The null-circle coefficient itself also has a geometric interpretation: every nonzero complex point on the pure-sector null circle is related to the canonical real representative by a rigid translation and an irrelevant scalar normalization.
	
	\begin{proposition}
		\label{prop:r-null-circle-translation}
		Let
		\[
		r_{\rm c}=-\frac{2a_J}{b_J},
		\]
		and let \(r\neq0\) satisfy \eqref{eq:r-general-null-condition}. Define \(\delta_r=\operatorname{Im}\frac{1}{r}\) and \(\kappa_r=\left| \frac{r}{r_{\rm c}} \right|^2\). The null condition implies
		\begin{equation}
			\label{eq:inverse-r-line}
			\operatorname{Re}\frac1r
			=
			\frac1{r_{\rm c}}.
		\end{equation}
		Hence
		\[
		\frac1r=\frac1{r_{\rm c}} + {\rm i}\delta_r.
		\]
		
		Let the subscript \({\rm c}\) denote the coefficients associated with \(r=r_{\rm c}\). Then
		\begin{equation}
			\label{eq:P-translation-r}
			P_\ell^{(r)}(\xi)
			=
			\frac{r}{r_{\rm c}}
			P_\ell^{({\rm c})}
			(\xi+\delta_r),
			\qquad
			\ell=0,1,2,
		\end{equation}
		and therefore
		\begin{equation}
			\label{eq:N-translation-r}
			\mathcal N_j^{(r)}(\xi)
			=
			\kappa_r
			\mathcal N_j^{({\rm c})}
			(\xi+\delta_r).
		\end{equation}
		
		Moreover,
		\begin{equation}
			\label{eq:Z-translation-r}
			\mathcal Z_r(\xi,t)
			=
			\kappa_r
			\mathcal Z_{\rm c}
			(\xi+\delta_r,t)
			+
			A_r+{\rm i}B_r,
		\end{equation}
		where
		\begin{equation}
			\label{eq:AB-r-translation}
			A_r
			=
			\zeta_0(r)
			-
			\kappa_r\zeta_0(r_{\rm c}),
		\end{equation}
		and
		\begin{equation}
			\label{eq:Br-translation}
			B_r
			=
			\gamma_0(r)
			-
			\kappa_r
			\left[
			\gamma_3(r_{\rm c})\delta_r^3
			+
			\gamma_1(r_{\rm c})\delta_r
			\right].
		\end{equation}
		
		Assume \(\gamma_t(r_{\rm c})\neq0\), and define
		\begin{equation}
			\label{eq:tau-r-alpha}
			\tau_{r,\alpha}
			=
			\frac{
				A_r\cot\alpha-B_r
			}{
				\kappa_r\gamma_t(r_{\rm c})
			}.
		\end{equation}
		Then
		\begin{equation}
			\label{eq:D-r-canonical-translation}
			\mathcal D_\alpha^{(r)}(x,t)
			=
			\kappa_r
			\mathcal D_\alpha^{(r_{\rm c})}
			\left(
			x+\delta_r+\nu_1\tau_{r,\alpha},
			t-\tau_{r,\alpha}
			\right).
		\end{equation}
		Consequently,
		\begin{equation}
			\label{eq:u-r-canonical-translation}
			\begin{aligned}
				u_j^{(R,2;\alpha,r)}(x,t)
				={}&
				\exp
				\left\{
				{\rm i}
				\left[
				-k_j\delta_r
				+
				(\omega_j-k_j\nu_1)
				\tau_{r,\alpha}
				\right]
				\right\}
				\\
				&\times
				u_j^{(R,2;\alpha,r_{\rm c})}
				\left(
				x+\delta_r+\nu_1\tau_{r,\alpha},
				t-\tau_{r,\alpha}
				\right).
			\end{aligned}
		\end{equation}
		In particular, every nonzero complex point of the pure-sector null circle has the same modulus profile as the canonical real representative, up to a rigid space--time translation.
	\end{proposition}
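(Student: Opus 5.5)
The plan is to treat the statement as a sequence of algebraic identities at the level of the polynomial components, building from $1/r$ up to the full denominator.

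\emph{Step 1.} Derive \eqref{eq:inverse-r-line} directly from the null condition. Dividing \eqref{eq:r-general-null-condition} by $|r|^2\neq0$ gives $b_J+2a_J\operatorname{Re}r/|r|^2=0$. Since $\operatorname{Re}r/|r|^2=\operatorname{Re}(1/r)$, this yields $\operatorname{Re}(1/r)=-b_J/(2a_J)=1/r_{\rm c}$.

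\emph{Step 2.} Prove \eqref{eq:P-translation-r}. Each $P_\ell^{(r)}$ is affine in $r$ and has the form $\alpha_\ell+r(\beta_\ell+{\rm i}\alpha_\ell\xi)$, with $\alpha_0=\lambda_0,\beta_0=0$ and $\alpha_j=h_{j,0},\beta_j=h_{j,1}$. Hence
\[
P_\ell^{(r)}=r\bigl[\alpha_\ell/r+\beta_\ell+{\rm i}\alpha_\ell\xi\bigr].
\]
Substituting $1/r=1/r_{\rm c}+{\rm i}\delta_r$ gives $\alpha_\ell/r_{\rm c}+\beta_\ell+{\rm i}\alpha_\ell(\xi+\delta_r)$, which equals $P_\ell^{({\rm c})}(\xi+\delta_r)/r_{\rm c}$. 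Then \eqref{eq:N-translation-r} follows immediately from $\mathcal N_j=2P_jP_0^*$, since $|r/r_{\rm c}|^2=\kappa_r$.

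\emph{Step 3.} Prove \eqref{eq:Z-translation-r}. The $Q$'s depend on $r$ through $M_{2,\eta},M_{3,\eta}$ and contain $t$, so no factorization analogous to Step 2 is available. Instead, compare the explicit forms from Lemma~\ref{lem:Z-general-complex-r}. The imaginary part of $\kappa_r\mathcal Z_{\rm c}(\xi+\delta_r,t)$ is
\[
\kappa_r\bigl[\gamma_3^{\rm c}(\xi+\delta_r)^3+\gamma_1^{\rm c}(\xi+\delta_r)+\gamma_t^{\rm c}t\bigr],
\]
using $\gamma_2^{\rm c}=\gamma_0^{\rm c}=0$ from Remark~\ref{rem:scope-explicit-double}. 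The coefficients must then be matched as follows.
\begin{itemize}
\item Cubic: $\kappa_r\gamma_3^{\rm c}=\gamma_3(r)$, immediate from \eqref{eq:gamma3-general-r}.
\item Quadratic: $3\kappa_r\gamma_3^{\rm c}\delta_r=\gamma_2(r)=\chi_0v$. Use $\delta_r=-v/|r|^2$, so $-\chi_0|r|^2\delta_r=\chi_0 v$.
\item Time: $\kappa_r\gamma_t^{\rm c}=\gamma_t(r)$, clear because $\gamma_t\propto\varrho$.
\item Linear: $\gamma_1(r)=\kappa_r(3\gamma_3^{\rm c}\delta_r^2+\gamma_1^{\rm c})$.
\end{itemize}
The linear identity is the main obstacle. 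Write $\gamma_1(r)=-\chi_0+u\,X+\varrho\,Y$ with constants $X,Y$. Use $u=\varrho/r_{\rm c}$ (Step 1) and $\delta_r^2\varrho=v^2/\varrho=\varrho-u^2/\varrho$, which reduces $\gamma_1(r)$ to $\varrho$ times a constant plus the $-\chi_0$ term. Matching then requires the Gram identity \eqref{eq:moment-Gram-identity-general-r} and $a_Jp_1=\chi_0$, and I expect exactly these two to close the constant. The constant terms need not match and are absorbed into $A_r+{\rm i}B_r$ as defined in \eqref{eq:AB-r-translation}--\eqref{eq:Br-translation}, so no computation is needed there.

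\emph{Step 4.} Assemble the denominator. The real part of $\mathcal D$ scales correctly: $\mathcal L_r(\xi)=|P_0^{(r)}|^2/\lambda_0^2=\kappa_r\mathcal L_{\rm c}(\xi+\delta_r)$ by Step 2. For the imaginary part, insert Step 3 into \eqref{eq:Dalpha-explicit-rational}. The leftover constant $2{\rm i}\lambda_0(A_r\cot\alpha-B_r)$ is absorbed by shifting $t\mapsto t-\tau_{r,\alpha}$ together with $x\mapsto x+\nu_1\tau_{r,\alpha}$. This shift keeps $\xi$ fixed, and $\kappa_r\gamma_t^{\rm c}\tau_{r,\alpha}$ equals that constant exactly, as in Proposition~\ref{prop:alpha-translation}. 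The $\xi$-shift by $\delta_r$ is realized by $x\mapsto x+\delta_r$, which gives \eqref{eq:D-r-canonical-translation}. Finally, the ratio $\mathcal N/\mathcal D$ loses the factor $\kappa_r$, and $u_{j,0}(x,t)$ differs from $u_{j,0}$ at the shifted point by the phase $e^{{\rm i}[-k_j\delta_r+(\omega_j-k_j\nu_1)\tau_{r,\alpha}]}$. This proves \eqref{eq:u-r-canonical-translation} and the equality of moduli.
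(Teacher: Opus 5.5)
Your proposal follows the paper's own proof step for step. The steps are: the inverse-line identity obtained by dividing the null condition by $|r|^2$; the affine factorization of the $P_\ell$, giving \eqref{eq:P-translation-r}--\eqref{eq:N-translation-r}; a coefficient comparison using Lemma~\ref{lem:Z-general-complex-r}, whose constant leftovers define $A_r+{\rm i}B_r$; and absorption of $2{\rm i}\lambda_0(A_r\cot\alpha-B_r)$ by the $\xi$-preserving time shift. The paper only asserts the coefficient comparison, which you spell out in more detail.

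The one flaw is in the linear coefficient you single out as the main obstacle.

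\emph{The identity is misstated.} It should read $\varrho\delta_r^2=v^2/\varrho=1-u^2/\varrho$, not $\varrho-u^2/\varrho$. With your version the comparison leaves a spurious term $\chi_0(\varrho-1)$. No identity among constants can cancel it, since $\varrho$ varies along the circle.

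\emph{With the correct identity, the match closes immediately.} It needs only $u=\varrho/r_{\rm c}$, so that $\kappa_r=\varrho/r_{\rm c}^2=u^2/\varrho$ and $\kappa_r r_{\rm c}=u$, together with $u^2+v^2=\varrho$:
\[
\kappa_r\bigl(3\gamma_3(r_{\rm c})\delta_r^2+\gamma_1(r_{\rm c})\bigr)
=-\chi_0\,\frac{v^2+u^2}{\varrho}+uX+\varrho Y=\gamma_1(r).
\]

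\emph{Your anticipated ingredients are not the ones that close it.} The Gram identity \eqref{eq:moment-Gram-identity-general-r} and $a_Jp_1=\chi_0$ were already used in deriving the coefficients of Lemma~\ref{lem:Z-general-complex-r}. They play no further role in this proposition.

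With that correction, the rest of your argument goes through, including the phase factor in \eqref{eq:u-r-canonical-translation}.
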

	
	\begin{proof}
		Dividing the null condition by \(|r|^2\) gives
		\[
		b_J + 2a_J \operatorname{Re}\frac1r=0,
		\]
		which proves \eqref{eq:inverse-r-line}. It follows directly that
		\[
		1+{\rm i}r\xi=\frac{r}{r_{\rm c}} \left[ 1+{\rm i}r_{\rm c} (\xi+\delta_r) \right].
		\]
		The same identity applied to the lower components proves \eqref{eq:P-translation-r} and \eqref{eq:N-translation-r}.
		
		Substitution of the explicit coefficients \eqref{eq:gamma3-general-r}-- \eqref{eq:zeta0-general-r} shows that the difference in \eqref{eq:Z-translation-r} is independent of \(\xi\) and \(t\). Evaluating it at \(\xi=t=0\) gives \eqref{eq:AB-r-translation} and \eqref{eq:Br-translation}.
		
		Finally, \(\gamma_t(r)=\kappa_r\gamma_t(r_{\rm c})\), and the choice \eqref{eq:tau-r-alpha} removes the remaining constant contribution \(A_r\cot\alpha-B_r\). Equations \eqref{eq:D-r-canonical-translation} and \eqref{eq:u-r-canonical-translation} follow.
	\end{proof}
	
	\section{Darboux Constructions at Triple Spatial Roots}
	\label{sec:triple-darboux}
	\label{sec:triple}
	
	Throughout this section, \(\chi_0\), \(\mu_0\), and \(\lambda_0\) denote the real triple-root data characterized in Theorem~\ref{thm:triple-characterization}.
	
	\subsection{The triple confluent eigenspace and its null cone}
	\label{subsec:triple-null-cone}
	\label{subsec:full-triple-chain}
	
	Define
	\begin{equation*}
		{\bf Y}_n
		=
		\left.
		\frac{\partial^n}{\partial\chi^n}
		{\bf y}(\chi,\lambda_0)
		\right|_{\chi=\chi_0},
		\qquad
		n=0,\ldots,5,
	\end{equation*}
	At a triple branch point, \(\Lambda'(\chi_0)=\Lambda''(\chi_0)=0\). Lemma~\ref{lem:fixed-spectral-derivative-chain} with \(m=3\) therefore shows that \({\bf Y}_0,{\bf Y}_1,{\bf Y}_2\) all satisfy the background Lax pair at the fixed spectral point \(\lambda=\lambda_0\). Proposition~\ref{prop:confluent-basis-independence} proves that these three solutions are linearly independent, so they form a fundamental solution basis. The higher derivatives \({\bf Y}_3,{\bf Y}_4,{\bf Y}_5\) are used only as Taylor coefficients in the first correction vector; they are not asserted to be independent fixed-\(\lambda_0\) Lax eigenfunctions.
	
	Set \(g_{mn} = \langle{\bf Y}_m|J|{\bf Y}_n\rangle, \qquad 0\le m,n\le2\).
	\begin{lemma}
		\label{lem:triple-Gram-structure}
		At a real triple branch point, the Gram coefficients \(g_{mn} = [{\bf Y}_m,{\bf Y}_n]_J, \qquad 0\le m,n\le2\), satisfy
		\begin{equation}
			\label{eq:triple-Gram-relations}
			g_{00}=0,
			\qquad
			g_{01}=0,
			\qquad
			g_{02}=2g_{11}<0.
		\end{equation}
		More explicitly, \(g_{11} = -A[h_1'(\chi_0)]^2 - B[h_2'(\chi_0)]^2 <0\), \(g_{12} = -Ah_1'(\chi_0)h_1''(\chi_0) - Bh_2'(\chi_0)h_2''(\chi_0) \in\mathbb R\), and \(g_{22} = -A[h_1''(\chi_0)]^2 - B[h_2''(\chi_0)]^2 <0\). The lower Gram block
		\begin{equation*}
			G_T^-
			=
			\begin{pmatrix}
				g_{11}&g_{12}\\
				g_{12}&g_{22}
			\end{pmatrix}
		\end{equation*}
		is strictly negative definite.
	\end{lemma}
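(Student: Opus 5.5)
The plan is to combine the conservation of the $J$-pairing with an evaluation at a point where every phase derivative vanishes. By Lemma~\ref{lem:fixed-spectral-derivative-chain} with $m=3$, the vectors ${\bf Y}_0,{\bf Y}_1,{\bf Y}_2$ solve the background Lax pair at the real point $\lambda=\lambda_0>0$. Lemma~\ref{lem:J-form-conservation} then shows that every $g_{mn}$ with $0\le m,n\le 2$ is independent of $(x,t)$, so each one may be computed at $(x,t)=(0,0)$.

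At $(0,0)$ we have $E(\chi)=\exp[{\rm i}(\chi x+\nu(\chi)t)]\equiv 1$ as a function of $\chi$. Hence every $\chi$-derivative of $E$ vanishes there, and the Leibniz expansion of $\partial_\chi^n{\bf y}(\chi,\lambda_0)$ collapses to
\[
{\bf Y}_n(0,0)=\bigl(\lambda_0\delta_{n0},\ u_{1,0}(0,0)h_1^{(n)}(\chi_0),\ u_{2,0}(0,0)h_2^{(n)}(\chi_0)\bigr)^{\mathsf T}.
\]
Since $\chi_0$ and $\lambda_0$ are real (Theorem~\ref{thm:triple-characterization}) and $|u_{1,0}|^2=A$, $|u_{2,0}|^2=B$, this gives the real formulas
\[
g_{00}=\mu_0-Ah_1^2-Bh_2^2,\qquad g_{mn}=-Ah_1^{(m)}h_1^{(n)}-Bh_2^{(m)}h_2^{(n)}\quad(m+n\ge1),
\]
with everything evaluated at $\chi_0$. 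In particular, the stated expressions for $g_{11},g_{12},g_{22}$ follow immediately.

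For the three relations, I will use $f$ from \eqref{eq:f-Lambda}. One checks directly that $f'=Ah_1^2+Bh_2^2$, and therefore $f''=2(Ah_1h_1'+Bh_2h_2')$. The condition $\Lambda'(\chi_0)=0$ gives $\mu_0=f'(\chi_0)$, exactly as in Proposition~\ref{prop:real-double-positive}, so $g_{00}=0$ (this is also Lemma~\ref{lem:J-null}). At a stationary point $\Lambda''(\chi_0)=f''(\chi_0)/\chi_0$, as in Lemma~\ref{lem:double-Gram-structure}. At a triple root $\Lambda''(\chi_0)=0$, so $f''(\chi_0)=0$, that is, $g_{01}=-f''(\chi_0)/2=0$.

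The relation $g_{02}=2g_{11}$ is the step I expect to need the most care, but it follows from an algebraic identity for the specific $h_j$. Since $h_1=k/(k+\chi)$, we get $h_1'=-h_1^2/k$ and $h_1''=2h_1^3/k^2$. Hence $h_1h_1''=2h_1^4/k^2=2(h_1')^2$, and the same identity holds for $h_2=k/(k-\chi)$. Summing with weights $-A,-B$ gives $g_{02}=2g_{11}$ identically. Negativity follows because $h_j'(\chi_0)\ne0$ and $A,B>0$, which gives $g_{11}<0$, and likewise $g_{22}<0$.

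Finally, I will show that $G_T^-$ is negative definite. Write
\[
G_T^-=-A\,{\bf v}_1{\bf v}_1^{\mathsf T}-B\,{\bf v}_2{\bf v}_2^{\mathsf T},\qquad {\bf v}_j=(h_j',h_j'')^{\mathsf T}.
\]
This matrix is negative semidefinite, and it is definite exactly when ${\bf v}_1$ and ${\bf v}_2$ are linearly independent. Their ratios are $h_1''/h_1'=-2/(k+\chi_0)$ and $h_2''/h_2'=2/(k-\chi_0)$. These coincide only if $-(k-\chi_0)=k+\chi_0$, i.e.\ $k=0$, which is excluded by \eqref{eq:physical-region}. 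Hence the determinant is nonzero and $G_T^-$ is strictly negative definite.
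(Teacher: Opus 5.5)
Your proof is correct, but it evaluates the Gram entries by a different route than the paper. The paper factors out the unitary $\mathscr U_0$ and writes ${\bf Y}_1,{\bf Y}_2$ as polynomials in $X=x+\nu'(\chi_0)t$ and $t$ with constant vector coefficients. It then checks that every $(x,t)$-dependent contribution cancels. You instead note that Lemma~\ref{lem:fixed-spectral-derivative-chain} with $m=3$ places ${\bf Y}_0,{\bf Y}_1,{\bf Y}_2$ in the solution space at the real point $\lambda_0$. Lemma~\ref{lem:J-form-conservation} then makes every $g_{mn}$ constant, and you evaluate at $(x,t)=(0,0)$, where $E\equiv1$ in $\chi$ and all phase derivatives drop out.

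Your route is shorter and gives all six entries at once. It also avoids bookkeeping the paper leaves implicit. In the direct expansion, the $X$-dependent parts of $g_{12}$ and $g_{22}$ are ${\rm i}X(2g_{11}-g_{02})$ and $2X^2(2g_{11}-g_{02})$. They cancel only once $g_{02}=2g_{11}$ is known, not from $g_{00}=g_{01}=0$ alone. Conversely, combining the two viewpoints shows that conservation by itself forces $g_{02}=2g_{11}$, so the identity $h_jh_j''=2(h_j')^2$ is not logically required. The paper's explicit expansion has the advantage of being the same kind of representation it later needs for the asymptotics along $\Gamma_T$.

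Your remaining steps agree with the paper in substance: $g_{00}=0$ via $\mu_0=f'(\chi_0)$, $g_{01}=-f''(\chi_0)/2=0$, and definiteness via your rank-two decomposition of $G_T^-$. That decomposition is equivalent to the paper's determinant $AB[h_1'h_2''-h_2'h_1'']^2$.
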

	
	\begin{proof}
		The identity \(g_{00}=0\) follows from Lemma~\ref{lem:J-null}. As in the double-root calculation,
		\[
		g_{01} = -Ah_1(\chi_0)h_1'(\chi_0) - Bh_2(\chi_0)h_2'(\chi_0) = -\frac{\chi_0}{2}\Lambda''(\chi_0).
		\]
		At a triple branch point, \(\Lambda''(\chi_0)=0\), and hence \(g_{01}=0\).
		
		Let \(X=x+\nu'(\chi_0)t\), and introduce the diagonal unitary factor
		\[
		\mathscr U_0 = \operatorname{diag} \left( E_0, \frac{u_{1,0}}{a_1}E_0, \frac{u_{2,0}}{a_2}E_0 \right).
		\]
		Set
		\[
		{\bf v} =
		\begin{pmatrix}
			\lambda_0\\
			a_1h_1(\chi_0)\\
			a_2h_2(\chi_0)
		\end{pmatrix},
		\qquad {\bf v}^{(n)} =
		\begin{pmatrix}
			0\\
			a_1h_1^{(n)}(\chi_0)\\
			a_2h_2^{(n)}(\chi_0)
		\end{pmatrix},
		\quad n\ge1.
		\]
		The first two derivative modes have the exact forms
		\[
		{\bf Y}_1 = \mathscr U_0 \left( {\bf v}^{(1)} + {\rm i}X{\bf v} \right),
		\]
		and
		\[
		{\bf Y}_2 = \mathscr U_0 \left[ {\bf v}^{(2)} + 2{\rm i}X{\bf v}^{(1)} + \left( {\rm i}\nu''(\chi_0)t-X^2 \right) {\bf v} \right].
		\]
		Since \(\mathscr U_0\) is unitary and commutes with \(J\), it does not affect the Gram coefficients.
		
		Using \(g_{00}=g_{01}=0\), all \(x,t\)-dependent contributions cancel from the Gram coefficients. One obtains \(g_{11} = -A[h_1'(\chi_0)]^2 - B[h_2'(\chi_0)]^2\), and \(g_{02} = -Ah_1(\chi_0)h_1''(\chi_0) - Bh_2(\chi_0)h_2''(\chi_0)\). Since \(h_j(\chi)h_j''(\chi) = 2[h_j'(\chi)]^2, \qquad j=1,2\), it follows that \(g_{02}=2g_{11}<0\). The same cancellation gives \(g_{12} = -Ah_1'(\chi_0)h_1''(\chi_0) - Bh_2'(\chi_0)h_2''(\chi_0)\), and \(g_{22} = -A[h_1''(\chi_0)]^2 - B[h_2''(\chi_0)]^2\).
		
		It remains to prove strict negative definiteness. The first leading principal minor satisfies \(g_{11}<0\). The determinant is
		\begin{align*}
			\det G_T^-
			&=
			g_{11}g_{22}-|g_{12}|^2
			\\
			&=
			AB
			\left[
			h_1'(\chi_0)h_2''(\chi_0)
			-
			h_2'(\chi_0)h_1''(\chi_0)
			\right]^2.
		\end{align*}
		Using
		\[
		h_1' = -\frac{k}{(k+\chi)^2}, \qquad h_1'' = \frac{2k}{(k+\chi)^3},
		\]
		and
		\[
		h_2' = \frac{k}{(k-\chi)^2}, \qquad h_2'' = \frac{2k}{(k-\chi)^3},
		\]
		we obtain
		\[
		h_1'h_2''-h_2'h_1'' = -\frac{ 4k^3 }{ (k+\chi_0)^3(k-\chi_0)^3 } \neq0.
		\]
		Therefore \(\det G_T^->0\). Sylvester's criterion now shows that \(G_T^-\) is strictly negative definite.
	\end{proof}
	
	The full triple-chain leading vector is
	\begin{equation}
		\label{eq:projective-triple-leading}
		{\bf y}_R
		=
		M_0{\bf Y}_0
		+
		M_1{\bf Y}_1
		+
		\frac{M_2}{2}{\bf Y}_2.
	\end{equation}
	
	\begin{proposition}
		\label{prop:admissible-triple}
		A nonzero vector of the form \eqref{eq:projective-triple-leading} can be \(J\)-null only if \(M_0\neq0\). After the projective normalization \(M_0=1\), the admissible coefficients satisfy
		\begin{equation}
			\label{eq:triple-null-general}
			g_{11}|M_1|^2
			+
			g_{02}\operatorname{Re}M_2
			+
			\operatorname{Re}
			\left(
			g_{12}M_1^*M_2
			\right)
			+
			\frac{g_{22}}{4}|M_2|^2
			=
			0.
		\end{equation}
	\end{proposition}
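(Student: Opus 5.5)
The plan is to mirror the proof of Proposition~\ref{prop:admissible-double}, using the Gram structure of Lemma~\ref{lem:triple-Gram-structure} in place of Lemma~\ref{lem:double-Gram-structure}. First, the $J$-norm of the general vector \eqref{eq:projective-triple-leading} is expanded bilinearly in the Gram coefficients $g_{mn}$. Because $J$ is Hermitian and the Gram entries are real, we have $g_{nm}=g_{mn}^*=g_{mn}$. Sesquilinearity then gives
\[
[{\bf y}_R,{\bf y}_R]_J
=
\sum_{m,n}\bar c_m c_n g_{mn},
\qquad
(c_0,c_1,c_2)=(M_0,M_1,M_2/2).
\]
By \eqref{eq:triple-Gram-relations}, $g_{00}=g_{01}=0$, so every term involving $M_0$ drops out except the $g_{02}$ cross term. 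This leaves
\[
[{\bf y}_R,{\bf y}_R]_J
=
g_{02}\operatorname{Re}(M_0^*M_2)
+g_{11}|M_1|^2
+\operatorname{Re}(g_{12}M_1^*M_2)
+\frac{g_{22}}{4}|M_2|^2 .
\]
The factors come out as follows. The $g_{02}$ term is $2\operatorname{Re}\bigl(\bar M_0 (M_2/2)\bigr)g_{02}=g_{02}\operatorname{Re}(M_0^*M_2)$. The $g_{12}$ term is $2\operatorname{Re}\bigl(\bar M_1 M_2/2\bigr)g_{12}$. The $g_{22}$ term is $|M_2|^2g_{22}/4$.

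Next comes the necessity of $M_0\neq0$. If $M_0=0$, the form reduces to the Hermitian form $(M_1,M_2/2)\,G_T^-\,(M_1,M_2/2)^\dagger$ built from the lower block. Lemma~\ref{lem:triple-Gram-structure} shows that $G_T^-$ is strictly negative definite, so this form is strictly negative for every nonzero $(M_1,M_2)$. Since ${\bf Y}_0,{\bf Y}_1,{\bf Y}_2$ are linearly independent by Proposition~\ref{prop:confluent-basis-independence}, a nonzero ${\bf y}_R$ with $M_0=0$ has $(M_1,M_2)\ne0$ and therefore cannot be $J$-null. Hence every null vector has $M_0\neq0$.

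Finally, because the $J$-norm scales by $|\gamma|^2$ under ${\bf y}_R\mapsto\gamma{\bf y}_R$, we may normalize $M_0=1$. Substituting this into the displayed expression gives exactly \eqref{eq:triple-null-general}.

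The only delicate point is bookkeeping. One must use the reality of $g_{12}$ and $g_{02}$ from Lemma~\ref{lem:triple-Gram-structure} to combine each cross term with its conjugate, and one must track the factor $1/2$ attached to $M_2$ so that the coefficients $g_{02}$, $1$, and $1/4$ come out as stated. There is no analytic obstacle.
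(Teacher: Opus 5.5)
Your proposal is correct and follows essentially the same route as the paper. The paper likewise excludes $M_0=0$ using the strict negative definiteness of $G_T^-$ from Lemma~\ref{lem:triple-Gram-structure}, then normalizes $M_0=1$ and expands with $g_{00}=g_{01}=0$ to obtain \eqref{eq:triple-null-general}; your explicit cross-term bookkeeping, including the factor $1/2$ on $M_2$ and the reality of $g_{02}$, simply fills in that ``direct expansion'' step.
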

	
	\begin{proof}
		If \(M_0=0\), then
		\[
		{\bf y}_R = M_1{\bf Y}_1+\frac{M_2}{2}{\bf Y}_2.
		\]
		The negative definiteness of \(G_T^-\) gives \(\langle{\bf y}_R|J|{\bf y}_R\rangle<0\) for every nonzero pair \((M_1,M_2)\). Hence \(M_0\neq0\) for every nonzero null vector. After setting \(M_0=1\), direct expansion using \eqref{eq:triple-Gram-relations} gives \eqref{eq:triple-null-general}.
	\end{proof}
	
	For fixed \(M_1\), introduce \(\beta(M_1) = g_{02}+g_{12}M_1^*, \qquad \Delta_T(M_1) = |\beta(M_1)|^2 - g_{11}g_{22}|M_1|^2\). Then \eqref{eq:triple-null-general} is equivalent to
	\begin{equation}
		\label{eq:M2-circle-general-triple}
		\left|
		M_2+\frac{2\beta(M_1)^*}{g_{22}}
		\right|^2
		=
		\frac{4\Delta_T(M_1)}{g_{22}^2}.
	\end{equation}
	Thus a null vector with the prescribed \(M_1\) exists if and only if \(\Delta_T(M_1)\ge0\).
	
	\subsection{Cubic root splitting}
	\label{subsec:triple-splitting}
	
	The triple degeneration is defined at the fixed endpoint \(\mu_0\) by \(F(\chi;\mu_0)=\mu_0(\chi-\chi_0)^3\). As in the real double-root case, the following displacement of that single endpoint is introduced only to regularize the real-spectrum Darboux kernel and to resolve the nearby spatial branches. It is not a coalescence of distinct Darboux poles.
	
	Let \(\mu= \lambda(\eta)^2=\mu_0+\eta, \qquad \lambda(0)=\lambda_0\). Since \(\lambda_0\neq0\), the holomorphic implicit function theorem gives a unique local holomorphic branch satisfying \(\lambda(\eta) = \lambda_0 + \frac{\eta}{2\lambda_0} + O(|\eta|^2)\).
	
	Write \(\chi=\chi_0+\delta\). Since \(F\) is affine in \(\mu\), \(F(\chi;\mu_0+\eta) = F(\chi;\mu_0) + \eta D(\chi)\), where \(D(\chi)=\chi^3-k^2\chi\). Define \(D_0 = D(\chi_0) = \chi_0^3-k^2\chi_0, \qquad D_1 = D'(\chi_0) = 3\chi_0^2-k^2\). Using \eqref{eq:triple-factorization}, one obtains the exact perturbed cubic
	\begin{equation}
		\label{eq:delta-cubic-triple}
		G_3(\delta,\eta)
		:=
		(\mu_0+\eta)\delta^3
		+
		3\chi_0\eta\,\delta^2
		+
		D_1\eta\,\delta
		+
		D_0\eta
		=
		0.
	\end{equation}
	
	For the physical triple point, \(D_0 = \chi_0(\chi_0^2-k^2) = \chi_0k^2(q_{\rm tri}^2-1) <0\), and in particular \(D_0\neq0\).
	
	\begin{lemma}
		\label{lem:triple-root-localization}
		For all sufficiently small nonzero \(\eta\), all three roots of \eqref{eq:delta-cubic-triple} lie in a disk \(|\delta|<R|\eta|^{1/3}\), where \(R>0\) is independent of \(\eta\). Thus \(\delta_j=O(|\eta|^{1/3}), \qquad j=1,2,3\). After choosing a local branch \(t=\eta^{1/3}\), the roots may be labeled so that
		\begin{equation}
			\label{eq:triple-Puiseux-leading}
			\delta_j
			=
			\alpha\omega_jt
			+
			O(t^2),
			\qquad
			\alpha^3=-\frac{D_0}{\mu_0},
		\end{equation}
		where \(\omega_j^3=1, \qquad \omega_j\neq\omega_\ell \quad (j\neq\ell)\). In particular, the three roots are distinct for sufficiently small nonzero \(\eta\).
	\end{lemma}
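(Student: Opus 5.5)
We follow the pattern of Lemma~\ref{lem:rouche-double-splitting}, but now use the cubic scaling. The plan has three steps: a Rouché localization, a rescaled analytic continuation that produces the Puiseux expansion, and a separation argument for distinctness.

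\textbf{Localization.} Write $G_3(\delta,\eta)=P_\eta(\delta)+\bigl(G_3-P_\eta\bigr)$ with the comparison polynomial
\[
P_\eta(\delta)=\mu_0\delta^3+D_0\eta .
\]
Choose $R>0$ such that $\mu_0R^3>|D_0|$. This is possible because $\mu_0>0$ and $D_0\neq0$ at the physical triple point. On the circle $|\delta|=R|\eta|^{1/3}$ the reverse triangle inequality gives
\[
|P_\eta|\ge(\mu_0R^3-|D_0|)|\eta|.
\]
The remainder is
\[
G_3-P_\eta=\eta\bigl(\delta^3+3\chi_0\delta^2+D_1\delta\bigr).
\]
Its terms are of orders $O(|\eta|^2)$, $O(|\eta|^{5/3})$ and $O(|\eta|^{4/3})$ respectively, hence $o(|\eta|)$ uniformly on the circle. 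Rouché's theorem then shows that $G_3$ and $P_\eta$ have the same number of zeros in the disk, namely three. Since $G_3$ is cubic in $\delta$ with leading coefficient $\mu_0+\eta\neq0$, these are all of its roots, and $\delta_j=O(|\eta|^{1/3})$.

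\textbf{Puiseux expansion.} Fix a local branch $t=\eta^{1/3}$, so that $\eta=t^3$, and rescale $\delta=t w$. Dividing $G_3$ by $t^3$ gives
\[
\widetilde G(w,t)=(\mu_0+t^3)w^3+3\chi_0t^{2}w^2+D_1t\,w+D_0 ,
\]
a polynomial that is jointly holomorphic in $(w,t)$ near $t=0$. At $t=0$ it reduces to $\mu_0w^3+D_0$, whose roots are $w=\alpha\omega_j$ with $\alpha^3=-D_0/\mu_0$ and $\omega_j$ the three distinct cube roots of unity. Each of these roots is simple, because
\[
\partial_w\widetilde G(\alpha\omega_j,0)=3\mu_0\alpha^2\omega_j^2\neq0 .
\]
The holomorphic implicit function theorem therefore yields three holomorphic branches $w_j(t)=\alpha\omega_j+O(t)$. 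Consequently
\[
\delta_j=t\,w_j(t)=\alpha\omega_jt+O(t^2),
\]
which is \eqref{eq:triple-Puiseux-leading}. The same expansion gives the first-order correction explicitly: $w_j'(0)=-D_1/(3\mu_0\alpha\omega_j)$, although this is not required for the statement.

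\textbf{Distinctness and labelling.} The three leading values $\alpha\omega_j$ are pairwise separated by $|\alpha|\sqrt3>0$. For small $t$ the corrections are $O(t)$, so $|w_j(t)-w_\ell(t)|\ge|\alpha|\sqrt3/2$ whenever $j\ne\ell$. Multiplying by $t\neq0$ shows that the $\delta_j$ are distinct for small nonzero $\eta$. The three branches exhaust all roots, so the labelling is legitimate.

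The only delicate point is bookkeeping. We must check that the branches obtained from the implicit function theorem are exactly the roots localized by Rouché. This follows from the count: three distinct roots of a cubic must be all of its roots. We must also note that the choice of cube-root branch only permutes the labels $j$. We expect no genuine obstacle beyond verifying $D_0\neq0$, which was recorded just before the lemma.
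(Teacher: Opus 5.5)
Your proposal is correct and follows essentially the same route as the paper. The paper also applies Rouch\'e's theorem against $\mu_0\delta^3+D_0\eta$ on $|\delta|=R|\eta|^{1/3}$, then rescales $\eta=t^3$, $\delta=tw$ and uses the holomorphic implicit function theorem at the simple roots $\alpha\omega_j$. Your explicit separation bound for distinctness, and your count showing that the three analytic branches exhaust the roots of the cubic, spell out steps the paper leaves implicit.
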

	
	\begin{proof}
		Define \(P_\eta(\delta) = \mu_0\delta^3+D_0\eta\). Choose \(R>0\) such that \(\mu_0R^3>|D_0|\). On \(|\delta|=R|\eta|^{1/3}\), the reverse triangle inequality gives
		\[
		|P_\eta(\delta)| \ge \left( \mu_0R^3-|D_0| \right)|\eta|.
		\]
		Moreover,
		\[
		\begin{aligned}
			G_3(\delta,\eta)-P_\eta(\delta)
			&=
			\eta\delta^3
			+
			3\chi_0\eta\delta^2
			+
			D_1\eta\delta.
		\end{aligned}
		\]
		On the same circle, the three terms on the right-hand side are respectively \(O(|\eta|^2), \qquad O(|\eta|^{5/3}), \qquad O(|\eta|^{4/3})\). Hence \(|G_3-P_\eta| = o(|\eta|) < |P_\eta|\) for all sufficiently small nonzero \(\eta\). Rouch\'e's theorem implies that \(G_3\) and \(P_\eta\) have the same number of zeros in the disk. Since \(P_\eta\) has three zeros there, so does \(G_3\). As \(G_3\) is cubic, these are all its zeros.
		
		To obtain the local branches, set \(\eta=t^3, \qquad \delta=tz\). Dividing \eqref{eq:delta-cubic-triple} by \(t^3\) gives \((\mu_0+t^3)z^3 + 3\chi_0t^2z^2 + D_1tz + D_0 = 0\). At \(t=0\), this reduces to \(\mu_0z^3+D_0=0\). Its three roots \(z_j(0)=\alpha\omega_j\) are nonzero and simple because \(D_0\neq0\). The holomorphic implicit function theorem therefore produces three analytic functions \(z_j(t)\) satisfying \(z_j(t) = \alpha\omega_j+O(t)\). Multiplication by \(t\) yields \eqref{eq:triple-Puiseux-leading}.
	\end{proof}
	
	\subsection{Moment recurrences and the full triple-chain correction vector}
	\label{subsec:triple-correction}
	
	We now consider the generic full triple-chain sector \(M_1M_2\neq0, \qquad \eqref{eq:triple-null-general}\ \text{holds}\).
	
	Let the three split roots be \(\chi_j=\chi_0+\delta_j, \qquad j=1,2,3\). The coefficients \(C_j\) are fixed by the moment conditions \(\sum_{j=1}^3C_j=1, \qquad \sum_{j=1}^3C_j\delta_j=M_1, \qquad \sum_{j=1}^3C_j\delta_j^2=M_2\). Solving the Vandermonde system gives
	\begin{equation*}
		C_j
		=
		\frac{
			M_2
			-
			(\delta_\ell+\delta_m)M_1
			+
			\delta_\ell\delta_m
		}{
			(\delta_j-\delta_\ell)
			(\delta_j-\delta_m)
		},
		\qquad
		\{j,\ell,m\}=\{1,2,3\}.
	\end{equation*}
	Since \(\delta_j=O(|\eta|^{1/3})\), one generically has \(C_j=O(|\eta|^{-2/3})\). The singular coefficient parts cancel among the three coalescing eigenfunctions and leave the finite vector
	\[
	{\bf Y}_0 + M_1{\bf Y}_1 + \frac{M_2}{2}{\bf Y}_2.
	\]
	
	Define \(\mathfrak M_n = \sum_{j=1}^3C_j\delta_j^n\). Then
	\[
	\mathfrak M_0=1, \qquad \mathfrak M_1=M_1, \qquad \mathfrak M_2=M_2.
	\]
	
	\begin{proposition}
		\label{prop:triple-general-moments}
		For fixed \(M_1,M_2\in\mathbb C\),
		\begin{align*}
			\mathfrak M_3
			&=
			\mathcal M_3\eta+O(|\eta|^2),
			\\
			\mathfrak M_4
			&=
			\mathcal M_4\eta+O(|\eta|^2),
			\\
			\mathfrak M_5
			&=
			\mathcal M_5\eta+O(|\eta|^2),
		\end{align*}
		where
		\begin{align}
			\mathcal M_3
			&=
			-\frac{
				3\chi_0M_2+D_1M_1+D_0
			}{
				\mu_0
			},
			\label{eq:M3eta-triple-general}
			\\
			\mathcal M_4
			&=
			-\frac{
				D_1M_2+D_0M_1
			}{
				\mu_0
			}, \notag\\
			\mathcal M_5
			&=
			-\frac{D_0M_2}{\mu_0}.
			\label{eq:M5eta-triple-general}
		\end{align}
		Moreover,
		for every \(\tau>0\), after reducing \(|\eta|\) if necessary, there is a constant \(C_\tau>0\), independent of \(n\) and \(\eta\), such that
		\begin{equation}
			\label{eq:higher-triple-moments}
			|\mathfrak M_n|
			\le
			C_\tau|\eta|^2\tau^{n-6},
			\qquad
			n\ge6.
		\end{equation}
	\end{proposition}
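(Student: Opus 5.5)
The plan is to copy the proof of Lemma~\ref{lem:double-weighted-moment-tail}, replacing the quadratic recurrence for the two split roots with the cubic recurrence given by \eqref{eq:delta-cubic-triple}. Each root $\delta_j$ satisfies $G_3(\delta_j,\eta)=0$. Dividing by $\mu_0+\eta$, which is nonzero for small $\eta$, gives
\[
\delta_j^3=b_1(\eta)\delta_j^2+b_2(\eta)\delta_j+b_3(\eta),
\]
with
\[
b_1=-\frac{3\chi_0\eta}{\mu_0+\eta},\qquad
b_2=-\frac{D_1\eta}{\mu_0+\eta},\qquad
b_3=-\frac{D_0\eta}{\mu_0+\eta}.
\]
Multiplying by $C_j\delta_j^{n-3}$ and summing over $j$ gives the exact recurrence
\[
\mathfrak M_n=b_1\mathfrak M_{n-1}+b_2\mathfrak M_{n-2}+b_3\mathfrak M_{n-3},\qquad n\ge3.
\]
This recurrence does not depend on any choice of cube-root branch. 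Every $b_\ell$ is analytic with $b_\ell=-c_\ell\eta/\mu_0+O(|\eta|^2)$, where $c_1=3\chi_0$, $c_2=D_1$, $c_3=D_0$. The roots are distinct for small nonzero $\eta$ by Lemma~\ref{lem:triple-root-localization}, so the Vandermonde system fixes $\mathfrak M_0=1$, $\mathfrak M_1=M_1$ and $\mathfrak M_2=M_2$ exactly.

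The low moments then follow directly.

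At $n=3$, $\mathfrak M_3=b_1M_2+b_2M_1+b_3$. This gives $\mathcal M_3$ in \eqref{eq:M3eta-triple-general} with an $O(|\eta|^2)$ error.

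At $n=4$, $\mathfrak M_4=b_1\mathfrak M_3+b_2M_2+b_3M_1$. The term $b_1\mathfrak M_3$ is $O(|\eta|^2)$, which leaves $-(D_1M_2+D_0M_1)\eta/\mu_0$.

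At $n=5$, $\mathfrak M_5=b_1\mathfrak M_4+b_2\mathfrak M_3+b_3M_2$. The first two terms are $O(|\eta|^2)$, which leaves $\mathcal M_5$ in \eqref{eq:M5eta-triple-general}.

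Applying the recurrence three more times shows that $\mathfrak M_6$, $\mathfrak M_7$ and $\mathfrak M_8$ are all $O(|\eta|^2)$. Each is a sum of products of a coefficient $b_\ell=O(|\eta|)$ with one of $\mathfrak M_3,\mathfrak M_4,\mathfrak M_5$, each of which is $O(|\eta|)$. Note that $\mathfrak M_6$ contains $b_3\mathfrak M_3$, which is $O(|\eta|^2)$ precisely because $\mathfrak M_3=O(|\eta|)$.

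The tail bound \eqref{eq:higher-triple-moments} then comes from Lemma~\ref{lem:analytic-moment-tail} with $N=6$ and $d=3$. The three initial values $\mathfrak M_6,\mathfrak M_7,\mathfrak M_8$ are $O(|\eta|^2)$, and the coefficients $b_\ell$ are $O(|\eta|)$. That lemma's first-step induction applies for any fixed $\tau>0$; the restriction $\tau<r_0$ in it is only needed for the Cauchy-estimate consequence. So for any $\tau>0$ one can shrink $|\eta|$ until $\sum_{\ell=1}^3|b_\ell|\tau^{-\ell}\le1$, choose $C_\tau$ large enough to cover the indices $6,7,8$, and induct on $n$.

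The only step that needs care is the claim that the moments are analytic in $\eta$. The coefficients $C_j$ individually diverge like $|\eta|^{-2/3}$, so one cannot argue termwise. The exact recurrence handles this: it expresses every $\mathfrak M_n$ as a polynomial in the analytic functions $b_\ell(\eta)$ and the fixed numbers $1,M_1,M_2$, so the $O(|\eta|^2)$ remainders are honest analytic remainders. Beyond that, I expect only bookkeeping of which products are second order.
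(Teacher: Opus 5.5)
Your proposal is correct and follows the paper's proof essentially step for step. Both use the exact cubic moment recurrence from \eqref{eq:delta-cubic-triple} (your $b_\ell$ are the paper's $-a_{3-\ell}$), evaluate it at $n=3,4,5$, note that $\mathfrak M_6,\mathfrak M_7,\mathfrak M_8=O(|\eta|^2)$, and then apply Lemma~\ref{lem:analytic-moment-tail} with $N=6$, $d=3$. Your remark that the inductive half of that lemma needs no restriction $\tau<r_0$ is accurate, and it tidies up the paper's citation of the lemma ``for every prescribed $\tau>0$''.
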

	
	\begin{proof}
		Each split root satisfies
		\begin{equation}
			\label{eq:triple-normalized-root-equation}
			\delta_j^3
			+a_2(\eta)\delta_j^2
			+a_1(\eta)\delta_j
			+a_0(\eta)=0,
		\end{equation}
		where
		\[
		a_2(\eta) = \frac{3\chi_0\eta}{\mu_0+\eta},
		\qquad
		a_1(\eta) = \frac{D_1\eta}{\mu_0+\eta},
		\qquad
		a_0(\eta) = \frac{D_0\eta}{\mu_0+\eta}.
		\]
		Multiplying \eqref{eq:triple-normalized-root-equation} by \(C_j\delta_j^{n-3}\) and summing over \(j=1,2,3\) gives the exact recurrence
		\begin{equation}
			\label{eq:triple-exact-moment-recurrence}
			\mathfrak M_n
			+a_2(\eta)\mathfrak M_{n-1}
			+a_1(\eta)\mathfrak M_{n-2}
			+a_0(\eta)\mathfrak M_{n-3}
			=0,
			\qquad n\ge3.
		\end{equation}
		Because
		\[
		a_2(\eta)=\frac{3\chi_0}{\mu_0}\eta+O(|\eta|^2),
		\qquad
		a_1(\eta)=\frac{D_1}{\mu_0}\eta+O(|\eta|^2),
		\qquad
		a_0(\eta)=\frac{D_0}{\mu_0}\eta+O(|\eta|^2),
		\]
		setting \(n=3\) in \eqref{eq:triple-exact-moment-recurrence} and using \(\mathfrak M_0=1\), \(\mathfrak M_1=M_1\), \(\mathfrak M_2=M_2\) gives
		\[
		\mathfrak M_3
		=-\frac{3\chi_0M_2+D_1M_1+D_0}{\mu_0}\eta
		+O(|\eta|^2).
		\]
		Next, the cases \(n=4\) and \(n=5\) give respectively
		\[
		\mathfrak M_4
		=-\frac{D_1M_2+D_0M_1}{\mu_0}\eta
		+O(|\eta|^2),
		\]
		and
		\[
		\mathfrak M_5
		=-\frac{D_0M_2}{\mu_0}\eta
		+O(|\eta|^2).
		\]
		These are \eqref{eq:M3eta-triple-general}--\eqref{eq:M5eta-triple-general}.
		
		Since \(a_0,a_1,a_2=O(|\eta|)\) and \(\mathfrak M_3,\mathfrak M_4,\mathfrak M_5=O(|\eta|)\), the recurrence gives
		\[
		\mathfrak M_6,
		\mathfrak M_7,
		\mathfrak M_8
		=O(|\eta|^2).
		\]
		For \(n\ge9\), equation \eqref{eq:triple-exact-moment-recurrence} has the form of \eqref{eq:abstract-moment-recurrence} with
		\[
		N=6,
		\qquad d=3,
		\qquad
		b_1=-a_2,
		\quad b_2=-a_1,
		\quad b_3=-a_0.
		\]
		Lemma~\ref{lem:analytic-moment-tail} therefore yields the uniform geometric estimate \eqref{eq:higher-triple-moments} for every prescribed \(\tau>0\).
	\end{proof}
	
	Consider
	\begin{equation}
		\label{eq:y-eta-triple}
		{\bf y}(\eta)
		=
		\sum_{j=1}^3
		C_j
		{\bf y}(\chi_0+\delta_j,\lambda(\eta)).
	\end{equation}
	
	\begin{proposition}
		\label{prop:triple-eigenfunction-expansion}
		The combination \eqref{eq:y-eta-triple} satisfies
		\begin{equation}
			\label{eq:y-triple-expansion}
			{\bf y}(\eta)
			=
			{\bf y}_R
			+
			\eta{\bf y}_\mu
			+
			O_{C^2_{\rm loc}}(|\eta|^2),
		\end{equation}
		where
		\begin{equation}
			\label{eq:yR-triple}
			{\bf y}_R
			=
			{\bf Y}_0
			+
			M_1{\bf Y}_1
			+
			\frac{M_2}{2}{\bf Y}_2.
		\end{equation}
		and
		\begin{equation}
			\label{eq:ymu-triple}
			\begin{aligned}
				{\bf y}_\mu
				={}&
				\frac{1}{2\lambda_0}
				\begin{pmatrix}
					E_0+M_1E_1+\dfrac{M_2}{2}E_2\\
					0\\
					0
				\end{pmatrix}
				\\
				&+
				\frac{\mathcal M_3}{6}{\bf Y}_3
				+
				\frac{\mathcal M_4}{24}{\bf Y}_4
				+
				\frac{\mathcal M_5}{120}{\bf Y}_5.
			\end{aligned}
		\end{equation}
		More precisely, for every compact set \(\Omega\subset\mathbb R^2\), there exist constants \(C_\Omega>0\) and \(\eta_\Omega>0\) such that
		\[
		\left\|
		{\bf y}(\eta)-{\bf y}_R-\eta{\bf y}_\mu
		\right\|_{C^2(\Omega)}
		\le
		C_\Omega|\eta|^2,
		\qquad 0<|\eta|<\eta_\Omega.
		\]
	\end{proposition}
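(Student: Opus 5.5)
The plan mirrors the proof of Theorem~\ref{thm:y-eta-C2-local}, with the two-branch moment expansion replaced by the three-branch one of Proposition~\ref{prop:triple-general-moments}. Fix a compact set \(\Omega\subset\mathbb R^2\) and set \(\mathcal B=C^2(\Omega;\mathbb C^3)\). Choose \(r_0>0\) so that the closed disk \(|\delta|\le r_0\) does not meet \(\{0,\pm k\}-\chi_0\). Then \(f(\delta)={\bf y}(\chi_0+\delta,\lambda_0)\) is a holomorphic \(\mathcal B\)-valued map with Taylor coefficients \(f_n={\bf Y}_n/n!\). Lemma~\ref{lem:triple-root-localization} gives \(\delta_j=O(|\eta|^{1/3})\), so all \(\delta_j\) lie inside the disk of convergence. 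Because the three roots are distinct, the Vandermonde coefficients \(C_j\) are well defined. The convergent series in \(\mathcal B\) then gives
\[
\sum_{j=1}^3C_jf(\delta_j)=\sum_{n=0}^\infty\frac{\mathfrak M_n}{n!}{\bf Y}_n .
\]

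Next I insert the moment information. The terms \(n=0,1,2\) give exactly \({\bf Y}_0+M_1{\bf Y}_1+\tfrac{M_2}{2}{\bf Y}_2={\bf y}_R\), with no error, since \(\mathfrak M_0=1\), \(\mathfrak M_1=M_1\) and \(\mathfrak M_2=M_2\) hold identically. By Proposition~\ref{prop:triple-general-moments}, the terms \(n=3,4,5\) contribute \(\eta\bigl(\tfrac{\mathcal M_3}{6}{\bf Y}_3+\tfrac{\mathcal M_4}{24}{\bf Y}_4+\tfrac{\mathcal M_5}{120}{\bf Y}_5\bigr)+O_{\mathcal B}(|\eta|^2)\). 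For the tail \(n\ge6\), I take \(\tau\in(0,r_0)\) and use the geometric bound \eqref{eq:higher-triple-moments}. Lemma~\ref{lem:analytic-moment-tail}, applied with \(N=6\), then shows that the tail is \(O_{\mathcal B}(|\eta|^2)\).

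The spectral variation is treated separately. From \eqref{eq:y-basic}, only the first component depends on \(\lambda\), so
\[
{\bf y}(\chi,\lambda(\eta))-{\bf y}(\chi,\lambda_0)=(\lambda(\eta)-\lambda_0)\bigl(E(\chi),0,0\bigr)^{\mathsf T},
\]
with \(\lambda(\eta)-\lambda_0=\eta/(2\lambda_0)+O(|\eta|^2)\). Applying the same moment argument to the scalar holomorphic \(C^2(\Omega)\)-valued map \(\delta\mapsto E(\chi_0+\delta)\) gives
\[
\sum_jC_jE(\chi_j)=E_0+M_1E_1+\tfrac{M_2}{2}E_2+O(|\eta|),
\]
where the error comes from \(\mathfrak M_n=O(|\eta|)\) for \(n\ge3\) together with the tail bound. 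Multiplying by \(\eta/(2\lambda_0)\) produces the first term of \eqref{eq:ymu-triple}, up to \(O_{\mathcal B}(|\eta|^2)\). Adding the two contributions yields \eqref{eq:y-triple-expansion}, and since \(\Omega\) was arbitrary the remainder is \(O_{C^2_{\rm loc}}(|\eta|^2)\).

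The main obstacle is justifying the series manipulation. Each \(C_j\) individually diverges like \(|\eta|^{-2/3}\), so termwise estimates on the individual branches fail. The argument must therefore rely only on the weighted moments \(\mathfrak M_n\) and on uniform-in-\(n\) control of them, which is exactly what \eqref{eq:higher-triple-moments} and Lemma~\ref{lem:analytic-moment-tail} provide. A minor additional check is that the \(\tau\) admissible in Proposition~\ref{prop:triple-general-moments} can be taken below \(r_0\).
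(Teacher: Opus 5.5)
Your proposal is correct and follows the paper's proof essentially step for step. Both arguments expand the $\mathcal B=C^2(\Omega;\mathbb C^3)$-valued Taylor series of $f(\delta)={\bf y}(\chi_0+\delta,\lambda_0)$ through the weighted moments $\mathfrak M_n$, take the orders $3$ to $5$ from Proposition~\ref{prop:triple-general-moments}, control the tail by \eqref{eq:higher-triple-moments} and Lemma~\ref{lem:analytic-moment-tail} with $N=6$, and treat the spectral variation through the exact first-component identity together with the moment expansion of $E$. Your extra check that $\tau$ can be taken below $r_0$ is automatic, since Proposition~\ref{prop:triple-general-moments} gives the bound for every $\tau>0$.
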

	
	\begin{proof}
		Fix a compact set \(\Omega\subset\mathbb R^2\) and let
		\[
		\mathcal B=C^2(\Omega;\mathbb C^3).
		\]
		Choose \(r_0>0\) so that \(|\chi-\chi_0|\le r_0\) avoids \(\{0,\pm k\}\). Then
		\[
		f(\delta)={\bf y}(\chi_0+\delta,\lambda_0)
		\]
		is holomorphic on \(|\delta|<r_0\) as a \(\mathcal B\)-valued function, with Taylor coefficients \(f_n={\bf Y}_n/n!\). Hence
		\begin{equation}
			\label{eq:triple-fixed-lambda-full-series}
			\sum_{j=1}^3
			C_j{\bf y}(\chi_0+\delta_j,\lambda_0)
			=
			\sum_{n=0}^{\infty}
			\frac{\mathfrak M_n}{n!}{\bf Y}_n
			\qquad\text{in }\mathcal B.
		\end{equation}
		The prescribed identities \(\mathfrak M_0=1\), \(\mathfrak M_1=M_1\), and \(\mathfrak M_2=M_2\), together with \eqref{eq:M3eta-triple-general}--\eqref{eq:M5eta-triple-general}, show that the terms with \(0\le n\le5\) in \eqref{eq:triple-fixed-lambda-full-series} equal
		\[
		{\bf y}_R
		+
		\eta
		\left(
		\frac{\mathcal M_3}{6}{\bf Y}_3
		+
		\frac{\mathcal M_4}{24}{\bf Y}_4
		+
		\frac{\mathcal M_5}{120}{\bf Y}_5
		\right)
		+O_{\mathcal B}(|\eta|^2).
		\]
		Choose \(\tau\in(0,r_0)\). The uniform estimate \eqref{eq:higher-triple-moments} and Lemma~\ref{lem:analytic-moment-tail}, applied with \(N=6\), give
		\[
		\left\|
		\sum_{n=6}^{\infty}
		\frac{\mathfrak M_n}{n!}{\bf Y}_n
		\right\|_{\mathcal B}
		=O(|\eta|^2).
		\]
		Thus the complete fixed-\(\lambda_0\) expansion, including its infinite Taylor tail, is controlled in \(C^2(\Omega)\).
		
		To account for the varying square root, use the exact identity
		\[
		{\bf y}(\chi,\lambda(\eta))
		-
		{\bf y}(\chi,\lambda_0)
		=
		\bigl[\lambda(\eta)-\lambda_0\bigr]
		\begin{pmatrix}
			E(\chi)\\0\\0
		\end{pmatrix}.
		\]
		The moment expansion of the phase gives
		\[
		\sum_{j=1}^3C_jE(\chi_0+\delta_j)
		=
		E_0+M_1E_1+\frac{M_2}{2}E_2
		+O_{C^2(\Omega)}(|\eta|).
		\]
		Indeed, the moments of orders three through five are \(O(|\eta|)\), and the full tail beginning with order six is \(O(|\eta|^2)\) by the same Cauchy estimate. Since
		\[
		\lambda(\eta)-\lambda_0
		=
		\frac{\eta}{2\lambda_0}+O(|\eta|^2),
		\]
		the spectral variation contributes
		\[
		\frac{\eta}{2\lambda_0}
		\begin{pmatrix}
			E_0+M_1E_1+\dfrac{M_2}{2}E_2\\0\\0
		\end{pmatrix}
		+O_{\mathcal B}(|\eta|^2).
		\]
		Combining the fixed-\(\lambda_0\) and spectral-variation parts gives \eqref{eq:y-triple-expansion} and \eqref{eq:ymu-triple}, with a remainder bounded by \(C_\Omega|\eta|^2\) in \(C^2(\Omega)\). Since \(\Omega\) was arbitrary, the remainder is \(O_{C^2_{\rm loc}}(|\eta|^2)\).
	\end{proof}
	
	\subsection{The directional real-spectrum limit}
	\label{subsec:triple-directional-limit}
	
	Let
	\[
	\eta=\varepsilon\zeta, \qquad \varepsilon\in\mathbb R, \qquad \varepsilon\to0, \qquad \zeta\in\mathbb C\setminus\mathbb R.
	\]
	
	\begin{theorem}
		\label{thm:triple-solution}
		Let \(\chi_0\) be the real triple branch point determined by \eqref{eq:mu-triple}--\eqref{eq:k-triple}. Let \(M_1,M_2\in\mathbb C\) satisfy the full null constraint \eqref{eq:triple-null-general}, and let \({\bf y}_R\) and \({\bf y}_\mu\) be given by \eqref{eq:yR-triple} and \eqref{eq:ymu-triple}.
		
		Define
		\begin{equation}
			\label{eq:mq-triple}
			\begin{aligned}
				m_\zeta^{(3)}
				={}&
				\frac{2\lambda_0}{\zeta^*-\zeta}
				\left[
				\zeta^*
				\langle{\bf y}_\mu|J|{\bf y}_R\rangle
				+
				\zeta
				\langle{\bf y}_R|J|{\bf y}_\mu\rangle
				\right]
				\\
				&-
				\frac{
					\langle{\bf y}_R|K|{\bf y}_R\rangle
				}{
					2\lambda_0
				}.
			\end{aligned}
		\end{equation}
		Writing \({\bf y}_R = (\psi_R,\phi_R,\varphi_R)^T\), define
		\begin{equation}
			\label{eq:triple-solution}
			\begin{pmatrix}
				u_1^{(R,3;\zeta)}\\
				u_2^{(R,3;\zeta)}
			\end{pmatrix}
			=
			\begin{pmatrix}
				u_{1,0}\\
				u_{2,0}
			\end{pmatrix}
			+
			2
			\frac{
				\begin{pmatrix}
					\phi_R\\
					\varphi_R
				\end{pmatrix}
				\psi_R^*
			}{
				m_\zeta^{(3)}
			}.
		\end{equation}
		Then
		\[
		\operatorname{Re}m_\zeta^{(3)} = - \frac{ \langle{\bf y}_R|{\bf y}_R\rangle }{ 2\lambda_0 } <0.
		\]
		Consequently, \eqref{eq:triple-solution} is a globally nonsingular exact solution of \eqref{eq:CFL}.
	\end{theorem}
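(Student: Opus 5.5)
The plan is to reduce Theorem~\ref{thm:triple-solution} to the unified directional limit, Theorem~\ref{thm:unified-real-spectrum-directional-limit}, in the same way Theorem~\ref{thm:real-double-directional} was obtained for double roots. That means checking, one by one, the hypotheses of the unified theorem for the approximating family \eqref{eq:y-eta-triple}.

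First, the spectral data. By Theorem~\ref{thm:triple-characterization}, \(\mu_0>0\) and \(\lambda_0=\sqrt{\mu_0}>0\). The branch \(\lambda(\eta)\) with \(\lambda(\eta)^2=\mu_0+\eta\) and \(\lambda(0)=\lambda_0\) is exactly the one in the unified theorem.

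Next, the family itself. For each small nonzero \(\eta=\varepsilon\zeta\), Lemma~\ref{lem:triple-root-localization} says the three split roots \(\chi_0+\delta_j\) are distinct. Each \({\bf y}(\chi_0+\delta_j,\lambda(\eta))\) is an elementary eigenfunction, because \(F(\chi_0+\delta_j;\mu_0+\eta)=0\) and \(\chi_0+\delta_j\notin\{0,\pm k\}\). For \(0\notin\{0,\pm k\}\) this uses Proposition~\ref{prop:no-pm-k-root}; for \(0\) it follows from \(\delta_j\to0\) and \(\chi_0>0\). The Vandermonde coefficients \(C_j\) are then well defined, so \eqref{eq:y-eta-triple} is a Lax eigenfunction at \(\lambda(\eta)\).

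Proposition~\ref{prop:triple-eigenfunction-expansion} gives the expansion \eqref{eq:unified-eigenfunction-expansion}, with \({\bf y}_R\) and \({\bf y}_\mu\) as in \eqref{eq:yR-triple}--\eqref{eq:ymu-triple} and a remainder that is \(O_{C^2_{\rm loc}}(|\eta|^2)\). Since \(M_1\) and \(M_2\) are held fixed, this expansion holds uniformly along any ray \(\eta=\varepsilon\zeta\).

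Then the leading vector. By Lemma~\ref{lem:fixed-spectral-derivative-chain} with \(m=3\), \({\bf y}_R\) solves the background Lax pair at \(\lambda_0\). By Proposition~\ref{prop:confluent-basis-independence}, \({\bf Y}_0,{\bf Y}_1,{\bf Y}_2\) are linearly independent, and since the coefficient of \({\bf Y}_0\) is \(1\), \({\bf y}_R\) is nontrivial. For \(J\)-nullity, I expand \(\langle{\bf y}_R|J|{\bf y}_R\rangle\) using the Gram relations \eqref{eq:triple-Gram-relations}. With \(M_0=1\), this gives the left-hand side of \eqref{eq:triple-null-general}, which vanishes by assumption, so \eqref{eq:unified-leading-nullity} holds. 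The one detail to check is that the factor \(1/2\) in front of \(M_2\) and the cross terms \(g_{02}\operatorname{Re}M_2\) and \(\operatorname{Re}(g_{12}M_1^*M_2)\) match the expansion. In particular \(2\operatorname{Re}(g_{02}M_2/2)=g_{02}\operatorname{Re}M_2\), using that \(g_{02}\) is real.

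Finally, I apply the conclusion of the unified theorem with \(q=\zeta\). Its denominator \eqref{eq:unified-directional-denominator} is exactly \(m_\zeta^{(3)}\), and its field \eqref{eq:unified-directional-field} is exactly \eqref{eq:triple-solution}. Because \(K=I_3\), \eqref{eq:unified-negative-real-part} gives \(\operatorname{Re}m_\zeta^{(3)}=-\langle{\bf y}_R|{\bf y}_R\rangle/(2\lambda_0)<0\). This holds pointwise, since a nontrivial Lax solution never vanishes. Global nonsingularity and exactness then follow as in the unified theorem.

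The main obstacle is the only genuinely analytic step, the \(C^2_{\rm loc}\) expansion. Here the \(C_j\) diverge like \(|\eta|^{-2/3}\), and control comes only through the moment recurrence and Lemma~\ref{lem:analytic-moment-tail}. This is already established in Proposition~\ref{prop:triple-eigenfunction-expansion}, so in this proof I only need to invoke it, noting that its constants are uniform along the chosen ray.
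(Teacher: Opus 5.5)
Your proposal is correct and follows the paper's own argument. Proposition~\ref{prop:triple-eigenfunction-expansion} supplies the expansion, the constraint \eqref{eq:triple-null-general} is identified with \eqref{eq:unified-leading-nullity}, nontriviality comes from the unit \({\bf Y}_0\)-coefficient and Proposition~\ref{prop:confluent-basis-independence}, and Theorem~\ref{thm:unified-real-spectrum-directional-limit} is then applied with \(q=\zeta\). Your additional checks are sound details that the paper leaves implicit: that the split-root combination is a genuine eigenfunction at \(\lambda(\eta)\), and the explicit Gram expansion. One slip to fix: where you wrote ``\(0\notin\{0,\pm k\}\)'' you meant the points \(\pm k\).
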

	
	\begin{proof}
		Proposition~\ref{prop:triple-eigenfunction-expansion} gives the expansion \eqref{eq:unified-eigenfunction-expansion}. The full constraint \eqref{eq:triple-null-general} is exactly \eqref{eq:unified-leading-nullity}. Since the coefficient of \({\bf Y}_0\) in \({\bf y}_R\) is normalized to one and \({\bf Y}_0,{\bf Y}_1,{\bf Y}_2\) are linearly independent by Proposition~\ref{prop:confluent-basis-independence}, the leading vector is nontrivial.
		
		Theorem~\ref{thm:unified-real-spectrum-directional-limit}, applied with \(q=\zeta\), now gives \eqref{eq:mq-triple}, the strict negativity of its real part, and the global smoothness and exact solvability of \eqref{eq:triple-solution}.
	\end{proof}
	
	For the imaginary-axis approach \(\eta={\rm i}\varepsilon\), one obtains
	\begin{equation}
		\label{eq:mR-triple}
		m_R^{(3)}
		=
		\lambda_0
		\left(
		\langle{\bf y}_\mu|J|{\bf y}_R\rangle
		-
		\langle{\bf y}_R|J|{\bf y}_\mu\rangle
		\right)
		-
		\frac{
			\langle{\bf y}_R|K|{\bf y}_R\rangle
		}{
			2\lambda_0
		}.
	\end{equation}
	
	\subsection{Lower coefficient sectors}
	\label{subsec:triple-special-sectors}
	
	\begin{corollary}[Pure second-derivative sector]
		\label{cor:pure-second-triple}
		If \(M_1=0, \qquad M_2\neq0\), then the null constraint becomes
		\begin{equation}
			\label{eq:M2-circle-triple}
			\left|
			M_2+\frac{2g_{02}}{g_{22}}
			\right|
			=
			\left|
			\frac{2g_{02}}{g_{22}}
			\right|.
		\end{equation}
		The leading vector is
		\[
		{\bf y}_R = {\bf Y}_0+\frac{M_2}{2}{\bf Y}_2,
		\]
		and the correction vector is obtained from \eqref{eq:ymu-triple} with
		\begin{align*}
			\mathcal M_3
			&=
			-\frac{3\chi_0M_2+D_0}{\mu_0},
			\\
			\mathcal M_4
			&=
			-\frac{D_1M_2}{\mu_0},
			\\
			\mathcal M_5
			&=
			-\frac{D_0M_2}{\mu_0}.
		\end{align*}
		The same limiting kernel and Darboux formula remain valid.
	\end{corollary}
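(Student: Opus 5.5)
The corollary is a direct specialization of the full triple-chain results with \(M_1=0\), and I would prove it in three short steps.

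First I would derive the null circle. Set \(M_1=0\) in the general null condition \eqref{eq:triple-null-general}. The terms \(g_{11}|M_1|^2\) and \(\operatorname{Re}(g_{12}M_1^*M_2)\) vanish, leaving
\[
g_{02}\operatorname{Re}M_2+\frac{g_{22}}{4}|M_2|^2=0.
\]
By Lemma~\ref{lem:triple-Gram-structure}, \(g_{22}<0\), so I may divide by \(g_{22}/4\) and complete the square in \(\operatorname{Re}M_2\). This gives \(|M_2+2g_{02}/g_{22}|^2=(2g_{02}/g_{22})^2\), which is \eqref{eq:M2-circle-triple}. As a consistency check, the same result follows from \eqref{eq:M2-circle-general-triple}, since \(\beta(0)=g_{02}\in\mathbb R\) and \(\Delta_T(0)=g_{02}^2\ge0\). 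The circle passes through the origin, and \(g_{02}=2g_{11}\neq0\), so nonzero admissible \(M_2\) exist. Hence the sector is nonempty.

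Second, I would read off the vectors. Setting \(M_1=0\) in \eqref{eq:yR-triple} gives \({\bf y}_R={\bf Y}_0+\tfrac{M_2}{2}{\bf Y}_2\). Setting \(M_1=0\) in \eqref{eq:M3eta-triple-general}--\eqref{eq:M5eta-triple-general} gives the three listed values of \(\mathcal M_3\), \(\mathcal M_4\), \(\mathcal M_5\). The \(E_1\) term in the first component of \eqref{eq:ymu-triple} also drops out.

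I expect the main point to check to be the justification of the expansion, because Proposition~\ref{prop:triple-eigenfunction-expansion} was stated for the generic sector \(M_1M_2\neq0\). Its proof, however, uses only three ingredients:
\begin{enumerate}
\item the Vandermonde solution of the moment system, valid since the \(\delta_j\) are distinct by Lemma~\ref{lem:triple-root-localization};
\item the exact recurrence \eqref{eq:triple-exact-moment-recurrence} with initial data \(\mathfrak M_0=1\), \(\mathfrak M_1=0\), \(\mathfrak M_2=M_2\);
\item the tail estimate from Lemma~\ref{lem:analytic-moment-tail}.
\end{enumerate}
None of these requires \(M_1\neq0\). I would therefore note that Propositions~\ref{prop:triple-general-moments} and \ref{prop:triple-eigenfunction-expansion} hold verbatim for all fixed \(M_1,M_2\in\mathbb C\), in particular for \(M_1=0\).

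Finally, \({\bf y}_R\) is nontrivial because its \({\bf Y}_0\) coefficient is \(1\) and the vectors are independent by Proposition~\ref{prop:confluent-basis-independence}. It is \(J\)-null by the first step. Theorem~\ref{thm:triple-solution}, or directly Theorem~\ref{thm:unified-real-spectrum-directional-limit}, then gives the same limiting kernel \eqref{eq:mq-triple}, a strictly negative real part, and the resulting globally smooth exact solution.
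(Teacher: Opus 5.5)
Your proposal is correct and takes the same route as the paper. The paper gives no separate proof and treats the corollary as the \(M_1=0\) specialization of the null constraint \eqref{eq:triple-null-general}, of Proposition~\ref{prop:triple-general-moments}, and of Theorem~\ref{thm:triple-solution}. Your remark that the Vandermonde inversion, the moment recurrence, and the tail estimate never use \(M_1\neq0\) makes explicit a point the paper leaves tacit.
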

	
	\begin{corollary}[Canonical triple-root sector]
		\label{cor:canonical-triple}
		If \(M_1=M_2=0\), then \({\bf y}_R={\bf Y}_0\) is automatically \(J\)-null. The corresponding correction vector is
		\begin{equation*}
			{\bf y}_\mu^{(0)}
			=
			\frac{1}{2\lambda_0}
			\begin{pmatrix}
				E_0\\
				0\\
				0
			\end{pmatrix}
			-
			\frac{D_0}{6\mu_0}{\bf Y}_3.
		\end{equation*}
		Substitution of \({\bf y}_R={\bf Y}_0\) and \({\bf y}_\mu={\bf y}_\mu^{(0)}\) into \eqref{eq:mq-triple} and \eqref{eq:triple-solution} gives the canonical real-spectrum triple-root solution.
	\end{corollary}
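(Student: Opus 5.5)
The statement to prove is Corollary~\ref{cor:canonical-triple}. The plan is to specialize the general triple-chain construction of Proposition~\ref{prop:triple-eigenfunction-expansion} and Theorem~\ref{thm:triple-solution} to \(M_1=M_2=0\), and to check that every hypothesis survives the specialization.

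First, the leading vector is \({\bf y}_R={\bf Y}_0\). Its \(J\)-nullity follows from \(g_{00}=0\) in Lemma~\ref{lem:triple-Gram-structure}, which in turn rests on Lemma~\ref{lem:J-null} and \(\Lambda'(\chi_0)=0\). Equivalently, the null constraint \eqref{eq:triple-null-general} with \(M_1=M_2=0\) reduces to \(0=0\), so no circle condition has to be imposed. The vector \({\bf Y}_0\) is nontrivial, since its first component is \(\lambda_0E_0\neq0\).

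Second, I compute the moments. The Vandermonde coefficients are \(C_j=\delta_\ell\delta_m/[(\delta_j-\delta_\ell)(\delta_j-\delta_m)]\), which are finite Lagrange weights, and the moment conditions are \(\mathfrak M_0=1\), \(\mathfrak M_1=\mathfrak M_2=0\). Proposition~\ref{prop:triple-general-moments} then gives
\[
\mathcal M_3=-\frac{D_0}{\mu_0},\qquad \mathcal M_4=0,\qquad \mathcal M_5=0.
\]
The geometric tail bound for \(n\ge6\) holds unchanged, because that proposition was proved for arbitrary fixed \(M_1,M_2\in\mathbb C\) and did not use the genericity assumption \(M_1M_2\neq0\). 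This is the step that needs care: one must confirm that the recurrence argument, including the \(O(|\eta|^2)\) initial data \(\mathfrak M_6,\mathfrak M_7,\mathfrak M_8\), never divided by \(M_1\) or \(M_2\). Checking the proof confirms it did not.

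Substituting these moments into \eqref{eq:ymu-triple} removes the \(E_1,E_2\) phase terms and the \({\bf Y}_4,{\bf Y}_5\) terms. The remaining correction vector is \({\bf y}_\mu^{(0)}=\frac{1}{2\lambda_0}(E_0,0,0)^{\mathsf T}+\frac{\mathcal M_3}{6}{\bf Y}_3\), and \(\frac{\mathcal M_3}{6}=-\frac{D_0}{6\mu_0}\) gives exactly the stated formula.

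Finally, with the expansion \eqref{eq:y-triple-expansion} for these data and the nullity of \({\bf Y}_0\) in hand, Theorem~\ref{thm:unified-real-spectrum-directional-limit} (equivalently Theorem~\ref{thm:triple-solution}) applies directly with \(q=\zeta\). It gives the limiting kernel \eqref{eq:mq-triple}, the identity \(\operatorname{Re}m^{(3)}_\zeta=-\|{\bf Y}_0\|^2/(2\lambda_0)<0\), and hence global regularity and exact solvability of the canonical triple-root solution \eqref{eq:triple-solution}.
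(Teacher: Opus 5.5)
Your proposal is correct and is essentially the paper's own argument, which the paper leaves implicit. The corollary is obtained by setting \(M_1=M_2=0\) in Proposition~\ref{prop:triple-general-moments}, Proposition~\ref{prop:triple-eigenfunction-expansion}, and Theorem~\ref{thm:triple-solution}; this gives \(\mathcal M_3=-D_0/\mu_0\), \(\mathcal M_4=\mathcal M_5=0\), and hence the stated \({\bf y}_\mu^{(0)}\), while \(g_{00}=0\) supplies nullity. Your explicit check that none of these results uses the genericity assumption \(M_1M_2\neq0\) is precisely the justification the paper does not spell out.
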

	
	\begin{remark}
		\label{rem:triple-special-sectors}
		If \(M_2=0\), the null condition reduces to \(g_{11}|M_1|^2=0\). Since \(g_{11}<0\), the case \(M_2=0,\qquad M_1\neq0\) is inadmissible. Thus the canonical vector \({\bf Y}_0\) is the only admissible vector without a second-derivative contribution.
		
		For fixed \(M_1\neq0\), the full-chain circle \eqref{eq:M2-circle-general-triple} collapses to a single point when \(\Delta_T(M_1)=0\). At this saturated boundary,
		\[
		M_2 = -\frac{2\beta(M_1)^*}{g_{22}}.
		\]
		This case is already contained in the general full-chain formula and requires no separate limiting construction.
	\end{remark}
	
	\begin{remark}
		\label{rem:admissible-summary}
		Up to an overall nonzero complex factor, the possible leading coefficient combinations are summarized below. For the full triple-chain sector, we use
		\[
		\beta(M_1)=g_{02}+g_{12}M_1^*, \qquad \Delta_T(M_1) = |\beta(M_1)|^2-g_{11}g_{22}|M_1|^2.
		\]
		
		\begin{center}
			\begingroup
			\renewcommand{\arraystretch}{1.35}
			\setlength{\tabcolsep}{3pt}
			\small
			
			\begin{tabular}{
					>{\centering\arraybackslash}m{0.08\textwidth}
					|
					>{\centering\arraybackslash}m{0.25\textwidth}
					|
					>{\centering\arraybackslash}m{0.27\textwidth}
					|
					>{\centering\arraybackslash}m{0.31\textwidth}
				}
				\hline
				\textbf{Branch type}
				&
				\textbf{Coefficient choice}
				&
				\textbf{Null constraint}
				&
				\textbf{Interpretation}
				\\ \hline
				
				\multirow{7}{*}{\textbf{double}}
				&
				\(M_0=0\)
				&
				No nonzero solution
				&
				Inadmissible
				\\[1.5pt]
				&
				\(M_0=1,\ c_s=0\)
				&
				\(M_1\) lies on \eqref{eq:M1-circle-double}
				&
				General pure double-root sector
				\\[1.5pt]
				&
				\(M_0=1,\ c_s=0,\ M_1=0\)
				&
				Special real point on \eqref{eq:M1-circle-double}
				&
				\({\bf y}_R={\bf Y}_0\)
				\\[1.5pt]
				&
				\(M_0=1,\ c_s=0,\ M_1=-2a_J/b_J\)
				&
				The second real point on \eqref{eq:M1-circle-double}
				&
				Nontrivial real confluent branch
				\\[1.5pt]
				&
				\(M_0=1,\ M_1=0,\ c_s\neq0\)
				&
				\(d_s|c_s|^2=0\)
				&
				Inadmissible
				\\[1.5pt]
				&
				\(M_0=1,\ M_1c_s\neq0\)
				&
				Subject to \eqref{eq:double-null-general}
				&
				General mixed double--simple sector
				\\[1.5pt]
				&
				\(
				\begin{gathered}
					M_0=1,\quad M_1c_s\neq0,\\
					|c_s|^2=\dfrac{a_J^2}{b_Jd_s}
				\end{gathered}
				\)
				&
				\(M_1=-a_J/b_J\)
				&
				Saturated boundary of the mixed sector; the admissible
				circle collapses to one point
				\\ \hline
				
				\multirow{7}{*}{\textbf{triple}}
				&
				\(M_0=0\)
				&
				No nonzero solution
				&
				Inadmissible
				\\[1.5pt]
				&
				\(M_0=1,\ M_1=M_2=0\)
				&
				Automatically null
				&
				Canonical sector
				\\[1.5pt]
				&
				\(M_0=1,\ M_2=0,\ M_1\neq0\)
				&
				\(g_{11}|M_1|^2=0\)
				&
				Inadmissible
				\\[1.5pt]
				&
				\(M_0=1,\ M_1=0,\ M_2\neq0\)
				&
				\(M_2\) lies on \eqref{eq:M2-circle-triple}
				&
				General pure second-derivative sector
				\\[1.5pt]
				&
				\(
				\begin{gathered}
					M_0=1,\quad M_1=0,\\
					M_2=-\dfrac{4g_{02}}{g_{22}}\in\mathbb R
				\end{gathered}
				\)
				&
				Unique nonzero real point on
				\eqref{eq:M2-circle-triple}
				&
				Special real second-derivative branch
				\\[1.5pt]
				&
				\(M_0=1,\ M_1M_2\neq0\)
				&
				Subject to \eqref{eq:triple-null-general}
				&
				General full triple-chain sector
				\\[1.5pt]
				&
				\(
				\begin{gathered}
					M_0=1,\quad M_1M_2\neq0,\\
					\Delta_T(M_1)=0
				\end{gathered}
				\)
				&
				\(M_2=-2\beta(M_1)^*/g_{22}\)
				&
				Saturated boundary of the full triple-chain sector; the
				admissible \(M_2\)-circle collapses to one point
				\\ \hline
			\end{tabular}
			
			\endgroup
		\end{center}
		
		The rows describing special real points and saturated boundaries are subcases of the corresponding general sectors and are retained to make the real coefficient branches explicit. In particular, the pure double-root circle contains the two real points \(M_1=0\) and \(M_1=-2a_J/b_J\), while the pure second-derivative circle contains the canonical point \(M_2=0\) and the unique nonzero real point \(M_2=-4g_{02}/g_{22}\).
		
		No independent simple-branch coefficient exists at a pure triple branch point, because all three spatial roots have already coalesced into \(\chi_0\).
	\end{remark}
	
	\section{Nonlocalized Asymptotic Curve}
	\label{sec:nonlocal-asymptotics}
	
	\subsection{Cubic kernel level sets for the pure double-root family}
	\label{subsec:double-nonlocal}
	\label{subsec:nonlocalized-curves}

	We now study the far-field behavior of the general nonzero pure double-root family \(c_s=0, \qquad r\in\mathbb C\setminus\{0\}, \qquad b_J|r|^2+2a_J\operatorname{Re}r=0\). The analysis does not involve the mixed double--simple sector \(c_s\neq0\).
	
	For \(q=e^{{\rm i}\alpha}, \qquad \sin\alpha\neq0\), the imaginary part of the limiting denominator is \(\operatorname{Im} \mathcal D_\alpha^{(r)} = 2\lambda_0 \left[ \zeta_0\cot\alpha - \Gamma_r(\xi,t) \right]\), where \(\Gamma_r(\xi,t) = \gamma_3\xi^3 + \gamma_2\xi^2 + \gamma_1\xi + \gamma_t t + \gamma_0\).
	
	\begin{definition}[Kernel level-set family]
		\label{def:NL-curve-family}
		For each \(\rho\in\mathbb R\), define
		\begin{equation*}
			\mathcal C_\rho^{(\alpha,r)}
			=
			\left\{
			(x,t)\in\mathbb R^2:
			\operatorname{Im}
			\mathcal D_\alpha^{(r)}(x,t)
			=
			\rho
			\right\}.
		\end{equation*}
		Equivalently,
		\begin{equation}
			\label{eq:C-rho-implicit}
			\gamma_3\xi^3
			+
			\gamma_2\xi^2
			+
			\gamma_1\xi
			+
			\gamma_t t
			+
			\gamma_0
			=
			\zeta_0\cot\alpha
			-
			\frac{\rho}{2\lambda_0}.
		\end{equation}
		
		An unbounded connected component is called an \emph{asymptotically nonlocalized component} if \(|\xi|\longrightarrow\infty\) along that component.
	\end{definition}
	
	\begin{lemma}
		\label{lem:C-rho-unbounded-graph}
		Assume \(\gamma_t\neq0\). Then every level set \(\mathcal C_\rho^{(\alpha,r)}\) is a connected graph parameterized by \(\xi\in\mathbb R\):
		\begin{equation}
			\label{eq:C-rho-parametric}
			\begin{aligned}
				t_\rho(\xi)
				={}&
				\frac{
					\zeta_0\cot\alpha
					-\rho/(2\lambda_0)
					-\gamma_0
					-\gamma_1\xi
					-\gamma_2\xi^2
					-\gamma_3\xi^3
				}{
					\gamma_t
				},
				\\
				x_\rho(\xi)
				={}&
				\xi-\nu_1t_\rho(\xi).
			\end{aligned}
		\end{equation}
		Since \(\gamma_3\neq0\), one has
		\begin{equation*}
			\begin{aligned}
				t_\rho(\xi)
				&=
				-\frac{\gamma_3}{\gamma_t}\xi^3
				+
				O(\xi^2),
				\\
				x_\rho(\xi)
				&=
				\frac{\nu_1\gamma_3}{\gamma_t}\xi^3
				+
				O(\xi^2)
			\end{aligned}
		\end{equation*}
		as \(|\xi|\to\infty\).
	\end{lemma}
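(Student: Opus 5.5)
The plan is to use that the defining equation \eqref{eq:C-rho-implicit} is affine in \(t\) once \(\xi\) is fixed. I work in the coordinates \((\xi,t)\), where \(\xi=x+\nu_1t\). The map \((x,t)\mapsto(\xi,t)\) is a linear bijection of \(\mathbb R^2\) with inverse \(x=\xi-\nu_1t\). So it suffices to describe the level set in the \((\xi,t)\)-plane and then transport it back.

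First, I recall from Lemma~\ref{lem:Z-general-complex-r} that all coefficients \(\gamma_3,\gamma_2,\gamma_1,\gamma_t,\gamma_0,\zeta_0\) are real. Also \(\lambda_0>0\), \(\rho\in\mathbb R\), and \(\cot\alpha\in\mathbb R\) since \(\sin\alpha\neq0\). Hence for real \((x,t)\), \(\operatorname{Im}\mathcal D_\alpha^{(r)}=2\lambda_0[\zeta_0\cot\alpha-\Gamma_r(\xi,t)]\), exactly as read off from \eqref{eq:Dalpha-explicit-rational}. The condition \(\operatorname{Im}\mathcal D_\alpha^{(r)}=\rho\) is therefore equivalent to \eqref{eq:C-rho-implicit}.

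Second, because \(\gamma_t\neq0\), I solve \eqref{eq:C-rho-implicit} uniquely for \(t\) as a function of \(\xi\), which gives \(t_\rho(\xi)\) in \eqref{eq:C-rho-parametric}, and then set \(x_\rho(\xi)=\xi-\nu_1t_\rho(\xi)\). This gives both inclusions at once. Every point of \(\mathcal C_\rho^{(\alpha,r)}\) has the form \((x_\rho(\xi),t_\rho(\xi))\) with \(\xi\) equal to its own characteristic coordinate. Conversely, each such point satisfies \(x+\nu_1t=\xi\) together with \eqref{eq:C-rho-implicit}. The map \(\xi\mapsto(x_\rho,t_\rho)\) is polynomial, hence continuous, and it is injective because \(\xi\) is recovered as \(x+\nu_1t\). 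So the level set is the image of \(\mathbb R\) under a continuous injection, hence a connected graph over the \(\xi\)-axis, homeomorphic to \(\mathbb R\).

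Third, the asymptotics follow by reading off the leading terms. By Lemma~\ref{lem:Z-general-complex-r}, \(\gamma_3=-\chi_0\varrho/3\neq0\). Hence \(t_\rho(\xi)=-(\gamma_3/\gamma_t)\xi^3+O(\xi^2)\). Substituting into \(x_\rho=\xi-\nu_1t_\rho\) gives \(x_\rho=(\nu_1\gamma_3/\gamma_t)\xi^3+O(\xi^2)\), since the linear term \(\xi\) is absorbed into \(O(\xi^2)\) as \(|\xi|\to\infty\). In particular \(|t_\rho|\to\infty\), so the curve is unbounded.

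There is no genuine obstacle here. The only point needing care is the reality of the coefficients, which guarantees that the level set is exactly the zero set of a real affine-in-\(t\) equation; I take this from Lemma~\ref{lem:Z-general-complex-r}.
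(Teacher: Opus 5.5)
Your proposal is correct and follows essentially the same route as the paper. You use \(\gamma_t\neq0\) to solve the level-set equation, which is affine in \(t\), uniquely for \(t\) as a function of \(\xi\), and you read off the cubic asymptotics from \(\gamma_3=-\chi_0|r|^2/3\neq0\). Your extra checks, namely that the coefficients are real and that the parametrization is inverted continuously by \(\xi=x+\nu_1t\), only make explicit what the paper's brief proof leaves implicit.
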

	
	\begin{proof}
		When \(\gamma_t\neq0\), \eqref{eq:C-rho-implicit} can be solved uniquely for \(t\) for every \(\xi\in\mathbb R\), giving \eqref{eq:C-rho-parametric}. The graph is continuous and contains points with arbitrarily large \(|\xi|\).
		
		The cubic asymptotics follow from
		\[
		\gamma_3 = -\frac{\chi_0}{3}|r|^2 \neq0.
		\]
	\end{proof}
	
	For the imaginary-axis approach \(\alpha=\pi/2\), the central level set is \(\mathcal C_0^{(\pi/2,r)}: \qquad \gamma_3\xi^3 + \gamma_2\xi^2 + \gamma_1\xi + \gamma_t t + \gamma_0 = 0\).
	
	\begin{proposition}
		\label{prop:NL-asymptotic-limit}
		Assume \(\gamma_t\neq0\). Along every level-set graph \(\mathcal C_\rho^{(\alpha,r)}\),
		\begin{equation}
			\label{eq:NL-asymptotic-normalized}
			\lim_{\substack{|\xi|\to\infty\\
					(x,t)\in\mathcal C_\rho^{(\alpha,r)}}}
			\frac{
				u_j^{(R,2;\alpha,r)}(x,t)
			}{
				u_{j,0}(x,t)
			}
			=
			\mathcal P_j^{(D)}
			:=
			1-2h_j(\chi_0),
			\qquad
			j=1,2.
		\end{equation}
		Thus the pure double-root modulation does not approach the plane-wave background along these unbounded curves.
	\end{proposition}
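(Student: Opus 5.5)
The plan is to work directly from the explicit rational representation in Proposition~\ref{prop:explicit-rational-double}:
\[
\frac{u_j^{(R,2;\alpha,r)}}{u_{j,0}}=1+\frac{\mathcal N_j^{(r)}}{\mathcal D_\alpha^{(r)}},
\]
and to evaluate the quotient along the graph of Lemma~\ref{lem:C-rho-unbounded-graph}. On the level set $\mathcal C_\rho^{(\alpha,r)}$, the imaginary part of the denominator is frozen at the constant $\rho$. Hence, by \eqref{eq:Dalpha-real-part-rational}, we have exactly
\[
\mathcal D_\alpha^{(r)}=-\lambda_0\mathcal L_r(\xi)+{\rm i}\rho .
\]
This makes the denominator a quadratic polynomial in $\xi$ alone. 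Since $\mathcal N_j^{(r)}$ also depends only on $\xi$ by \eqref{eq:Nj-expanded-rational}, the quotient along the curve becomes a ratio of two polynomials in $\xi$ of degree at most two. The key step is therefore just a comparison of leading coefficients. The cubic growth of $t_\rho(\xi)$ and $x_\rho(\xi)$ plays no role except to guarantee that $|\xi|\to\infty$ is attainable along the curve, which is the content of Lemma~\ref{lem:C-rho-unbounded-graph}.

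Next I extract the leading coefficients. From $\mathcal L_r(\xi)=|r|^2\xi^2-2\operatorname{Im}(r)\xi+1$, the denominator is $-\lambda_0|r|^2\xi^2+O(\xi)$. From \eqref{eq:Nj-expanded-rational}, the numerator is $2\lambda_0|r|^2h_{j,0}\xi^2+O(\xi)$. Indeed, the only other $\xi$-dependent terms there, namely $-2\operatorname{Im}(r)h_{j,0}\xi$ and $-{\rm i}|r|^2h_{j,1}\xi$, are linear in $\xi$. Since $r\neq0$, we have $|r|^2>0$. Dividing numerator and denominator by $\xi^2$ then gives
\[
\frac{\mathcal N_j^{(r)}}{\mathcal D_\alpha^{(r)}}\longrightarrow\frac{2\lambda_0|r|^2h_{j,0}}{-\lambda_0|r|^2}=-2h_{j,0}=-2h_j(\chi_0).
\]
Consequently the ratio $u_j^{(R,2;\alpha,r)}/u_{j,0}$ tends to $1-2h_j(\chi_0)=\mathcal P_j^{(D)}$, which is \eqref{eq:NL-asymptotic-normalized}. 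The limit holds for both signs of $\xi\to\pm\infty$ and is independent of $\rho$, $\alpha$ and $r$.

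Finally, for the nonbackground claim, I observe that $h_j(\chi_0)\neq0$ because $k\neq0$. Hence $\mathcal P_j^{(D)}\neq1$, so the field does not relax to the plane wave along these curves. If a modulus statement is wanted, $|\mathcal P_j^{(D)}|\neq1$ can be noted when $h_j(\chi_0)\neq1$, but the proposition only asserts the limit of the normalized ratio.

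The main point requiring care is the observation that $\operatorname{Im}\mathcal D_\alpha^{(r)}$ is exactly constant on the level set. Without it, the cubic term $\gamma_3\xi^3$ would dominate the denominator and force the ratio to $1$. I would therefore state this identity explicitly, citing the definition of $\mathcal C_\rho^{(\alpha,r)}$ together with the decomposition $\operatorname{Im}\mathcal D_\alpha^{(r)}=2\lambda_0[\zeta_0\cot\alpha-\Gamma_r]$. Everything after that is routine degree counting, using the positivity $\mathcal L_r>0$ from \eqref{eq:Lr-positive-general} to ensure that the denominator never vanishes along the way.
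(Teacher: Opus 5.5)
Your proposal is correct and follows essentially the same route as the paper. On $\mathcal C_\rho^{(\alpha,r)}$ the denominator reduces to $-\lambda_0\mathcal L_r(\xi)+{\rm i}\rho$, and comparing the $\xi^2$ coefficients $-\lambda_0|r|^2$ and $2\lambda_0|r|^2h_j(\chi_0)$ of denominator and numerator gives the limit $1-2h_j(\chi_0)$. Your remarks that $\operatorname{Im}\mathcal D_\alpha^{(r)}$ is exactly constant on the level set, and that $\gamma_t\neq0$ is needed only to make $|\xi|\to\infty$ attainable, spell out what the paper's short argument leaves implicit.
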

	
	\begin{proof}
		As \(|\xi|\to\infty\), \(\mathcal L_r(\xi) = |r|^2\xi^2+O(|\xi|)\). Therefore, on a level set, \(\mathcal D_\alpha^{(r)} = -\lambda_0|r|^2\xi^2 + O(|\xi|) + {\rm i}\rho\). Moreover, \(\mathcal N_j^{(r)} = 2\lambda_0 |r|^2h_j(\chi_0)\xi^2 + O(|\xi|)\). Dividing these two expressions gives
		\[
		\frac{ \mathcal N_j^{(r)} }{ \mathcal D_\alpha^{(r)} } \longrightarrow -2h_j(\chi_0),
		\]
		which proves \eqref{eq:NL-asymptotic-normalized}.
	\end{proof}
	
	\begin{corollary}
		\label{cor:double-plateau-reciprocity}
		The two background-normalized double-root plateau factors satisfy
		\begin{equation}
			\label{eq:double-plateau-explicit-factors}
			\mathcal P_1^{(D)}
			=
			\frac{\chi_0-k}{\chi_0+k},
			\qquad
			\mathcal P_2^{(D)}
			=
			\frac{\chi_0+k}{\chi_0-k},
		\end{equation}
		and hence obey the exact reciprocal law
		\begin{equation}
			\label{eq:double-plateau-reciprocal-law}
			\mathcal P_1^{(D)}\mathcal P_2^{(D)}=1.
		\end{equation}
		Neither factor equals \(1\) or \(-1\). Consequently,
		\begin{equation}
			\label{eq:double-intensity-plateau-product}
			|\mathcal P_1^{(D)}|^2
			|\mathcal P_2^{(D)}|^2
			=1,
		\end{equation}
		and precisely one normalized amplitude plateau is larger than one while the other is smaller than one.
	\end{corollary}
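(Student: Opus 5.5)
The plan is to read the two plateau factors directly off Proposition~\ref{prop:NL-asymptotic-limit}. There \(\mathcal P_j^{(D)}=1-2h_j(\chi_0)\) with \(h_1(\chi)=k/(k+\chi)\) and \(h_2(\chi)=k/(k-\chi)\). First I will compute
\[
1-\frac{2k}{k+\chi_0}=\frac{\chi_0-k}{\chi_0+k},
\qquad
1-\frac{2k}{k-\chi_0}=\frac{-k-\chi_0}{k-\chi_0}=\frac{\chi_0+k}{\chi_0-k}.
\]
This gives \eqref{eq:double-plateau-explicit-factors}. Both denominators are nonzero because \(\chi_0\ne\pm k\), which holds by Proposition~\ref{prop:no-pm-k-root} and the standing assumption \eqref{eq:exact-real-double}. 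Multiplying the two fractions gives the reciprocal law \eqref{eq:double-plateau-reciprocal-law}.

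Next I will show that neither factor equals \(\pm1\). The condition \(\mathcal P_1^{(D)}=1\) would force \(\chi_0-k=\chi_0+k\), that is \(k=0\), which is excluded. The condition \(\mathcal P_1^{(D)}=-1\) would force \(\chi_0=0\), which is also excluded. By reciprocity the same conclusions hold for \(\mathcal P_2^{(D)}\).

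Because \(\chi_0\) and \(k\) are real (we are in the real double-root sector), both factors are real. Taking squared moduli of \eqref{eq:double-plateau-reciprocal-law} then gives \eqref{eq:double-intensity-plateau-product}.

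For the final claim, write \(p=|\mathcal P_1^{(D)}|>0\). Since \(\mathcal P_1^{(D)}\) is real and different from \(\pm1\), we have \(p\ne1\). The reciprocal law gives \(|\mathcal P_2^{(D)}|=1/p\), so exactly one of the two moduli exceeds one and the other is smaller than one.

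No step is a real obstacle. The only point needing care is to invoke the reality of \(\chi_0\) and the exclusions \(\chi_0\notin\{0,\pm k\}\), so that \(\pm1\) is excluded and the moduli are not both equal to one.
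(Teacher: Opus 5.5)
Your proposal is correct and follows the paper's proof essentially step for step. You substitute \(h_1,h_2\) into \(\mathcal P_j^{(D)}=1-2h_j(\chi_0)\), use \(\chi_0\neq\pm k\) for the denominators, multiply to get the reciprocal law, exclude \(\pm1\) via \(k\neq0\) and \(\chi_0\neq0\), and then combine the reality of the factors with reciprocity to conclude that exactly one modulus exceeds one. One small observation: \(|\mathcal P_1^{(D)}|^2|\mathcal P_2^{(D)}|^2=1\) follows from the reciprocal law without using reality, which is needed only to turn \(\mathcal P_1^{(D)}\neq\pm1\) into \(|\mathcal P_1^{(D)}|\neq1\), and that is exactly where you use it.
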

	
	\begin{proof}
		Substitution of
		\[
		h_1(\chi_0)=\frac{k}{k+\chi_0},
		\qquad
		h_2(\chi_0)=\frac{k}{k-\chi_0}
		\]
		into \(\mathcal P_j^{(D)}=1-2h_j(\chi_0)\) gives
		\[
		\begin{aligned}
			\mathcal P_1^{(D)}
			&=
			1-\frac{2k}{k+\chi_0}
			=
			\frac{\chi_0-k}{\chi_0+k},
			\\
			\mathcal P_2^{(D)}
			&=
			1-\frac{2k}{k-\chi_0}
			=
			\frac{\chi_0+k}{\chi_0-k}.
		\end{aligned}
		\]
		The denominators are nonzero because \(\chi_0\neq\pm k\). Multiplication gives \eqref{eq:double-plateau-reciprocal-law}.
		
		If either factor were \(1\), then \(k=0\), contrary to \eqref{eq:physical-region}. If either factor were \(-1\), then \(\chi_0=0\), which is excluded for an effective double branch point. Thus both factors are real and have modulus different from one. The reciprocal law then implies that one modulus is greater than one and the other is less than one; squaring gives \eqref{eq:double-intensity-plateau-product}.
	\end{proof}
	
	Consequently,
	\begin{equation*}
		\lim_{\substack{|\xi|\to\infty\\
				(x,t)\in\mathcal C_\rho^{(\alpha,r)}}}
		\left|
		u_j^{(R,2;\alpha,r)}(x,t)
		\right|
		=
		a_j
		\left|
		1-2h_j(\chi_0)
		\right|,
		\qquad
		j=1,2.
	\end{equation*}
	The plateau is independent of \(r\), \(\alpha\), and \(\rho\). These parameters modify only the position and lower-order geometry of the level-set graph.
	
	\begin{remark}
		\label{rem:double-linear-directions}
		Assume \(\gamma_t\neq0\). Along a straight ray \(x=vt+O(1), \qquad |t|\to\infty\), if \(v+\nu_1\neq0\), then \(\Gamma_r(\xi,t)=O(t^3)\), whereas the numerator is \(O(t^2)\). Hence the normalized correction is \(O(t^{-1})\).
		
		If \(v=-\nu_1\), then \(\xi=O(1)\), while \(\Gamma_r(\xi,t) = \gamma_t t+O(1)\), and the correction again tends to zero. Thus the solution returns to the plane-wave background along generic linear directions while retaining a finite nonbackground plateau on the cubic level-set graphs.
	\end{remark}
	
	\begin{remark}
		\label{rem:exceptional-NL-level-sets}
		If \(\gamma_t=0\), the level-set equation reduces to a cubic equation for \(\xi\):
		\[
		\gamma_3\xi^3 + \gamma_2\xi^2 + \gamma_1\xi + \gamma_0 = \zeta_0\cot\alpha - \frac{\rho}{2\lambda_0}.
		\]
		Since \(\gamma_3\neq0\), it has only finitely many real roots. Each root gives a characteristic line with constant \(\xi\), so no component satisfies \(|\xi|\to\infty\). The plateau theorem above therefore applies precisely to the generic case \(\gamma_t\neq0\).
	\end{remark}
	
	\begin{remark}
		\label{rem:approach-direction-NL-curves}
		For the real representative \(r=r_{\rm c}\), one has \(\gamma_2=\gamma_0=0\), and the level-set equation is centered and odd in \(\xi\), apart from the term \(\gamma_t t\).
		
		For a genuinely complex \(r\), the terms
		\[
		\gamma_2\xi^2 \qquad\text{and}\qquad \gamma_0
		\]
		shift the curve and destroy its odd symmetry. Nevertheless, the leading cubic coefficient remains nonzero, and the asymptotic plateau is unchanged.
	\end{remark}
	
	\subsection{The characteristic-line plateau for triple-root solutions}
	\label{subsec:triple-nonlocal}
	\label{subsec:triple-nonlocal-curves}

	We now restrict attention to the purely imaginary approach
	\[
	\eta={\rm i}\varepsilon, \qquad \varepsilon\in\mathbb R, \qquad \varepsilon\to0,
	\]
	and consider an admissible triple-root leading vector \({\bf y}_R = {\bf Y}_0 + M_1{\bf Y}_1 + \frac{M_2}{2}{\bf Y}_2, \qquad M_2\neq0\), where \(M_1\) may vanish and the coefficients satisfy the null condition \eqref{eq:triple-null-general}.
	
	Let
	\[
	q_{\rm tri}=\frac{\chi_0}{k}.
	\]
	The triple-root relations imply \(\nu_1 = \nu'(\chi_0) = \frac{1}{3k^2}, \qquad \nu_2 = \nu''(\chi_0) = -\frac{2}{3q_{\rm tri}k^3} \neq0\). We introduce the characteristic coordinate \(\xi = x+\nu_1t = x+\frac{t}{3k^2}\). The distinguished triple-root characteristic line is therefore
	\begin{equation*}
		\Gamma_T
		=
		\left\{
		(x,t)\in\mathbb R^2:
		\xi=0
		\right\}
		=
		\left\{
		(x,t)\in\mathbb R^2:
		x+\frac{t}{3k^2}=0
		\right\}.
	\end{equation*}
	
	For the asymptotic calculation, define \(\nu_\ell = \nu^{(\ell)}(\chi_0), \qquad \ell=1,\ldots,5\). Since
	\[
	\nu(\chi) = \frac{S}{2} + \frac{2-kR}{2\chi},
	\]
	one has
	\begin{equation*}
		\begin{aligned}
			\nu_2
			&=
			\frac{2-kR}{\chi_0^3},
			&
			\nu_3
			&=
			-\frac{3(2-kR)}{\chi_0^4},
			\\
			\nu_4
			&=
			\frac{12(2-kR)}{\chi_0^5},
			&
			\nu_5
			&=
			-\frac{60(2-kR)}{\chi_0^6}.
		\end{aligned}
	\end{equation*}
	
	Set
	\begin{equation*}
		{\bf v}_0
		=
		\begin{pmatrix}
			\lambda_0\\
			a_1h_1(\chi_0)\\
			a_2h_2(\chi_0)
		\end{pmatrix},
		\qquad
		{\bf v}_n
		=
		\begin{pmatrix}
			0\\
			a_1h_1^{(n)}(\chi_0)\\
			a_2h_2^{(n)}(\chi_0)
		\end{pmatrix},
		\qquad
		n\ge1,
	\end{equation*}
	and introduce the diagonal unitary matrix
	\begin{equation*}
		\mathscr U(x,t)
		=
		\operatorname{diag}
		\left(
		E_0,\,
		\frac{u_{1,0}}{a_1}E_0,\,
		\frac{u_{2,0}}{a_2}E_0
		\right),
		\qquad
		E_0=E(\chi_0).
	\end{equation*}
	Since \(\mathscr U\) is diagonal and unitary, it commutes with both \(J\) and \(K\), and hence it does not affect any \(J\)- or \(K\)-inner product.
	
	Along \(\Gamma_T\), where \(\xi=0\), the normalized exponential derivatives satisfy
	\begin{equation}
		\label{eq:triple-E-line}
		\begin{aligned}
			\frac{E_1}{E_0}
			&=0,
			\\
			\frac{E_2}{E_0}
			&=
			{\rm i}\nu_2t,
			\\
			\frac{E_3}{E_0}
			&=
			{\rm i}\nu_3t,
			\\
			\frac{E_4}{E_0}
			&=
			-3\nu_2^2t^2
			+
			{\rm i}\nu_4t,
			\\
			\frac{E_5}{E_0}
			&=
			-10\nu_2\nu_3t^2
			+
			{\rm i}\nu_5t.
		\end{aligned}
	\end{equation}
	Consequently, \({\bf Y}_n = \mathscr U \sum_{m=0}^n \binom{n}{m} \frac{E_{n-m}}{E_0} {\bf v}_m\).
	
	Define \(c_T = \frac{{\rm i}\nu_2M_2}{2}, \qquad {\bf r}_T = {\bf v}_0 + M_1{\bf v}_1 + \frac{M_2}{2}{\bf v}_2\). Then the leading eigenvector has the exact representation
	\begin{equation}
		\label{eq:yR-line-expansion}
		{\bf y}_R\big|_{\Gamma_T}
		=
		\mathscr U
		\left(
		c_Tt\,{\bf v}_0
		+
		{\bf r}_T
		\right).
	\end{equation}
	Because \(M_2\neq0\) and \(\nu_2\neq0\), \(c_T\neq0\).
	
	To expand the correction vector \({\bf y}_\mu\), define
	\begin{equation*}
		\begin{aligned}
			{\bf p}_{T,2}
			={}&
			-
			\left(
			\frac{\nu_2^2\mathcal M_4}{8}
			+
			\frac{\nu_2\nu_3\mathcal M_5}{12}
			\right)
			{\bf v}_0
			-
			\frac{\nu_2^2\mathcal M_5}{8}
			{\bf v}_1,
		\end{aligned}
	\end{equation*}
	and
	\begin{equation*}
		\begin{aligned}
			{\bf p}_{T,1}
			={}&
			\frac{{\rm i}\nu_2M_2}{4\lambda_0}
			\begin{pmatrix}
				1\\0\\0
			\end{pmatrix}
			\\
			&+
			\frac{{\rm i}\mathcal M_3}{6}
			\left(
			\nu_3{\bf v}_0
			+
			3\nu_2{\bf v}_1
			\right)
			\\
			&+
			\frac{{\rm i}\mathcal M_4}{24}
			\left(
			\nu_4{\bf v}_0
			+
			4\nu_3{\bf v}_1
			+
			6\nu_2{\bf v}_2
			\right)
			\\
			&+
			\frac{{\rm i}\mathcal M_5}{120}
			\left(
			\nu_5{\bf v}_0
			+
			5\nu_4{\bf v}_1
			+
			10\nu_3{\bf v}_2
			+
			10\nu_2{\bf v}_3
			\right).
		\end{aligned}
	\end{equation*}
	Using \eqref{eq:ymu-triple} and \eqref{eq:triple-E-line}, one obtains \({\bf y}_\mu\big|_{\Gamma_T} = \mathscr U \left( t^2{\bf p}_{T,2} + t{\bf p}_{T,1} + {\bf p}_{T,0} \right)\), where \({\bf p}_{T,0}\) is independent of \(t\). Its explicit form is not needed for the leading asymptotic balance.
	
	Define \(\mathcal Z_T(t) = \left. \langle{\bf y}_R|J|{\bf y}_\mu\rangle \right|_{\Gamma_T}\). At a triple branch point,
	\[
	\langle{\bf v}_0|J|{\bf v}_0\rangle=0, \qquad \langle{\bf v}_0|J|{\bf v}_1\rangle=0.
	\]
	Since \({\bf p}_{T,2}\) belongs to \(\operatorname{span}\{{\bf v}_0,{\bf v}_1\}\), the nominal \(t^3\) term in \(\mathcal Z_T\) vanishes: \(\langle{\bf v}_0|J|{\bf p}_{T,2}\rangle=0\). It follows that
	\begin{equation}
		\label{eq:ZT-asymptotic}
		\mathcal Z_T(t)
		=
		\zeta_{T,2}t^2
		+
		O(|t|),
		\qquad
		|t|\to\infty,
	\end{equation}
	where \(\zeta_{T,2} = c_T^* \langle{\bf v}_0|J|{\bf p}_{T,1}\rangle + \langle{\bf r}_T|J|{\bf p}_{T,2}\rangle\).
	
	We also define \(\kappa_0 = \langle{\bf v}_0|K|{\bf v}_0\rangle = \mu_0 + Ah_1(\chi_0)^2 + Bh_2(\chi_0)^2\). The \(J\)-nullity of \({\bf v}_0\) gives \(\mu_0 = Ah_1(\chi_0)^2 + Bh_2(\chi_0)^2\), and therefore \(\kappa_0=2\mu_0>0\). Equation~\eqref{eq:yR-line-expansion} then gives
	\begin{equation}
		\label{eq:KR-line-asymptotic}
		\left.
		\langle{\bf y}_R|K|{\bf y}_R\rangle
		\right|_{\Gamma_T}
		=
		|c_T|^2\kappa_0t^2
		+
		O(|t|).
	\end{equation}
	
	For later use, set \(d_T = - \frac{|c_T|^2\kappa_0}{2\lambda_0} - 2{\rm i}\lambda_0 \operatorname{Im}\zeta_{T,2}\). Its real part is \(\operatorname{Re}d_T = - \frac{|c_T|^2\kappa_0}{2\lambda_0} <0\), and hence \(d_T\neq0\).
	
	\begin{proposition}
		\label{prop:triple-nonlocal-corridor}
		Let \(M_2\neq0\), and suppose that \(M_1,M_2\) satisfy the triple-root null condition \eqref{eq:triple-null-general}. For the imaginary-axis real-spectrum limit, one has
		\begin{equation}
			\label{eq:mR-triple-line-asymptotic}
			\left.
			m_R^{(3)}
			\right|_{\Gamma_T}
			=
			d_Tt^2
			+
			O(|t|),
			\qquad
			|t|\to\infty.
		\end{equation}
		
		Let \(\Phi_{1,R}=\phi_R, \qquad \Phi_{2,R}=\varphi_R\). Then
		\begin{equation}
			\label{eq:triple-numerator-line}
			\left.
			\frac{
				2\Phi_{j,R}\psi_R^*
			}{
				u_{j,0}
			}
			\right|_{\Gamma_T}
			=
			2|c_T|^2\lambda_0h_j(\chi_0)t^2
			+
			O(|t|),
			\qquad
			j=1,2.
		\end{equation}
		Consequently,
		\begin{equation}
			\label{eq:triple-line-normalized-limit}
			\lim_{\substack{|t|\to\infty\\(x,t)\in\Gamma_T}}
			\frac{
				u_j^{(R,3;{\rm i})}(x,t)
			}{
				u_{j,0}(x,t)
			}
			=
			\mathcal P_j^{(T)}
			:=
			1+
			\frac{
				2|c_T|^2\lambda_0h_j(\chi_0)
			}{
				d_T
			},
			\qquad
			j=1,2.
		\end{equation}
		In particular, \(\mathcal P_j^{(T)}\neq1, \qquad j=1,2\). Thus the rational modulation does not decay to the plane-wave background along the unbounded characteristic line \(\Gamma_T\).
	\end{proposition}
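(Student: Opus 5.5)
The plan is to work entirely in the unitary frame \(\mathscr U\) along \(\Gamma_T\). Since \(\mathscr U\) commutes with \(J\) and \(K\), every inner product entering \(m_R^{(3)}\) in \eqref{eq:mR-triple} can be computed from the polynomial vectors \(c_Tt{\bf v}_0+{\bf r}_T\) and \(t^2{\bf p}_{T,2}+t{\bf p}_{T,1}+{\bf p}_{T,0}\).

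\textbf{Step 1: the \(K\)-term and the correction pairing.} The \(K\)-term is already given by \eqref{eq:KR-line-asymptotic}. For the \(J\)-terms, I will use \eqref{eq:ZT-asymptotic}. The \(t^3\) coefficient \(c_T^*\langle{\bf v}_0|J|{\bf p}_{T,2}\rangle\) vanishes because \(g_{00}=g_{01}=0\) from Lemma~\ref{lem:triple-Gram-structure} (these are exactly \(\langle{\bf v}_0|J|{\bf v}_0\rangle\) and \(\langle{\bf v}_0|J|{\bf v}_1\rangle\)). Hence \(\langle{\bf y}_R|J|{\bf y}_\mu\rangle=\zeta_{T,2}t^2+O(|t|)\), and by Hermitian symmetry \(\langle{\bf y}_\mu|J|{\bf y}_R\rangle=\zeta_{T,2}^*t^2+O(|t|)\). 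Their difference is therefore \(-2{\rm i}\operatorname{Im}\zeta_{T,2}\,t^2+O(|t|)\).

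\textbf{Step 2: the denominator.} Inserting Step 1 and \eqref{eq:KR-line-asymptotic} into \eqref{eq:mR-triple} gives
\[
m_R^{(3)}=-2{\rm i}\lambda_0\operatorname{Im}\zeta_{T,2}\,t^2-\frac{|c_T|^2\kappa_0}{2\lambda_0}t^2+O(|t|)=d_Tt^2+O(|t|),
\]
which is \eqref{eq:mR-triple-line-asymptotic}.

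\textbf{Step 3: the numerator.} By \eqref{eq:yR-line-expansion}, \(\psi_R=E_0(c_Tt\lambda_0+r_{T,0})\) and \(\Phi_{j,R}=E_0\frac{u_{j,0}}{a_j}(c_Tt\,a_jh_j(\chi_0)+r_{T,j})\), where the \(r_{T,\cdot}\) are constant components of \({\bf r}_T\). Since \(|E_0|=1\), dividing \(2\Phi_{j,R}\psi_R^*\) by \(u_{j,0}\) gives
\[
2\bigl(c_Th_j(\chi_0)t+O(1)\bigr)\bigl(c_T^*\lambda_0t+O(1)\bigr)=2|c_T|^2\lambda_0h_j(\chi_0)t^2+O(|t|),
\]
which is \eqref{eq:triple-numerator-line}.

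\textbf{Step 4: the limit.} The solution formula gives \(u_j^{(R,3;{\rm i})}/u_{j,0}=1+2\Phi_{j,R}\psi_R^*/(u_{j,0}m_R^{(3)})\). Dividing both expansions by \(t^2\) and using \(d_T\neq0\) (because \(\operatorname{Re}d_T<0\)) yields \eqref{eq:triple-line-normalized-limit}. Finally, \(\mathcal P_j^{(T)}\neq1\) because \(c_T\neq0\), \(\lambda_0>0\), and \(h_j(\chi_0)=k/(k\pm\chi_0)\neq0\).

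\textbf{Main obstacle.} The delicate point is justifying the \(O(|t|)\) remainder in \eqref{eq:ZT-asymptotic}. This requires confirming from \eqref{eq:triple-E-line} that \({\bf y}_\mu|_{\Gamma_T}\) has degree at most two in \(t\): the quadratic terms come only from \(E_4,E_5\), and the \(\mathcal M_3\) block has only a linear term. It also requires checking that the cross term \(\langle{\bf r}_T|J|{\bf p}_{T,2}\rangle t^2\) is correctly included in \(\zeta_{T,2}\). I would verify these degree counts by tracking the binomial expansion \({\bf Y}_n=\mathscr U\sum_m\binom nm(E_{n-m}/E_0){\bf v}_m\) term by term. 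The rest of the argument is a direct computation.
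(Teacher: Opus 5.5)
Your proposal is correct and follows the paper's proof. You substitute \eqref{eq:ZT-asymptotic} and \eqref{eq:KR-line-asymptotic} into \eqref{eq:mR-triple}, read off the linear-in-$t$ components of ${\bf y}_R$ from \eqref{eq:yR-line-expansion} for the numerator, and divide using $\operatorname{Re}d_T<0$. Your check that ${\bf y}_\mu|_{\Gamma_T}$ has $t$-degree at most two, via ${\bf Y}_n=\mathscr U\sum_m\binom nm(E_{n-m}/E_0){\bf v}_m$ with $t^2$ terms only from $E_4,E_5$, is the computation behind the paper's definitions of ${\bf p}_{T,2}$ and ${\bf p}_{T,1}$, and it goes through exactly as you describe.
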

	
	\begin{proof}
		From \eqref{eq:mR-triple},
		\[
		m_R^{(3)} = \lambda_0 \left( \mathcal Z_T^* - \mathcal Z_T \right) - \frac{ \langle{\bf y}_R|K|{\bf y}_R\rangle }{ 2\lambda_0 }.
		\]
		Substitution of \eqref{eq:ZT-asymptotic} and \eqref{eq:KR-line-asymptotic} gives
		\[
		\begin{aligned}
			m_R^{(3)}
			={}&
			-2{\rm i}\lambda_0
			\operatorname{Im}\zeta_{T,2}\,t^2
			-
			\frac{
				|c_T|^2\kappa_0
			}{
				2\lambda_0
			}t^2
			+
			O(|t|),
		\end{aligned}
		\]
		which proves \eqref{eq:mR-triple-line-asymptotic}.
		
		From \eqref{eq:yR-line-expansion}, the leading components of \({\bf y}_R\) are \(\psi_R = c_Tt\,\lambda_0E_0+O(1)\), \(\phi_R = c_Tt\,u_{1,0}h_1(\chi_0)E_0+O(1)\), and \(\varphi_R = c_Tt\,u_{2,0}h_2(\chi_0)E_0+O(1)\). Since \(|E_0|=1\), these expressions give \eqref{eq:triple-numerator-line}. Dividing \eqref{eq:triple-numerator-line} by \eqref{eq:mR-triple-line-asymptotic} proves \eqref{eq:triple-line-normalized-limit}.
		
		Finally, \(c_T\neq0, \qquad \lambda_0>0, \qquad h_j(\chi_0)\neq0, \qquad d_T\neq0\). Hence the correction term in \eqref{eq:triple-line-normalized-limit} is nonzero, and \(\mathcal P_j^{(T)}\neq1\).
	\end{proof}
	
	\begin{corollary}
		\label{cor:triple-plateau-component-constraint}
		Define
		\begin{equation}
			\label{eq:triple-common-plateau-parameter}
			C_T
			=
			\frac{2|c_T|^2\lambda_0}{d_T}.
		\end{equation}
		Then every admissible triple-root plateau with \(M_2\neq0\) satisfies
		\begin{equation}
			\label{eq:triple-plateau-component-constraint}
			(1+q_{\rm tri})
			\bigl(\mathcal P_1^{(T)}-1\bigr)
			=
			(1-q_{\rm tri})
			\bigl(\mathcal P_2^{(T)}-1\bigr)
			=
			C_T.
		\end{equation}
		Thus the two complex plateau factors are linked by one common parameter and cannot be prescribed independently.
	\end{corollary}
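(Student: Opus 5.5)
The plan is to read the constraint directly off the plateau formula \eqref{eq:triple-line-normalized-limit} of Proposition~\ref{prop:triple-nonlocal-corridor}. With the common parameter \(C_T\) defined in \eqref{eq:triple-common-plateau-parameter}, that formula can be rewritten as
\[
\mathcal P_j^{(T)}-1=C_T\,h_j(\chi_0),\qquad j=1,2.
\]
The point is that \(c_T\), \(\lambda_0\) and \(d_T\) are the same for both components. The component index \(j\) therefore enters only through the elementary factor \(h_j(\chi_0)\).

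The next step is to express these factors through \(q_{\rm tri}=\chi_0/k\). This uses Theorem~\ref{thm:triple-characterization}, which gives \(\chi_0=q_{\rm tri}k\), together with the definitions \(h_1(\chi)=k/(k+\chi)\) and \(h_2(\chi)=k/(k-\chi)\). Substituting,
\[
h_1(\chi_0)=\frac{1}{1+q_{\rm tri}},\qquad h_2(\chi_0)=\frac{1}{1-q_{\rm tri}}.
\]
Both denominators are nonzero because \(|q_{\rm tri}|<1\). This is also guaranteed by Proposition~\ref{prop:no-pm-k-root}.

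Finally I multiply the first identity by \(1+q_{\rm tri}\) and the second by \(1-q_{\rm tri}\). Each product equals \(C_T\), which is exactly \eqref{eq:triple-plateau-component-constraint}.

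The only point requiring care is that \(C_T\) is well defined and genuinely common to both components. It is well defined because \(d_T\ne0\): its real part is strictly negative. It is common to both components because \(d_T\) depends only on \(c_T\), \(\kappa_0\), \(\lambda_0\) and \(\zeta_{T,2}\), none of which carries the index \(j\). Since \(C_T\neq0\), the single complex number \(C_T\) determines both plateau factors. They therefore cannot be prescribed independently.

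There is no real obstacle beyond this bookkeeping.
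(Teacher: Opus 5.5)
Your proposal is correct and follows the paper's proof essentially verbatim: rewrite the plateau formula as \(\mathcal P_j^{(T)}-1=C_T\,h_j(\chi_0)\), substitute \(h_1(\chi_0)=1/(1+q_{\rm tri})\) and \(h_2(\chi_0)=1/(1-q_{\rm tri})\), and multiply by the corresponding denominators. Your additional checks that \(d_T\neq0\) and \(C_T\neq0\), and that \(C_T\) carries no component index, are correct and make explicit what the paper leaves implicit.
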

	
	\begin{proof}
		Since \(q_{\rm tri}=\chi_0/k\) and \(\chi_0\neq\pm k\), one has
		\[
		h_1(\chi_0)
		=
		\frac{k}{k+\chi_0}
		=
		\frac{1}{1+q_{\rm tri}},
		\qquad
		h_2(\chi_0)
		=
		\frac{k}{k-\chi_0}
		=
		\frac{1}{1-q_{\rm tri}}.
		\]
		Equation~\eqref{eq:triple-line-normalized-limit} can therefore be written as
		\[
		\mathcal P_1^{(T)}-1
		=
		\frac{C_T}{1+q_{\rm tri}},
		\qquad
		\mathcal P_2^{(T)}-1
		=
		\frac{C_T}{1-q_{\rm tri}}.
		\]
		Multiplication by the corresponding denominators proves \eqref{eq:triple-plateau-component-constraint}.
	\end{proof}
	
	Consequently, the physical amplitudes satisfy
	\begin{equation*}
		\lim_{\substack{|t|\to\infty\\(x,t)\in\Gamma_T}}
		\left|
		u_j^{(R,3;{\rm i})}(x,t)
		\right|
		=
		a_j
		\left|
		\mathcal P_j^{(T)}
		\right|,
		\qquad
		j=1,2.
	\end{equation*}
	The direction of \(\Gamma_T\) depends only on the triple-root background parameter \(k\), whereas the asymptotic plateau \(\mathcal P_j^{(T)}\) may depend on the admissible internal coefficients \(M_1\) and \(M_2\).
	
	\begin{remark}
		\label{rem:triple-line-scope}
		The result above establishes nonlocalization along the distinguished straight characteristic
		\[
		x+\frac{t}{3k^2}=0.
		\]
		It does not assert that this line exhausts all possible asymptotic trajectories. A complete classification of the behavior in other joint limits of \(x\) and \(t\) is not needed for the present construction and is not pursued here.
	\end{remark}
	\subsection{Comparison of the two nonlocal mechanisms}
	\label{subsec:nonlocal-comparison}
	
	The double- and triple-root limits are both nonlocalized because the Darboux numerator and denominator retain the same leading order along distinguished unbounded sets. Their geometric mechanisms are nevertheless different, as summarized in Table~\ref{tab:nonlocal-comparison}.
	
	\begin{table}[htbp]
		\centering
		\caption{Comparison of the nonlocalized asymptotic mechanisms.}
		\label{tab:nonlocal-comparison}
		\begin{tabularx}{\textwidth}{>{\raggedright\arraybackslash}p{0.22\textwidth} X X}
			\toprule
			Feature & Pure double-root family & Triple-root family with \(M_2\neq0\) \tabularnewline
			\midrule
			Distinguished unbounded set & Cubic kernel level-set graphs when \(\gamma_t(r)\neq0\) & Characteristic line \(x+t/(3k^2)=0\) \tabularnewline
			Leading balance & Numerator and denominator are both \(O(\xi^2)\) & Numerator and denominator are both \(O(t^2)\) \tabularnewline
			Normalized plateau & Reciprocal real factors \(\mathcal P_1^{(D)}\mathcal P_2^{(D)}=1\), independent of \(r\), the approach angle, and the level parameter & Complex factors satisfying \((1+q_{\rm tri})(\mathcal P_1^{(T)}-1)=(1-q_{\rm tri})(\mathcal P_2^{(T)}-1)\) \tabularnewline
			Scope of the result & Complete nonzero pure double-root null-circle family, subject to the graph condition & Only the stated characteristic line; no exhaustive classification of other directions is claimed \tabularnewline
			\bottomrule
		\end{tabularx}
	\end{table}
	
	Thus the double-root nonlocality is organized by a one-parameter family of cubic level sets, whereas the triple-root calculation identifies a distinguished linear characteristic. This difference reflects the distinct highest-order polynomial structures of the two confluent chains.
	
	\section{Spectral Transitions and Representative Solutions on the Slice
		\(A=5\), \(B=1\)}
	\label{sec:examples}
	\label{sec:example}
	
	This section illustrates how the spatial-root structure established
	above is reflected in the corresponding Darboux solutions. We fix the
	unequal-amplitude background
	\[
	A=5,\qquad B=1,
	\]
	for which the positive \(k\)-axis contains both a
	persistent-zero-branch intersection and a nearby triple-root
	transition. We first trace the spatial roots and their associated
	spectral values across these transitions, and then compare
	representative double- and triple-root solutions obtained from the
	admissible coefficient sectors. Finally, the nonlocalized curves
	derived in Section~\ref{sec:nonlocal-asymptotics} are superimposed on
	selected exact-solution profiles to illustrate their agreement with
	the preceding asymptotic analysis.
	
	\subsection{Spatial-root transitions along the \(A=5\) slice}
	\label{subsec:A5-spectral-transitions}
	
	For \(A=5\) and \(B=1\), the scaled double-root equation
	\eqref{eq:P-y} becomes
	\begin{equation}
		\label{eq:P-A5}
		P_k(y)
		=
		y^4-6ky^3+(6k-2)y^2+1-2k=0,
		\qquad
		\chi=ky.
	\end{equation}
	Two distinguished values divide the positive \(k\)-axis. The first is
	\begin{equation}
		\label{eq:k0-A5}
		k_0=\frac12,
	\end{equation}
	at which the quartic \eqref{eq:P-A5} develops a double zero at
	\(y=0\), equivalently \(\chi=0\). The second is the triple-root value
	\begin{equation}
		\label{eq:ktri-A5}
		k_{\rm tri}
		=
		\frac{2}{A+B}
		\left(
		q_{\rm tri}+\frac{1}{3q_{\rm tri}}
		\right),
		\qquad
		q_{\rm tri}
		=
		\tanh\left[
		\frac13\operatorname{arctanh}
		\left(
		\frac{A-B}{A+B}
		\right)
		\right].
	\end{equation}
	For the present amplitudes, the triple-root data are
	\begin{equation}
		\label{eq:A5-triple-data}
		\begin{aligned}
			q_{\rm tri}
			&\approx0.2619860695,
			&
			k_{\rm tri}
			&\approx0.5114394501,
			\\
			\chi_{\rm tri}
			=q_{\rm tri}k_{\rm tri}
			&\approx0.1339900113,
			&
			\mu_{\rm tri}
			=\frac{2}{3\chi_{\rm tri}}
			&\approx4.9754952637.
		\end{aligned}
	\end{equation}
	The two transition values are close,
	\[
	k_{\rm tri}-k_0\approx0.0114394501,
	\]
	so the interval in which all roots of \eqref{eq:P-A5} are real is
	narrow. The locations of the two transition curves and the three
	representative points used below are shown in
	Fig.~\ref{fig:double-chi-region}.
	
	\begin{figure}[htbp]
		\centering
		\includegraphics[width=0.62\textwidth]{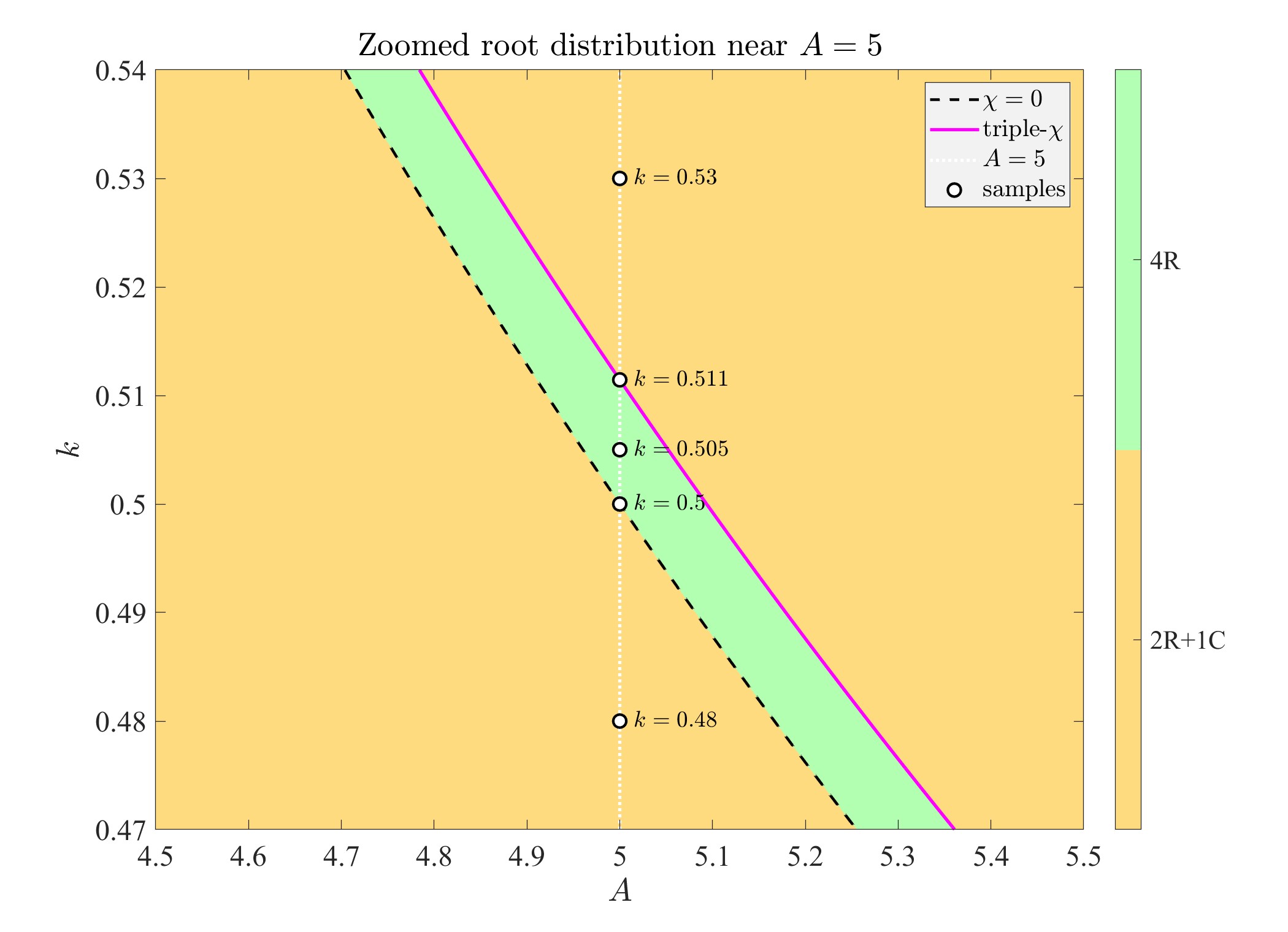}
		\caption{
			Root-distribution regions of the double-root equation in the
			\((A,k)\)-plane with \(B=1\). The dashed curve represents the
			persistent-zero-branch condition \(2-(A-B)k=0\), whereas the
			solid curve represents the triple-root boundary. The vertical
			line marks the slice \(A=5\). The three sample points on this
			slice correspond to \(k=0.4\), \(k=0.505\), and \(k=0.6\).
		}
		\label{fig:double-chi-region}
	\end{figure}
	
	Along the slice \(A=5\), the root configurations of
	\eqref{eq:P-A5} are summarized as follows:
	\begin{equation}
		\label{eq:A5-root-configurations}
		\renewcommand{\arraystretch}{1.35}
		\begin{array}{rcl}
			0<k<k_0
			&:&
			\text{two real roots and one nonreal conjugate pair},
			\\
			k=k_0
			&:&
			\text{a double zero at \(\chi=0\) and two effective real roots},
			\\
			k_0<k<k_{\rm tri}
			&:&
			\text{four distinct real roots},
			\\
			k=k_{\rm tri}
			&:&
			\text{a double zero of the branch-condition quartic,
				corresponding to a triple spatial root},
			\\
			k>k_{\rm tri}
			&:&
			\text{two real roots and one nonreal conjugate pair}.
		\end{array}
	\end{equation}
	At \(k=k_{\rm tri}\), the repeated zero of
	\eqref{eq:P-A5} occurs at \(\chi=\chi_{\rm tri}\), while the two
	remaining zeros of the quartic are real and simple. Thus
	\[
	k_0<k<k_{\rm tri}
	\]
	is a narrow four-real-root window, whereas the two exterior regions
	have mixed real--complex root configurations.
	
	The point \(k=k_0\) requires a separate interpretation. At
	\(k=1/2\), equation \eqref{eq:P-A5} factorizes as
	\begin{equation}
		\label{eq:P-A5-k0-factorization}
		P_{1/2}(y)
		=
		y^2(y^2-3y+1).
	\end{equation}
	The double zero \(y=0\), equivalently \(\chi=0\), does not represent
	an effective stationary point of the reduced rational map
	\(\chi\mapsto\Lambda(\chi)\). Instead, it records the intersection of
	the persistent zero branch with a movable spatial branch. Indeed,
	\((A-B)k_0=2\), and at the spectral value \(\mu=S=6\), the full
	fixed-\(\mu\) characteristic equation becomes
	\begin{equation}
		\label{eq:F-A5-k0-factorization}
		F(\chi;6)=2\chi^2(3\chi-1).
	\end{equation}
	The double zero of \eqref{eq:F-A5-k0-factorization} is therefore a
	branch intersection rather than a ramification point of the reduced
	map. The persistent zero branch remains a genuine eigenfunction branch
	and is represented by \eqref{eq:zero-branch-eigenfunction}.
	
	By contrast, the other two zeros of
	\eqref{eq:P-A5-k0-factorization},
	\begin{equation}
		\label{eq:A5-k0-effective-roots}
		\chi_\pm
		=
		\frac{3\pm\sqrt5}{4},
	\end{equation}
	are effective real branch points.
	
	For later comparison, Table~\ref{tab:A5-spectral-data} lists the
	spatial roots and their corresponding squared spectral values at the
	three representative values of \(k\). For a nonreal conjugate pair,
	the two signs are correlated as displayed.
	
	\begin{table}[htbp]
		\centering
		\caption{
			Spatial branch points and their squared spectral values for
			\(A=5\), \(B=1\), at the three representative values of \(k\).
		}
		\label{tab:A5-spectral-data}
		\renewcommand{\arraystretch}{1.25}
		\setlength{\tabcolsep}{7pt}
		\begin{tabular}{ccl}
			\toprule
			\(k\)
			&
			Spatial root \(\chi_0\)
			&
			Squared spectral value \(\mu_0=\Lambda(\chi_0)\)
			\\
			\midrule
			\multirow{3}{*}{\(0.4\)}
			&
			\(-0.07228038\pm0.14374134{\rm i}\)
			&
			\(4.77639953\mp4.23009618{\rm i}\)
			\\
			&
			\(0.22483038\)
			&
			\(7.26348867\)
			\\
			&
			\(0.87973038\)
			&
			\(1.18371227\)
			\\
			\midrule
			\multirow{4}{*}{\(0.505\)}
			&
			\(-0.04423862\)
			&
			\(6.85160830\)
			\\
			&
			\(0.06140505\)
			&
			\(5.27066031\)
			\\
			&
			\(0.17955147\)
			&
			\(5.12886259\)
			\\
			&
			\(1.33343210\)
			&
			\(0.74886880\)
			\\
			\midrule
			\multirow{3}{*}{\(0.6\)}
			&
			\(-0.16446338\)
			&
			\(10.10507121\)
			\\
			&
			\(0.23559662\pm0.17185743{\rm i}\)
			&
			\(3.68332672\pm0.73514830{\rm i}\)
			\\
			&
			\(1.85327014\)
			&
			\(0.52827535\)
			\\
			\bottomrule
		\end{tabular}
	\end{table}
	
	The table confirms that the same root count in the two exterior
	regions does not imply identical spectral data. In particular, the
	locations of both the real roots and the nonreal conjugate pair change
	substantially after passage through the four-real-root window and the
	triple-root transition.
	
	\subsection{Double-root solutions across the spectral transitions}
	\label{subsec:A5-double-solutions}
	
	We now compare representative solutions associated with the branch
	points listed in Table~\ref{tab:A5-spectral-data}. Every effective real
	root automatically satisfies \(\mu_0>0\) by
	Proposition~\ref{prop:real-double-positive} and therefore belongs to
	the real-spectrum limiting construction. Each nonreal root corresponds
	to a nonreal squared spectral value and gives a regular confluent
	one-fold Darboux solution.
	
	The figures below display the changes associated with the spatial-root
	branches. They do not exhaust the full dependence on the admissible
	confluent coefficients, which has already been characterized by
	\eqref{eq:M1-circle-double} and \eqref{eq:double-null-general}.
	
	\paragraph{The lower-\(k\) real--complex regime: \(k=0.4\).}
	
	At \(k=0.4\), the branch-condition quartic has two real roots and one
	nonreal conjugate pair. The two real roots generate real-spectrum
	limiting solutions satisfying the \(J\)-null condition, whereas the
	nonreal pair generates regular confluent Darboux solutions without a
	real-spectrum limiting procedure. The corresponding real- and
	nonreal-spectrum profiles are compared in
	Fig.~\ref{fig:tr1-tc1}.
	
	\begin{figure}[htbp]
		\centering
		\begin{subfigure}{0.49\textwidth}
			\centering
			\includegraphics[width=\textwidth]{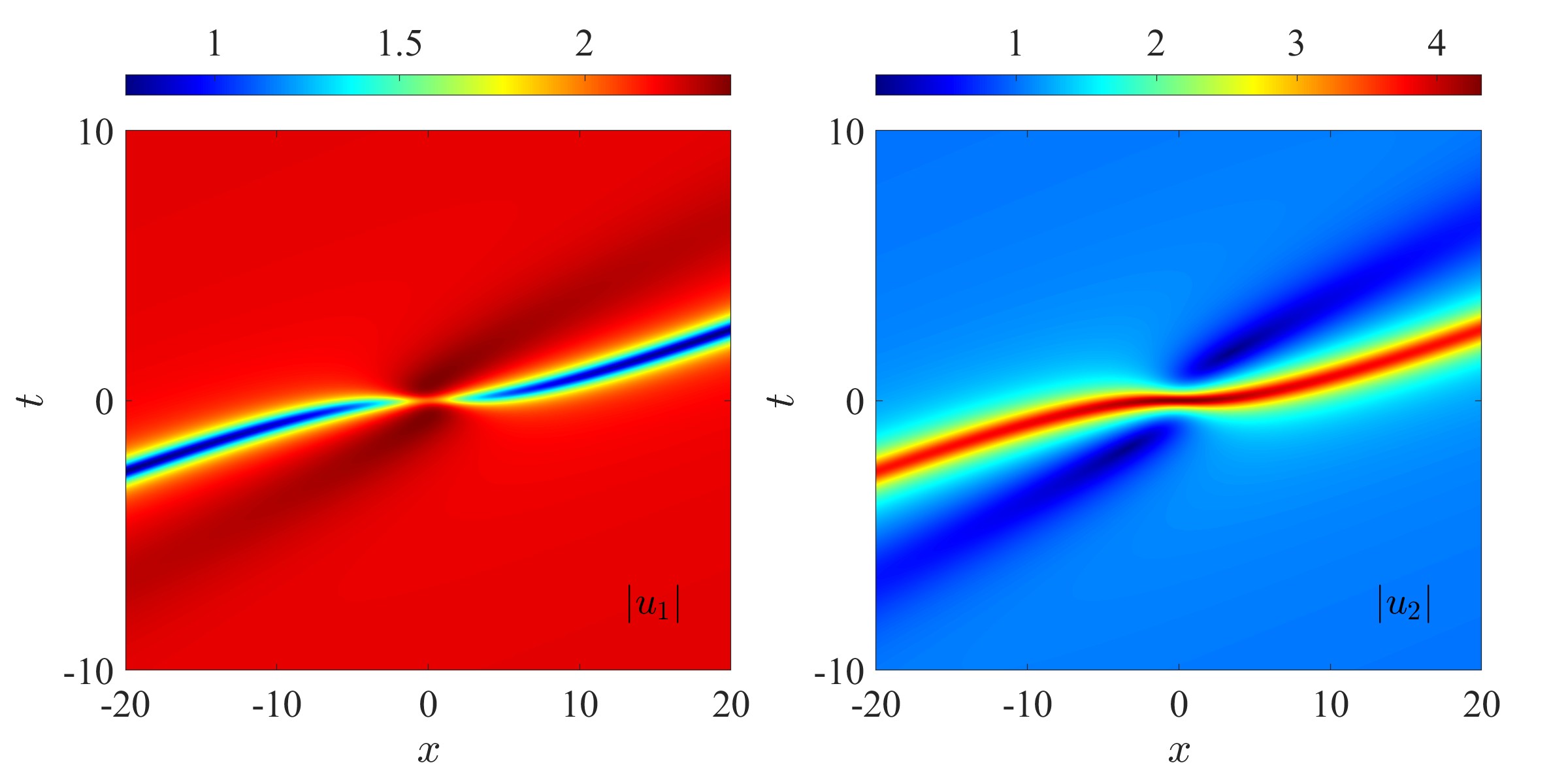}
			\caption{\(\chi_0\approx0.22483038\)}
			\label{fig:tr1-1}
		\end{subfigure}
		\hfill
		\begin{subfigure}{0.49\textwidth}
			\centering
			\includegraphics[width=\textwidth]{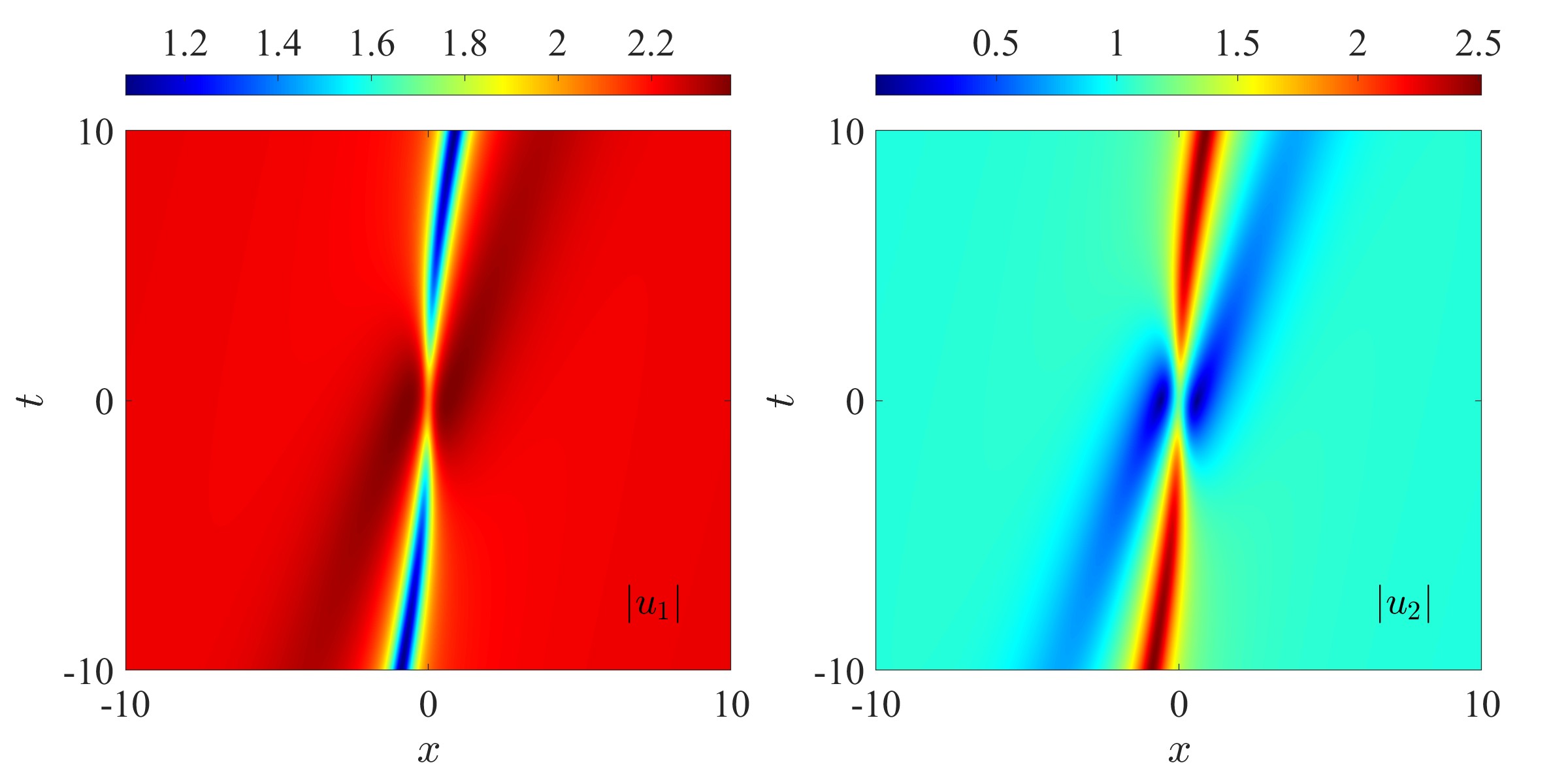}
			\caption{\(\chi_0\approx0.87973038\)}
			\label{fig:tr1-2}
		\end{subfigure}
		
		\vspace{0.3cm}
		
		\begin{subfigure}{0.49\textwidth}
			\centering
			\includegraphics[width=\textwidth]{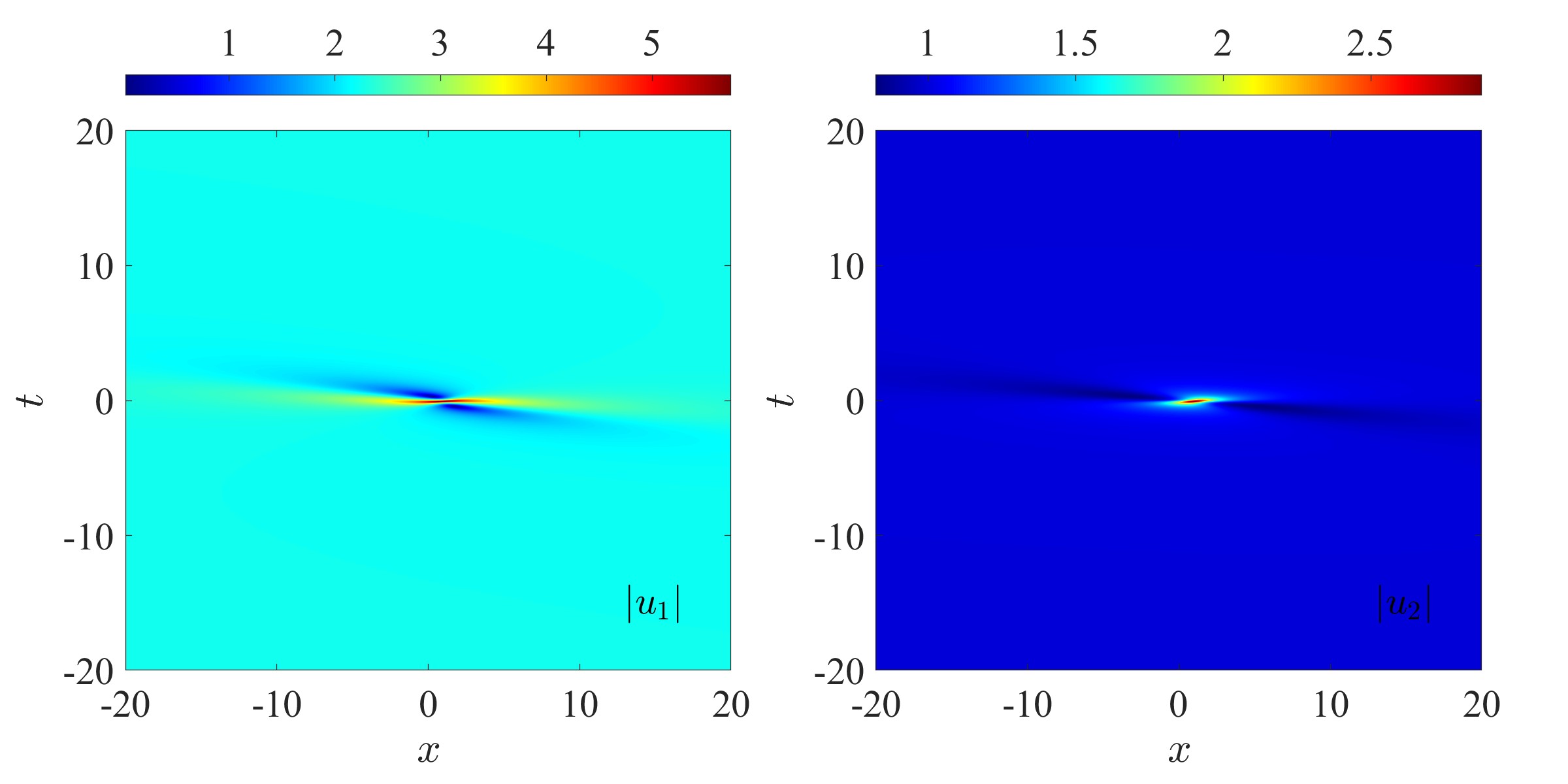}
			\caption{\(\chi_0\approx-0.07228038+0.14374134{\rm i}\)}
			\label{fig:tc1-1}
		\end{subfigure}
		\hfill
		\begin{subfigure}{0.49\textwidth}
			\centering
			\includegraphics[width=\textwidth]{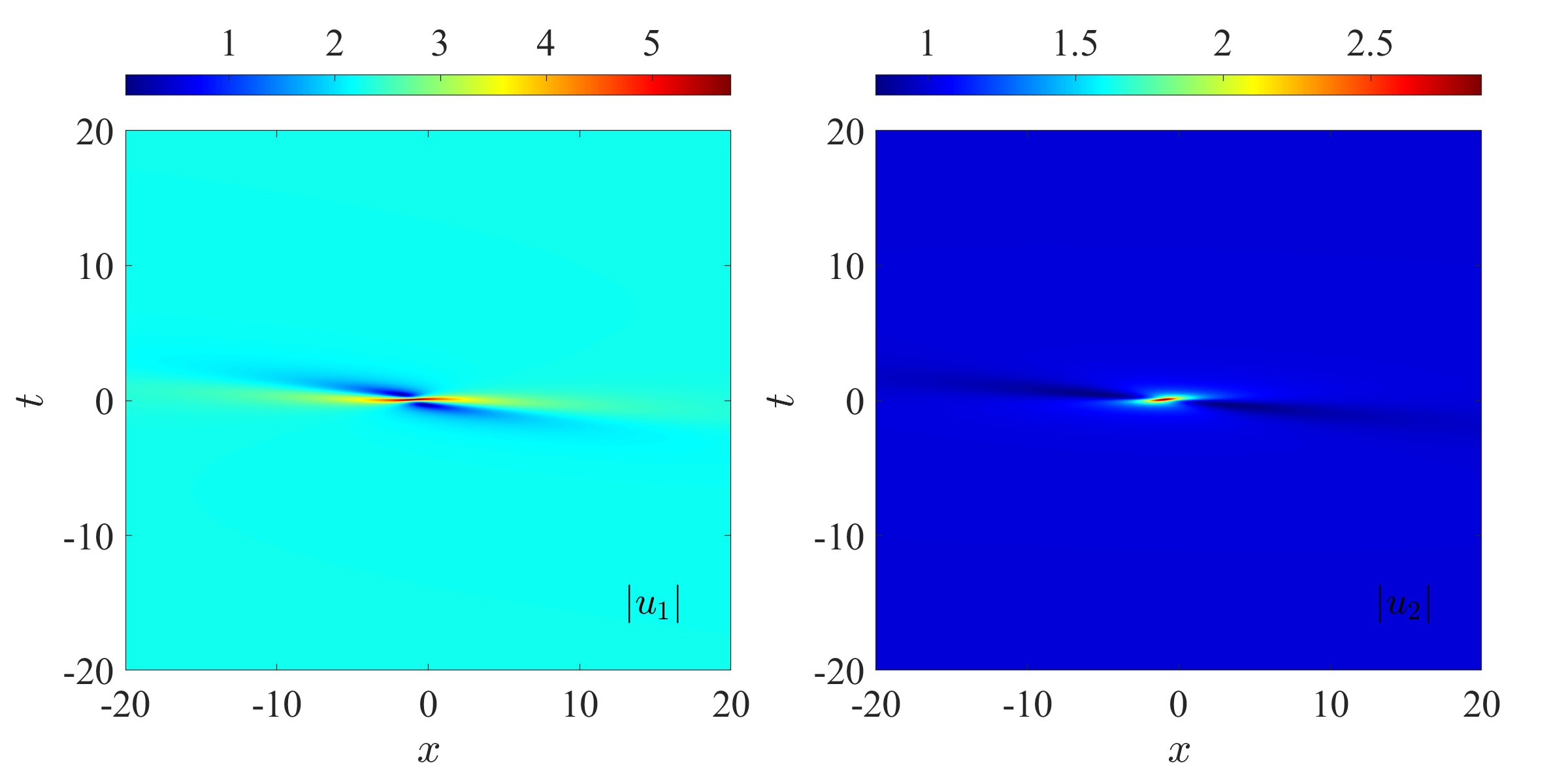}
			\caption{\(\chi_0\approx-0.07228038-0.14374134{\rm i}\)}
			\label{fig:tc1-2}
		\end{subfigure}
		
		\caption{
			Representative double-root solution profiles for
			\(A=5\), \(B=1\), and \(k=0.4\). Panels (a) and (b)
			correspond to the two real branch points and are obtained from
			the real-spectrum limiting construction. Panels (c) and (d)
			correspond to the nonreal conjugate pair and are obtained from
			the regular confluent Darboux construction. The associated
			squared spectral values are listed in
			Table~\ref{tab:A5-spectral-data}.
		}
		\label{fig:tr1-tc1}
	\end{figure}
	
	The real and nonreal sectors are therefore simultaneously present at
	this value of \(k\). Their distinction is determined by the spectral
	location of the branch point and not merely by the multiplicity of the
	spatial root.
	
	\paragraph{The four-real-root window: \(k=0.505\).}
	
	At \(k=0.505\), all four roots of \eqref{eq:P-A5} are real and
	effective. By Proposition~\ref{prop:real-double-positive}, their
	associated squared spectral values are automatically positive.
	Consequently, all four roots give admissible real-spectrum Darboux
	limits. Their distinct root locations and spectral values nevertheless
	produce different solution profiles, as displayed in
	Fig.~\ref{fig:tr2}.
	
	\begin{figure}[htbp]
		\centering
		\begin{subfigure}{0.49\textwidth}
			\centering
			\includegraphics[width=\textwidth]{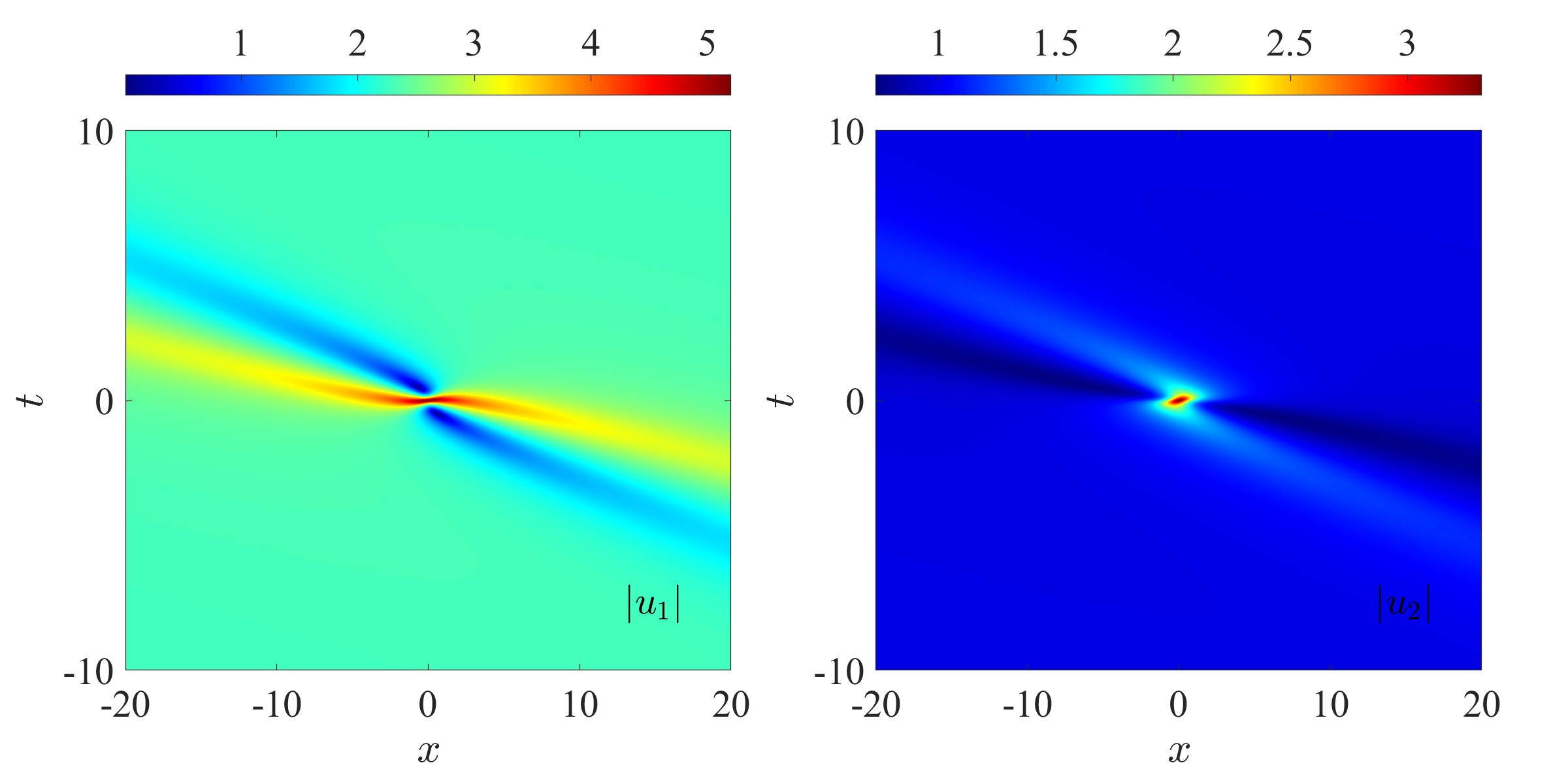}
			\caption{\(\chi_0\approx-0.04423862\)}
			\label{fig:tr2-1}
		\end{subfigure}
		\hfill
		\begin{subfigure}{0.49\textwidth}
			\centering
			\includegraphics[width=\textwidth]{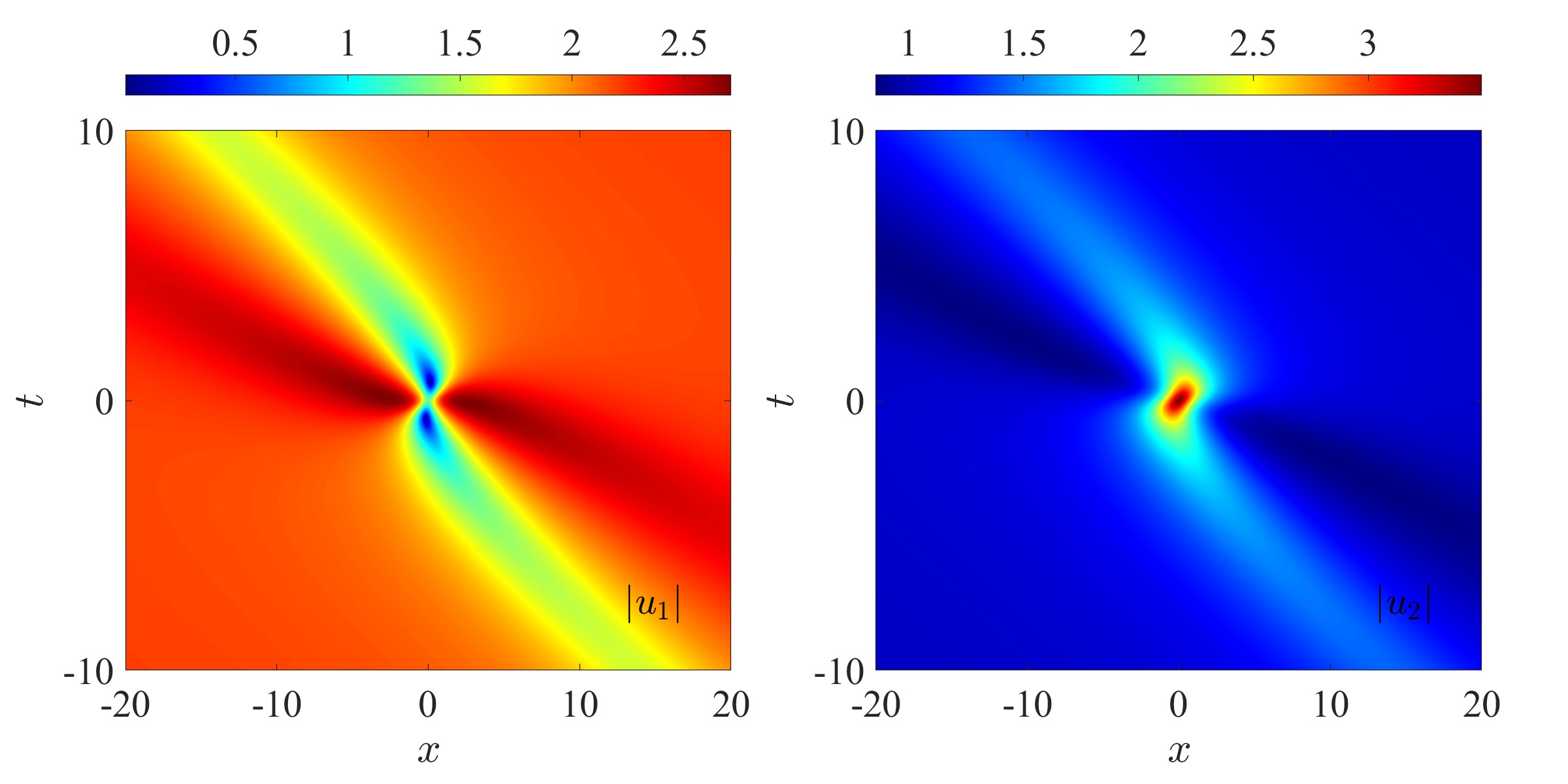}
			\caption{\(\chi_0\approx0.06140505\)}
			\label{fig:tr2-2}
		\end{subfigure}
		
		\vspace{0.3cm}
		
		\begin{subfigure}{0.49\textwidth}
			\centering
			\includegraphics[width=\textwidth]{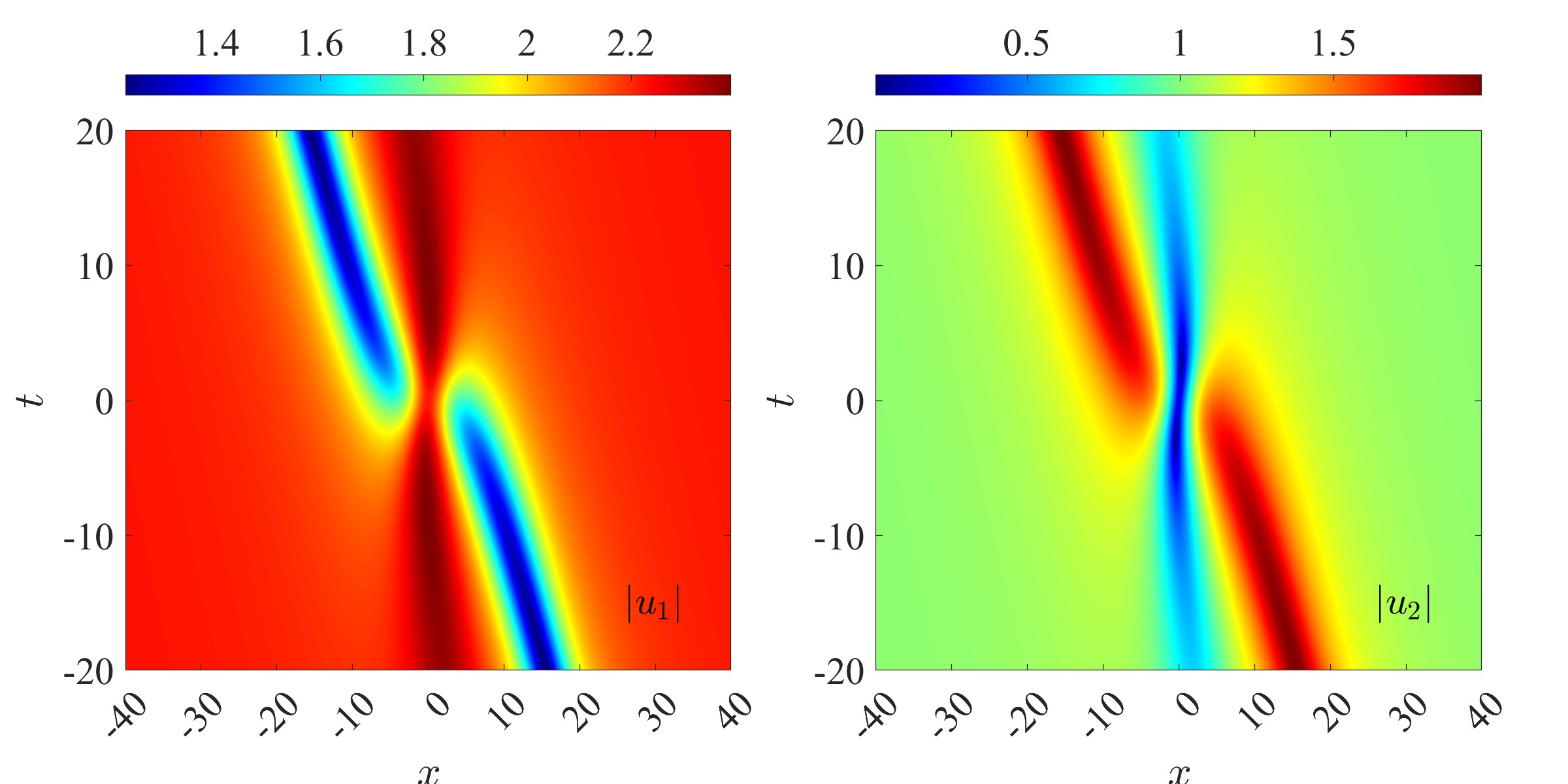}
			\caption{\(\chi_0\approx0.17955147\)}
			\label{fig:tr2-3}
		\end{subfigure}
		\hfill
		\begin{subfigure}{0.49\textwidth}
			\centering
			\includegraphics[width=\textwidth]{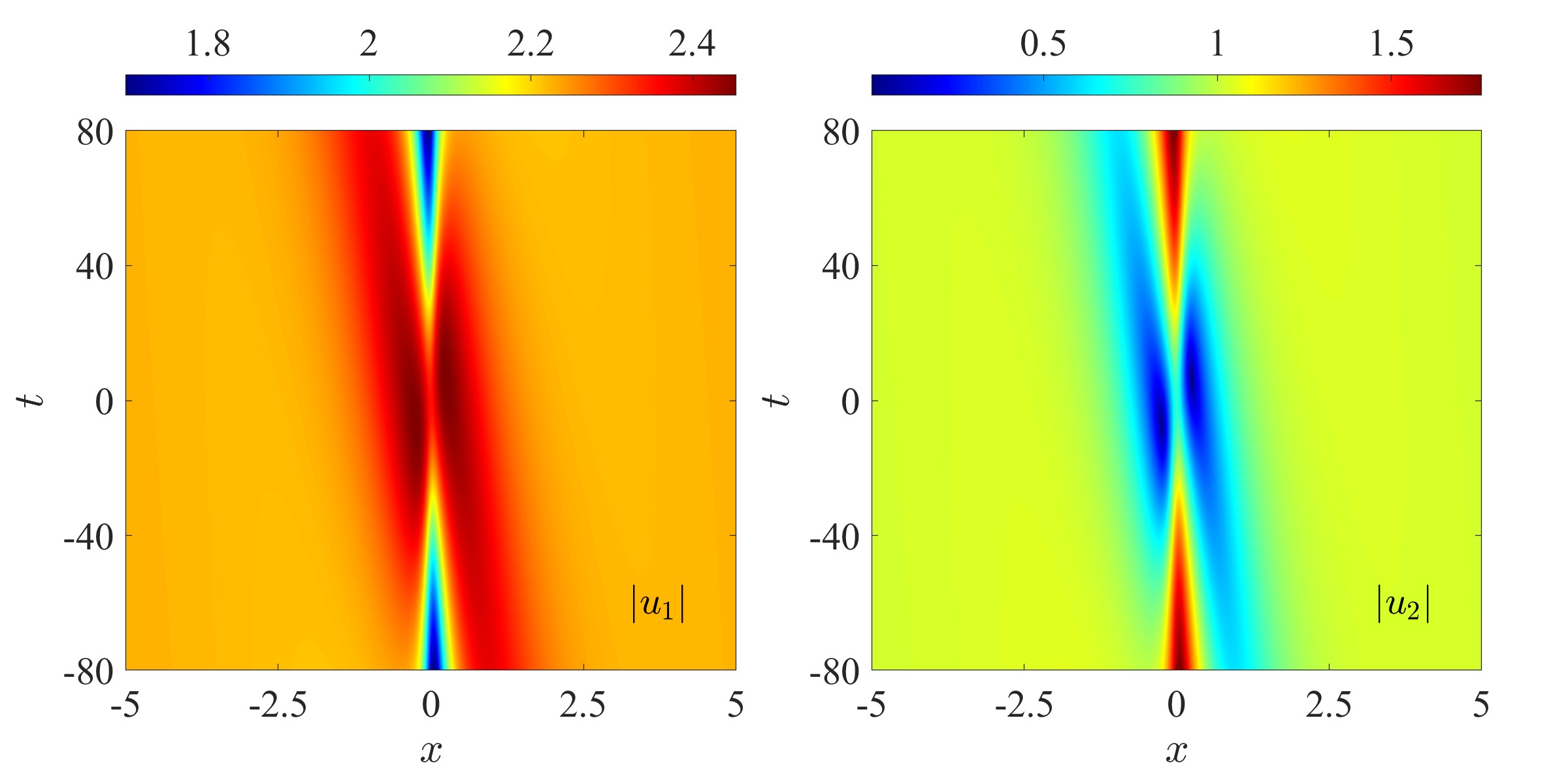}
			\caption{\(\chi_0\approx1.33343210\)}
			\label{fig:tr2-4}
		\end{subfigure}
		
		\caption{
			Representative real-spectrum double-root solutions for
			\(A=5\), \(B=1\), and \(k=0.505\), which lies inside the
			narrow four-real-root window. The panels are ordered by
			increasing spatial-root value. All four associated squared
			spectral values are positive and are listed in
			Table~\ref{tab:A5-spectral-data}.
		}
		\label{fig:tr2}
	\end{figure}
	
	This example shows that the number of admissible real-spectrum limits
	may change under a small variation of the carrier wavenumber. The
	transition from \(k=0.4\) to \(k=0.505\) converts the nonreal conjugate
	pair into two additional real branch points.
	
	\paragraph{The upper-\(k\) real--complex regime: \(k=0.6\).}
	
	At \(k=0.6\), the root configuration returns to two real roots and one
	nonreal conjugate pair. The coexistence of regular nonreal-spectrum
	constructions and real-spectrum limiting solutions is therefore
	restored after passage through the triple-root transition.
	Figure~\ref{fig:tr3-tc3} shows that the resulting profiles differ from
	those in the lower-\(k\) real--complex regime even though the root
	count is the same.
	
	\begin{figure}[htbp]
		\centering
		\begin{subfigure}{0.49\textwidth}
			\centering
			\includegraphics[width=\textwidth]{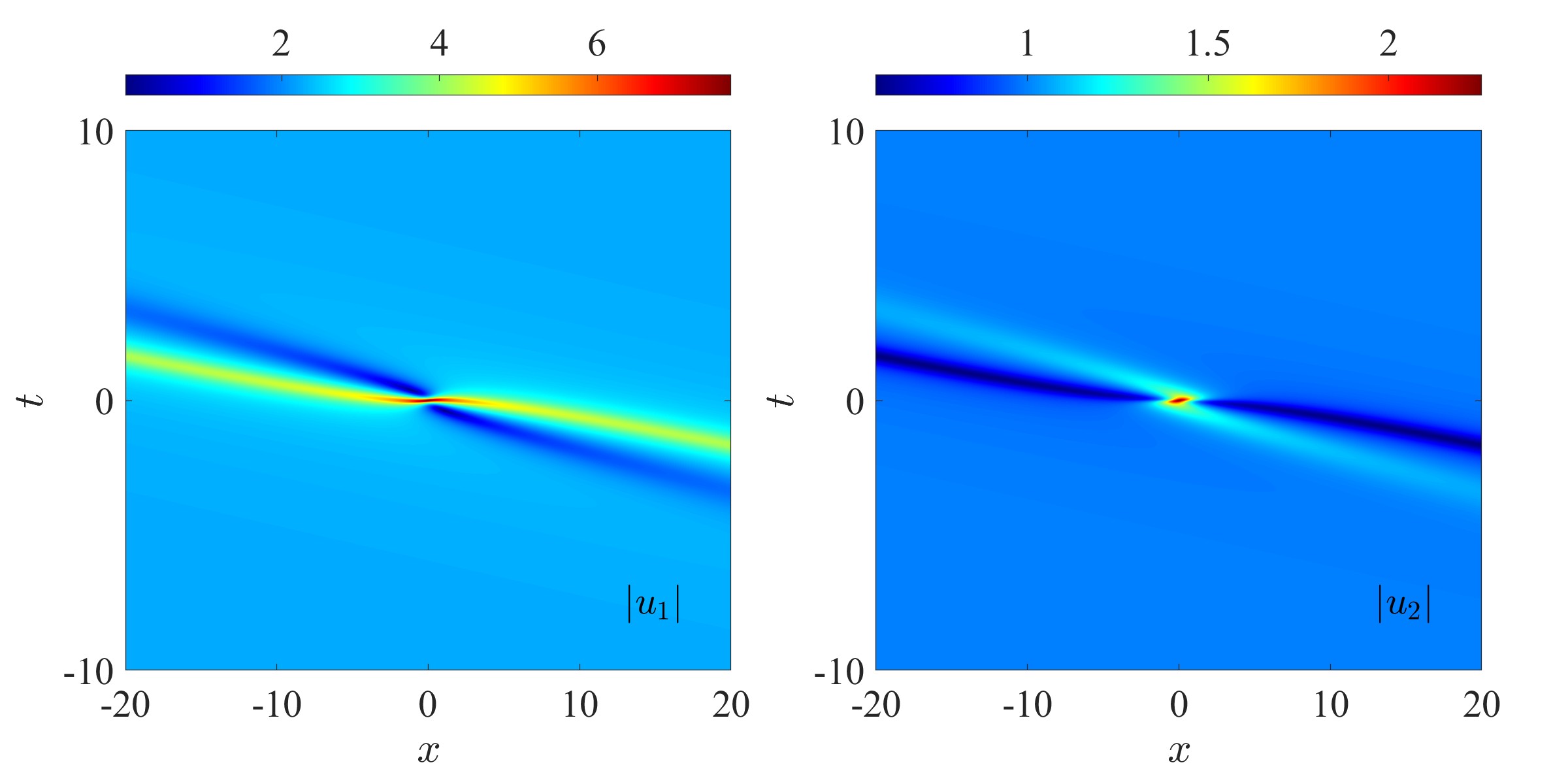}
			\caption{\(\chi_0\approx-0.16446338\)}
			\label{fig:tr3-1}
		\end{subfigure}
		\hfill
		\begin{subfigure}{0.49\textwidth}
			\centering
			\includegraphics[width=\textwidth]{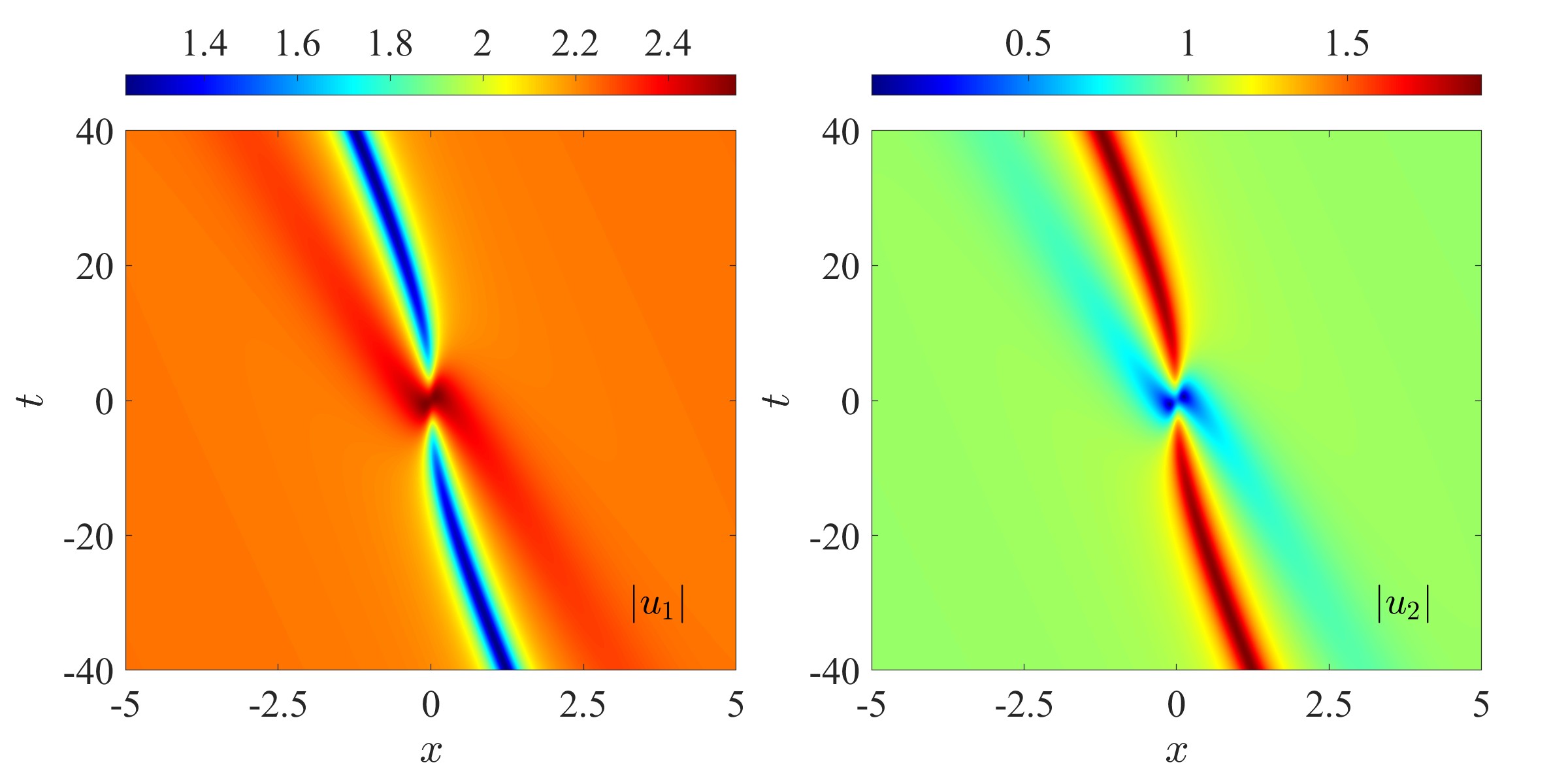}
			\caption{\(\chi_0\approx1.85327014\)}
			\label{fig:tr3-2}
		\end{subfigure}
		
		\vspace{0.3cm}
		
		\begin{subfigure}{0.49\textwidth}
			\centering
			\includegraphics[width=\textwidth]{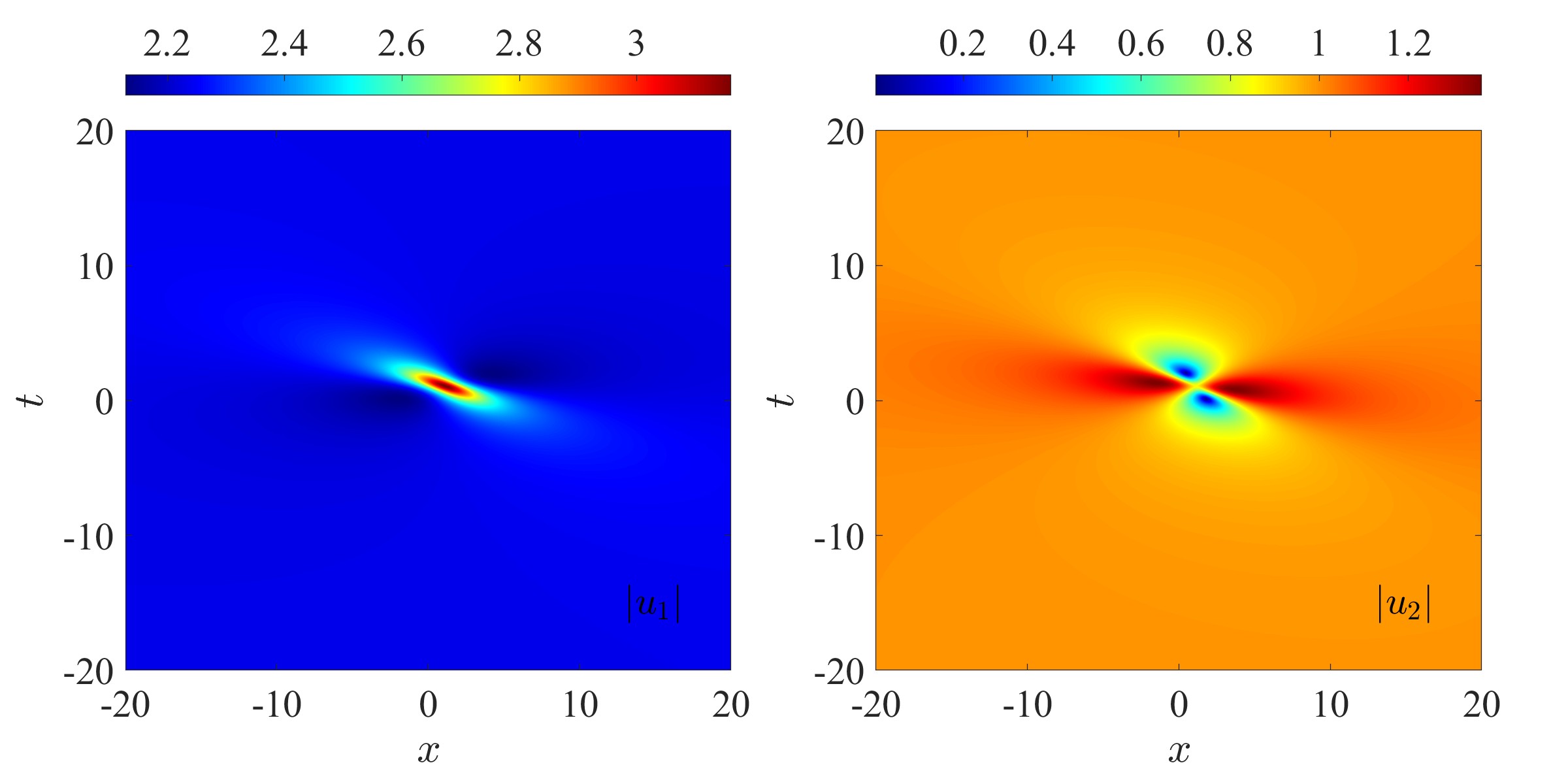}
			\caption{\(\chi_0\approx0.23559662+0.17185743{\rm i}\)}
			\label{fig:tc3-1}
		\end{subfigure}
		\hfill
		\begin{subfigure}{0.49\textwidth}
			\centering
			\includegraphics[width=\textwidth]{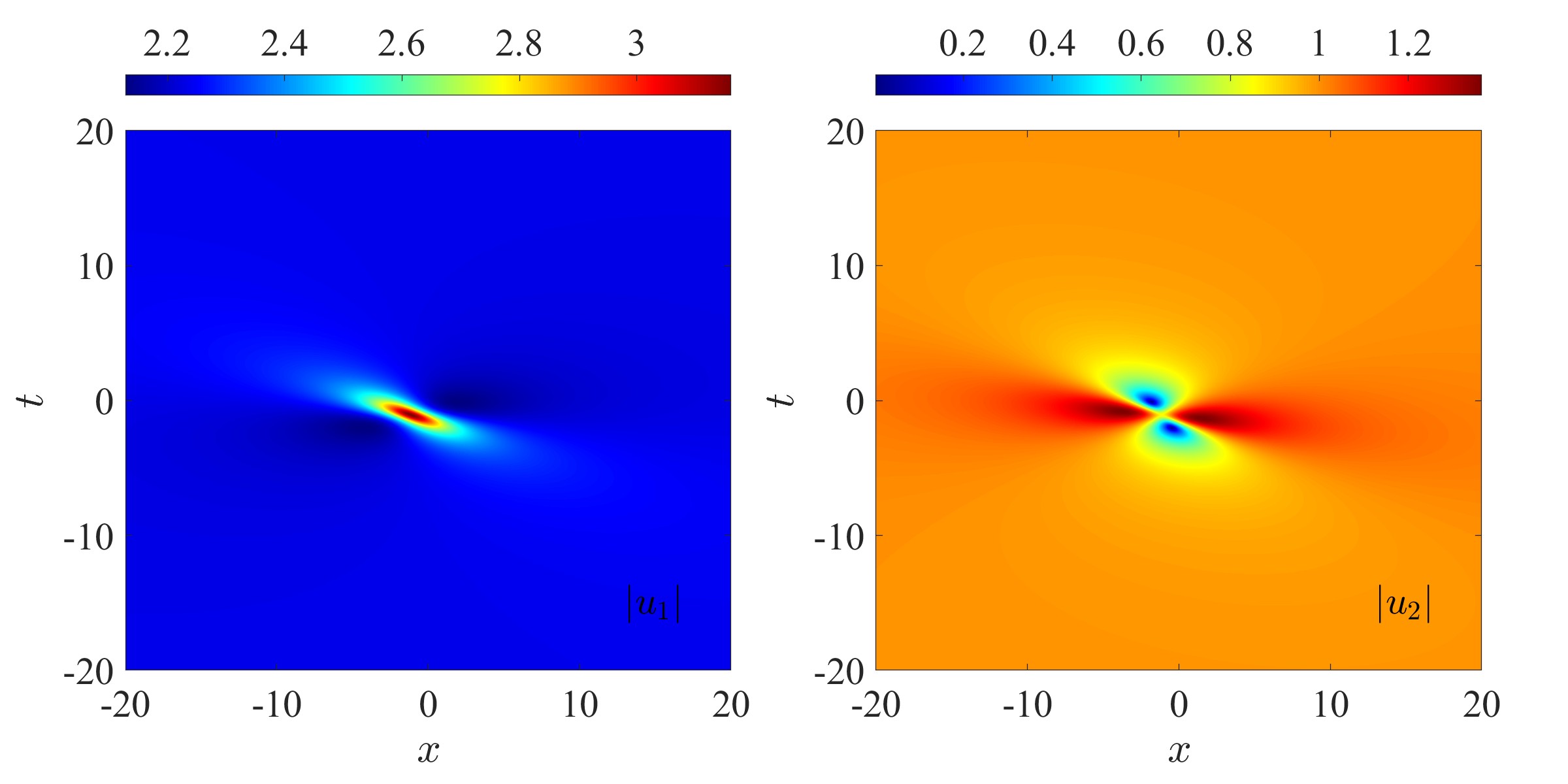}
			\caption{\(\chi_0\approx0.23559662-0.17185743{\rm i}\)}
			\label{fig:tc3-2}
		\end{subfigure}
		
		\caption{
			Representative double-root solution profiles for
			\(A=5\), \(B=1\), and \(k=0.6\). Panels (a) and (b)
			correspond to the two real branch points, whereas panels
			(c) and (d) correspond to the nonreal conjugate pair.
			Although the root count agrees with that at \(k=0.4\), the
			root locations and squared spectral values differ
			substantially after passage through the four-real-root window
			and the triple-root transition.
		}
		\label{fig:tr3-tc3}
	\end{figure}
	
	The three representative values therefore illustrate the sequence
	\[
	\text{real--complex configuration}
	\;\longrightarrow\;
	\text{four-real-root configuration}
	\;\longrightarrow\;
	\text{real--complex configuration}.
	\]
	The spatial-root configuration determines whether the regular
	one-fold Darboux formula or the real-spectrum limiting construction is
	required. Within each spectral sector, the admissible confluent
	coefficients further determine the detailed solution profile.
	
	\subsection{Representative triple-root solutions}
	\label{subsec:A5-triple-solutions}
	
	We next consider the triple spatial root at
	\[
	A=5,\qquad
	B=1,\qquad
	k=k_{\rm tri}.
	\]
	The corresponding root and squared spectral value are given in
	\eqref{eq:A5-triple-data}. After the normalization \(M_0=1\), the
	leading confluent vector has the form
	\begin{equation}
		\label{eq:A5-triple-leading-vector}
		{\bf y}_R
		=
		{\bf Y}_0+M_1{\bf Y}_1+\frac{M_2}{2}{\bf Y}_2.
	\end{equation}
	Apart from the canonical choice \(M_1=M_2=0\), every admissible
	noncanonical triple-root vector has \(M_2\neq0\). This follows because
	the sector \(M_2=0\), \(M_1\neq0\), does not satisfy the triple-root
	null constraint. We compare two representative coefficient sectors.
	
	\paragraph{Pure second-derivative sector.}
	
	Here
	\[
	M_1=0,\qquad M_2\neq0,
	\]
	with \(M_2\) lying on the admissible circle
	\eqref{eq:M2-circle-triple}. This choice isolates the contribution of
	the second-derivative mode \({\bf Y}_2\).
	
	\paragraph{Full triple-chain sector.}
	
	Here
	\[
	M_1M_2\neq0,
	\]
	and the coefficients satisfy the full null constraint
	\eqref{eq:triple-null-general}. All three confluent modes
	\({\bf Y}_0\), \({\bf Y}_1\), and \({\bf Y}_2\) then contribute to
	the leading vector. The two sectors are compared in
	Fig.~\ref{fig:triple-root-sectors}.
	
	\begin{figure}[htbp]
		\centering
		\begin{subfigure}{0.49\textwidth}
			\centering
			\includegraphics[width=\textwidth]{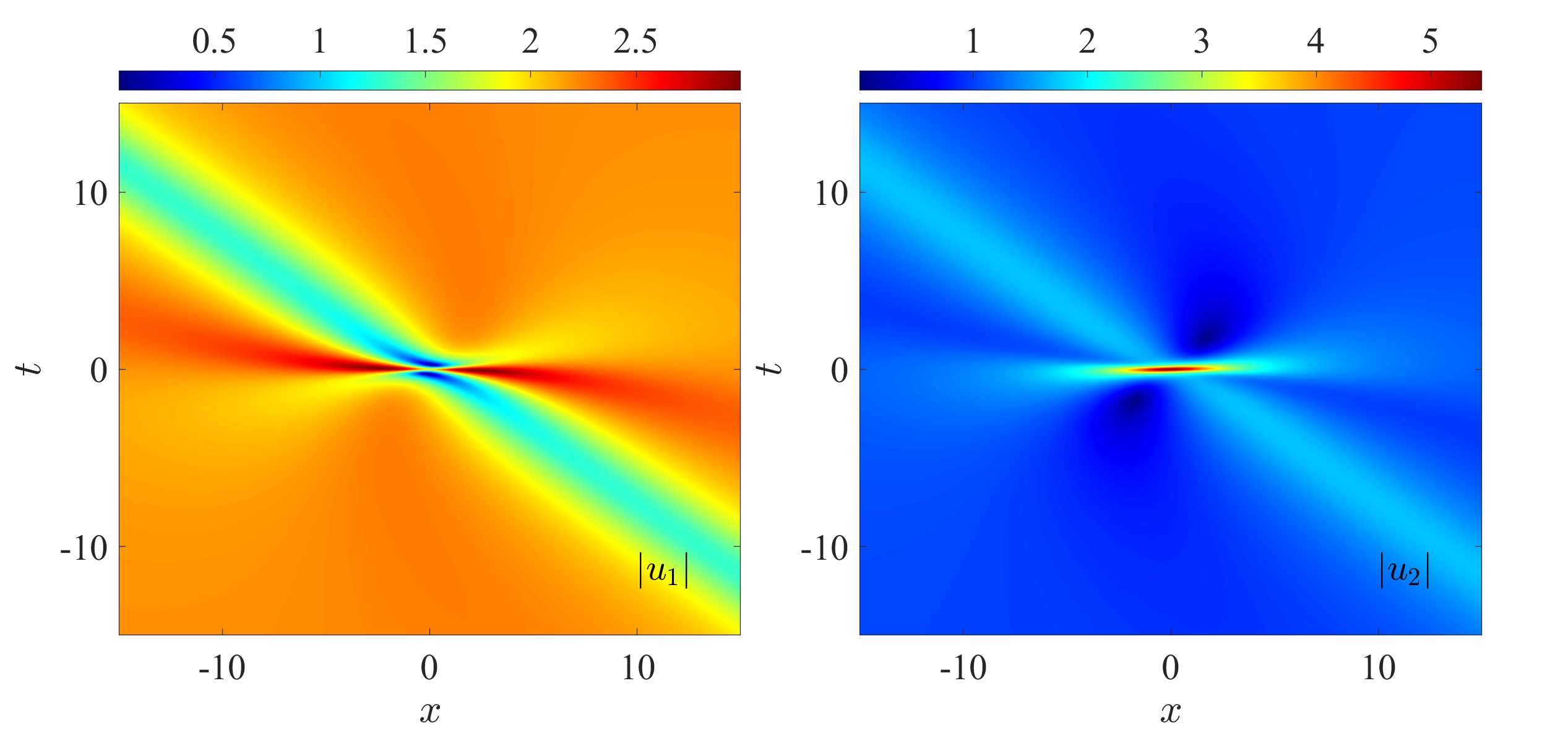}
			\caption{Pure second-derivative sector:
				\(M_1=0\), \(M_2\neq0\)}
			\label{fig:triple-pure-M2}
		\end{subfigure}
		\hfill
		\begin{subfigure}{0.49\textwidth}
			\centering
			\includegraphics[width=\textwidth]{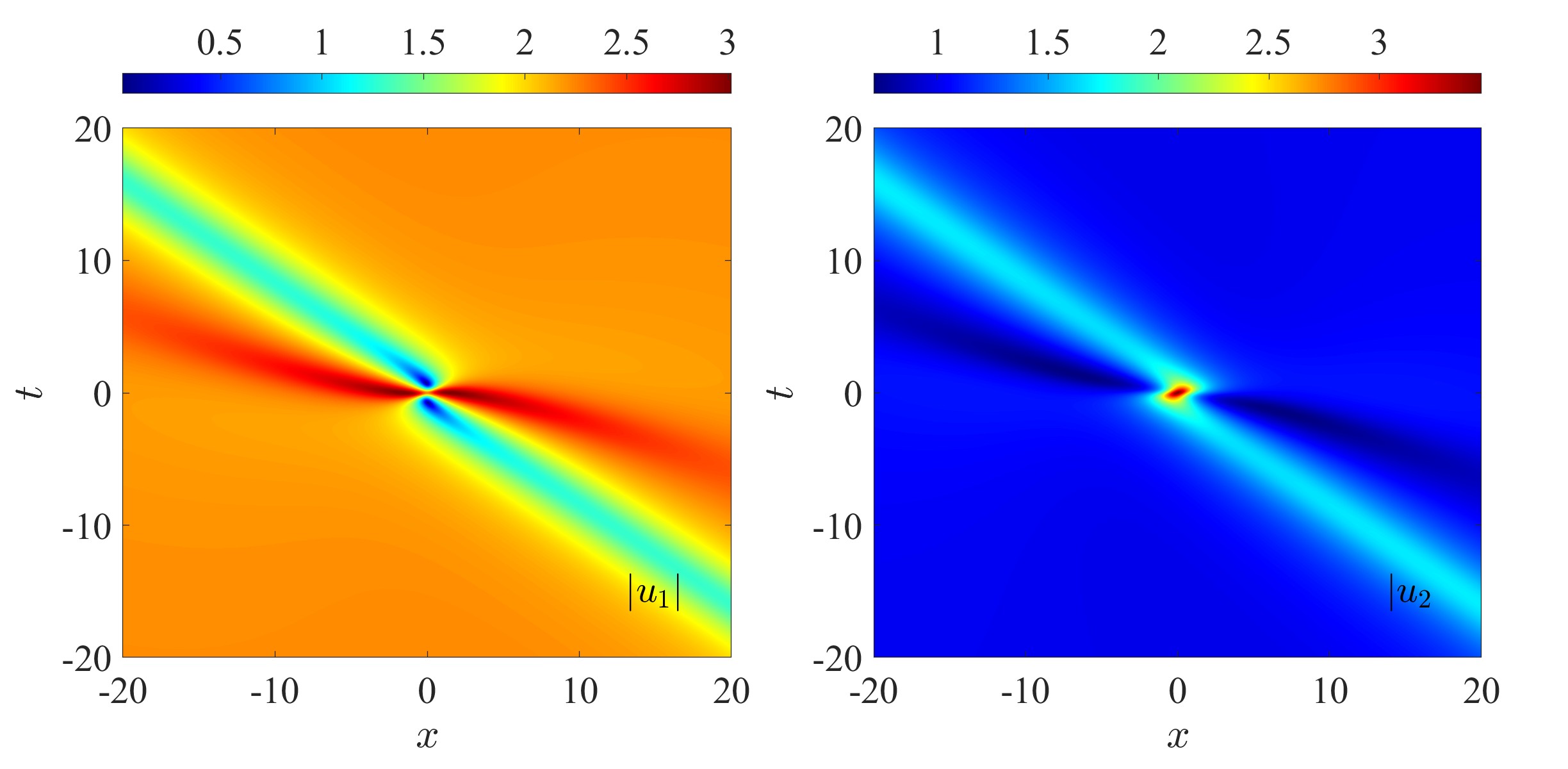}
			\caption{Full triple-chain sector:
				\(M_1M_2\neq0\)}
			\label{fig:triple-full-chain}
		\end{subfigure}
		
		\caption{
			Representative triple-root solutions for
			\(A=5\), \(B=1\), and \(k=k_{\rm tri}\).
			Panel (a) corresponds to the pure second-derivative sector
			\(M_1=0\), \(M_2\neq0\), whereas panel (b) corresponds to
			the full triple-chain sector \(M_1M_2\neq0\). The two
			examples show that the admissible coefficient sector has a
			substantial effect on the resulting space--time profile.
		}
		\label{fig:triple-root-sectors}
	\end{figure}
	
	The comparison in Fig.~\ref{fig:triple-root-sectors} shows that the
	first- and second-derivative contributions control different aspects
	of the triple-root solution. In the pure second-derivative sector,
	only \({\bf Y}_0\) and \({\bf Y}_2\) enter the leading vector, whereas
	the full triple-chain sector also contains the independent
	first-derivative contribution \(M_1{\bf Y}_1\). The latter may
	substantially modify the finite space--time profile without changing
	the underlying triple-root spectral data.
	
	\subsection{Comparison with the nonlocalized asymptotic predictions}
	\label{subsec:A5-asymptotic-comparison}
	\label{sec:nonlocal-examples}
	
	We conclude this section by comparing the asymptotic curves obtained in
	Section~\ref{sec:nonlocal-asymptotics} with representative exact
	solutions on the slice \(A=5\), \(B=1\). The relevant theoretical
	curves are superimposed on the modulus plots in
	Fig.~\ref{fig:A5-asymptotic-comparison}. The purpose of this comparison
	is to examine whether the persistent structures observed in the exact
	solutions follow the unbounded sets selected by the preceding
	asymptotic analysis.
	
	\begin{figure}[htbp]
		\centering
		\begin{subfigure}{0.49\textwidth}
			\centering
			\includegraphics[width=\textwidth]
			{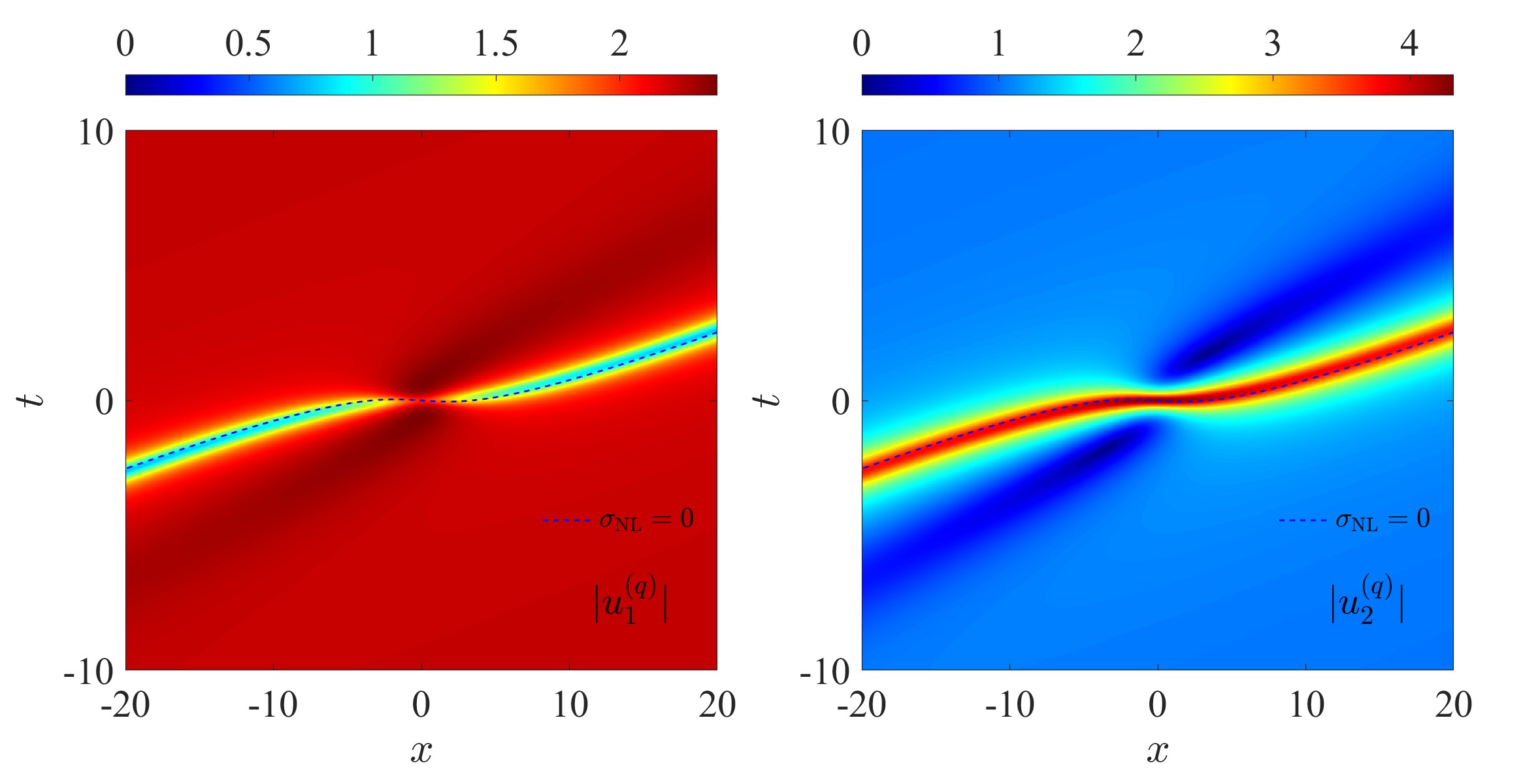}
			\caption{Pure double-root solution and a predicted kernel
				level-set graph.}
			\label{fig:A5-double-asymptotic-curve}
		\end{subfigure}
		\hfill
		\begin{subfigure}{0.49\textwidth}
			\centering
			\includegraphics[width=\textwidth]
			{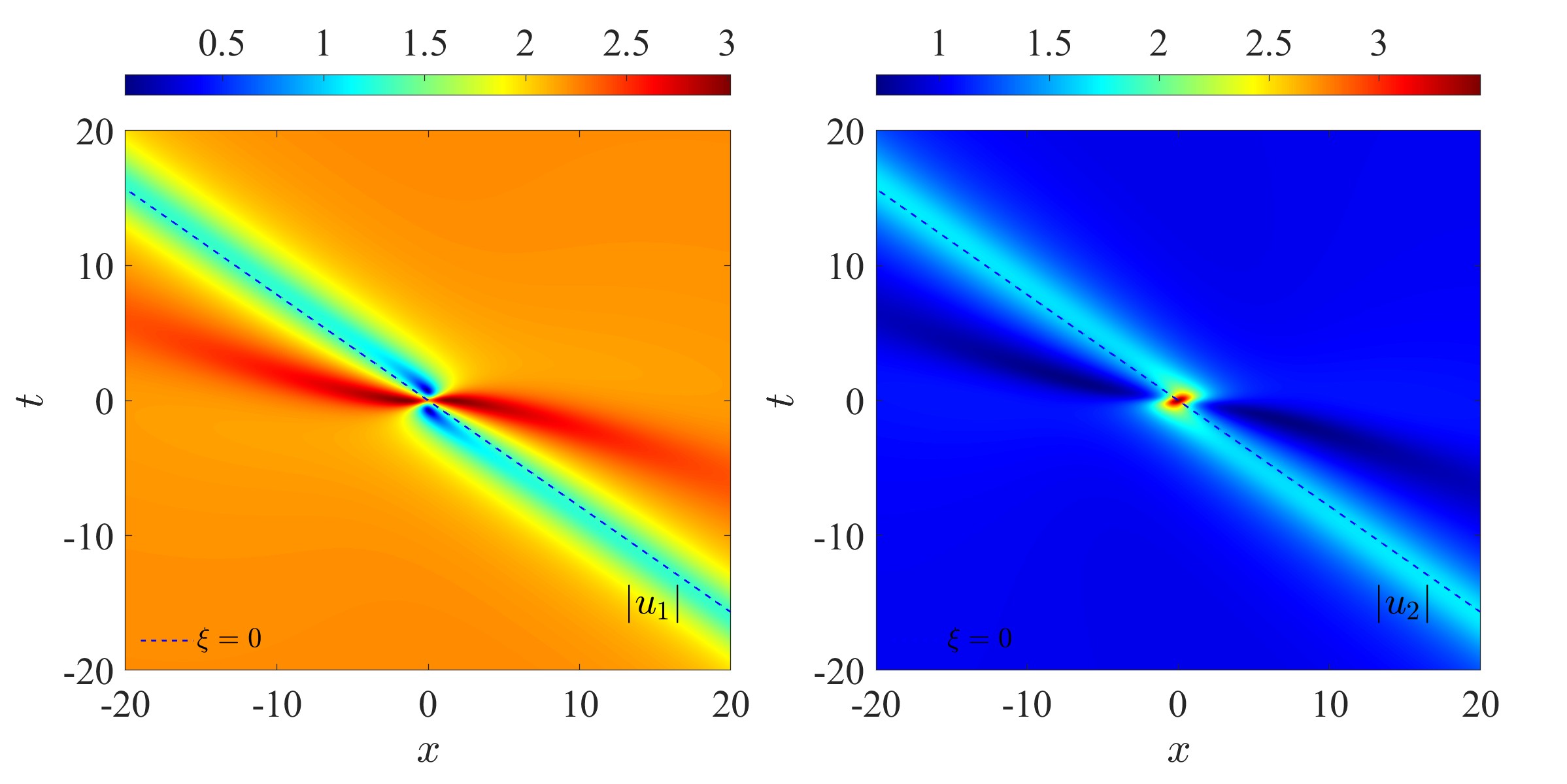}
			\caption{Triple-root solution and the predicted
				characteristic line.}
			\label{fig:A5-triple-asymptotic-line}
		\end{subfigure}
		
		\caption{
			Comparison between representative exact-solution profiles and
			the nonlocalized asymptotic curves derived in
			Section~\ref{sec:nonlocal-asymptotics}. In panel (a), the
			superimposed curve is a kernel level-set graph of the pure
			double-root limiting denominator determined by
			\eqref{eq:C-rho-implicit}. In panel (b), the superimposed
			straight line is the distinguished triple-root characteristic
			\(x+t/(3k_{\rm tri}^{2})=0\). The persistent structures in
			the exact solutions follow the respective theoretical curves
			over the displayed space--time ranges.
		}
		\label{fig:A5-asymptotic-comparison}
	\end{figure}
	
	For the pure double-root example in
	Fig.~\ref{fig:A5-asymptotic-comparison}(a), the visible persistent
	structure follows the kernel level-set graph defined by
	\eqref{eq:C-rho-implicit}. Its bending and displacement are reproduced
	by the lower-order terms of the selected level-set equation, whereas
	its large-distance behavior is governed by the nonzero cubic
	coefficient. The plotted curve therefore agrees with the unbounded
	component selected in the double-root asymptotic analysis.
	
	For the triple-root example in
	Fig.~\ref{fig:A5-asymptotic-comparison}(b), the persistent structure is
	aligned with
	\[
	x+\frac{t}{3k_{\rm tri}^{2}}=0,
	\]
	as predicted by
	Proposition~\ref{prop:triple-nonlocal-corridor}. Along this
	characteristic, the Darboux numerator and denominator have the same
	leading quadratic order, and the background-normalized field
	approaches the nonbackground complex limit in
	\eqref{eq:triple-line-normalized-limit}.
	
	The agreement displayed in
	Fig.~\ref{fig:A5-asymptotic-comparison} is consistent with the two
	distinct asymptotic mechanisms identified above. The pure double-root
	structure follows a nonlinear kernel level-set graph, whereas the
	triple-root structure follows a straight characteristic. These
	representative examples therefore support the asymptotic conclusions
	obtained in Section~\ref{sec:nonlocal-asymptotics}.
	
	\section{Conclusions and Discussions}
	\label{sec:conclusion}
	
	We have developed a systematic Darboux framework for degenerations of the spatial spectral branches of the two-component coupled Fokas--Lenells system on a counterpropagating plane-wave background. The degeneration studied here is distinct from the conventional coalescence of Darboux poles. It is defined by a double or triple root of the single fixed-endpoint equation \(F(\chi;\mu_0)=0\). Consequently, the relevant confluent structure is determined by the local structure of the spatial branches at \(\mu_0\), rather than by the multiplicity of a pole in the Darboux spectral parameter. For the singular real-spectrum construction, the endpoint is subsequently displaced as \(\mu=\mu_0+\eta\) to provide a regular approximating family; this displacement is not a collision of several distinct Darboux poles.
	
	An important preliminary point is the distinction between the polynomial spatial equation and its reduced rational parameterization. Under the critical condition \(2-k(A-B)=0\), the root \(\chi=0\) is a genuine spatial eigenvalue branch that persists for every value of \(\mu\). At \(\mu=A+B\), a movable branch intersects this persistent zero branch and produces a multiple zero of the fixed-\(\mu\) polynomial. This event is a branch intersection, rather than an effective ramification point of the reduced map \(\chi\mapsto\Lambda(\chi)\). Nevertheless, the zero branch must be retained whenever it appears as the remaining simple mode in a Darboux limit.
	
	For double branch points, we derived the governing quartic equation and proved that it has at least two real roots in the physical parameter regime \(A>0, \qquad B>0, \qquad k\neq0\). Its generic root distribution is therefore either four real roots or two real roots together with one nonreal complex-conjugate pair. In the equal-amplitude case, exactly two real roots and one nonreal conjugate pair occur. We further proved that an effective double branch point is real if and only if its associated squared spectral value is real.
	
	This classification separates two different Darboux mechanisms. A nonreal double branch point corresponds to a nonreal value of \(\mu\), so its square root is neither real nor purely imaginary. The ordinary one-fold Darboux formula is then regular, and arbitrary linear combinations of the double-root derivative chain and the remaining simple mode generate globally nonsingular confluent solutions. Because the simple mode generally carries an independent exponential phase, these solutions are not generically rational after normalization by the plane-wave background.
	
	A real double branch point, by contrast, has a positive spectral value \(\mu_0>0, \qquad \lambda_0=\sqrt{\mu_0}\in\mathbb R\), and requires a singular limiting construction. The central regularization mechanism is the cancellation of the vanishing factor \(\lambda^*-\lambda\) by a \(J\)-null leading vector. We established the conservation and inertia properties of the \(J\)-form and used them to determine the complete projective null cone in the mixed double--simple eigenspace. After normalization, the general leading vector is \({\bf y}_R = {\bf Y}_0 + M_1{\bf Y}_1 + c_s{\bf Y}_s\), where \(M_1\) and \(c_s\) satisfy an explicit quadratic constraint. The remaining simple root obeys
	\[
	d_s
	=
	\frac{\mu_0(\chi_0-\chi_s)^2}{\chi_s^2-k^2}<0,
	\qquad
	|\chi_s|<|k|,
	\]
	including the persistent-zero-branch case.
	
	We isolated the common real-spectrum mechanism in Theorem~\ref{thm:unified-real-spectrum-directional-limit}. Once an expansion \({\bf y}(\eta)={\bf y}_R+\eta{\bf y}_\mu+O_{C^2_{\rm loc}}(|\eta|^2)\) and the nullity of \({\bf y}_R\) are known, the theorem gives both the directional denominator and its strictly negative real part. The double- and triple-root sections therefore differ only in the construction of their leading and correction vectors, not in the final kernel argument.
	
	The full mixed vector was lifted from a nearby nonreal spectral point by combining the two splitting branches near the double root with the analytic continuation of the remaining simple branch. The symmetric root moments determine the \({\bf Y}_2\)- and \({\bf Y}_3\)-contributions, while the motion of the simple branch contributes the additional correction \({\bf Y}_{s,\mu}\). This yields a general directional real-\(\lambda\) Darboux limit for the complete mixed sector. The real part of the limiting denominator is strictly negative, which ensures global nonsingularity. The pure double-root circle, the elementary vector, the canonical nonzero real representative, and the saturated mixed boundary are obtained directly by specializing the general formula.
	
	For the complete nonzero pure double-root sector, we derived explicit background-normalized rational formulas. The admissible coefficient is
	\[
	r=M_1\in\mathbb C\setminus\{0\}, \qquad b_J|r|^2+2a_J\operatorname{Re}r=0.
	\]
	The real part of the corresponding limiting denominator is
	\[
	-\lambda_0 \left| 1+{\rm i}r\xi \right|^2.
	\]
	A nonzero admissible \(r\) cannot be purely imaginary, and therefore
	\[
	\left| 1+{\rm i}r\xi \right|^2>0
	\]
	for every real \(\xi\). This proves global regularity for every nonzero complex point of the pure-sector null circle, rather than only for its canonical real representative.
	
	For a general complex \(r\), the imaginary part of the denominator is a real polynomial of the form \(\gamma_3\xi^3 + \gamma_2\xi^2 + \gamma_1\xi + \gamma_t t + \gamma_0\). The identity \(a_Jp_1=\chi_0\) gives the simplified leading coefficients
	\[
	\gamma_3=-\frac{\chi_0}{3}|r|^2,
	\qquad
	\gamma_2=\chi_0\operatorname{Im}r.
	\]
	The coefficients \(\gamma_2\) and \(\gamma_0\) vanish for the canonical real point
	\[
	r_{\rm c}=-\frac{2a_J}{b_J},
	\]
	but are generally nonzero for a genuinely complex null coefficient. They change the center and lower-order asymmetry of the rational structure without changing its leading cubic behavior.
	
	For a fixed admissible \(r\), changing the approach angle is equivalent to a rigid space--time translation preserving \(\xi=x+\nu'(\chi_0)t\) whenever \(\gamma_t(r)\neq0\). If \(\gamma_t(r)=0\), such a translation law does not follow unless the remaining constant directional contribution also vanishes.
	
	We further showed that, provided \(\gamma_t(r_{\rm c})\neq0\), every nonzero complex point of the pure-sector null circle is related to the canonical real point by a rigid space--time translation and a constant phase factor. Consequently, the complex null-circle parameter does not generate a new intrinsic modulus profile. These conclusions concern only the pure double-root sector \(c_s=0\); the mixed double--simple sector contains an independent simple-branch phase and is governed by the more general limiting formula.
	
	Triple branch points were characterized by solving the simultaneous multiple-root conditions for the spatial equation. Their reality was derived directly from the coefficient identities: first \(\chi_0^2\in\mathbb R\), then the purely imaginary alternative was excluded using \(S=A+B>0\), and only afterwards was \(\mu_0=2/(3\chi_0)\in\mathbb R\) concluded. Thus no real-coefficient or conjugate-root argument was used circularly. In the physical regime, the spectral value is positive. The full projective leading vector is
	\[
	{\bf y}_R = {\bf Y}_0 + M_1{\bf Y}_1 + \frac{M_2}{2}{\bf Y}_2.
	\]
	We proved the required Gram relations and the strict negative definiteness of the lower derivative block, thereby obtaining the complete null constraint on \(M_1\) and \(M_2\).
	
	For the general full-chain sector \(M_1M_2\neq0\), the three splitting roots were combined by prescribing their first three weighted moments. The cubic recurrence of these moments produces the higher derivative contributions \({\bf Y}_3,\qquad {\bf Y}_4,\qquad {\bf Y}_5\) in the first correction vector. This construction gives the general directional real-spectrum triple-root Darboux limit. The pure second-derivative sector and the canonical sector \({\bf y}_R={\bf Y}_0\) follow as direct corollaries, whereas a nonzero first-derivative contribution without a second-derivative contribution is excluded by the null condition.
	
	The resulting real-spectrum solutions may fail to approach the plane-wave background uniformly in all joint space--time limits. For the complete nonzero pure double-root null-circle family, we introduced level sets of the imaginary part of the limiting denominator. When \(\gamma_t(r)\neq0\), each level set is a connected global graph over \(\xi\). Since \(\gamma_3(r)\neq0\) for every admissible \(r\neq0\), these graphs have a common leading cubic geometry.
	
	Along every such graph, \(|\xi|\longrightarrow\infty\), and the two background-normalized components satisfy
	\[
	\frac{ u_j^{(R,2;\alpha,r)} }{ u_{j,0} } \longrightarrow 1-2h_j(\chi_0), \qquad j=1,2.
	\]
	More explicitly,
	\[
	\mathcal P_1^{(D)}
	=
	\frac{\chi_0-k}{\chi_0+k},
	\qquad
	\mathcal P_2^{(D)}
	=
	\frac{\chi_0+k}{\chi_0-k},
	\qquad
	\mathcal P_1^{(D)}\mathcal P_2^{(D)}=1.
	\]
	Thus the normalized intensity plateaus also have product one. The limiting factors are independent of \(r\), the approach angle \(\alpha\), and the kernel level; these parameters affect only the location and lower-order structure of the curves.
	
	If \(\gamma_t(r)=0\), the level-set equation reduces to a cubic equation for finitely many constant values of \(\xi\), and no component satisfies \(|\xi|\longrightarrow\infty\). The nonbackground plateau theorem therefore applies precisely to the generic graph case \(\gamma_t(r)\neq0\), rather than constituting an unconditional classification of all far-field trajectories.
	
	For triple-root solutions with \(M_2\neq0\), we analyzed the distinguished characteristic line
	\[
	x+\frac{t}{3k^2}=0.
	\]
	Along this line, the leading vector grows linearly in \(t\), whereas both the Darboux numerator and denominator grow quadratically. Their ratio therefore approaches a finite nonzero limit, and both background-normalized components converge to explicit nonbackground complex plateaus. The two plateau factors satisfy the exact weighted constraint \eqref{eq:triple-plateau-component-constraint}, but the present argument does not assert that their moduli are different from one in every admissible coefficient sector. The direction of the line depends only on the triple-root background parameter \(k\), while the plateau values may depend on the admissible internal coefficients \(M_1\) and \(M_2\). The present analysis establishes nonlocalization on this line but does not claim a complete classification of other possible triple-root asymptotic directions.
	
	The framework developed here separates three ingredients that are often combined in multiple-root Darboux calculations: the geometry of the fixed-\(\mu\) spatial branches, the projective null geometry of the confluent eigenspace, and the directional regularization of the real-spectrum Darboux kernel. This separation provides a systematic way to determine which confluent data yield regular nonreal-spectrum solutions, which data produce finite nontrivial real-spectrum limits, and which coefficient choices are inadmissible.
	
	Possible extensions include higher spatial branch multiplicities, higher-fold Darboux transformations built from several branch points, and analogous real-spectrum limits for other multicomponent integrable systems. It would also be of interest to develop a complete asymptotic classification beyond the selected double-root level-set components and the distinguished triple-root characteristic line considered here, and to investigate the dynamical excitation and stability of the resulting nonlocalized structures.
	
	\section*{CRediT authorship contribution statement}
	
	\textbf{Muchen Dong}: Idea, methodology, analysis, simulations, calculations, visualization, code, writing. \textbf{Lei Wang}: Idea, conceptualization, methodology, analysis, writing, supervision, project administration. 
	\section*{Declaration of competing interest}
	The authors declare that they have no known competing financial interests or personal relationships that could have appeared to influence the work reported in this paper.
	\section*{Data availability statements}
	Data availability is not applicable to this article as no new data were created or analyzed in this study.
	\section*{Acknowledgment}
	We express our sincere thanks to all the members of our discussion group for their valuable comments. The authors acknowledge financial support from the National Natural Science Foundation of China (No.~12375002).
	
	\bibliography{sn-bibliography}


\begin{thebibliography}{9}
\ifx \bisbn   \undefined \def \bisbn  #1{ISBN #1}\fi
\ifx \binits  \undefined \def \binits#1{#1}\fi
\ifx \bauthor  \undefined \def \bauthor#1{#1}\fi
\ifx \batitle  \undefined \def \batitle#1{#1}\fi
\ifx \bjtitle  \undefined \def \bjtitle#1{#1}\fi
\ifx \bvolume  \undefined \def \bvolume#1{\textbf{#1}}\fi
\ifx \byear  \undefined \def \byear#1{#1}\fi
\ifx \bissue  \undefined \def \bissue#1{#1}\fi
\ifx \bfpage  \undefined \def \bfpage#1{#1}\fi
\ifx \blpage  \undefined \def \blpage #1{#1}\fi
\ifx \burl  \undefined \def \burl#1{\textsf{#1}}\fi
\ifx \doiurl  \undefined \def \doiurl#1{\url{https://doi.org/#1}}\fi
\ifx \betal  \undefined \def \betal{\textit{et al.}}\fi
\ifx \binstitute  \undefined \def \binstitute#1{#1}\fi
\ifx \binstitutionaled  \undefined \def \binstitutionaled#1{#1}\fi
\ifx \bctitle  \undefined \def \bctitle#1{#1}\fi
\ifx \beditor  \undefined \def \beditor#1{#1}\fi
\ifx \bpublisher  \undefined \def \bpublisher#1{#1}\fi
\ifx \bbtitle  \undefined \def \bbtitle#1{#1}\fi
\ifx \bedition  \undefined \def \bedition#1{#1}\fi
\ifx \bseriesno  \undefined \def \bseriesno#1{#1}\fi
\ifx \blocation  \undefined \def \blocation#1{#1}\fi
\ifx \bsertitle  \undefined \def \bsertitle#1{#1}\fi
\ifx \bsnm \undefined \def \bsnm#1{#1}\fi
\ifx \bsuffix \undefined \def \bsuffix#1{#1}\fi
\ifx \bparticle \undefined \def \bparticle#1{#1}\fi
\ifx \barticle \undefined \def \barticle#1{#1}\fi
\bibcommenthead
\ifx \bconfdate \undefined \def \bconfdate #1{#1}\fi
\ifx \botherref \undefined \def \botherref #1{#1}\fi
\ifx \url \undefined \def \url#1{\textsf{#1}}\fi
\ifx \bchapter \undefined \def \bchapter#1{#1}\fi
\ifx \bbook \undefined \def \bbook#1{#1}\fi
\ifx \bcomment \undefined \def \bcomment#1{#1}\fi
\ifx \oauthor \undefined \def \oauthor#1{#1}\fi
\ifx \citeauthoryear \undefined \def \citeauthoryear#1{#1}\fi
\ifx \endbibitem  \undefined \def \endbibitem {}\fi
\ifx \bconflocation  \undefined \def \bconflocation#1{#1}\fi
\ifx \arxivurl  \undefined \def \arxivurl#1{\textsf{#1}}\fi
\csname PreBibitemsHook\endcsname

\bibitem[\protect\citeauthoryear{Fokas}{1995}]{Fokas1995}
\begin{barticle}
\bauthor{\bsnm{Fokas}, \binits{A.S.}}:
\batitle{On a class of physically important integrable equations}.
\bjtitle{Physica D: Nonlinear Phenomena}
\bvolume{87},
\bfpage{145}--\blpage{150}
(\byear{1995})
\doiurl{10.1016/0167-2789(95)00133-O}
\end{barticle}
\endbibitem

\bibitem[\protect\citeauthoryear{Lenells}{2009}]{Lenells2009}
\begin{barticle}
\bauthor{\bsnm{Lenells}, \binits{J.}}:
\batitle{Exactly solvable model for nonlinear pulse propagation in optical
  fibers}.
\bjtitle{Studies in Applied Mathematics}
\bvolume{123}(\bissue{2}),
\bfpage{215}--\blpage{232}
(\byear{2009})
\end{barticle}
\endbibitem

\bibitem[\protect\citeauthoryear{Lenells and Fokas}{2009}]{LenellsFokas2009}
\begin{barticle}
\bauthor{\bsnm{Lenells}, \binits{J.}},
\bauthor{\bsnm{Fokas}, \binits{A.S.}}:
\batitle{On a novel integrable generalization of the nonlinear {Schrödinger}
  equation}.
\bjtitle{Nonlinearity}
\bvolume{22}(\bissue{1}),
\bfpage{11}--\blpage{27}
(\byear{2009})
\end{barticle}
\endbibitem

\bibitem[\protect\citeauthoryear{Chen et~al.}{2018}]{ChenEtAl2018}
\begin{barticle}
\bauthor{\bsnm{Chen}, \binits{S.}},
\bauthor{\bsnm{Ye}, \binits{Y.}},
\bauthor{\bsnm{Soto-Crespo}, \binits{J.M.}},
\bauthor{\bsnm{Grelu}, \binits{P.}},
\bauthor{\bsnm{Baronio}, \binits{F.}}:
\batitle{Peregrine solitons beyond the threefold limit and their two-soliton
  interactions}.
\bjtitle{Physical Review Letters}
\bvolume{121}(\bissue{10}),
\bfpage{104101}
(\byear{2018})
\doiurl{10.1103/PhysRevLett.121.104101}
\end{barticle}
\endbibitem

\bibitem[\protect\citeauthoryear{Ye et~al.}{2019}]{YeEtAl2019}
\begin{barticle}
\bauthor{\bsnm{Ye}, \binits{Y.}},
\bauthor{\bsnm{Zhou}, \binits{Y.}},
\bauthor{\bsnm{Chen}, \binits{S.}},
\bauthor{\bsnm{Baronio}, \binits{F.}},
\bauthor{\bsnm{Grelu}, \binits{P.}}:
\batitle{General rogue wave solutions of the coupled {Fokas--Lenells} equations
  and non-recursive {Darboux} transformation}.
\bjtitle{Proceedings of the Royal Society A: Mathematical, Physical and
  Engineering Sciences}
\bvolume{475}(\bissue{2223}),
\bfpage{20180806}
(\byear{2019})
\doiurl{10.1098/rspa.2018.0806}
\end{barticle}
\endbibitem

\bibitem[\protect\citeauthoryear{Ling and Su}{2024}]{LingSu2024}
\begin{barticle}
\bauthor{\bsnm{Ling}, \binits{L.}},
\bauthor{\bsnm{Su}, \binits{H.}}:
\batitle{Rogue waves and their patterns for the coupled {Fokas--Lenells}
  equations}.
\bjtitle{Physica D: Nonlinear Phenomena}
\bvolume{461},
\bfpage{134111}
(\byear{2024})
\doiurl{10.1016/j.physd.2024.134111}
\end{barticle}
\endbibitem

\bibitem[\protect\citeauthoryear{Ling et~al.}{2018}]{LingFengZhu2018}
\begin{barticle}
\bauthor{\bsnm{Ling}, \binits{L.}},
\bauthor{\bsnm{Feng}, \binits{B.-F.}},
\bauthor{\bsnm{Zhu}, \binits{Z.}}:
\batitle{General soliton solutions to a coupled {Fokas--Lenells} equation}.
\bjtitle{Nonlinear Analysis: Real World Applications}
\bvolume{40},
\bfpage{185}--\blpage{214}
(\byear{2018})
\doiurl{10.1016/j.nonrwa.2017.08.013}
\end{barticle}
\endbibitem

\bibitem[\protect\citeauthoryear{Zhang et~al.}{2017}]{ZhangEtAl2017}
\begin{barticle}
\bauthor{\bsnm{Zhang}, \binits{Y.}},
\bauthor{\bsnm{Yang}, \binits{J.W.}},
\bauthor{\bsnm{Chow}, \binits{K.W.}},
\bauthor{\bsnm{Wu}, \binits{C.F.}}:
\batitle{Solitons, breathers and rogue waves for the coupled {Fokas--Lenells}
  system via {Darboux} transformation}.
\bjtitle{Nonlinear Analysis: Real World Applications}
\bvolume{33},
\bfpage{237}--\blpage{252}
(\byear{2017})
\doiurl{10.1016/j.nonrwa.2016.06.006}
\end{barticle}
\endbibitem

\bibitem[\protect\citeauthoryear{Yang and Zhang}{2018}]{YangZhang2018}
\begin{barticle}
\bauthor{\bsnm{Yang}, \binits{J.}},
\bauthor{\bsnm{Zhang}, \binits{Y.}}:
\batitle{Higher-order rogue wave solutions of a general coupled nonlinear
  {Fokas--Lenells} system}.
\bjtitle{Nonlinear Dynamics}
\bvolume{93},
\bfpage{585}--\blpage{597}
(\byear{2018})
\doiurl{10.1007/s11071-018-4211-4}
\end{barticle}
\endbibitem

\end{thebibliography}
\end{document}